\providecommand{\tabularnewline}{\\}
\newenvironment{cellvarwidth}[1][t]
    {\begin{varwidth}[#1]{\linewidth}}
    {\@finalstrut\@arstrutbox\end{varwidth}}
\numberwithin{equation}{section}
\theoremstyle{plain}
\newtheorem{assumption}{\protect\assumptionname}
\theoremstyle{plain}
\newtheorem{prop}{\protect\propositionname}[section]
\theoremstyle{definition}
\newtheorem{example}{\protect\examplename}[section]
\theoremstyle{plain}
\newtheorem{thm}{\protect\theoremname}[section]
\theoremstyle{plain}
\newtheorem{lem}{\protect\lemmaname}[section]
\let\oldappendix\appendix
\renewcommand{\appendix}{%
  \oldappendix
  \renewcommand{\thesection}{S.\Alph{section}}
  \renewcommand{\theequation}{\thesection.\arabic{equation}}
  \renewcommand{\thefigure}{\thesection.\arabic{figure}}
  \renewcommand{\thetable}{\thesection.\arabic{table}}
  \setcounter{section}{0}
}
  \renewcommand*{\headrulewidth}{0pt}
\providecommand{\assumptionname}{Assumption}
\providecommand{\examplename}{Example}
\providecommand{\lemmaname}{Lemma}
\providecommand{\propositionname}{Proposition}
\providecommand{\theoremname}{Theorem}
\begin{document}
\title{Social Interactions in Endogenous Groups \thanks{\scriptsize We are grateful to Eric Auerbach, Vincent Boucher, Denis Chetverikov, Gorden Dahl, Aureo de Paula, Steven Durlauf, Bryan Graham, Jinyong Hahn, James Heckman, Dongwoo Kim, Ivana Komunjer, Brian Krauth, Michael Leung, Arthur Lewbel, Zhipeng Liao, Lance Lochner, Adriana Lleras-Muney, Rosa Matzkin, Angelo Mele, Konrad Menzel, Krishna  Pendakur, Geert Ridder, Andres Santos, and Xiaoxia Shi for their helpful comments. We are also grateful to seminar and conference participants at Caltech, CUHK, Georgetown, Iowa, Jinan (IESR), MIT/Harvard, Rice, Texas A\&M University, UCL, UCLA, UC Riverside, UPenn, USC, Wisconsin-Madison, Xiamen, Yale, 2020 Econometric Society World Congress, 2021 Annual Conference of the Canadian Economics Association, 2021 Asia Meeting of the Econometric Society, 2021 Australasia Meeting of the Econometric Society, 2021 Canadian Econometric Study Group Meetings, 2021 China Meeting of the Econometric Society, 2021 IAAE Annual Conference, 2022 California Econometrics Conference, 2023 Interactions Workshop (Northwestern University), 2023 North American Summer Meeting, and 2025 ASSA/ES North American Winter Meeting. Sun is grateful for financial support from the Social Sciences and Humanities Research Council of Canada through its Insight Development Grants Program. All errors are our own.}}
\author{Shuyang Sheng\thanks{\scriptsize Shenzhen Finance Institute, School of Management and Economics, The Chinese University of Hong Kong, Shenzhen. Email: shengshuyang@cuhk.edu.cn}\and
Xiaoting Sun\thanks{\scriptsize Department of Economics, Simon Fraser University. Email: xiaoting\_sun@sfu.ca}}
\maketitle
\begin{abstract}
This paper investigates social interactions in endogenous groups.
We specify a two-sided many-to-one matching model, where individuals
select groups based on preferences, while groups admit individuals
based on qualifications until reaching capacities. Endogenous formation
of groups leads to selection bias in peer effect estimation, which
is complicated by equilibrium effects and alternative groups. We propose
novel methods to simplify selection bias and develop a sieve OLS estimator
for peer effects that is $\sqrt{n}$-consistent and asymptotically
normal. Using Chilean data, we find that ignoring selection into high
schools leads to overestimated peer influence and distorts the estimation
of school effectiveness.

\noindent\textsc{Keywords}: social interactions, group formation,
two-sided many-to-one matching, selection bias, limiting approximation,
exchangeability, semiparametric estimation.

\newpage{}
\end{abstract}

\section{Introduction}

Social interactions models are useful tools for exploring the interdependence
of individual outcomes in various contexts, such as education, earnings,
health, and crime. One notable feature of social interactions is their
high tendency to occur among individuals within the same social or
economic group. For instance, students interact with fellow students
within their school, and residents engage with other residents within
their neighborhood.\footnote{There is a massive literature on peer effects in schools, classrooms,
or dorms (e.g., \citealp{Evans1992}; \citealp{sacerdote2001peer};
\citealp{Duflo2011tracking}). See \citet{Epple2011} for a survey.
Examples of neighborhood effects include \citet{katz2001moving} and
\citet{bayer2008place}.} Because individuals select the schools they apply to or the neighborhoods
they reside in, the peers they ultimately interact with are often
determined endogenously. Therefore, when we observe that individuals
with more advantageous peers achieve better outcomes, it is unclear
whether these outcomes are driven by the influence of their peers
or by their choices of peer groups \citep{Epple2011,Sacerdote2011}.
This paper develops new econometric methods for identifying and estimating
causal peer effects in the presence of endogenous groups.

Selection into groups has been a central topic in the literature on
social interactions and group treatment effects (e.g., school value-added).
Early studies on social interactions account for group selection in
the framework of correlated effects (\citealp{Manski1993}; \citealp{moffitt2001policy}).
This approach addresses selection by including group fixed effects,
the validity of which relies on the restrictive assumption that an
individual's group membership becomes exogenous once group-level unobservables
(e.g., school resources, neighborhood amenities) are controlled for
\citep{lee2007identification,bramoulle2009identification,Sacerdote2011}.
In contrast, empirical research has leveraged natural experiments
with random group assignment to circumvent the selection issue.\footnote{Examples include classroom assignment \citep{Duflo2011tracking} and
dormitory assignment \citep{sacerdote2001peer,zimmerman2003peer}.} In school choice, the literature has exploited exogenous variation
under centralized school assignment (e.g., random lotteries or discontinuities
around non-random test score cutoffs) to overcome selection into schools
\citep{kirkeboen2016field,abdulkadirouglu2017research,abdulkadirouglu2022breaking,angrist2024credible}.
These empirical approaches require observing information on individuals'
rank-order lists, priorities, and admission cutoffs. In this paper,
we propose an approach to account for selection into groups that does
not rely on random assignment or detailed information on applications
and admissions. Our approach is applicable to observational data where
only group memberships are observed, making it particularly relevant
for contexts such as decentralized school choice and college admissions.

Specifically, we develop a model of group formation to examine how
individuals sort into groups and how to effectively account for the
impact of this sorting. We characterize group formation using a two-sided
many-to-one matching model with nonparametric unobservables, where
individuals select among groups based on their preferences, while
groups rank and accept individuals according to qualifications until
reaching capacity constraints \citep{azevedo_supply_2016,HSS_twosided}.
This framework closely mirrors real-world admission processes, such
as those for schools \citep{gazmuri2017school,HSS_twosided}, colleges
\citep{dale_estimating_2002}, residency programs \citep{roth1984evolution,agarwal2015empirical},
and nursing homes \citep{gandhi2022picking}. Our framework also covers
one-sided group formation as a special case, in which individuals
unilaterally determine the groups they join, such as in neighborhood
choice \citep{brock2001interactions,brock2002multinomial,brock2005multinomial,ioannides2008interactions}.
In contrast to one-sided group formation, where an individual's choice
set is assumed to include all groups, two-sided group formation allows
the choice set to be latent and determined endogenously \citep{Barseghyan2021,Agarwal2022}.
To the best of our knowledge, we are the first to apply two-sided
matching to analyze selection into groups.

The endogenous formation of groups leads to selection bias in the
estimation of peer effects in linear-in-means social interactions
models \citep{Manski1993}. For example, students with higher capabilities
or more advantageous family backgrounds may be sorted into more selective
colleges, leading to the overestimation of peer effects \citep{dale_estimating_2002}.
The formulation of our selection bias supports the approach proposed
by \citet{dale_estimating_2002}, who overcame selection bias by matching
students who applied to and were accepted by comparable sets of colleges.
In contrast, we correct for selection bias in the absence of such
application and admission information.

The selection bias in our model is further complicated by equilibrium
effects. Because equilibrium groups depend on the (observed and unobserved)
characteristics of all $n$ individuals in a market, selection bias
is a high-dimensional function that involves the observed characteristics
of the $n$ individuals. We overcome this dimensionality problem using
the limiting approximation of a market as $n$ approaches infinity
\citep{azevedo_supply_2016}.

Moreover, we impose an exchangeability assumption on the distribution
of unobservables, which yields two crucial properties that aid in
addressing selection bias. First, it allows selection bias to be represented
as a group-invariant selection function of preference and qualification
indices, enabling the identification of the effects of group-level
variables (e.g., group averages, group dummies). Second, under exchangeability,
selection bias is symmetric in the impacts of alternative groups.
For example, selection bias for attending college A is symmetrically
affected by the preference and qualification indices associated with
colleges B and C. This symmetry enables us to aggregate the indices
of alternative groups using their elementary symmetric functions \citep{altonji_matzkin},
thereby reducing the number of nuisance parameters in the nonparametric
estimation of selection bias.

Our results on selection bias make three important contributions to
the literature. First, our model yields selection bias that depends
on individual characteristics involved in group formation. This finding
highlights that simply including group fixed effects is insufficient
to correct for selection bias. The result also complements the work
of \citet{Altonji2018}, who showed that selection bias can be captured
by group averages of covariates under specific linear restrictions.
In contrast, our framework imposes no such linear restrictions. We
show that effectively correcting for selection bias requires exploiting
individual-specific information in group formation. Second, our results
emphasize the importance of accounting for the impacts of alternative
groups.\footnote{For instance, in college admissions, whether a student attends a particular
college depends not only on the attributes of that college, but also
on the attributes of the colleges they choose to forgo. This feature
suggests that the attributes of alternative groups play a crucial
role in selection correction.} Unlike existing studies that directly restrict the impacts of alternative
groups \citep[e.g., ][]{Dahl2002}, our approach aggregates these
impacts by leveraging the symmetry of selection bias under exchangeability.
This makes the selection correction tractable, even when the number
of groups is moderately large. Related to our work, \citet{abdulkadirouglu2020parents}
proposed a similar selection correction method in centralized school
choice, where selection occurs only through unobserved preferences.
Their selection correction aggregates the impacts of alternative schools
constructed using rank-order lists. We provide a microfoundation for
the aggregation, but in the context of decentralized markets with
selection on both unobserved preferences and qualifications, where
rank-order lists are unavailable. Third, we separately identify group
treatment effects (e.g., school fixed effects that measure school
effectiveness) and self-selection, thereby allowing for the disentanglement
of correlated effects due to group-level unobservables from those
due to self-selection into groups \citep{Manski1993,moffitt2001policy}.\footnote{\citet{Manski1993} was the first to discuss correlated effects, defining
them as situations in which ``individuals in the same group tend
to behave similarly because they have similar individual characteristics
or face similar institutional environments.'' \citet{moffitt2001policy}
further categorized correlated effects into two types: those arising
from shared environmental influences (e.g., unobserved school characteristics),
and those arising from sorting, where individuals with similar traits
tend to group together.}

To identify the social interaction parameters, we partial out selection
bias as in a partially linear model \citep{Robinson1988}. A key challenge
in identifying linear-in-means social interactions is that peer outcomes
and peer characteristics may be linearly dependent, leading to the
well-known reflection problem \citep{Manski1993}. Some studies have
overcome this problem by leveraging group size variation or intransitive
triads (\citealp{lee2007identification}; \citealp{graham2008identifying};
\citealp{Davezies2009}; \citealp{bramoulle2009identification}).
By comparison, \citet{brock2001interactions} demonstrated that self-selection
can aid in the identification of social interactions. The intuition
is straightforward: self-selection enters the outcome equation as
an individual-specific term that is excluded from group-level averages,
thereby providing an exclusion restriction to disentangle endogenous
social effects from exogenous social effects. \citet{brock2001interactions}
considered one-sided group formation with parametric unobservables.
We extend their insight to a more general setting with two-sided group
formation under nonparametric unobservables. Similarly, \citet{gu2024peer}
leveraged self-selection to achieve the identification of peer effects
in a binary selection setting, where individuals decide whether to
participate in a program, under parametric unobservables.

We propose a sieve OLS estimator for the social interaction parameters,
where we control for selection bias using a sieve approximation. While
using OLS instead of IV might appear counter-intuitive---given that
peer outcomes could be endogenous due to simultaneity---we demonstrate
that this endogeneity vanishes asymptotically in large groups where
within-group connections are dense (e.g., group averages). This finding
is consistent with the results in \citet{Lee2002}, who demonstrated
the consistency and efficiency of OLS estimators for peer effects
under a deterministic adjacency matrix. The asymptotic properties
of our estimator cannot be established using existing methods because
our adjacency matrix is stochastic due to randomness in group memberships
and within-group connections. By recognizing that the leading terms
in our estimator take the form of weighted $U$-statistics with random
weights, we develop an innovative approach that generalizes the asymptotic
methods for standard weighted $\mathit{U}$-statistics \citep{Lee1990}
to accommodate randomness in weights. Furthermore, we impose additional
conditions to ensure that the network dependence introduced by random
weights diminishes sufficiently fast. Under these conditions, we prove
that our estimator is $\sqrt{n}$-consistent and asymptotically normal.
Finally, we provide simulation evidence showing that our estimator
performs well.

We apply our approach to analyze social interactions among high school
students in Chile. We use data from the SIMCE along with administrative
records from the Ministry of Education. Our sample consists of 6,872
tenth-grade students enrolled in 53 high schools in the Biobío Region
in 2006. The Chilean education system is highly segregated, with substantial
sorting across high schools. While our dataset provides comprehensive
student information both before high school admissions (eighth grade)
and after (from tenth grade through college graduation), it lacks
detailed data on the high school admission process (e.g., rank-order
lists, priorities, and admission thresholds), which are essential
for conventional identification strategies such as regression discontinuity
or lottery-based designs. Our approach is particularly useful in this
decentralized market, as it accounts for selection bias using only
school enrollment information, without requiring data on the admission
process. We define a student's peers as their classmates and measure
student outcomes using academic performance both in the short term
(tenth-grade math and language scores and high school graduation)
and in the long term (post-secondary and college enrollment and graduation).
Our objective is to estimate the causal effects of peer outcomes and
peer characteristics on these academic outcomes, in the presence of
endogenous selection into high schools.

We find that both peer outcomes and peer characteristics have significant
effects on tenth graders. Including school fixed effects only partially
corrects for selection bias compared to simple OLS. Additionally controlling
for our selection correction further adjusts the estimates of peer
effects. For example, the selection-corrected coefficients of peer
outcomes are 10-45\% lower than those obtained using school fixed
effects alone. A variance decomposition of predicted outcomes into
peer influence, school effectiveness, and self-selection reveals that
self-selection accounts for the largest share of explained variance
in most outcomes, highlighting the importance of properly correcting
for selection bias in school evaluation. Notably, peer influence is
overestimated when selection is not adequately accounted for, particularly
in highly selective schools. Moreover, failing to account for selection
distorts the estimation of school effectiveness, primarily by underestimating
the value-added of less selective schools. The selection-corrected
estimates suggest that more selective schools may actually provide
lower value-added. These findings underscore the critical need for
proper selection correction to evaluate school performance and ensure
equitable resource allocation.

Our work is related to the econometric literature that addresses the
endogeneity of peer relationships in social interactions models (\citealp{goldsmith2013social,qu2015estimating,hsieh2016social,hsieh2018smoking,hsieh2020specification,johnsson2021estimation,Auerbach2022,griffith2024random}).
These studies typically assume the presence of unobserved individual
heterogeneity that affects both the formation of links and individual
outcomes, thereby generating endogeneity in peer relationships. This
type of endogeneity can be addressed by accounting for individual
heterogeneity. In contrast to these studies, which focus on link formation,
we develop a framework for group formation, where the unobserved factors
driving endogenous groups can be pair-specific (e.g., student-school
specific). We complement the methods in this literature by proposing
an approach to correct for selection into groups.

Our work is also related to the literature on sample selection models
\citep[e.g., ][]{heckman1979sample,das2003nonparametric} and extends
this literature to social interactions with endogenous group selection.
Our framework is mostly related to that of \citet{das2003nonparametric},
who examined selection into a single sample governed by multiple rules.
In contrast, we focus on selection into multiple groups. We propose
a novel approach to reduce the dimensionality of selection bias when
the number of groups is moderately large, thereby providing a practical
method for addressing selection bias despite dimensionality challenges.

The remainder of the paper is organized as follows. Section \ref{sec:Model}
introduces the model. Section \ref{sec:Selection_bias} derives selection
bias. Section \ref{sec:Identification} investigates identification.
Section \ref{sec:Estimation} proposes estimation methods and establishes
asymptotic properties. Section \ref{sec:Simulations} evaluates our
approach through simulations. Section \ref{sec:Empirical} studies
social interactions in Chilean high schools. Section \ref{sec:Conclusion}
concludes the paper. All proofs are provided in Supplemental Appendix
\ref{online:proofs}.\footnote{All the numbered items designated with an \textquotedblleft S\textquotedblright{}
are shown in the Supplemental Appendix.}

\section{\protect\label{sec:Model}Model}

\subsection{Social Interactions}

Consider a set of individuals $\mathcal{N}=\{1,2,\dots,n\}$ who can
join a set of groups $\mathcal{G}=\{1,\dots,G\}$. We assume that
the number of groups $G$ is finite and each group $g\in\mathcal{G}$
has a predetermined capacity $n_{g}$ that is proportional to $n$.
The groups are non-overlapping (e.g., colleges, neighborhoods), so
one joins only one group. Let $g_{i}$ denote the group that individual
$i$ joins and $\boldsymbol{g}\equiv(g_{1},\dots,g_{n})'$ the $n\times1$
vector that stacks $g_{i}$.

After the groups are formed, individuals interact with their groupmates
following a linear-in-means social interactions model \citep{Manski1993,brock2001interactions}
\begin{align}
y_{i} & =\sum_{j=1}^{n}w_{ij}y_{j}\gamma_{1}+\sum_{j=1}^{n}w_{ij}x'_{j}\gamma_{2}+x'_{i}\gamma_{3}+\epsilon_{i}.\label{eq:linmean}
\end{align}
In this specification, $y_{i}\in\mathbb{R}$ represents the outcome
of interest (e.g., GPA, earnings), $x_{i}\in\mathbb{R}^{d_{x}}$ is
a vector of observed individual characteristics (e.g., parental education,
family income), $\epsilon_{i}\in\mathbb{R}$ is an unobserved characteristic
(e.g., ability). Let $\boldsymbol{y}$ denote the $n\times1$ vector
that stacks $y_{i}$, $\boldsymbol{x}$ the $n\times d_{x}$ matrix
that stacks $x'_{i}$, and $\boldsymbol{\epsilon}$ the $n\times1$
vector that stacks $\epsilon_{i}$. We assume that $i$'s outcome
$y_{i}$ depends on $\sum_{j=1}^{n}w_{ij}y_{j}$ and $\sum_{j=1}^{n}w_{ij}x_{j}$,
the weighted averages of outcomes and observed characteristics of
$i$'s peers, where $w_{ij}\in\mathbb{R}$ denotes the weight of peer
$j$ on individual $i$. Following the terminology in \citet{Manski1993},
$\gamma_{1}$ represents the endogenous social effect, and $\gamma_{2}$
represents the exogenous/contextual social effect. The parameter of
interest is $\gamma=(\gamma_{1},\gamma'_{2},\gamma'_{3})'\in\mathbb{R}^{2d_{x}+1}$.

Because individuals are influenced solely by their groupmates, the
adjacency matrix $\boldsymbol{w}\equiv(w_{ij})\in\mathbb{R}^{n^{2}}$
exhibits a group structure, that is, $w_{ij}=0$ if $i$ and $j$
belong to different groups ($g_{i}\neq g_{j}$). A typical specification
is given by $w_{ij}=\frac{1}{n_{g_{i}}}$ if $g_{i}=g_{j}$, where
each group member has equal weight \citep{Manski1993}. More generally,
we can specify $\boldsymbol{w}$ to capture additional network structures
in a group. For example, groupmates may form friendship ties, and
only friends may have a nonzero influence.

If an individual $i$'s decision to join a group or form friendships
is influenced by unobserved characteristics that are correlated with
$\epsilon_{i}$, the adjacency matrix $\boldsymbol{w}$ would be correlated
with $\boldsymbol{\epsilon}$ and become endogenous. In this paper,
we treat the sorting of individuals into groups as endogenous but
assume that link formation within groups (if any) is exogenous, as
formulated in Assumption \ref{ass:adj_exog} (in Section \ref{sec:Selection_bias}).
This assumption is not a limitation but a simplification, which allows
us to focus on the endogeneity in $\boldsymbol{g}$ that arises from
selection into groups.

In the literature that relies on observational data---where random
assignment into groups is not available---the issue of sorting into
groups is predominantly addressed in the framework of correlated effects
\citep{Manski1993,moffitt2001policy}. It is often assumed that there
exist unobserved characteristics that have common effects on the outcome
of all individuals within a group (e.g., school resources, neighborhood
amenities), and conditional on these group-level unobservables, an
individual's group membership becomes exogenous \citep{lee2007identification,bramoulle2009identification,Sacerdote2011}.
While using group fixed effects to address sorting into groups may
seem reasonable, it does not allow for unobserved characteristics
that simultaneously affect both group selection and individual outcomes
when these characteristics vary across individuals within the same
group (\citealp{bramoulle2009identification}; \citealp{patacchini2017heterogeneous}).\footnote{Another limitation of using group fixed effects is that it requires
sufficient variation in the regressors $\sum_{j=1}^{n}w_{ij}y_{j}$
and $\sum_{j=1}^{n}w_{ij}x_{j}$ within a group. If these regressors
represent group averages that include oneself, we cannot control for
group fixed effects due to perfect multicollinearity.}

In the following section, we develop a structural model of group formation
to examine how individuals sort into groups and how to effectively
account for endogenous sorting. 

\subsection{\protect\label{sec:Model.group}Group Formation}

\citet{brock2001interactions,brock2002multinomial,brock2005multinomial},
\citet{ioannides2008interactions}, and \citet{gu2024peer} have proposed
various models of group formation to analyze the sorting of individuals
into groups. These studies consider the framework where individuals
can unilaterally decide whether to join a group or which group to
join, resulting in a standard binary or multinomial discrete choice
model. While this approach is well-suited for applications such as
job training participation and neighborhood choice, there are scenarios
where groups also actively decide whether to admit individuals, as
seen in school choice, college admissions, medical residency placements,
and nursing home admissions. In these contexts, the formation of groups
is determined by the bilateral decisions of both individuals and groups. 

We consider a model of group formation that accounts for two-sided
decisions of individuals and groups. On one side, individuals choose
among groups based on their preferences. On the other side, groups
rank and admit individuals based on qualification criteria until reaching
their capacity constraints.\footnote{For example, in college admissions, a student's qualifications reflect
the preferences of colleges. In school choice, schools are typically
assigned based on administrative rules that prioritize factors such
as geographic proximity and sibling enrollment.} If groups have unlimited capacities and capacity constraints become
non-binding, the framework reduces to one-sided decisions made by
individuals, as in \citet{brock2005multinomial} and \citet{ioannides2008interactions}.

Our framework can be equivalently characterized as a two-sided many-to-one
matching model without transfers, where individuals in a group are
considered ``matched with'' the group (\citealp{azevedo_supply_2016};
\citealp{HSS_twosided}). Therefore, we follow this matching literature
to specify the model primitives and analyze equilibrium outcomes.
To the best of our knowledge, no previous studies have applied two-sided
matching to analyze selection into groups. 

\paragraph{Utility}

For individual $i\in\mathcal{N}$ and group $g\in\mathcal{G}$, let
$u_{ig}$ denote $i$'s utility of joining group $g$ and $v_{ig}$
denote $i$'s qualification for group $g$:
\begin{equation}
u_{ig}=z'_{i}\delta_{g}^{u}+\xi_{ig}\text{ and }v_{ig}=z'_{i}\delta_{g}^{v}+\eta_{ig},\label{eq:uig_vgi}
\end{equation}
where $z_{i}\in\mathbb{R}^{d_{z}}$ represents a vector of observed
characteristics that are either individual-specific (e.g., test score,
parental education, family income) or pair-specific (e.g., distance
to school, a student's minority status interacted with the past minority
composition of a school).\footnote{To illustrate that $z_{i}$ can flexibly include both individual-specific
and pair-specific variables, consider a two-group example where $v_{ig}=s_{i}\delta_{g,s}+d_{ig}\delta_{g,d}+\eta_{ig}$,
with $s_{i}$ being an individual-specific variable and $d_{ig}$
a pair-specific variable. This specification maps to equation (\ref{eq:uig_vgi})
by setting $z_{i}=(s_{i},d_{i1},d_{i2})'$, $\delta_{1}^{v}=(\delta_{1,s},\delta_{1,d},0)'$,
and $\delta_{2}^{v}=(\delta_{2,s},0,\delta_{2,d})'$.} The components of $z_{i}$ that appear in $u_{ig}$ and $v_{ig}$
may differ; in such cases, the corresponding coefficients for absent
components are set to zero. The group-specific coefficients $\delta_{g}^{u},\delta_{g}^{v}\in\mathbb{R}^{d_{z}}$
capture heterogeneous effects of $z_{i}$ across groups. The terms
$\xi_{ig}\in\mathbb{R}$ and $\eta_{ig}\in\mathbb{R}$ represent pair-specific
unobserved shocks to utility and qualification, respectively (e.g.,
family tradition, extracurricular activities).\footnote{Individuals may take anticipated outcomes into account when comparing
between groups (e.g., students considering post-graduation job prospects).
However, the utility specification in equation (\ref{eq:uig_vgi})
can be interpreted as a reduced form that implicitly accounts for
these expectations. Developing a model that explicitly incorporates
anticipated outcomes in the utility is beyond the scope of our analysis.} Let $\xi_{i}=(\xi_{i1},\dots,\xi_{iG})'$ and $\eta_{i}=(\eta_{i1},\dots,\eta_{iG})'$.
We assume that the joint distribution of $(\epsilon_{i},\xi_{i},\eta_{i})$
is nonparametric, which has the advantage of allowing $\epsilon_{i}$
to have flexible dependence with $\xi_{i}$ and $\eta_{i}$.\footnote{\label{fn:np_groupFE}The nonparametric specification implies that
$z_{i}$ cannot include a constant or group-specific variables, as
their effects cannot be separated from the nonparametric distribution
of $(\xi_{i},\eta_{i})$. In Section \ref{sec:Selection_bias}, we
impose additional restrictions on the distribution of $(\xi_{i},\eta_{i})$,
which then allows us to include group-level effects.}

To maintain a tractable framework, our model does not explicitly allow
for peer effects in group formation, meaning that the utility and
qualification functions in equation (\ref{eq:uig_vgi}) do not depend
on prospective groupmates. However, this theoretical simplification
does not necessarily preclude the empirical consideration of peer
influences in group formation. In fact, peer effects can be proxied
using the outcomes and characteristics of past group members (such
as average test scores, gender or racial composition, and average
family income), provided that these measures remain stable over time.

\paragraph{Equilibrium}

Following the matching literature \citep{roth1992two}, we assume
that the matching outcome is stable. \citet{azevedo_supply_2016}
showed that a stable matching exists and can be characterized by group
\textit{cutoffs}. For $g\in\mathcal{G}$, define the cutoff $p_{g}$
of group $g$ as the lowest qualification among the group members
if the capacity constraint is binding; otherwise, the cutoff is set
to $-\infty$. Namely, $p_{g}=\inf_{i:g_{i}=g}v_{ig}\text{ if }\sum_{i\in\mathcal{N}}1\{g_{i}=g\}=n_{g}$,
and $p_{g}=-\infty$ otherwise.\footnote{Following \citet{azevedo_supply_2016}, we assume that all individuals
are acceptable to all groups. Therefore, if a group has vacancies
remaining, any individual wishing to join can do so, resulting in
a cutoff of negative infinity. This assumption can be relaxed by introducing
an acceptance threshold whereby groups only admit individuals above
a certain acceptability level \citep{HSS_twosided}. We maintain this
simpler specification as it does not affect the main conceptual framework.}

Given a vector of cutoffs $p=(p_{1},\dots,p_{G})'$, let $\mathcal{C}_{i}(p)=\{g\in\mathcal{G}:\:v_{ig}\geqslant p_{g}\}\subseteq\mathcal{G}$
denote individual $i's$ choice set---the subset of groups for which
$i$ qualifies. Within $\mathcal{C}_{i}(p)$, $i$ chooses the group
that yields the highest utility, $g_{i}=\arg\max_{g\in\mathcal{C}_{i}(p)}u_{ig}$.
This is a multinomial discrete choice problem with the choice set
$\mathcal{C}_{i}(p)$ determined endogenously by the cutoffs $p$.\footnote{For simplicity of exposition, we assume that individuals always prefer
to join a group. This simplification can be relaxed by introducing
a utility $u_{i0}=\xi_{i0}$ for the outside option (i.e., not joining
any group). Such a relaxation requires no substantial modifications
to our theoretical framework.} The structure highlights a key distinction between one-sided and
two-sided group formation: in one-sided group formation, an individual's
choice set is assumed to include all groups, whereas two-sided group
formation allows the choice set to be latent and determined endogenously
\citep{Barseghyan2021,Agarwal2022}.

Individual $i$ joins group $g$ if and only if (i) $i$ qualifies
for group $g$, and (ii) for any other group $h\ne g$, either $i$
does not prefer group $h$, or $i$ does not qualify for group $h$.
This can be expressed as
\begin{eqnarray}
 &  & 1\{g_{i}=g\}\nonumber \\
 & = & 1\{v_{ig}\geq p_{g}\}\cdot\prod_{h\neq g}1\{u_{ih}<u_{ig}\:\text{or}\:v_{ih}<p_{h}\}\nonumber \\
 & = & 1\{\eta_{ig}\geq p_{g}-z'_{i}\delta_{g}^{v}\}\cdot\prod_{h\neq g}1\{\xi_{ih}-\xi_{ig}<z'_{i}(\delta_{g}^{u}-\delta_{h}^{u})\:\text{or}\:\eta_{ih}<p_{h}-z'_{i}\delta_{h}^{v}\}.\label{eq:glink}
\end{eqnarray}
From this expression, the group that $i$ joins depends on $i$'s
observed and unobserved characteristics $(z_{i},\xi_{i},\eta_{i})$
as well as the cutoffs $p$. We write $g_{i}=g(z_{i},\xi_{i},\eta_{i};p)$.\footnote{In the one-sided setting, the capacities are infinite, the cutoffs
$p_{g}$ are set to $-\infty$, and the consideration sets $\mathcal{C}_{i}(p)$
for all $i\in\mathcal{N}$ are equal to $\mathcal{G}$. The optimal
decision in equation (\ref{eq:glink}) reduces to a multinomial discrete
choice problem $1\{g_{i}=g\}=\prod_{k\neq g}1\{u_{ik}<u_{ig}\}=\prod_{k\neq g}1\{\xi_{ik}-\xi_{ig}<z'_{i}(\delta_{g}^{u}-\delta_{k}^{u})\}$.
The group that $i$ joins depends on $(z_{i},\xi_{i})$ only, that
is, $g_{i}=g(z_{i},\xi_{i})$.}

In a stable matching, a vector of cutoffs $p$ clears the supply and
demand for each group.\footnote{An equilibrium cutoff vector $p$ satisfies the market-clearing equations:
$\sum_{i\in\mathcal{N}}1\{g(z_{i},\xi_{i},\eta_{i};p)=g\}\leq n_{g}$
for all $g\in\mathcal{G}$, and $\sum_{i\in\mathcal{N}}1\{g(z_{i},\xi_{i},\eta_{i};p)=g\}=n_{g}$
if $p_{g}>-\infty$ \citep{azevedo_supply_2016}.} Let $\boldsymbol{\boldsymbol{z}}$ denote the $n\times d_{z}$ matrix
that stacks $z'_{i}$, $\boldsymbol{\xi}$ the $n\times G$ matrix
that stacks $\xi_{i}$, and $\boldsymbol{\eta}$ the $n\times G$
matrix that stacks $\eta_{i}$. An equilibrium cutoff vector can be
represented as $p(\boldsymbol{z},\boldsymbol{\xi},\boldsymbol{\eta})$.\footnote{There may exist multiple equilibrium cutoffs in a finite-$n$ market.
We denote by $p(\boldsymbol{z},\boldsymbol{\xi},\boldsymbol{\eta})$
the equilibrium that is selected by nature.} Given $p(\boldsymbol{z},\boldsymbol{\xi},\boldsymbol{\eta})$, the
equilibrium group that $i$ joins is given by equation (\ref{eq:glink})
and can be represented as $g_{i}=g(z_{i},\xi_{i},\eta_{i};p(\boldsymbol{z},\boldsymbol{\xi},\boldsymbol{\eta}))$.
The equilibrium groups of all the individuals can be expressed as
$\boldsymbol{g}(\boldsymbol{z},\boldsymbol{\xi},\boldsymbol{\eta};p(\boldsymbol{z},\boldsymbol{\xi},\boldsymbol{\eta}))$.

\section{\protect\label{sec:Selection_bias}Selection Bias}

In this section, we use the model described in Section \ref{sec:Model.group}
to derive the bias arising from selection into groups. This selection
bias is highly complex due to the presence of equilibrium effects
and the role of alternative groups. To address this, we propose novel
methods to simplify the selection bias and improve tractability.

Throughout the paper, we maintain the following assumptions.
\begin{assumption}
\label{ass:adj_exog}The adjacency matrix $\boldsymbol{w}$ is independent
of $\boldsymbol{\epsilon}$ conditional on \textup{$(\boldsymbol{x},\boldsymbol{z},\boldsymbol{g})$}.
\end{assumption}

\begin{assumption}
\label{ass:regular}(i) $(x_{i},z_{i},\epsilon_{i},\xi_{i},\eta_{i})$
is i.i.d. for $i\in\mathcal{N}$. (ii) The joint cdf of $(\epsilon_{i},\xi_{i},\eta_{i})$
is continuously differentiable. (iii) $(x_{i},z_{i})$ is independent
of $(\epsilon_{i},\xi_{i},\eta_{i})$ for $i\in\mathcal{N}$. 
\end{assumption}
As previously stated, Assumption \ref{ass:adj_exog} implies that
endogenous selection occurs only during group formation. For an adjacency
matrix $\boldsymbol{w}$ that represents group averages with binding
capacities, this assumption is trivially satisfied because $\boldsymbol{w}$
is determined by $\boldsymbol{g}$. For a more general $\boldsymbol{w}$,
the assumption requires that once groups are formed, how group members
form additional connections must be exogenous. This assumption is
satisfied if, for example, students in a school make friends independently
of $\boldsymbol{\epsilon}$ or are randomly assigned to dorms or classes
where they interact.\footnote{We can potentially relax the assumption by incorporating endogenous
friendship formation within a group, following the setup in e.g.,
\citet{johnsson2021estimation}. However, this added complexity would
not yield additional insights, so we do not pursue it here.} Assumption \ref{ass:regular} imposes a set of regularity conditions
that are standard in social interactions. It assumes that the observables
and unobservables are i.i.d. across individuals, the unobservables
have a smooth joint cdf, and the observables are exogenous.

Under Assumption \ref{ass:adj_exog}, equation (\ref{eq:linmean})
exhibits a selection bias if
\begin{align}
\mathbb{E}[\epsilon_{i}|\boldsymbol{x},\boldsymbol{z},\boldsymbol{g},\boldsymbol{w}]=\mathbb{E}[\epsilon_{i}|\boldsymbol{x},\boldsymbol{z},\boldsymbol{g}] & \neq0.\label{eq:selection_bias}
\end{align}
Recall that the equilibrium groups $\boldsymbol{g}$ can be expressed
as $\boldsymbol{g}(\boldsymbol{z},\boldsymbol{\xi},\boldsymbol{\eta};p(\boldsymbol{z},\boldsymbol{\xi},\boldsymbol{\eta}))$,
with each individual component given by $g_{i}=g(z_{i},\xi_{i},\eta_{i};p(\boldsymbol{z},\boldsymbol{\xi},\boldsymbol{\eta}))$,
$i\in\mathcal{N}$. Suppose that the outcome unobservable $\epsilon_{i}$
is correlated with the unobservables in group formation $(\xi_{i},\eta_{i})$
(selection on the unobservables). Under Assumption \ref{ass:regular}(i)(iii),
this correlation may lead to a selection bias through two channels.
First, the equilibrium group that individual $i$ joins $g_{i}$ depends
on $i$'s unobservables in group formation $(\xi_{i},\eta_{i})$,
which are correlated with $\epsilon_{i}$. Therefore, $\epsilon_{i}$
can be correlated with $g_{i}$ directly through $(\xi_{i},\eta_{i})$,
resulting in a selection bias. Second, observe that the equilibrium
cutoffs $p(\boldsymbol{z},\boldsymbol{\xi},\boldsymbol{\eta})$ depend
on the unobservable profile in group formation $(\boldsymbol{\xi},\boldsymbol{\eta})$
which includes $(\xi_{i},\eta_{i})$. Moreover, these equilibrium
cutoffs affect not only the group that $i$ joins $g_{i}$, but also
the groups that others join $g_{j}$, $j\neq i$. Consequently, $\epsilon_{i}$
can be correlated with the entire group structure $\boldsymbol{g}$,
including both $g_{i}$ and $g_{j}$, $j\ne i$, indirectly through
$p(\boldsymbol{z},\boldsymbol{\xi},\boldsymbol{\eta})$, again resulting
in a selection bias.

We illustrate the selection bias arising through the first channel
using college admissions as an example. Suppose that unobserved ability
($\epsilon_{i}$) that affects labor market outcomes ($y_{i}$) is
positively correlated with unobserved ability ($\eta_{ig}$) that
influences college qualifications ($v_{ig}$). Additionally, let $x_{i}$
and $z_{i}$ represent family income, which has a positive impact
on both college qualifications and labor market outcomes. Through
the college admission process, students with higher qualifications
($v_{ig}=z_{i}\delta_{g}^{v}+\eta_{ig}$) are sorted into more selective
colleges. This sorting implies that students in more selective colleges
are either more capable (high $\eta_{ig}$) or come from wealthier
families (high $z_{i}$). As a result, the sorting creates a positive
correlation between $\epsilon_{i}$ and the average family income
of peers in $i$'s college ($\sum_{j}w_{ij}x_{j}$), leading to upward
bias in estimates of peer effects \citep{dale_estimating_2002}.

The general equilibrium effects through the second channel present
a challenge in correcting for selection bias. Because the equilibrium
cutoffs $p(\boldsymbol{z},\boldsymbol{\xi},\boldsymbol{\eta})$ depend
on the characteristics of the $n$ individuals in a market, the selection
bias in equation (\ref{eq:selection_bias}) is a high-dimensional
function that involves the observed characteristics of the $n$ individuals.
Below, we propose an approach to address this dimensionality issue.

\subsection{\protect\label{sec:limiting}Limiting Approximation}

Our approach is to employ the limiting approximation of a market as
its size $n$ approaches infinity. We show that, in a large market,
the correlation between $\epsilon_{i}$ and $\boldsymbol{g}$ through
equilibrium cutoffs vanishes, thereby reducing the dimensionality
of selection bias.

To this end, let $p_{n}=(p_{n,1},\dots,p_{n,G})'$ denote the equilibrium
cutoffs in a market with $n$ individuals. \citet{azevedo_supply_2016}
demonstrated that, as the market size grows large ($n\to\infty$),
the equilibrium cutoffs in a finite-$n$ market converge to a unique
limit, denoted by $p^{*}=(p_{1}^{*},\dots,p_{G}^{*})'$, referred
to as the limiting equilibrium cutoffs. Unlike the finite-$n$ cutoffs
$p_{n}$, the limiting cutoffs $p^{*}$ are deterministic because
they are determined by the distribution of characteristics.\footnote{The convergence result in \citet{azevedo_supply_2016} requires that
the number of groups is finite and that the size of each group grows
proportionally large. To apply their result, we adopt the same assumption.
While extending the analysis to allow for a growing number of groups
is an interesting direction, we leave this for future work.} Building on the cutoff convergence result established by \citet{azevedo_supply_2016},
we show that the selection bias in a finite-$n$ market also converges
to the limiting counterpart with $p^{*}$ in place of $p_{n}$.
\begin{prop}[Limiting approximation]
\label{prop:biascov} Under Assumption \ref{ass:regular}(i)-(ii),
we have
\begin{equation}
\mathbb{E}[\epsilon_{i}|\boldsymbol{x},\boldsymbol{z},\boldsymbol{g}(\boldsymbol{z},\boldsymbol{\xi},\boldsymbol{\eta};p_{n})]\overset{p}{\rightarrow}\mathbb{E}[\epsilon_{i}|x_{i},z_{i},g(z_{i},\xi_{i},\eta_{i};p^{*})],\text{ as }n\rightarrow\infty.\label{eq:limiting_bias}
\end{equation}
\end{prop}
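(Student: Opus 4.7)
The plan is to decompose the proof into two steps. Step (i): replace the random finite-$n$ cutoff vector $p_{n}$ by the deterministic limiting cutoff vector $p^{*}$ inside the conditioning, paying only an $o_{p}(1)$ price. Step (ii): once the cutoffs are deterministic, exploit the i.i.d.\ assumption to collapse the high-dimensional conditioning on $(\boldsymbol{x},\boldsymbol{z},\boldsymbol{g}(\boldsymbol{z},\boldsymbol{\xi},\boldsymbol{\eta};p^{*}))$ down to $(x_{i},z_{i},g(z_{i},\xi_{i},\eta_{i};p^{*}))$.

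For step (i), I would invoke the cutoff-convergence theorem of \citet{azevedo_supply_2016}, which under the maintained market structure (finitely many groups, capacities proportional to $n$, continuous type distributions) delivers $p_{n}\overset{p}{\rightarrow}p^{*}$. The selection rule in equation (\ref{eq:glink}) is piecewise constant in $p$, with boundaries determined by the ties $\eta_{ig}=p_{g}-z_{i}'\delta_{g}^{v}$ and $u_{ih}=u_{ig}$. Because the joint cdf of $(\epsilon_{i},\xi_{i},\eta_{i})$ is continuously differentiable (Assumption \ref{ass:regular}(ii)), each such boundary event has probability zero, so both the numerator $p\mapsto\mathbb{E}[\epsilon_{i}\mathbf{1}\{g(z_{i},\xi_{i},\eta_{i};p)=g\}\mid x_{i},z_{i}]$ and the denominator $p\mapsto\Pr(g(z_{i},\xi_{i},\eta_{i};p)=g\mid x_{i},z_{i})$ are continuous at $p=p^{*}$. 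Combining this continuity with the cutoff convergence via the continuous mapping theorem justifies replacing $p_{n}$ by $p^{*}$ in the conditioning, up to an $o_{p}(1)$ remainder.

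For step (ii), treat $p^{*}$ as a fixed constant. Then $g_{j}^{*}:=g(z_{j},\xi_{j},\eta_{j};p^{*})$ is a measurable function of $(z_{j},\xi_{j},\eta_{j})$ alone, with no feedback through an endogenous cutoff. By Assumption \ref{ass:regular}(i) and (iii), the collection $\{(x_{j},z_{j},\xi_{j},\eta_{j})\}_{j\neq i}$ is independent of $(\epsilon_{i},x_{i},z_{i},\xi_{i},\eta_{i})$, so the variables $\{(x_{j},z_{j},g_{j}^{*})\}_{j\neq i}$ contribute no information about $\epsilon_{i}$ beyond what is already contained in $(x_{i},z_{i},g_{i}^{*})$. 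Iterated expectations then yield
\begin{equation*}
\mathbb{E}[\epsilon_{i}\mid\boldsymbol{x},\boldsymbol{z},\boldsymbol{g}(\boldsymbol{z},\boldsymbol{\xi},\boldsymbol{\eta};p^{*})]=\mathbb{E}[\epsilon_{i}\mid x_{i},z_{i},g(z_{i},\xi_{i},\eta_{i};p^{*})],
\end{equation*}
and combining with step (i) delivers (\ref{eq:limiting_bias}).

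The main obstacle is step (i): the left-hand side of (\ref{eq:limiting_bias}) is a conditional expectation whose conditioning $\sigma$-field depends on the random quantity $p_{n}$, and the standard continuous mapping theorem does not apply directly to conditional expectations when the conditioning set itself drifts with $n$. To handle this, I would write $\mathbb{E}[\epsilon_{i}\mid x_{i},z_{i},g_{i};p]$ as an explicit ratio of integrals against the continuously differentiable density of $(\epsilon_{i},\xi_{i},\eta_{i})$ given $(x_{i},z_{i})$ supplied by Assumption \ref{ass:regular}(ii)-(iii); show this ratio is Lipschitz in $p$ on a neighborhood of $p^{*}$ where the denominator is bounded away from zero; apply the continuous mapping theorem to $p_{n}\overset{p}{\rightarrow}p^{*}$; and close with a dominated-convergence argument (noting that $|\epsilon_{i}|$ times an indicator is integrable) to upgrade pointwise continuity at $p^{*}$ to convergence in probability of the full conditional expectation.
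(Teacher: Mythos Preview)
Your proposal is correct and follows essentially the same two-step approach as the paper: first invoke \citet{azevedo_supply_2016} for $p_{n}\overset{p}{\to}p^{*}$ and pass through the continuous mapping theorem using continuity of the conditional expectation in $p$ (from Assumption~\ref{ass:regular}(ii)), then collapse the conditioning via the i.i.d.\ structure (Assumption~\ref{ass:regular}(i)) once the cutoffs are deterministic. You are in fact more careful than the paper's terse proof about the technical issue of the drifting conditioning $\sigma$-field, and your ratio-of-integrals representation is the right way to make that step rigorous; note also that Assumption~\ref{ass:regular}(iii) is not needed in step (ii), as independence of $\{(x_j,z_j,\xi_j,\eta_j)\}_{j\neq i}$ from $\epsilon_i$ already follows from (i) alone.
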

The proposition indicates that we can approximate the finite-market
selection bias using its limiting counterpart. Because the limiting
cutoffs are deterministic, each individual's selection bias depends
only on their own characteristics. This reduces the dimensionality
of the selection bias from $O(n)$ to a finite number.

Large-market approximation has been widely used in the matching literature
\citep{choo_who_2006,menzel_large_2015,azevedo_supply_2016,fack_beyond_2019,galichon2022cupid,HSS_twosided}.
We follow the literature and assume that selection bias takes the
limiting form.\footnote{\citet{azevedo_supply_2016} demonstrated that the finite-$n$ cutoffs
converge to the limiting cutoffs at the rate of $\sqrt{n}$. However,
addressing the approximation error in the subsequent asymptotic analysis
remains an open question for future research.}
\begin{assumption}
\label{ass:limit}The groups in the data are formed based on the limiting
cutoffs $p^{*}$.
\end{assumption}

\subsection{\protect\label{sec:exchange}Group-Invariant and Symmetric Selection
Function}

We now derive the limiting selection bias in (\ref{eq:limiting_bias})
in an explicit form. Let $\tau_{ig}\equiv z'_{i}\delta_{g}^{u}\in\mathbb{R}$
and $\iota_{ig}\equiv z'_{i}\delta_{g}^{v}-p_{g}\in\mathbb{R}$ denote
the preference and qualification (net cutoff) indices of individual
$i$ for group $g\in\mathcal{G}$, respectively. By equation (\ref{eq:glink})
and the exogeneity of $(x_{i},z_{i})$ (Assumption \ref{ass:regular}(iii)),
we can represent the limiting selection bias in (\ref{eq:limiting_bias})
as
\begin{eqnarray}
\mathbb{E}[\epsilon_{i}|x_{i},z_{i},g_{i}=g] & = & \mathbb{E}[\epsilon_{i}|x_{i},z_{i},\eta_{ig}\geq p_{g}-z'_{i}\delta_{g}^{v},\nonumber \\
 &  & \xi_{ih}-\xi_{ig}<z'_{i}\delta_{g}^{u}-z'_{i}\delta_{h}^{u}\:\text{or}\:\eta_{ih}<p_{h}-z'_{i}\delta_{h}^{v},\forall h\neq g]\nonumber \\
 & \equiv & \lambda_{g}(\tau_{i1},\iota_{i1}\dots,\tau_{iG},\iota_{iG}),\label{eq:npbias}
\end{eqnarray}
where $\lambda_{g}(\cdot):\mathbb{R}^{2G}\rightarrow\mathbb{R}$ is
a nonparametric function of $(\tau_{ig},\iota_{ig}){}_{g=1}^{G}$,
referred to as the selection function.\footnote{\label{fn:p_unbinding}A group's qualification index is relevant only
when its capacity constraint binds. For groups with non-binding capacity
constraints, their qualification indices can be omitted. Specifically,
let $\overline{\mathcal{G}}\subseteq\mathcal{G}$ denote the subset
of capacity-binding groups, with cardinality $\overline{G}$. Then
$\lambda_{g}(\cdot)$ is a function of $(\tau_{ig})_{g\in\mathcal{G}}\in\mathbb{R}^{G}$
and $(\iota_{ig})_{g\in\overline{\mathcal{G}}}\in\mathbb{R}^{\overline{G}}$.} The expression of the selection bias in (\ref{eq:npbias}) supports
the approach proposed by \citet{dale_estimating_2002}, who overcame
selection bias by matching students who applied to and were accepted
by comparable sets of colleges. In our context, however, such application
and admission information is not observed. Equation (\ref{eq:npbias})
extends the results in standard sample selection models (\citealp{heckman1979sample};
\citealp{das2003nonparametric}) to social interactions with endogenous
group selection. In \citet{heckman1979sample}'s model, there is a
single ``group'' (the sample) and individuals decide whether to
join that group (be selected into the sample) through a binary choice.
\citet{das2003nonparametric} extended \citet{heckman1979sample}'s
framework by considering a binary choice determined by multiple rules
under nonparametric specifications. Our approach shares a similar
structure, as it can also be viewed as a binary choice (i.e., $1\{g_{i}=g\}$)
determined by multiple rules (e.g., $1\{u_{ig}>u_{ih}\}$ and $1\{v_{ig}>p_{g}\}$),
as shown in equation (\ref{eq:glink}). However, \citet{das2003nonparametric}
focused on selection into a single sample governed by multiple rules,
while our approach addresses selection among multiple groups.\footnote{\citet{das2003nonparametric} represented selection bias as a nonparametric
function of the propensity scores corresponding to each selection
rule. In our setting, however, the propensity scores associated with
the selection rules in equation (\ref{eq:glink}) (e.g., $1\{u_{ig}>u_{ih}\}$
and $1\{v_{ig}>p_{g}\}$) are \textit{not} available, as we do not
observe individuals' rankings of the groups or whether they qualify
for a group. Instead, we express selection bias as a function of indices.} In social interactions, \citet{brock2001interactions,brock2002multinomial,brock2005multinomial},
\citet{ioannides2008interactions}, and \citet{gu2024peer} studied
social interactions under one-sided group formation with parametric
unobservables. We generalize their results to two-sided group formation
with nonparametric unobservables.

The subscript $g$ in $\lambda_{g}(\cdot)$ indicates that its functional
form can differ across groups. Below we illustrate $\lambda_{g}(\cdot)$
in the case of three groups.
\begin{example}
\label{ex:selbias}Consider three groups ($G=3$) with the preference
and qualification indices $\tau_{ig}=z'_{i}\delta_{g}^{u}$ and $\iota_{ig}=z'_{i}\delta_{g}^{v}-p_{g}$,
$g=1,2,3$. Let $\xi_{i}=(\xi_{i1},\xi_{i2},\xi_{i3})'\in\mathbb{R}^{3}$
and $\eta_{i}=(\eta_{i1},\eta_{i2},\eta_{i3})'\in\mathbb{R}^{3}$
represent the vectors of unobservables. Denote by $f(\epsilon_{i},\xi_{i},\eta_{i})$
the joint pdf of $(\epsilon_{i},\xi_{i},\eta_{i})$ and $f(\xi_{i},\eta_{i})$
the joint pdf of $(\xi_{i},\eta_{i})$. For group $g=1,2,3$, define
$R_{g}(\tau_{i1},\iota_{i1},\dots,\tau_{i3},\iota_{i3})$ as the subset
of unobservables $(\xi{}_{i},\eta{}_{i})\in\mathbb{R}^{6}$ for which
individual $i$ joins group $g$. Formally, $R_{g}(\tau_{i1},\iota_{i1},\dots,\tau_{i3},\iota_{i3})=\{(\xi{}_{i},\eta{}_{i})\in\mathbb{R}^{6}:\eta_{ig}\geq-\iota_{ig},\xi_{ih}-\xi_{ig}<\tau_{ig}-\tau_{ih}\text{ or }\eta_{ih}<-\iota_{ih},\forall h\neq g\}$.
The selection bias of individual $i$ when joining group $g$ is
\begin{eqnarray}
\mathbb{E}[\epsilon_{i}|x_{i},z_{i},g_{i}=g] & = & \frac{\int_{R_{g}(\tau_{i1},\iota_{i1},\dots,\tau_{i3},\iota_{i3})}\epsilon_{i}f(\epsilon_{i},\xi_{i},\eta_{i})\textrm{d}\epsilon_{i}\textrm{d}\xi_{i}\textrm{d}\eta_{i}}{\int_{R_{g}(\tau_{i1},\iota_{i1},\dots,\tau_{i3},\iota_{i3})}f(\xi_{i},\eta_{i})\textrm{d}\xi_{i}\textrm{d}\eta_{i}}\nonumber \\
 & \equiv & \lambda_{g}(\tau_{i1},\iota_{i1},\dots,\tau_{i3},\iota_{i3}),\enspace g=1,2,3.\label{eq:eg-selbias}
\end{eqnarray}
The denominator in equation (\ref{eq:eg-selbias}) equals the conditional
probability that $i$ joins group $g$, $\Pr(g_{i}=g|z_{i})$.
\end{example}
The selection bias in equation (\ref{eq:npbias}) present two features
that complicate its correction. First, the selection function $\lambda_{g}(\cdot)$
may be group-specific if the distribution of the unobservables $(\xi_{ig},\eta_{ig})$
differs across groups.\footnote{Equation (\ref{eq:npbias}) also indicates that the indices for group
$g$ play a distinct role in the selection function compared to those
for the other groups. Nevertheless, this group-specific feature can
be addressed by separating the indices for group $g$ from those for
the other groups, as formulated in equation (\ref{eq:npbias-exch}).} A group-specific selection function poses a challenge in identifying
the effects of group-level variables, as these effects cannot be distinguished
from nonparametric selection bias that is specific to each group.\footnote{Existing literature (e.g., \citealp{brock2005multinomial}) does not
encounter this issue because with parametric assumptions, the functional
form of $\lambda_{g}(\cdot)$ is known.} This problem resembles an issue in panel data models, where the effects
of time-invariant variables cannot be distinguished from individual
fixed effects. In our context, there could be variables of interest
that are at group level. For example, let $w_{i}$ denote the $i$th
row of $\boldsymbol{w}$. if $w_{i}\boldsymbol{y}$ and $w_{i}\boldsymbol{x}$
are group averages that include individual $i$ as in \citet{Manski1993},
they are invariant within a group.\footnote{\label{fn:exclude_variation}If $w_{i}\boldsymbol{y}$ and $w_{i}\boldsymbol{x}$
are group averages that exclude $i$, they converge to including-oneself
group averages as group size goes to infinity (see footnote \ref{fn:exclude_limit}).
Hence, the variation of $w_{i}\boldsymbol{y}$ and $w_{i}\boldsymbol{x}$
within a group vanishes to zero as the group size grows.} Moreover, in certain applications group fixed effects may themselves
be the parameters of interest. For instance, school fixed effects
may be used as measures of school effectiveness in school choice.
Second, the selection function depends on the indices of all groups,
which may lead to a dimensionality issue if the number of groups is
moderately large.

We propose a novel method to overcome these problems. Note that by
appropriately arranging the indices, the selection function can be
made group-invariant and symmetric in alternative groups, provided
that the distribution of unobservables exhibits symmetry across groups.
Motivated by this insight, we assume that the joint distribution of
unobservables is exchangeable across groups.
\begin{assumption}[Exchangeability]
\label{ass:exch}The joint pdf of $(\epsilon_{i},\xi_{i},\eta_{i})$
is exchangeable in group identities, that is,
\[
f(\epsilon_{i},\xi_{i1},\dots,\xi_{iG},\eta_{i1},\dots,\eta_{iG})=f(\epsilon_{i},\xi_{ik_{1}},\dots,\xi_{ik_{G}},\eta_{ik_{1}},\dots,\eta_{ik_{G}}),
\]
for any permutation $(k_{1},\dots,k_{G})$ of \textup{$(1,\dots,G)$.}
\end{assumption}
Assumption \ref{ass:exch} requires that the joint distribution of
unobservables remains invariant under the relabeling of groups. The
concept of exchangeability has been widely used in various contexts,
such as differentiated product markets \citep{gandhi2019measuring},
panel data \citep{altonji_matzkin}, matching \citep{fox_unobserved_2018},
and network formation \citep{menzel2021central}, although the specific
methodologies vary. Exchangeability can accommodate a complex dependence
structure among the unobservables of an individual. Specifically,
it allows for (i) the dependence between the unobserved preference
$\xi_{ig}$ and the unobserved qualification $\eta_{ig}$ for each
group $g\in G$, and (ii) the dependence between the unobservables
$(\xi_{ig},\eta_{ig})$ for group $g$ and the unobservables $(\xi_{ih},\eta_{ih})$
for another group $h\neq g$. However, because exchangeability imposes
symmetry across groups in the distribution of unobservables, it rules
out scenarios where the variance of unobservables differs across groups
or where the correlation between the unobservables of two groups differs
across pairs of groups.\footnote{For example, in college admissions, students' unobserved preferences
may be more strongly correlated between elite colleges than between
elite and non-elite colleges. Similarly, elite colleges may share
unobserved evaluation criteria for qualitative factors (e.g., extracurriculars)
that differ from those used by non-elite colleges.} In such scenarios, we can relax Assumption \ref{ass:exch} by categorizing
groups into distinct types (e.g., public and private schools) and
imposing exchangeability only among groups of the same type.\footnote{The subsequent results can be extended to exchangeability conditional
on group type. However, for notational simplicity, we retain the basic
setting of exchangeability across all groups in the main text. In
our empirical application, we impose exchangeability separately for
public and private schools. See Section \ref{sec:Empirical} for more
details.} Furthermore, note that we can include group fixed effects in $u_{ig}$
to account for unobserved heterogeneity across groups, but the individual-varying
unobservables must not exhibit group heterogeneity.\footnote{While we could also include group fixed effects in $v_{ig}$, they
become redundant when groups rank individuals based on $v_{ig}$ and
cannot be distinguished from the cutoffs.} Assumption \ref{ass:exch} is satisfied under the usual logit or
probit specifications.\footnote{For example, $\xi_{ig}=a_{i}+\tilde{\xi}_{ig}$, where $a_{i}$ represents
an individual effect that is i.i.d. across $i$, and $\tilde{\xi}_{ig}$
follows a type I extreme value or Gaussian distribution, and is i.i.d.
across both $i$ and $g$.}

Observe from equation (\ref{eq:npbias}) that only utility differences
matter. For $g\in\mathcal{G}$, we define the utility difference $\Delta_{g}\tau_{ih}\equiv\tau_{ih}-\tau_{ig}$
for $h\neq g$. By separating the index $\iota_{ig}$ for group $g$
from those for the other groups, the selection bias in equation (\ref{eq:npbias})
can be reformulated as
\begin{equation}
\mathbb{E}[\epsilon_{i}|x_{i},z_{i},g_{i}=g]\equiv\lambda_{g}^{e}(\iota_{ig};\Delta_{g}\tau_{ih},\iota_{ih},\forall h\neq g).\label{eq:npbias-exch}
\end{equation}
Under exchangeability, the selection function $\lambda_{g}^{e}(\cdot)$
becomes invariant across groups, that is, there exists $\lambda^{e}(\cdot)$
such that $\lambda_{g}^{e}(\cdot)\equiv\lambda^{e}(\cdot)$ for all
$g\in\mathcal{G}$. Furthermore, exchangeability implies that the
ordering of the index pairs $(\Delta_{g}\tau_{ih},\iota_{ih})$ across
all $h\neq g$ is irrelevant -- the selection function is symmetric
in the indices for groups other than $g$.
\begin{prop}[Group invariance and symmetry]
\label{prop:exch}Under Assumptions \ref{ass:adj_exog}--\ref{ass:exch},
the selection function $\lambda_{g}^{e}(\cdot)$ defined in equation
(\ref{eq:npbias-exch}) satisfies the following properties: (i) it
is invariant across $g$, that is, $\lambda_{g}^{e}(\cdot)\equiv\lambda^{e}(\cdot)$
for all $g\in\mathcal{G}$; (ii) it is symmetric in the index pairs
$(\Delta_{g}\tau_{ih},\iota_{ih})$ across all $h\neq g$.
\end{prop}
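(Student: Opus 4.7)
The plan is to prove both claims by applying a change of variables to the integral representation of $\lambda_g^e$ and exploiting the invariance of $f(\epsilon_i,\xi_i,\eta_i)$ under permutations of the group coordinates.

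Following Example \ref{ex:selbias}, I first write
\[
\lambda_g^e(\iota_{ig};(\Delta_g\tau_{ih},\iota_{ih})_{h\neq g})=\frac{\int_{R_g}\epsilon_i\,f(\epsilon_i,\xi_i,\eta_i)\,d\epsilon_i d\xi_i d\eta_i}{\int_{R_g}f(\xi_i,\eta_i)\,d\xi_i d\eta_i},
\]
where, by (\ref{eq:glink}), the set $R_g\subset\mathbb{R}^{2G}$ consists of $(\xi_i,\eta_i)$ satisfying $\eta_{ig}\geq-\iota_{ig}$ and, for every $h\neq g$, $\xi_{ih}-\xi_{ig}<-\Delta_g\tau_{ih}$ or $\eta_{ih}<-\iota_{ih}$. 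Assumption \ref{ass:exch} is equivalent to the statement that for any permutation $\pi$ of $\{1,\dots,G\}$, $f$ is invariant when the group coordinates of $\xi_i$ and $\eta_i$ are simultaneously relabeled by $\pi$; the Jacobian of any such relabeling is $\pm 1$.

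I would prove Part (ii) first. Fix any permutation $\sigma$ of $\{1,\dots,G\}\setminus\{g\}$ and extend it by $\sigma(g)=g$. In both integrals, apply the substitution $\tilde\xi_{ij}=\xi_{i,\sigma(j)}$, $\tilde\eta_{ij}=\eta_{i,\sigma(j)}$. Exchangeability leaves $f$ unchanged; the own-group condition $\eta_{ig}\geq-\iota_{ig}$ is preserved since $\sigma(g)=g$; and the condition indexed by $h\neq g$ becomes $\tilde\xi_{i,\sigma^{-1}(h)}-\tilde\xi_{ig}<-\Delta_g\tau_{ih}$ or $\tilde\eta_{i,\sigma^{-1}(h)}<-\iota_{ih}$. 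Re-indexing by $k=\sigma^{-1}(h)$ identifies the transformed region with $R_g$ evaluated at the permuted pairs $(\Delta_g\tau_{i,\sigma(k)},\iota_{i,\sigma(k)})_{k\neq g}$. Hence both numerator and denominator are invariant under arbitrary permutations of the non-$g$ index pairs, giving (ii).

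For Part (i), fix $g\neq g'$ and apply the same kind of change of variables to the transposition $\pi$ swapping $g$ and $g'$. Tracking the inequalities, the own-group constraint becomes $\tilde\eta_{ig'}\geq-\iota_{ig}$; the constraint for $h=g'$ becomes $\tilde\xi_{ig}-\tilde\xi_{ig'}<-\Delta_g\tau_{ig'}$ or $\tilde\eta_{ig}<-\iota_{ig'}$; and the constraints for $h\notin\{g,g'\}$ take the form $\tilde\xi_{ih}-\tilde\xi_{ig'}<-\Delta_g\tau_{ih}$ or $\tilde\eta_{ih}<-\iota_{ih}$. Matching this termwise against the definition of $R_{g'}$ shows that the transformed region is exactly the one used to define $\lambda_{g'}^e$ at own-group index $\iota_{ig}$, with the pair $(\Delta_g\tau_{ig'},\iota_{ig'})$ placed in the slot labeled by $g$ and the pairs $(\Delta_g\tau_{ih},\iota_{ih})$ placed in the slots labeled by $h\notin\{g,g'\}$. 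The unordered collection of $G-1$ pairs fed to $\lambda_{g'}^e$ is therefore identical to $\{(\Delta_g\tau_{ih},\iota_{ih}):h\neq g\}$. Invoking the symmetry from (ii) to remove any residual dependence on the labeling, I conclude $\lambda_g^e\equiv\lambda_{g'}^e$ as functions, proving (i).

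The main obstacle is bookkeeping in Part (i): one must verify carefully that the inequalities, after the transposition, rearrange into exactly the inequalities defining $R_{g'}$ under the correct correspondence of arguments. Ordering the proof as (ii) then (i) sidesteps the awkwardness, because once the non-$g$ pairs can be treated as an unordered multiset, Part (i) reduces to checking that the swap produces the same multiset.
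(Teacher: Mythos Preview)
Your proposal is correct and takes essentially the same approach as the paper: both arguments express $\lambda_g^e$ via the integral representation and then exploit the permutation invariance of $f(\epsilon_i,\xi_i,\eta_i)$ under relabeling of group coordinates. The only difference is organizational---you prove (ii) first and invoke it to clean up the labeling in (i), whereas the paper proves (i) directly (and somewhat more tersely) by asserting that the relevant joint distributions and integration regions are ``structurally identical,'' and then proves (ii) separately.
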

\begin{example}[Example \ref{ex:selbias} continued]
\label{ex:selbias-exch}Consider the three-group example discussed
in Example \ref{ex:selbias}. Rewrite the set $R_{g}(\tau_{i1},\iota_{i1},\dots,\tau_{i3},\iota_{i3})$
in Example \ref{ex:selbias} as $R_{g}(\iota_{ig};\Delta_{g}\tau_{ih},\iota_{ih},\forall h\neq g)$,
where we separate out group $g$'s index $\iota_{ig}$ from those
of other groups and take utility difference. We can rewrite the selection
bias in (\ref{eq:eg-selbias}) as
\begin{eqnarray}
\mathbb{E}[\epsilon_{i}|x_{i},z_{i},g_{i}=g] & = & \frac{\int_{R_{g}(\iota_{ig};\Delta_{g}\tau_{ih},\iota_{ih},\forall h\neq g)}\epsilon_{i}f(\epsilon_{i},\xi_{i},\eta_{i})\textrm{d}\epsilon_{i}\textrm{d}\xi_{i}\textrm{d}\eta_{i}}{\int_{R_{g}(\iota_{ig};\Delta_{g}\tau_{ih},\iota_{ih},\forall h\neq g)}f(\xi_{i},\eta_{i})\textrm{d}\xi_{i}\textrm{d}\eta_{i}}\nonumber \\
 & \equiv & \lambda_{g}^{e}(\iota_{ig};\Delta_{g}\tau_{ih},\iota_{ih},\forall h\neq g),\enspace g=1,2,3.\label{eq:eg-selbias-exch}
\end{eqnarray}
Under the exchangeability assumption, the unobservable vectors $(\eta_{i1},\xi_{i2}-\xi_{i1},\eta_{i2},\xi_{i3}-\xi_{i1},\eta_{i3})$,
$(\eta_{i2},\xi_{i1}-\xi_{i2},\eta_{i1},\xi_{i3}-\xi_{i2},\eta_{i3})$
and $(\eta_{i3},\xi_{i1}-\xi_{i3},\eta_{i1},\xi_{i2}-\xi_{i3},\eta_{i2})$
have the same joint distributions. As a result, the three selection
functions are identical, that is, $\lambda_{1}^{e}(\cdot)=\lambda_{2}^{e}(\cdot)=\lambda_{3}^{e}(\cdot)\equiv\lambda^{e}(\cdot)$.
Moreover, note that the joint distribution of $(\eta_{i1},\xi_{i2}-\xi_{i1},\eta_{i2},\xi_{i3}-\xi_{i1},\eta_{i3})$
is symmetric in $(\xi_{i2}-\xi_{i1},\eta_{i2})$ and $(\xi_{i3}-\xi_{i1},\eta_{i3})$.
This implies that $\lambda^{e}(\iota_{i1};\Delta_{1}\tau_{i2},\iota_{i2},\Delta_{1}\tau_{i3},\iota_{i3})$
is symmetric in $(\Delta_{1}\tau_{i2},\iota_{i2})$ and $(\Delta_{1}\tau_{i3},\iota_{i3})$.
Similarly, we can see that $\lambda^{e}(\iota_{i2};\Delta_{2}\tau_{i1},\iota_{i1},\Delta_{2}\tau_{i3},\iota_{i3})$
is symmetric in $(\Delta_{2}\tau_{i1},\iota_{i1})$ and $(\Delta_{2}\tau_{i3},\iota_{i3})$,
and $\lambda^{e}(\iota_{i3};\Delta_{3}\tau_{i1},\iota_{i1},\Delta_{3}\tau_{i2},\iota_{i2})$
is symmetric in $(\Delta_{3}\tau_{i1},\iota_{i1})$ and $(\Delta_{3}\tau_{i2},\iota_{i2})$.
\end{example}
Proposition \ref{prop:exch} has important implications. First, the
group invariance in Proposition \ref{prop:exch}(i) implies that we
can use a single selection function for all the groups. Second, the
symmetry established in Proposition \ref{prop:exch}(ii) can further
reduce the dimensionality of the selection function. Note that the
selection function depends not only on the index of the group an individual
joins $\iota_{ig_{i}}$, but also on the indices of all other groups
$(\Delta_{g_{i}}\tau_{ih},\iota_{ih})$, $h\neq g_{i}$. Because the
selection function is symmetric in the indices $(\Delta_{g_{i}}\tau_{ih},\iota_{ih})$
for $h\neq g_{i}$, we can equivalently express the selection bias
in equation (\ref{eq:npbias-exch}) using elementary symmetric functions
of these indices.

Specifically, we denote the elementary symmetric functions of $(\Delta_{g_{i}}\tau_{ih},\iota_{ih}){}_{\forall h\ne g_{i}}\in\mathbb{R}^{2(G-1)}$
as $\pi_{i,-g_{i}}$, which can be represented by the coefficients
of the polynomial function $\prod_{h\neq g_{i}}(1+(\Delta_{g_{i}}\tau_{ih},\iota_{ih})t)$
in the indeterminates $t=(t_{1},t_{2})'$ \citep{Weyl1946}.\footnote{\label{fn:tau_sym}The first two orders of the elementary symmetric
functions are given by the sums of all individual terms ($\sum_{h\neq g_{i}}\Delta_{g_{i}}\tau_{ih}$
and $\sum_{h\neq g_{i}}\iota_{ih}$) and the sums of all pairwise
products ($\sum_{(h_{1},h_{2})\neq g_{i}}\Delta_{g_{i}}\tau_{ih_{1}}\Delta_{g_{i}}\tau_{ih_{2}}$,
$\sum_{(h_{1},h_{2})\neq g_{i}}\iota_{ih_{1}}\iota_{ih_{2}}$, and
$\sum_{(h_{1},h_{2})\neq g_{i}}\Delta_{g_{i}}\tau_{ih_{1}}\iota_{ih_{2}}$,
where $\sum_{(h_{1},h_{2})\neq g_{i}}$ denotes the sum over all combinations
of distinct $h_{1}$ and $h_{2}$ in $\mathcal{G}\backslash\{g_{i}\}$).
The higher-order functions can be derived similarly.} Define $\pi_{i}\equiv(\iota_{ig_{i}};\pi_{i,-g_{i}})$. By the fundamental
theorem of symmetric functions in conjunction with the Weierstrass
approximation theorem, any symmetric function can be approximated
arbitrarily closely by a polynomial function of the elementary symmetric
functions \citep{altonji_matzkin}. Therefore, there exists a function
$\lambda(\cdot)$ such that
\begin{equation}
\mathbb{E}[\epsilon_{i}|x_{i},z_{i},g_{i}]=\lambda^{e}(\iota_{ig_{i}};\Delta_{g_{i}}\tau_{ih},\iota_{ih},\forall h\neq g_{i})\equiv\lambda(\pi_{i}).\label{eq:npbias-sym}
\end{equation}

Using the symmetric representation reduces the number of nuisance
parameters in a sieve approximation of the selection function. For
instance, if we use linear basis functions, $\lambda^{e}(\cdot)$
has $2G-1$ approximating functions, whereas $\lambda(\cdot)$ has
only $3$, including one for group $g_{i}$ and two for the remaining
$G-1$ groups combined. If we consider basis functions of order two,
$\lambda^{e}(\cdot)$ has $(2G-1)G$ approximating functions, whereas
$\lambda(\cdot)$ has $9$.\footnote{Follow the discussion in footnote \ref{fn:tau_sym}. For $\lambda^{e}(\cdot)$,
we have $(2G-1)G$ functions of order two: $2G-1$ squared indices
and $(2G-1)(G-1)$ pairwise interactions. For $\lambda(\cdot)$, we
have the following $9$ functions of order two: $(\iota_{ig_{i}})^{2}$,
$\iota_{ig_{i}}\sum_{h\neq g_{i}}\Delta_{g_{i}}\tau_{ih}$, $\iota_{ig_{i}}\sum_{h\neq g_{i}}\iota_{ih}$,
$(\sum_{h\neq g_{i}}\Delta_{g_{i}}\tau_{ih})^{2}$, $(\sum_{h\neq g_{i}}\iota_{ih})^{2}$,
$(\sum_{h\neq g_{i}}\Delta_{g_{i}}\tau_{ih})(\sum_{h\neq g_{i}}\iota_{ih})$,
$\sum_{(h_{1},h_{2})\neq g_{i}}\Delta_{g_{i}}\tau_{ih_{1}}\Delta_{g_{i}}\tau_{ih_{2}}$,
$\sum_{(h_{1},h_{2})\neq g_{i}}\iota_{ih_{1}}\iota_{ih_{2}}$, and
$\sum_{(h_{1},h_{2})\neq g_{i}}\Delta_{g_{i}}\tau_{ih_{1}}\iota_{ih_{2}}$.} Aggregating the indices for groups other than $g_{i}$ by elementary
symmetric functions has the advantage that the number of nuisance
parameters in a sieve approximation for a given order does not depend
on the number of groups. This reduces the dimensionality of a sieve
approximation if the number of groups is moderately large. \citet{gandhi2019measuring}
employed a similar strategy to reduce the dimensionality of instruments
in a BLP model with many products.

\paragraph{Discussion.}

Selection bias $\lambda(\pi_{i})$ is individual-specific because
it depends on $\pi_{i}=\pi(z_{i},g_{i})$, which involves individual
characteristics $z_{i}$. While endogenous selection occurs only during
group formation, individuals with different values of $z_{i}$ may
be subject to varying levels of selection bias. This feature mirrors
the results in standard sample selection models. For instance, the
selection bias in \citet{heckman1979sample}---the inverse Mills
ratio---depends on individual characteristics included in the selection
equation. This finding provides empirical guidance that contrasts
with the group-fixed-effect approach, which assumes constant selection
bias within a group. \citet{Altonji2018} demonstrated that selection
bias can be captured by group averages of covariates. However, their
result relies on specific linear restrictions and may not be generalized
to nonparametric settings. In our setting, neither group fixed effects
nor group averages in the outcome equation adequately addresses group
selection. To effectively correct for selection bias, it is essential
to exploit the individual-level information contained in $z_{i}$.

Another key feature of selection bias is that it depends not only
on the indices of the group an individual joins, but also on those
of alternative groups. This finding highlights the importance of accounting
for the impacts of alternative groups, which can be challenging due
to dimensionality issues. \citet{Dahl2002} imposed an assumption
that controlling for the probability of an individual's first-best
choice exhausts the impacts of all groups. In contrast, we aggregate
the impacts of alternative groups by leveraging the symmetry of the
selection function under exchangeability. Related to our work, \citet{abdulkadirouglu2020parents}
proposed a selection correction method in centralized school choice,
where selection occurs only through unobserved preferences. Their
selection correction aggregates the impacts of alternative schools,
constructed using rank-order lists under multinomial logit unobservables,
which resembles the linear components of the elementary symmetric
functions in our selection bias derived under exchangeability. We
provide a microfoundation for the aggregation and derive the exact
form of selection bias under more general nonparametric unobservables.
Unlike their approach, ours is applicable to decentralized markets
with selection on both unobserved preferences and qualifications,
where rank-order lists are unavailable.

In addition, the literature on social interactions has long struggled
with disentangling two distinct sources of correlated effects: those
arising from group-level unobservables and those arising from self-selection
into groups based on individual-level unobservables. Distinguishing
between these sources is crucial, as they have distinct policy implications.
For example, in the context of school choice, should a policymaker
focus on improving school effectiveness or altering self-selection
among students? By imposing the exchangeability assumption, our approach
allows for the inclusion of group fixed effects alongside selection
correction, thereby enabling the separate identification of school
effectiveness and self-selection.

\section{\protect\label{sec:Identification}Identification}

We now turn to the identification of the social effects $\gamma$.
\citet{HSS_twosided} established identification results for the group
formation parameters, including the slope parameters $\delta\equiv(\delta_{g}^{u},\delta_{g}^{v})_{g\in\mathcal{G}}$
and the cutoffs $p$. We thus proceed by treating $\pi_{i}$ as known.

Let $\nu_{i}\equiv\epsilon_{i}-\lambda(\pi_{i})$ represent the residual
of $\epsilon_{i}$ after eliminating selection bias. Write equation
(\ref{eq:linmean}) as
\begin{align}
y_{i} & =w_{i}\boldsymbol{y}\gamma_{1}+w_{i}\boldsymbol{x}\gamma_{2}+x'_{i}\gamma_{3}+\lambda(\pi_{i})+\nu_{i}\nonumber \\
 & \equiv X'_{i}\gamma+\lambda(\pi_{i})+\nu_{i},\label{eq:linmean_bias}
\end{align}
where $X_{i}\equiv(w_{i}\boldsymbol{y},w_{i}\boldsymbol{x},x'_{i})'\in\mathbb{R}^{d_{X}}$
denotes a vector of regressors with dimension $d_{X}\equiv2d_{x}+1$.
This is a partially linear model \citep{Robinson1988}. To eliminate
$\lambda(\pi_{i}),$ we take the expectation of equation (\ref{eq:linmean_bias})
conditional on $\pi_{i}$ and subtract it from equation (\ref{eq:linmean_bias}):
\begin{equation}
y_{i}-\mathbb{E}[y_{i}|\pi_{i}]=(X_{i}-\mathbb{E}[X_{i}|\pi_{i}])'\gamma+\nu_{i}.\label{eq:linmean_partialout}
\end{equation}
The following rank condition guarantees the identification of $\gamma$. 
\begin{assumption}[Rank]
\label{ass:rank}For each $i\in\mathcal{N}$, the matrix $\mathbb{E}[(X_{i}-\mathbb{E}[X_{i}|\pi_{i}])(X_{i}-\mathbb{E}[X_{i}|\pi_{i}])']$
is non-singular, i.e., for any $a\neq0$, $a\in\mathbb{R}^{d_{X}}$,
there is no measurable function $h(\pi_{i})$ such that $X'_{i}a=h(\pi_{i})$.\textcolor{blue}{{} }
\end{assumption}
This assumption was imposed by \citet{cosslett1991semiparametric}
and discussed by \citet{newey2009two}. The rank condition is satisfied
if and only if there is no linear combination of $w_{i}\boldsymbol{y}$,
$w_{i}\boldsymbol{x}$, and $x_{i}$ that is a function of $\pi_{i}$
almost surely.\footnote{To see the equivalence between the two statements in Assumption \ref{ass:rank},
note that $\mathbb{E}[(X_{i}-\mathbb{E}[X_{i}|\pi_{i}])(X_{i}-\mathbb{E}[X_{i}|\pi_{i}])']$
is singular if and only if there exists $a\neq0$, $a\in\mathbb{R}^{d_{X}}$,
such that $a'\mathbb{E}[(X_{i}-\mathbb{E}[X_{i}|\pi_{i}])(X_{i}-\mathbb{E}[X_{i}|\pi_{i}])']a=0.$
This can be rewritten as $\mathbb{E}[((X_{i}-\mathbb{E}[X_{i}|\pi_{i}])'a)^{2}]=0$,
which holds if and only if $(X_{i}-\mathbb{E}[X_{i}|\pi_{i}])'a=0$
with probability one. Since $\mathbb{E}[X_{i}|\pi_{i}])'a$ is a measurable
function of $\pi_{i}$ (call it $h(\pi_{i})$), we have $X_{i}\prime a=h(\pi_{i})~$with probability 1.
Conversely, if there exists some $a$ such that $X'_{i}a=h(\pi_{i})$,
we have $(X_{i}-\mathbb{E}[X_{i}|\pi_{i}])'a=X'_{i}a-\mathbb{E}[X'_{i}a|\pi_{i}]=h(\pi_{i})-h(\pi_{i})=0$
with probability one.} A sufficient condition of it is that (i) $w_{i}\boldsymbol{y}$,
$w_{i}\boldsymbol{x}$, and $x_{i}$ are linearly independent; and
(ii) the conditional distribution of $\pi_{i}$ given $X_{i}$ has
an absolutely continuous component with conditional density that is
positive on the entire real line for almost all $X_{i}$ \citep{newey2009two}.
The violation of condition (i) is referred to as the reflection problem
\citep{Manski1993,blume2011identification}, which we will discuss
in Section \ref{sec:id_reflection}. For condition (ii), recall that
$\pi_{i}=\pi(z_{i},g_{i})$. Suppose $z_{i}$ contains at least one
component that is not present in $x_{i}$ and has an absolutely continuous
density that is positive on its support. This excluded variable would
provide sufficient variation in $\pi_{i}$ (conditional on $X_{i}$)
to satisfy the rank condition.

\subsection{\protect\label{sec:id_reflection}The reflection problem}

To investigate the reflection problem, we consider the social equilibrium
in equation (\ref{eq:linmean_bias}). Let $\boldsymbol{\lambda}=\boldsymbol{\lambda}(\boldsymbol{\pi})$
denote the $n\times1$ vector that stacks $\lambda(\pi_{i})$, where
$\boldsymbol{\pi}\equiv(\pi'_{1},\dots,\pi'_{n})'$, and $\boldsymbol{\nu}$
the $n\times1$ vector that stacks $\nu_{i}$. Write equation (\ref{eq:linmean_bias})
in a matrix form
\begin{equation}
\boldsymbol{y}=\boldsymbol{w}\boldsymbol{y}\gamma_{1}+\boldsymbol{w}\boldsymbol{x}\gamma_{2}+\boldsymbol{x}\gamma_{3}+\boldsymbol{\lambda}+\boldsymbol{\nu}.\label{eq:lin-mean-stack}
\end{equation}
Assume $|\gamma_{1}|<1$ and $\interleave\boldsymbol{w}\interleave_{\infty}=\max_{i\in\mathcal{N}}\sum_{j=1}^{n}|w_{ij}|=1$,
so $\boldsymbol{s}\equiv I_{n}-\gamma_{1}\boldsymbol{w}$ is invertible
and $\boldsymbol{s}^{-1}=(I_{n}-\gamma_{1}\boldsymbol{w})^{-1}=\sum_{k=0}^{\infty}\gamma_{1}^{k}\boldsymbol{w}^{k}$.
The social equilibrium is
\[
\boldsymbol{w}\boldsymbol{y}=\boldsymbol{s}^{-1}(\boldsymbol{w}^{2}\boldsymbol{x}\gamma_{2}+\boldsymbol{w}\boldsymbol{x}\gamma_{3}+\boldsymbol{w}\boldsymbol{\lambda}+\boldsymbol{w}\boldsymbol{\nu}).
\]
Its $i$th equation is given by 
\begin{equation}
w_{i}\boldsymbol{y}=w_{i}\boldsymbol{x}\gamma_{3}+\sum_{k=0}^{\infty}\gamma_{1}^{k}w_{i}^{k+2}\boldsymbol{x}(\gamma_{1}\gamma_{3}+\gamma_{2})+\sum_{k=0}^{\infty}\gamma_{1}^{k}w_{i}^{k+1}\boldsymbol{\lambda}+\sum_{k=0}^{\infty}\gamma_{1}^{k}w_{i}^{k+1}\boldsymbol{\nu},\label{eq:wiy}
\end{equation}
where $w_{i}^{k}$ denotes the $i^{th}$ row of $\boldsymbol{w}^{k}$.
The expression implies that $w_{i}\boldsymbol{y}$, $w_{i}\boldsymbol{x}$,
and $x_{i}$ are linearly independent if (i) the support of $x_{i},w_{i}\boldsymbol{x},w_{i}^{2}\boldsymbol{x},w_{i}^{3}\boldsymbol{x},\dots$
is not contained in a proper linear subspace of $\mathbb{R}^{2d_{x}+1}$
and $\gamma_{1}\gamma_{3}+\gamma_{2}\neq0$, or (ii) the support of
$x_{i},w_{i}\boldsymbol{x},w_{i}\boldsymbol{\lambda},w_{i}^{2}\boldsymbol{\lambda},\dots$
is not contained in a proper linear subspace of $\mathbb{R}^{2d_{x}+1}$.

Sufficient conditions have been established in the existing literature
for case (i). For example, $w_{i}^{2}\boldsymbol{x}$, $w_{i}\boldsymbol{x}$,
and $x_{i}$ are linearly independent if there is an intransitive
triad in each group \citep{bramoulle2009identification}, or if there
is variation in group sizes when we consider group averages that exclude
oneself \citep{lee2007identification,graham2008identifying,Davezies2009}.
However, this identification strategy fails for group averages that
include oneself because $\boldsymbol{w}^{2}=\boldsymbol{w}$ \citep{Manski1993,bramoulle2009identification}.\footnote{In large groups, the difference between group averages that include
or exclude oneself vanishes (see footnote \ref{fn:exclude_variation}).
Consequently, identification through variation in group sizes becomes
less effective as groups grow large.}

The presence of selection offers an alternative method of identification
through case (ii). It is evident from equation (\ref{eq:wiy}) that
$w_{i}\boldsymbol{y}$, $w_{i}\boldsymbol{x}$, and $x_{i}$ are linearly
independent if $w_{i}\boldsymbol{\lambda}$, $w_{i}\boldsymbol{x}$,
and $x_{i}$ are linearly independent. This identification strategy
is applicable regardless of whether group averages include or exclude
oneself, or whether there are networks within groups. The result is
consistent with the insight of \citet[Section 3.6]{brock2001interactions}
that identification can be achieved through self-selection, provided
that there is variation in the selection within a group. Because selection
bias acts as an individual-level variable whose average is not included
in the contextual effect, its presence precludes $w_{i}\boldsymbol{y}$
and $w_{i}\boldsymbol{x}$ from being linearly dependent \citep{Manski1993,brock2001interactions}.

\section{\protect\label{sec:Estimation}Estimation}

We now discuss how to estimate the social effects $\gamma$. Denote
the true value of $\gamma$ by $\gamma_{0}\in\mathbb{R}^{d_{X}}$.
Following equation (\ref{eq:linmean_partialout}), we propose a three-step
sieve OLS estimator for $\gamma_{0}$. Recall that the indices $\pi_{i}=\pi(z_{i},g_{i},\theta_{0})$
in selection bias depend on the parameters in group formation, $\theta_{0}\equiv(\delta'_{0},p^{*\prime})'$.
In the first step, we estimate these indices by the estimator $\hat{\pi}_{i}=\pi(z_{i},g_{i},\hat{\theta})$,
where $\hat{\theta}=(\hat{\delta}',\hat{p}')'$ is an estimator of
$\theta_{0}$. Using the estimated indices, we estimate $\gamma_{0}$
using sieve OLS in the next two steps.

Specifically, let $b^{K}(\pi_{i})=(b_{1K}(\pi_{i}),\dots,b_{KK}(\pi_{i}))'$
be a $K\times1$ vector of approximating functions for individual
$i$, and $B_{K}(\boldsymbol{\pi})=(b^{K}(\pi_{1}),\dots,b^{K}(\pi_{n}))'$
the $n\times K$ matrix of all approximating functions, where $\boldsymbol{\pi}=(\pi'_{1},\dots,\pi'_{n})'$.
By replacing $\boldsymbol{\pi}$ with its estimator $\hat{\boldsymbol{\pi}}=(\hat{\pi}'_{1},\dots,\hat{\pi}'_{n})'$,
we obtain an estimate of the approximating functions $\hat{B}_{K}=B_{K}(\hat{\boldsymbol{\pi}})=(b^{K}(\hat{\pi}_{1}),\dots,b^{K}(\hat{\pi}_{n}))'$.
Next, we estimate the conditional expectation $\mu_{0}^{X}(\pi_{i})\equiv\mathbb{E}[X_{i}|\pi_{i}]$
using the sieve estimator $\hat{\mu}^{X}(\hat{\pi}_{i})\equiv\boldsymbol{X}'\hat{B}_{K}(\hat{B}'_{K}\hat{B}_{K})^{-1}b^{K}(\hat{\pi}_{i})$,
where $\boldsymbol{X}$ is the $n\times d_{X}$ matrix that stacks
$X'_{i}$. Define the matrix $\hat{M}_{K}\equiv I-\hat{B}_{K}(\hat{B}'_{K}\hat{B}_{K})^{-1}\hat{B}'_{K}$.
The three-step sieve OLS estimator of $\gamma_{0}$ is given by
\begin{align}
\hat{\gamma} & \equiv(\boldsymbol{X}'\hat{M}_{K}\boldsymbol{X}){}^{-1}\boldsymbol{X}'\hat{M}_{K}\boldsymbol{y}\nonumber \\
 & =\left(\frac{1}{n}\sum_{i=1}^{n}(X_{i}-\hat{\mu}^{X}(\hat{\pi}_{i}))X'_{i}\right)^{-1}\left(\frac{1}{n}\sum_{i=1}^{n}(X_{i}-\hat{\mu}^{X}(\hat{\pi}_{i}))y_{i}\right).\label{eq:gamma_hat}
\end{align}

With the addition of the following assumptions, we show in Theorems
\ref{thm:gamma_consist} and \ref{thm:gamma_clt} that $\hat{\gamma}$
is $\sqrt{n}$ consistent and asymptotically normal.
\begin{assumption}[Bounded Covariates]
\label{ass:compact}(i) The support of $z_{i}$ is bounded. (ii)
The support of $x_{i}$ is bounded.
\end{assumption}
\begin{assumption}[Group formation parameters]
\label{ass:theta}(i) The true parameter $\theta_{0}$ lies in the
interior of a compact set $\Theta$. (ii) $\hat{\theta}-\theta_{0}=n^{-1}\sum_{i=1}^{n}\phi_{\theta}(z_{i},\theta_{0})+o_{p}(n^{-1/2})$,
where $\mathbb{E}[\phi_{\theta}(z_{i},\theta_{0})]=0$ and $\mathbb{E}[\|\phi_{\theta}(z_{i},\theta)\|^{2}]<\infty$.
\end{assumption}
\begin{assumption}[Sieve]
\label{ass:sieve}Let $K\rightarrow\infty$ and $K/n\rightarrow0$.
The basis functions $b^{K}(\pi)\in\mathbb{R}^{K}$ satisfy the following
conditions. (i) $\mathbb{E}[b^{K}(\pi)b^{K}(\pi)']=I_{K}$. (ii) There
exist $\beta^{X}$ and a constant $a>0$ such that $\sup_{\pi}\|\mu_{0}^{X}(\pi)-b^{K}(\pi)'\beta^{X}\|=O(K^{-a})$.
(iii) $\sup_{\pi}\|b^{K}(\pi)\|\leq\varrho_{0}(K)$ for constants
\textup{$\varrho_{0}(K)$} such that $\varrho_{0}(K)^{2}K/n\rightarrow0$.
(iv) $\sup_{\pi}\|\partial b^{K}(\pi)/\partial\pi'\|\leq\varrho_{1}(K)$
for constants \textup{$\varrho_{1}(K)$} such that $\varrho_{1}(K)/\sqrt{n}\rightarrow0$.
\end{assumption}
\begin{assumption}[Adjacency Matrix]
\label{ass:w}Let $\psi_{i}\equiv(x'_{i},z'_{i},g_{i})'$ and $\boldsymbol{\psi}\equiv(\boldsymbol{x},\boldsymbol{z},\boldsymbol{g})$.
The matrices $\boldsymbol{w}=(w_{ij})\in\mathbb{R}^{n^{2}}$ and $\boldsymbol{s}=(I_{n}-\gamma_{1}\boldsymbol{w})^{-1}$
satisfy the following conditions. (i) $\interleave\boldsymbol{w}\interleave_{\infty}=\max_{i\in\mathcal{N}}\sum_{j=1}^{n}|w_{ij}|=1$.
(ii) \textup{$\mathbb{E}[\|\boldsymbol{w}\|_{\infty}^{8}]=\mathbb{E}[\max_{i,j\in\mathcal{N}}(w_{ij})^{8}]=O(n^{-8})$}.
(iii) For the matrix $\boldsymbol{q}=(q_{ij})$ in the form of $\boldsymbol{w}$
or $\boldsymbol{sw}^{t}$, $t=1,2$, we have $\max_{i,j,k,l\in\mathcal{N}:\{i,j\}\cap\{k,l\}=\emptyset}\mathbb{E}[(\mathbb{E}[q_{ij}q_{kl}|\boldsymbol{\psi}]-\mathbb{E}[q_{ij}|\psi_{i},\psi_{j}]\mathbb{E}[q_{kl}|\psi_{k},\psi_{l}])^{2}]=o(n^{-4}/K)$
and $\max_{i,j\in\mathcal{N}}\mathbb{E}[(\mathbb{E}[q_{ij}|\boldsymbol{\psi}]-\mathbb{E}[q_{ij}|\psi_{i},\psi_{j}])^{4}]=o(n^{-4}/K^{2})$,
where $K$ satisfies Assumption \ref{ass:sieve}. (iv) For the matrix
$\boldsymbol{q}$ in the form of $\boldsymbol{w}$, $\boldsymbol{w}'\boldsymbol{w}$,
$\boldsymbol{s}\boldsymbol{w}^{t}$, $\boldsymbol{w}'\boldsymbol{s}\boldsymbol{w}^{t}$,
or $(\boldsymbol{w}')^{r}\boldsymbol{s}'\boldsymbol{s}\boldsymbol{w}^{t}$,
$r,t=1,2$, $\max_{i,j,k,l\in\mathcal{N}:\{i,j\}\cap\{k,l\}=\emptyset}\mathbb{E}[(\mathbb{E}[q_{ij}q_{kl}|\boldsymbol{\psi}]-\mathbb{E}[q_{ij}|\psi_{i},\psi_{j}]\mathbb{E}[q_{kl}|\psi_{k},\psi_{l}])^{2}]=o(n^{-4})$.\footnote{Note that for the matrix $\boldsymbol{q}$ in the form of $\boldsymbol{w}$
or $\boldsymbol{sw}^{t}$, $t=1,2$, part (iv) is implied by part
(iii) because $K\rightarrow\infty$.} (v) Suppose there exist a vector of i.i.d. variables $\tilde{\psi}_{i}$
and $\tilde{\boldsymbol{\psi}}\equiv(\tilde{\psi}_{1},\dots,\tilde{\psi}_{n})'$
such that (a) $\tilde{\psi}_{i}$ contains $\psi_{i}$, (b) $\tilde{\psi}_{i}$
has finite fourth moment, (c) $\boldsymbol{w}$ and $\boldsymbol{\epsilon}$
are independent conditional on \textup{$\tilde{\boldsymbol{\psi}}$},
and (d) \textup{$\tilde{\boldsymbol{\psi}}$} and $\boldsymbol{\epsilon}$
are independent conditional on $\boldsymbol{\psi}$. For the matrix
$\boldsymbol{q}$ in the form of $\boldsymbol{w}$ or $\boldsymbol{s}\boldsymbol{w}^{t}$,
$t=1,2$, $\max_{i,j,k\in\mathcal{N}:j\neq k}\mathbb{E}[(\mathbb{E}[q_{ij}q_{ik}|\tilde{\boldsymbol{\psi}}]-\mathbb{E}[q_{ij}|\tilde{\psi}_{i},\tilde{\psi}_{j}]\mathbb{E}[q_{ik}|\tilde{\psi}_{i},\tilde{\psi}_{k}])^{2}]=o(n^{-4})$.
\end{assumption}
\begin{assumption}[Smoothness]
\label{ass:smooth}(i) The unobservable $\epsilon_{i}$ satisfies
$\mathbb{E}[\epsilon_{i}^{8}]<\infty$. (ii) For any $\theta\in\Theta$,
$\mathbb{E}[X_{i}|\pi(z_{i},g_{i},\theta)]$ and $\mathbb{E}[\epsilon_{i}|\pi(z_{i},g_{i},\theta)]$
are continuously differentiable in $\pi(z_{i},g_{i},\theta)$.
\end{assumption}
\begin{thm}[Consistency of $\hat{\gamma}$]
\label{thm:gamma_consist}Under Assumptions \ref{ass:adj_exog}--\ref{ass:smooth},
$\hat{\gamma}-\gamma_{0}=o_{p}(1)$.
\end{thm}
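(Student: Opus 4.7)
The plan is to follow the standard partially-linear strategy of \citet{Robinson1988}, adapted to account for (a) the pre-estimated index $\hat{\pi}_i = \pi(z_i, g_i, \hat{\theta})$ and (b) the stochastic adjacency matrix that makes $w_i \boldsymbol{y}$ correlated with $\boldsymbol{\nu}$. Substituting $y_i = X_i'\gamma_0 + \lambda(\pi_i) + \nu_i$ into the closed form of $\hat{\gamma}$ yields
\begin{equation*}
\hat{\gamma} - \gamma_0 = \left(\tfrac{1}{n}\boldsymbol{X}'\hat{M}_K \boldsymbol{X}\right)^{-1} \left(\tfrac{1}{n}\boldsymbol{X}'\hat{M}_K \boldsymbol{\lambda} + \tfrac{1}{n}\boldsymbol{X}'\hat{M}_K \boldsymbol{\nu}\right),
\end{equation*}
so consistency reduces to showing that (i) $\tfrac{1}{n}\boldsymbol{X}'\hat{M}_K \boldsymbol{X}$ converges in probability to a nonsingular matrix, (ii) $\tfrac{1}{n}\boldsymbol{X}'\hat{M}_K \boldsymbol{\lambda} = o_p(1)$, and (iii) $\tfrac{1}{n}\boldsymbol{X}'\hat{M}_K \boldsymbol{\nu} = o_p(1)$. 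Before tackling the three pieces, I would replace $\hat{B}_K$ by $B_K \equiv B_K(\boldsymbol{\pi})$ throughout. A mean-value expansion of $b^K$ combined with the gradient bound $\sup_\pi \|\partial b^K/\partial \pi'\| \le \varrho_1(K)$ (Assumption \ref{ass:sieve}(iv)) and the $\sqrt{n}$-rate in Assumption \ref{ass:theta}(ii) gives $\sup_i \|b^K(\hat{\pi}_i) - b^K(\pi_i)\| = O_p(\varrho_1(K) n^{-1/2}) = o_p(1)$, which, together with bounded covariates (Assumption \ref{ass:compact}), implies that each of (i)--(iii) computed with $\hat{B}_K$ differs from the corresponding object computed with $B_K$ by $o_p(1)$.

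With the known-index reduction in hand, pieces (i) and (ii) are relatively standard. By the sieve approximation bound in Assumption \ref{ass:sieve}(ii) applied to $\lambda(\cdot)$, together with the standard series-regression variance bound under Assumption \ref{ass:sieve}(iii), one gets $\|M_K \boldsymbol{\lambda}\|^2/n = O_p(K^{-2a}) + O_p(K/n) = o_p(1)$. Since $\tfrac{1}{n}\|\boldsymbol{X}\|^2 = O_p(1)$ by bounded covariates and by the reduced-form expression for $\boldsymbol{w}\boldsymbol{y}$ in equation \eqref{eq:wiy} combined with $\mathbb{E}[\epsilon_i^2] < \infty$, Cauchy--Schwarz delivers (ii). For (i), note that $\tfrac{1}{n}\boldsymbol{X}'M_K \boldsymbol{X}$ approximates $\tfrac{1}{n}\sum_i (X_i - \mu_0^X(\pi_i))(X_i - \mu_0^X(\pi_i))'$, whose probability limit is nonsingular by the rank condition (Assumption \ref{ass:rank}) under the smoothness conditions in Assumption \ref{ass:smooth}(ii).

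The main obstacle, and the piece that calls on the moment conditions on $\boldsymbol{w}$, is (iii). Although $\mathbb{E}[\nu_i\mid \pi_i] = 0$ by construction, $X_i$ contains $w_i\boldsymbol{y}$, which by the social equilibrium \eqref{eq:wiy} depends on $\boldsymbol{\nu}$ through terms of the form $\boldsymbol{s}\boldsymbol{w}^{t}\boldsymbol{\nu}$ and is therefore correlated with $\nu_i$. The resulting cross-term $\tfrac{1}{n}(\boldsymbol{w}\boldsymbol{y})' M_K \boldsymbol{\nu}$ reduces to weighted quadratic forms $\tfrac{1}{n}\sum_{i,j} q_{ij}\nu_i \nu_j$ in which the weights $q_{ij}$ are drawn from $\boldsymbol{w}$ or $\boldsymbol{s}\boldsymbol{w}^t$ and are themselves stochastic. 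I would decompose these into their diagonal, Hoeffding-projection, and degenerate components. The diagonal is handled by the uniform moment bound $\mathbb{E}[\max_{i,j} w_{ij}^8] = O(n^{-8})$ in Assumption \ref{ass:w}(ii); the projection onto a single index is absorbed by $M_K$ and vanishes at rate $K/n \to 0$; and the degenerate remainder is controlled by the decorrelation conditions on $\mathbb{E}[q_{ij}q_{kl}\mid \boldsymbol{\psi}]$ and $\mathbb{E}[q_{ij}q_{ik}\mid \tilde{\boldsymbol{\psi}}]$ in Assumption \ref{ass:w}(iii)--(v), which ensure that the conditional variance of the off-diagonal sum is $o(1)$. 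Combining the three bounds, (iii) follows.

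The hard part is step (iii): the combination of a stochastic adjacency matrix, sieve projection, and plug-in indices turns what is classically a trivial orthogonality check into a weighted $U$-statistic with \emph{random} weights, and it is precisely the moment and decorrelation structure in Assumption \ref{ass:w} that ensures this object is $o_p(1)$. Once (i)--(iii) are established, $\hat{\gamma} - \gamma_0 = o_p(1)$ follows by the continuous mapping theorem applied to the ratio.
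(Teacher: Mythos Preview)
Your decomposition $\hat{\gamma}-\gamma_0 = (\tfrac{1}{n}\boldsymbol{X}'\hat{M}_K\boldsymbol{X})^{-1}(\tfrac{1}{n}\boldsymbol{X}'\hat{M}_K\boldsymbol{\lambda} + \tfrac{1}{n}\boldsymbol{X}'\hat{M}_K\boldsymbol{\nu})$ differs from the paper's route. The paper keeps $\boldsymbol{\epsilon}$ intact and instead replaces $\hat{\mu}^X(\hat{\pi}_i)$ by the true $\mu_0^X(\pi_i)$ in one step (Lemma \ref{lem:muhat_t_consist}, which relies on Assumption \ref{ass:w}(iii) to control the sieve coefficient when the regressand is a network average), then applies a weighted-$U$-statistic LLN (Lemmas \ref{lem:xt_consist} and \ref{lem:aqb_consist}, using Assumption \ref{ass:w}(iv)) to reduce $\tfrac{1}{n}\sum_i(X_i-\mu_0^X(\pi_i))\epsilon_i$ to its expectation, and finally bounds the simultaneity term $n^{-1}\mathbb{E}[\boldsymbol{\nu}'\boldsymbol{s}'\boldsymbol{w}'\boldsymbol{\nu}] = n^{-1}\mathbb{E}[\mathrm{tr}(\boldsymbol{ws})\,\mathbb{E}[\nu_i^2\mid\boldsymbol{\psi}]] = O(n^{-1})$ via a simple trace argument that uses only the density of $\boldsymbol{w}$ (Assumption \ref{ass:w}(ii)). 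Your route can be made to work, but two pieces need repair.

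First, step (ii) invokes the $K^{-a}$ sieve rate for $\lambda(\cdot)$, but Assumption \ref{ass:sieve}(ii) only posits that rate for $\mu_0^X(\cdot)$; no approximation rate for $\lambda$ is assumed. The paper sidesteps this entirely by never projecting $\boldsymbol{\lambda}$ through $\hat{M}_K$ separately. Second, step (iii) is both overcomplicated and imprecise. The Hoeffding machinery and Assumption \ref{ass:w}(v) are used only for the CLT (Theorem \ref{thm:gamma_clt}); consistency does not need them. Your Hoeffding decomposition of $\tfrac{1}{n}\sum_{i,j}q_{ij}\nu_i\nu_j$ is also off: since $\mathbb{E}[\nu_j\mid\boldsymbol{\psi}]=0$, the first-order projection is identically zero, so there is nothing to be ``absorbed by $M_K$''; the statistic is already degenerate and its variance is controlled by density alone (cf.\ Lemma \ref{lem:qvv}). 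Finally, the object you actually need in (iii) is $\tfrac{1}{n}(\boldsymbol{sw}\boldsymbol{\nu})'M_K\boldsymbol{\nu}$ with the sieve projection still in the middle, which does not reduce to the pure quadratic form $\tfrac{1}{n}\sum_{i,j}q_{ij}\nu_i\nu_j$ without additional argument.
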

\begin{thm}[Asymptotic distribution of $\hat{\gamma}$]
\label{thm:gamma_clt}Under Assumptions \ref{ass:adj_exog}--\ref{ass:smooth},
$\sqrt{n}\Omega_{n}^{-1/2}M_{n}(\hat{\gamma}-\gamma_{0})\stackrel{d}{\rightarrow}N(0,I_{d_{X}})$,
where the matrices $M_{n}$ and $\Omega_{n}$ are defined in the proof.
\end{thm}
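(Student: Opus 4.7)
The plan is to derive an asymptotically linear representation for $\sqrt{n}(\hat{\gamma}-\gamma_{0})$ and apply a central limit theorem. Substituting $\boldsymbol{y}=\boldsymbol{X}\gamma_{0}+\boldsymbol{\lambda}+\boldsymbol{\nu}$ into equation (\ref{eq:gamma_hat}) gives
\[
\sqrt{n}(\hat{\gamma}-\gamma_{0})=\bigl(n^{-1}\boldsymbol{X}'\hat{M}_{K}\boldsymbol{X}\bigr)^{-1}\cdot n^{-1/2}\boldsymbol{X}'\hat{M}_{K}(\boldsymbol{\lambda}+\boldsymbol{\nu}).
\]
I would set $M_{n}\equiv\mathbb{E}[n^{-1}\boldsymbol{X}'M_{K}\boldsymbol{X}]$ where $M_{K}=I-B_{K}(B_{K}'B_{K})^{-1}B_{K}'$ uses the oracle indices. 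The consistency of the denominator toward $M_{n}$ follows by the same ingredients used for Theorem \ref{thm:gamma_consist}: replace $\hat{M}_{K}$ with $M_{K}$ via a first-order Taylor expansion of $b^{K}(\hat{\pi}_{i})$ around $b^{K}(\pi_{i})$, controlled by the derivative bound $\varrho_{1}(K)/\sqrt{n}\rightarrow0$ in Assumption \ref{ass:sieve}(iv) together with the $\sqrt{n}$-expansion $\hat{\theta}-\theta_{0}=O_{p}(n^{-1/2})$ in Assumption \ref{ass:theta}(ii), and then invoke the sieve rate in Assumption \ref{ass:sieve}(ii)--(iii).

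The bias contribution $n^{-1/2}\boldsymbol{X}'\hat{M}_{K}\boldsymbol{\lambda}$ is handled by invoking an analogue of Assumption \ref{ass:sieve}(ii) for $\lambda(\cdot)$ (which is legitimate because $\lambda$ has the same smoothness as $\mu_{0}^{X}$ under Assumption \ref{ass:smooth} and both are functions of the same $\pi_{i}$). This gives $\|M_{K}\boldsymbol{\lambda}\|_{\infty}=O(K^{-a})$, and, after accounting for the first-stage linearization, $n^{-1/2}\boldsymbol{X}'\hat{M}_{K}\boldsymbol{\lambda}=o_{p}(1)$ provided $K$ is chosen so that $\sqrt{n}K^{-a}\rightarrow0$. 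The bounded-covariate conditions in Assumption \ref{ass:compact} together with Assumption \ref{ass:w}(i) supply the $L^{2}$ bounds on $\boldsymbol{X}$ needed here.

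The main obstacle is the noise term $n^{-1/2}\boldsymbol{X}'\hat{M}_{K}\boldsymbol{\nu}$. Using the reduced form from Section \ref{sec:id_reflection}, $\boldsymbol{X}$ involves $\boldsymbol{w}\boldsymbol{y}$, which introduces the matrices $\boldsymbol{s}\boldsymbol{w}$ and $\boldsymbol{s}\boldsymbol{w}^{2}$ applied to $\boldsymbol{x},\boldsymbol{\lambda},\boldsymbol{\nu}$. Consequently the score involves double sums $n^{-1/2}\sum_{i,j}q_{ij}a_{i}\nu_{j}$ with $\boldsymbol{q}\in\{\boldsymbol{w},\boldsymbol{s}\boldsymbol{w},\boldsymbol{s}\boldsymbol{w}^{2}\}$, i.e., weighted $U$-statistics whose weights $q_{ij}$ are themselves random because the adjacency matrix is stochastic. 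I would decompose $q_{ij}=\mathbb{E}[q_{ij}\mid\psi_{i},\psi_{j}]+(q_{ij}-\mathbb{E}[q_{ij}\mid\psi_{i},\psi_{j}])$: the projected piece yields a standard $U$-statistic amenable to a Hoeffding decomposition, whose linear Hájek projection delivers an i.i.d. sum in $i$; the residual piece has mean zero and vanishing variance by the covariance-decay bounds in Assumption \ref{ass:w}(iii)--(v), which precisely control $\mathbb{E}[q_{ij}q_{kl}\mid\boldsymbol{\psi}]-\mathbb{E}[q_{ij}\mid\psi_{i},\psi_{j}]\mathbb{E}[q_{kl}\mid\psi_{k},\psi_{l}]$ at order $o(n^{-4})$ and also over the shared-index case via the enlarged $\tilde{\boldsymbol{\psi}}$ and the conditional independence in Assumption \ref{ass:w}(v). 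Assumption \ref{ass:w}(ii)'s bound $\mathbb{E}[\max_{ij}w_{ij}^{8}]=O(n^{-8})$ together with $\mathbb{E}[\epsilon_{i}^{8}]<\infty$ (Assumption \ref{ass:smooth}(i)) supplies the moment bounds required at each step.

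Finally, the first-stage error in $\hat{\pi}_{i}$ contributes an additional linear term: Taylor-expanding $b^{K}(\hat{\pi}_{i})-b^{K}(\pi_{i})$ to first order and using Assumption \ref{ass:theta}(ii) produces an i.i.d. summand involving $\phi_{\theta}(z_{i},\theta_{0})$ multiplied by a nuisance Jacobian. Collecting everything yields
\[
\sqrt{n}M_{n}(\hat{\gamma}-\gamma_{0})=n^{-1/2}\sum_{i=1}^{n}\bigl\{\psi_{i}^{\gamma}+\psi_{i}^{\theta}\bigr\}+o_{p}(1),
\]
where $\psi_{i}^{\gamma}$ is the Hájek projection of the score and $\psi_{i}^{\theta}$ captures the first-stage correction, both i.i.d. with finite variance. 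Setting $\Omega_{n}\equiv\mathrm{Var}(\psi_{i}^{\gamma}+\psi_{i}^{\theta})$, a Lindeberg--Feller CLT delivers the stated asymptotic normality. The hard part throughout is the stochastic-weight $U$-statistic step, because standard theory (e.g., \citealp{Lee1990}) presumes deterministic weights; Assumption \ref{ass:w}(iii)--(v) is what makes the generalization possible, and verifying that all the remainder terms are $o_{p}(1)$ requires careful accounting of the interaction between $K\rightarrow\infty$, the first-stage rate, and the conditional-covariance decay of $\boldsymbol{w}$.
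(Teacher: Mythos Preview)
Your overall architecture—linearize the score, isolate the first-stage contribution, reduce the peer-outcome pieces to weighted $U$-statistics, and finish with Lindeberg–Feller—matches the paper's strategy, and you correctly identify the stochastic-weight $U$-statistic as the hard step. But there is a genuine gap in how you handle the $\boldsymbol{\lambda}$ term, and two further points where your route diverges from the paper's.

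\textbf{The undersmoothing gap.} You split $\epsilon=\lambda+\nu$ and propose to kill $n^{-1/2}\boldsymbol{X}'\hat{M}_{K}\boldsymbol{\lambda}$ directly via a sieve bound on $\lambda$, ``provided $K$ is chosen so that $\sqrt{n}K^{-a}\rightarrow0$.'' That rate condition is \emph{not} among Assumptions \ref{ass:adj_exog}--\ref{ass:smooth} (nor is a sieve-approximation rate for $\lambda$ itself; Assumption \ref{ass:sieve}(ii) is stated only for $\mu_{0}^{X}$), so under the theorem's hypotheses your argument does not close. The paper avoids undersmoothing entirely by keeping $\epsilon_{i}$ intact and using the influence-function correction of \citet{newey1994asymptotic}: because $D(\epsilon,\mu)=-\mu\epsilon$ is linear in $\mu$, the correction for estimating $\mu_{0}^{X}$ is exactly $\alpha_{0}^{X}(\omega_{i},\pi_{i})=-(X_{i}-\mu_{0}^{X}(\pi_{i}))\lambda_{0}(\pi_{i})$, and adding it to the leading piece converts $(X_{i}-\mu_{0}^{X}(\pi_{i}))\epsilon_{i}$ into $(X_{i}-\mu_{0}^{X}(\pi_{i}))\nu_{i}$ with a remainder (Lemma \ref{lem:muhat_emp}) that needs only $L_{2}$-consistency of $\hat{\mu}^{X}$, not $\sqrt{n}K^{-a}\to0$. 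The generated-regressor correction then follows \citet{Hahn2013} and delivers the $M_{\theta}\phi_{\theta}(z_{i},\theta_{0})$ term directly, rather than through a Taylor expansion of $b^{K}(\hat{\pi}_{i})$.

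\textbf{Two further differences.} First, your $M_{n}\equiv\mathbb{E}[n^{-1}\boldsymbol{X}'M_{K}\boldsymbol{X}]$ depends on the random projection $M_{K}$ and on $K$; the paper sets $M_{n}=n^{-1}\sum_{i}\mathbb{E}[(X_{i}-\mu_{0}^{X}(\pi_{i}))X_{i}']$, which is what equation (\ref{eq:M_lln}) actually delivers. Second, for the Hoeffding step you project $q_{ij}$ onto $(\psi_{i},\psi_{j})$ and then H\'{a}jek-project the resulting kernel; the paper's Lemma \ref{lem:weightedU} instead projects the whole summand $q_{ij}\nu_{i}t_{j}$ onto $(\tilde{\psi}_{i},\nu_{i})$ alone—the $j$-side projection vanishes because $\mathbb{E}[\nu_{j}\mid\tilde{\boldsymbol{\psi}}]=0$—and it is Assumption \ref{ass:w}(v)'s conditioning on the enlarged $\tilde{\boldsymbol{\psi}}$ (not $\boldsymbol{\psi}$) that controls the shared-index covariance needed to show the projection remainder is $o_{p}(1)$.
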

Assumption \ref{ass:compact} assumes that the covariates $z_{i}$
and $x_{i}$ are bounded. Assumption \ref{ass:theta}(i) requires
that the group formation parameter $\theta_{0}$ lies in a compact
set.\footnote{Although we set a cutoff to $-\infty$ if the capacity is not binding,
such a cutoff does not affect individuals' choices and is excluded
from $\theta_{0}$ (see footnote \ref{fn:p_unbinding}). Strictly
speaking, we also need to assume that any market-clearing cutoff is
bounded away from $-\infty$. Although the demand and supply for a
group may be equal at a cutoff of $-\infty$, the group's capacity
must take a particular value for that to occur. This is because a
cutoff of $-\infty$ no longer makes the group selective, and the
demand is solely determined by the number of individuals who do not
prefer or qualify for any other group. Such a solution is non-generic
(i.e., it requires a precise alignment of parameters that rarely occurs
in practice), and we assume that this unlikely case is ruled out.} These boundedness assumptions are standard in the literature. Assumption
\ref{ass:theta}(ii) requires that the estimator $\hat{\theta}$ has
an asymptotically linear representation, which can be satisfied by
semiparametric estimators \citep[e.g., ][]{lee1995semiparametric,Sun2019}
or parametric estimators (e.g., the constrained maximum likelihood
estimator proposed in Supplemental Appendix \ref{online:gf_estimate}).
Assumption \ref{ass:sieve} imposes standard regularity conditions
for the sieve estimation. Assumption \ref{ass:sieve}(i) is a normalization.\footnote{Alternatively, we can assume that the smallest eigenvalue of $\mathbb{E}[b^{K}(\pi)b^{K}(\pi)']$
is bounded away from zero uniformly in $K$. Assuming this, let $Q_{0}=\mathbb{E}[b^{K}(\pi)b^{K}(\pi)']$
and $Q_{0}^{-1/2}$ the symmetric square root of $Q_{0}^{-1}$. Then
$\tilde{b}^{K}(\pi)=Q_{0}^{-1/2}b^{K}(\pi)$ is a nonsingular transformation
of $b^{K}(\pi)$ that satisfies $\mathbb{E}[\tilde{b}^{K}(\pi)\tilde{b}^{K}(\pi)']=I_{K}$.
Notably, nonparametric series estimators are invariant under nonsingular
transformations of $b^{K}(\pi)$: let $\tilde{\beta}^{Z}=Q_{0}^{1/2}\beta^{Z}$
then $\tilde{b}^{K}(\pi)'\tilde{\beta}^{Z}=b^{K}(\pi)'\beta^{Z}$.
Furthermore, $\tilde{b}^{K}(\pi)$ satisfies Assumption \ref{ass:sieve}(iii)(iv)
if and only if $b^{K}(\pi)$ does. Therefore, all parts of Assumption
\ref{ass:sieve} are satisfied with $b^{K}(\pi)$ replaced by $\tilde{b}^{K}(\pi)$
\citep[p.480]{liracine2007}.} Assumption \ref{ass:sieve}(ii)-(iv) impose rate conditions on the
basis functions, similar to those used in the literature (\citealp{newey1994asymptotic};
\citealp{liracine2007}). Assumption \ref{ass:smooth}(i) requires
that $\epsilon_{i}$ have a finite eighth moment. Assumption \ref{ass:smooth}(ii)
is used to account for the estimation errors in the first and second
steps, which we establish following the approach of \citet{Hahn2013}.
Assumption \ref{ass:w} imposes crucial restrictions on the adjacency
matrix $\boldsymbol{w}$, which are essential for establishing the
asymptotic theorems. Assumption \ref{ass:w}(i) is the usual row normalization,
and the remaining parts of Assumption \ref{ass:w} will be discussed
below.

We propose an OLS estimator instead of an IV estimator for $\gamma_{0}$,
which might seem counter-intuitive given that the regressor $w_{i}\boldsymbol{y}$
could be endogenous due to simultaneity. Nevertheless, we show that
the endogeneity of $w_{i}\boldsymbol{y}$ vanishes asymptotically
under the assumption that the adjacency matrix $\boldsymbol{w}$ is
dense (Assumption \ref{ass:w}(ii)), which implies that each component
of $\boldsymbol{w}$ is bounded by $O_{p}(n^{-1})$. This assumption
is satisfied in large groups if $\boldsymbol{w}$ is specified by
group averages or dense networks within each group (see Supplemental
Appendix \ref{online:w} for examples of $\boldsymbol{w}$). The intuition
is that if an individual has a growing number of peers, the average
in $w_{i}\boldsymbol{y}$ will converge to a population expectation,
which is no longer endogenous. Our result aligns with the findings
in \citet{Lee2002}, who demonstrated the consistency and efficiency
of OLS estimators for peer effects under a deterministic adjacency
matrix.

The asymptotic analysis of $\hat{\gamma}$ is complicated by two sources
of randomness in the adjacency matrix $\boldsymbol{w}$. First, $\boldsymbol{w}$
depends on the group memberships $\boldsymbol{g}$, which are random
and can be correlated with $\boldsymbol{\epsilon}$. Second, $\boldsymbol{w}$
may include additional randomness arising from the networks within
each group. These features imply that traditional asymptotic methods,
which often assume a deterministic adjacency matrix, are not applicable.
Consequently, we develop new methods to establish the asymptotic properties
of $\hat{\gamma}$.

We begin by observing that the leading terms in $\hat{\gamma}$ take
the form of weighted $U$-statistics of order 2, with weights given
by the $(i,j)$ components of $\boldsymbol{w}$ as well as its polynomials
and series, denoted by $\boldsymbol{q}$. These polynomials and series
arise because $w_{i}\boldsymbol{y}$ is included as a regressor. It
is worth noting that our case is more challenging than the standard
weighted $U$-statistics analyzed in \citet[Section 3.7.5]{Lee1990},
where weights are assumed to be fixed. In our setting, the weights
can be stochastic as they depend on group memberships and within-group
connections. Furthermore, these random weights may introduce network
dependence, which complicates the asymptotic analysis. To address
this problem, we extend the methods in \citet[Section 3.7.5]{Lee1990}
by imposing additional restrictions on the network dependence in $\boldsymbol{q}$,
conditional on $\boldsymbol{\psi}=(\boldsymbol{x},\boldsymbol{z},\boldsymbol{g})$,
as specified in Assumption \ref{ass:w}(iii)-(v).

Specifically, we establish the consistency of the sieve estimation
in the second step using the following facts: (a) $\boldsymbol{q}$
is dense in the same way as $\boldsymbol{w}$, which follows from
row normalization and the density of $\boldsymbol{w}$ (Assumption
\ref{ass:w}(i)-(ii)), (b) the conditional dependence between network
connections $q_{ij}$ and $q_{kl}$ on disjoint nodes $\{i,j\}$ and
$\{k,l\}$ diminishes sufficiently fast, and (c) the difference between
conditioning on global information ($\boldsymbol{\psi}$) and local
information ($\psi_{i}$ and $\psi_{j}$) becomes negligible at an
appropriate rate (Assumption \ref{ass:w}(iii)). Building on these
results, we establish the consistency of $\hat{\gamma}$ under a similar
limited dependence condition on $\boldsymbol{q}$ (Assumption \ref{ass:w}(iv)).

To derive the asymptotic distribution of $\hat{\gamma}$, we extend
the Hoeffding decomposition for standard weighted $U$-statistics
\citep[Section 3.7.5]{Lee1990} by (a) conditioning on all individual-level
variables, including group memberships and, if applicable, individual
fixed effects in network formation within each group (see Example
\ref{ex:w.fe} in Supplemental Appendix \ref{online:w}), and (b)
assuming that, conditional on such individual-level variables, the
dependence between network connections $q_{ij}$ and $q_{ik}$ with
a shared node $i$ diminishes sufficiently fast (Assumption \ref{ass:w}(v)).
Under this assumption, the Hoeffding decomposition remains valid despite
random weights, enabling us to derive the asymptotic distribution
of $\hat{\gamma}$.

We verify in Supplemental Appendix \ref{online:w} that Assumption
\ref{ass:w} is satisfied for group averages (both including and excluding
oneself) and for dyadic networks with fixed effects \citep{Graham2017}.
There might be sufficient conditions on $\boldsymbol{w}$ that achieve
the desired asymptotic results while accommodating sparsity and/or
strategic interactions in network formation within groups (\citealp{leung2015two};
\citealp{ridder2020estimation}; \citealp{menzel2021central}; \citealp{Leung_Moon_2023}),
but we leave these directions for future research.

\section{\protect\label{sec:Simulations}Simulations}

\subsection{Setup}

In this section, we evaluate our approach through a simulation study.
We generate a market of $2,000$ individuals. Each individual is assigned
i.i.d. $x_{i}\sim N(5,25)$ and $\epsilon_{i}\sim N(0,1)$, where
$x_{i}$ is independent of $\epsilon_{i}$. The individuals interact
according to the linear-in-means model in equation (\ref{eq:linmean}).
We consider two scenarios: one without endogenous effects and one
with endogenous effects. In the absence of endogenous effects, we
set the parameter values to $\gamma=(0,1,1)$. In the presence of
endogenous effects, we set $\gamma=(0.5,1,1)$.

The market consists of five groups with capacities of 280, 340, 200,
460, and 400, resulting in a total of 1,680 seats. Individuals choose
which group to join based on the model described in Section \ref{sec:Model.group}.
The utility of individual $i$ when joining group $g$ is specified
as $u_{ig}=\zeta_{g}+\delta_{1}^{u}z_{1,ig}^{u}+\delta_{2}^{u}z_{2,i}+\xi_{ig},$
where $\zeta_{g}$ is a group-specific fixed effect, $z_{1,ig}^{u}$
is a pair-specific characteristic that is i.i.d. across $i$ and $g$,
following $N(0,9)$, and $z_{2,i}$ is an individual-specific characteristic
that is i.i.d. following $N(2,1)$. We allow $z_{2,i}$ to be correlated
with $x_{i}$ such that $\text{Cov}(z_{2,i},x_{i})=2$. An individual
may also choose not to join any group, in which case their utility
is given by $u_{i0}=\xi_{i0}$. The unobserved preference $\xi_{ig}$
is i.i.d. across $i$ and $g=0,1,\dots,5$, following the type I extreme
value distribution. Individual $i$'s qualification for joining group
$g$ is specified as $v_{ig}=\delta_{1}^{v}z_{1,ig}^{v}+\delta_{2}^{v}z_{2,i}+\eta_{ig},$
where the pair-specific characteristic\textbf{ }$z_{1,ig}^{v}$ follows
$N(0,9)$ and the unobservable $\eta_{ig}$ follows $N(\epsilon_{i},1)$,
both i.i.d. across $i$ and $g$. Note that $\eta_{ig}$ is correlated
with $\epsilon_{i}$, thereby leading to endogenous groups. We set
the group fixed effects to\textbf{ $\zeta=(\zeta_{1},\zeta_{2},\zeta_{3},\zeta_{4},\zeta_{5})=(9,6,4,2,0)$
}and the parameter values $(\delta_{1}^{u},\delta_{2}^{u},\delta_{1}^{v},\delta_{2}^{v})=(-1,1,1,1)$.
Based on these model primitives, the stable groups are determined
through the individual-proposing Deferred-Acceptance algorithm \citep{gale_college_1962}.
The capacity constraints in all markets are binding.

Given the groups, we consider two specifications for the adjacency
matrix $\boldsymbol{w}$. In the first specification, we use group
averages that exclude the individual themselves. In the second specification,
we average over friends within a group, where the friendships are
generated independently with a constant probability of $0.5$. For
each specification, we estimate $\gamma$ using data from a single
market. The experiment procedure is repeated independently 200 times
and we report the average bias, standard errors, and root mean squared
errors (RMSE) of the 200 estimates of $\gamma$.\footnote{The group formation parameters are estimated using constrained maximum
simulated likelihood, where the cutoffs are treated as auxiliary parameters
that satisfy market-clearing conditions. See Supplemental Appendix
\ref{online:gf_estimate} for more details on the estimation method
and results.}

\subsection{Estimation Results}

Table \ref{tab:gamma_exog} presents the estimation results in the
absence of endogenous effects ($\gamma_{1}=0$). For the specification
with group averages (Panel A), the OLS estimate of $\gamma_{2}$ is
biased upward, indicating the presence of selection bias (Column 1).
Including group fixed effects (FE) does not mitigate this bias (Column
2). In Column 3, we control for a second-order polynomial series of
the elementary symmetric functions of utility difference and qualification
indices, as detailed in Section \ref{sec:exchange}. The sieve OLS
estimate of $\gamma$ is unbiased, demonstrating the effectiveness
of the selection correction. The specification with networks produces
similar results (Panel B). Both OLS and OLS with group FE yield biased
estimates (Columns 4 and 5), whereas sieve OLS provides unbiased estimates
(Column 6). In addition, sieve OLS exhibits smaller standard errors
and RMSE compared to OLS and OLS with group FE.

The estimation results in the presence of endogenous effects ($\gamma_{1}\neq0$)
are presented in Table \ref{tab:gamma_endo}. For the specification
with group averages (Panel A), OLS yields biased estimates of $\gamma_{1}$
and $\gamma_{2}$ (Column 1). Including group FE significantly exacerbates
the bias, likely due to multicollinearity between the group averages
and group dummies (Column 2). By applying the polynomial selection
correction, sieve OLS yields unbiased estimates (Column 3). The specification
with networks shows a similar pattern (Panel B). The estimates obtained
from OLS and OLS with group FE are biased (Columns 4 and 5), whereas
sieve OLS yields unbiased estimates (Column 6). Consistent with the
results in Table \ref{tab:gamma_exog}, sieve OLS has smaller root
mean squared errors compared to both OLS specifications. 

In sum, the simulation results indicate that including group FE is
insufficient to correct for selection bias, whereas sieve OLS provides
an effective approach for selection correction.

\section{\protect\label{sec:Empirical}Social Interactions in Chilean High
Schools}

\subsection{Data}

In this section, we apply our approach to analyze social interactions
among high school students in Chile. We use data from the SIMCE, provided
by the Agency for the Quality of Education in Chile \citep{simce}.\footnote{SIMCE is an abbreviation for Sistema de Medición de la Calidad de
la Educación (Education Quality Measurement System).} The SIMCE dataset provides information on math and language scores,
as well as information on parental education, parental income, and
other student and family characteristics collected through a parental
questionnaire sent home with students. To track a student's academic
performance in subsequent years, we merge the SIMCE dataset with Ministry
of Education administrative records, which provide detailed information
on educational attainment, ranging from high school completion to
college graduation.

Our sample consists of 6,872 tenth-grade students enrolled in 53 high
schools in the Biobío Region in 2006. Of these schools, 23 are public
and 30 are private. Public schools are required to accept any student
willing to enroll, while private schools can select students based
on their admission criteria.\footnote{Although public schools cannot select students, peer effect estimates
based on these schools may be biased because students' unobserved
preferences for schools ($\xi_{i}$) may be correlated with $\epsilon_{i}$.} Panel A of Table \ref{tab:summary} provides descriptive statistics
for the students in our sample. On average, their mothers have 9.58
years of education, 51\% of the students are female, and 87\% of the
students are enrolled in Fonasa.\footnote{Fonasa (Fondo Nacional de Salud) is Chile's tax-funded public health
insurance system that provides free or subsidized healthcare for those
unable to afford private insurance.} We use academic performance across various educational stages as
our measures of outcomes. Tenth-grade math and language scores come
from the SIMCE standardized test, measured as percentile ranks ranging
from 0 to 1. High school graduation is an indicator that equals 1
if a student completes high school on time at the end of twelfth grade
and 0 otherwise. In our sample, 72\% of the students graduate from
high school on time. Additionally, we consider four long-run outcomes:
post-secondary enrollment, college enrollment, post-secondary graduation,
and college graduation. Among the students in our sample, 63\% enroll
in post-secondary education, with 30\% attending college. Furthermore,
50\% complete post-secondary education, and 24\% graduate from college.

We construct the peer measures in equation (\ref{eq:linmean}) using
class averages that exclude the student themselves. It is documented
that a large fraction of peer effects in education arises at the classroom
level \citep[e.g.,][]{ammermueller2009peer}. In our sample, the average
class size (37) is approximately one quarter of the average school
size (130). Therefore, classroom averages provide a dense adjacency
matrix that satisfies Assumption \ref{ass:w}. Furthermore, this specification
allows us to distinguish peer influence from school effectiveness,
the latter of which is captured by school FE. 

Panels B and C of Table \ref{tab:summary} present summary statistics
for peer and school averages, respectively. If students were randomly
assigned to schools, school averages would be relatively homogeneous,
with standard deviations significantly smaller than those of individual
characteristics. However, the standard deviations of school averages
reported in Panel C remain comparable to those of individual characteristics
in Panel A, indicating sorting across schools.\footnote{If students are randomly assigned to schools, the variance of the
average of a characteristic $x_{i}$ in school $s$ is given by $\text{Var}(\bar{x}_{s})=\text{Var}(\frac{1}{n_{s}}\sum_{i=1}^{n_{s}}x_{i})=\frac{1}{n_{s}}\text{Var}(x_{i})$,
where $n_{s}$ represents the school size. In our sample, the average
school size is 130, so the ratio between the standard deviation of
an individual characteristic and the standard deviation of its school
average under random assignment should be approximately $\sqrt{130}\approx11.4$,
which is much larger than what we observe in the sample.} This finding aligns with the evidence on high levels of socioeconomic
segregation between schools \citep{valenzuela2019acrecentando}.

Table \ref{tab:btw_vs_within_sch} decomposes the total variances
of peer characteristics and outcomes into the variances within and
between schools. We find that 88\% and 91\% of the variation in peer
mother\textquoteright s education and peer Fonasa enrollment, respectively,
occur between schools, while the fraction is a bit lower for peer
fraction female (73\%). Peer outcomes have similar fractions of variation
occurring between schools: 73-74\% for test scores, 66-76\% for college-related
outcomes, and the lowest fraction of 48\% for high school graduation.
These results suggest substantial sorting across schools, as the majority
of the variation in peer characteristics and outcomes takes place
between schools rather than within them. In Section \ref{sec:test_random},
we provide further evidence that classroom assignment within schools
does not follow systematic patterns, implying that Assumption \ref{ass:adj_exog}
is satisfied in our context.

\subsection{Estimation Results}

We estimate the parameters in high school admissions using the MPEC
(Mathematical Programming with Equilibrium Constraints) algorithm
\citep{su2012constrained}, as described in Supplemental Appendix
\ref{online:gf_estimate}. We specify preferences and qualifications
as linear functions of individual characteristics (logarithm of family
income, eighth-grade composite score calculated as the sum of math
and language scores, mother's education, and distance to school) and
their interactions with school characteristics (logarithm of tuition,
average composite score, and average mother's education).\footnote{Our empirical model of high school admissions follows closely that
in \citet{HSS_twosided}. However, instead of the Bayesian approach
used in \citet{HSS_twosided}, we adopt a frequentist approach to
estimate the parameters. Our approach has two main advantages: it
reduces the computational burden and, more importantly, it facilitates
the adjustment of standard errors in the second stage, where we estimate
social interactions by frequentist methods.} Based on these estimates (provided in Supplemental Appendix \ref{online:gf_empirical}),
we estimate social interactions among tenth graders and compute standard
errors using the methodology developed in Section \ref{sec:Estimation}. 

Our basic specification of equation (\ref{eq:linmean}) includes individual
characteristics (female, mother's education, and Fonasa enrollment),
the peer averages of these characteristics, and the peer outcome.
We consider both short-run outcomes (tenth-grade math and language
scores and high school graduation) and long-run outcomes (post-secondary
and college enrollment and graduation). For each outcome, we estimate
equation (\ref{eq:linmean}) by (a) OLS, and then progressively control
for (b) school FE, and (c) selection correction. The selection correction
is constructed using a second-order polynomial series of the elementary
symmetric functions of utility difference and qualification indices
in high school admissions.

To account for institutional differences between public and private
schools, we relax the framework in Section \ref{sec:exchange} by
imposing exchangeability conditional on school type (public or private)
rather than across all schools. Conditional exchangeability by school
type requires only that schools of the same type have exchangeable
unobservables, while allowing for systematic differences in unobservables
between public and private schools.\footnote{Conditional exchangeability by school type implies that public and
private schools have two distinct selection functions, with each selection
function depending on two sets of elementary symmetric functions:
one for the indices of public schools and the other for the indices
of private schools. See Supplemental Appendix \ref{online:exch_by_type}
for details on sieve estimation in this case.} By leveraging symmetric functions, our approach significantly reduces
the dimensionality of sieve estimation, requiring only 35 basis functions
compared to 3,485 basis functions if symmetric functions were not
used (see Supplemental Appendix \ref{online:exch_by_type}). Additionally,
by exchangeability our approach can separate two sources of correlated
effects within a school: (i) school effectiveness, captured by the
school FE, and (ii) self-selection into schools, captured by the selection
correction.

Tables \ref{tab:est_short} and \ref{tab:est_long} report the estimates
for short-run and long-run outcomes, respectively. Across all outcomes,
simple OLS yields the largest estimates of endogenous peer effects
(coefficients of peer outcomes). Including school FE reduces these
OLS estimates by 12-46\%. Additionally controlling for our selection
correction further lowers the estimates; the selection-corrected estimates
of endogenous peer effects are 10-45\% lower than those obtained using
school FE alone. These findings demonstrate that the OLS estimates
of endogenous peer effects are biased upward due to sorting into schools.
While the inclusion of school FE mitigates selection bias to some
extent, it does not completely eliminate it. In contrast, our selection
correction effectively addresses the selection issue and provides
unbiased estimates of endogenous peer effects.

The selection-corrected estimates in Tables \ref{tab:est_short} and
\ref{tab:est_long} show that peer outcomes have positive and statistically
significant effects across all outcomes. In high school, one standard
deviation increases in peer math and language scores (0.204 and 0.185)
raise a student's math and language scores by 11.1 and 7.5 percentile
ranks, respectively. This peer effect persists in later outcomes with
a similar magnitude. For example, one standard deviation increases
in peer college enrollment and graduation (0.302 and 0.260) raise
a student's college enrollment and graduation rates by 14.2 and 10.6
percentage points, respectively. These are substantial effects compared
to the mean rates (30.3\% and 23.8\%).

In addition, Tables \ref{tab:est_short} and \ref{tab:est_long} provide
similar results for exogenous peer effects (coefficients of peer characteristics).
Across all outcomes, peer mother's education has positive and significant
effects in OLS regressions with school FE. The estimates decrease
when we correct for selection, suggesting upward selection bias. The
selection-corrected estimates remain positive and significant for
all outcomes (except for tenth-grade math and language scores). Peer
fraction female shows a similar pattern across short-run outcomes
and college enrollment and graduation: the selection correction further
reduces the estimates compared to OLS with school FE. The selection-corrected
estimates are positive and significant for tenth-grade math score,
high school graduation, and post-secondary enrollment and graduation,
albeit with a negative effect for college graduation. This finding
is consistent with the documented evidence that the fraction of female
peers is positively correlated with academic achievement \citep{Sacerdote2011}.
For peer Fonasa enrollment, the selection correction also lowers the
OLS estimates with school FE, leading to larger negative effects.\footnote{This suggests that peer Fonasa enrollment may be positively correlated
with the selection correction, conditional on other controls. } As expected, the selection-corrected estimates are negative and significant
for all outcomes (except for tenth-grade math and language scores).

In a nutshell, OLS regressions with school FE only partially account
for selection and tend to overestimate peer effects. The F-statistics
for the selection correction are statistically significant across
all outcomes (except for college enrollment), underscoring the importance
of our selection correction approach. The selection-corrected estimates
provide evidence that both peer outcomes and peer characteristics
have significant effects on tenth graders.

\subsection{Peer Influence, School Effectiveness, and Self-Selection}

In our framework, a student's outcome are shaped by three factors
related to the school they attend: (i) the influence of their peers
within the school, (ii) the academic effectiveness of the school,
including aspects such as teacher quality and school resources and
investments; and (iii) unobserved factors that affect self-selection
into the school. In this section, we aim to answer the following questions:
To what extent can a student's academic performance be attributed
to peer influence, school effectiveness, and self-selection? How would
the evaluation of peer influence and school effectiveness be impacted
if self-selection were ignored?

Using the selection-corrected estimates, we calculate the portion
of a student's predicted outcome that depends on the school they attend.
This portion is defined as the sum of three components: (i) peer influence,
calculated as $\sum_{j=1}^{n}w_{ij}y_{j}\hat{\gamma}_{1}+\sum_{j=1}^{n}w_{ij}x_{j}\hat{\gamma}_{2}$,
where $\hat{\gamma}_{1}$ and $\hat{\gamma}_{2}$ are the estimates
of $\gamma_{1}$ and $\gamma_{2}$, (ii) school effectiveness, represented
by the estimated school FE for the school the student attends, and
(iii) self-selection, calculated as $\hat{\lambda}(\hat{\pi}_{i})$,
the estimated polynomial series used for the selection correction.\footnote{Our measure of peer influence captures both peer quality ($\sum_{j=1}^{n}w_{ij}y_{j}$
and $\sum_{j=1}^{n}w_{ij}x_{j}$) and peer effects ($\gamma_{1}$
and $\gamma_{2}$).} We then decompose the total variance of the school-dependent predicted
outcome into the variances of peer influence, school effectiveness,
and self-selection, as well as the covariances between any two of
these components.

The decomposition results are presented in Table \ref{tab:yvar_decompose}.
We find that self-selection accounts for the largest fraction of the
total variation in school-dependent predicted outcomes, except for
post-graduate enrollment and college graduation, where it accounts
for the second largest fraction. Peer influence and school effectiveness
contribute comparably to short-run outcomes; however, peer influence
becomes more important for long-run outcomes.\footnote{There is evidence that peer interactions during adolescence have a
lasting impact in later years. For example, \citet{LlerasMuney2024}
found that friendships formed during adolescence have significant
influence on labor market outcomes.} In contrast, school effectiveness accounts for the smallest fraction
of the explained variation in long-run outcomes. These findings underscore
the critical role of self-selection in explaining a student's predicted
outcomes. Failing to account for self-selection can lead to significant
bias when measuring the contributions of peer influence and school
effectiveness.

The last three rows of Table \ref{tab:yvar_decompose} present covariances
among peer influence, school effectiveness, and self-selection. The
correlation between self-selection and peer influence is generally
positive, suggesting that students and their peers are sorted into
schools in a positively assortative manner. In contrast, the correlation
between self-selection and school effectiveness is negative, indicating
that more selective schools may provide lower value-added. This result
aligns with the existing literature that documents limited school
effectiveness for highly selective schools \citep{abdulkadirouglu2014elite,dobbie2014impact}.

To understand the impact of ignoring selection on the evaluation of
peer influence and school effectiveness, we further examine these
correlations at the school level. Figure \ref{fig:peer_selection}
plots the average selection in each school against the average peer
influence in the school, with peer influence calculated using both
biased estimates (OLS with school FE) and unbiased estimates (sieve
OLS). Both biased and unbiased estimates demonstrate that peer influence
is positively correlated with selection at the school level, consistent
with the result in Table \ref{tab:yvar_decompose}. Figure \ref{fig:school_selection}
plots the average selection in each school against school effectiveness,
with school effectiveness similarly calculated using both biased estimates
(OLS with school FE) and unbiased estimates (sieve OLS).\footnote{Since we measure school effectiveness using school fixed effects,
we include the dummies for all schools in our specifications of school
FE and sieve OLS. To achieve identification, we drop the constant
term and normalize $\mathbb{E}[\epsilon_{i}]=0$. By construction,
$\mathbb{E}[\nu_{i}|\pi_{i}]=\mathbb{E}[\epsilon_{i}-\lambda(\pi_{i})|\pi_{i}]=\lambda(\pi_{i})-\lambda(\pi_{i})=0$
and therefore $\mathbb{E}[\lambda(\pi_{i})]=\mathbb{E}[\epsilon_{i}]-\mathbb{E}[\nu_{i}]=0$
by the law of iterated expectations. In the sieve OLS specification,
we implement the normalization by demeaning all the basis functions
in a sieve approximation.} The biased estimates suggest a misleading positive correlation between
average selection and school effectiveness, potentially leading to
the incorrect conclusion that more selective schools are more effective.
In contrast, the unbiased estimates reveal a negative correlation
between average selection and school effectiveness, reaffirming the
earlier finding that more selective schools tend to provide lower
value-added. 

Because peer influence and school effectiveness are correlated with
selection in opposite directions, and they are also correlated with
each other in varying directions (Table \ref{tab:yvar_decompose}),
it is challenging to predict the directions of selection bias in these
variables. Instead, we directly compare their biased and unbiased
estimates in Figures \ref{fig:peer_dist} and \ref{fig:school_dist}.
Figure \ref{fig:peer_dist} plots the distributions of school-average
peer influence, calculated using both biased and unbiased estimates
across all outcomes. Compared to the unbiased distributions, the biased
ones are shifted to the right, indicating an overestimation of peer
influence, and exhibit increased dispersion across schools. Peer influence
is overestimated in all schools, with larger upward bias for more
selective schools (Figure \ref{fig:peer_selection}). Figure \ref{fig:school_dist}
plots the distributions of school effectiveness, calculated using
both biased and unbiased estimates. The biased estimates of school
effectiveness tend to be smaller and less dispersed across schools.
School effectiveness is primarily underestimated for less selective
schools (Figure \ref{fig:school_selection}).

These findings highlight the significant impact of selection bias
on the evaluation of peer influence and school effectiveness. Addressing
the selection problem is critical for deriving accurate implications
for policymakers. Specifically, school performance should not be assessed
solely based on students' outcomes, as these outcomes are heavily
influenced by self-selection. While peer effects are substantial,
they are often overestimated when selection is not adequately addressed,
and the overestimation is more pronounced in highly selective schools.
Moreover, selection tends to distort the evaluation of school effectiveness,
primarily by underestimating the value-added of less selective schools,
which often serve less advantaged student populations. Correcting
for selection is essential to accurately recognize school performance
and ensure equitable resource allocation.

\subsection{\protect\label{sec:test_random}Testing for Random Assignment into
Classrooms}

There may be potential concerns about additional sorting at the classroom
level. To check whether classroom assignment is random, we apply a
regression-based test originally proposed by \citet{sacerdote2001peer}
and later refined with size correction by \citet{jochmans2023testing}.
The underlying intuition of the test is straightforward: under random
assignment, a student's characteristics should not be correlated with
the average characteristics of their classroom peers, after controlling
for sorting across schools.

We apply \citet{jochmans2023testing}'s test to a variety of student
characteristics that schools may consider when assigning students
into classrooms: female, mother's education, Fonasa enrollment, family
income, and test scores. For each characteristic, we perform (i) a
baseline test that controls for school FE, and (ii) a selection-corrected
test that controls for both school FE and selection correction. Table
\ref{tab:test} reports the p-values of these tests. In the baseline
tests, the p-values for mother's education and eighth-grade composite
score are significant, suggesting potential sorting at the classroom
level based on these variables. However, after controlling for our
selection correction, none of the characteristics yield significant
p-values. These results provide evidence that classroom assignment
is random, as long as that selection into schools is properly controlled
for. 

\section{\protect\label{sec:Conclusion}Conclusion}

In this paper, we study social interactions in endogenous groups.
We develop a model of group formation to analyze how individuals select
into groups and how to account for the impact of group selection.
Our model accommodates two-sided decision-making, where individuals
choose groups based on their preferences, while groups admit individuals
based on qualifications until reaching their capacities. This framework
mirrors many admission processes in the real world.

We make significant contributions to the literature in several aspects.
First, we characterize group formation using a two-sided many-to-one
matching model with nonparametric unobservables. Building on this
framework, we demonstrate that endogenous selection into groups leads
to selection bias in the estimation of peer effects. Second, we propose
innovative methods to address the dimensionality challenges in correcting
for this selection bias. Specifically, we employ the limiting approximation
of a market as it grows large to reduce the high dimensionality due
to equilibrium effects. Additionally, we impose an exchangeability
assumption, under which the selection bias can be expressed through
a group-invariant selection function that remains tractable even with
a moderately large number of groups. Third, we propose a sieve OLS
estimator for the social interaction parameters, which achieves $\sqrt{n}$-consistency
and asymptotic normality. These asymptotic properties are established
using novel asymptotic methods under additional conditions (e.g.,
a dense adjacency matrix, limited network dependency). We verify that
these conditions hold for commonly used specifications of adjacency
matrices, including group averages and dyadic networks with fixed
effects.

We apply our approach to investigate social interactions among tenth
graders in Chile. We find that including school fixed effects is insufficient
to account for selection into high schools, whereas our selection
correction method yields unbiased estimates of peer effects. A variance
decomposition shows that self-selection accounts for the largest share
of explained variance in most outcomes, highlighting the importance
of properly accounting for endogenous selection in school evaluation.
Ignoring selection into high schools tends to overestimate peer influence,
particularly in highly selective schools. Moreover, the selection-corrected
estimates suggest that more selective schools may actually provide
lower value-added. These results underscore the significant impact
of selection bias on the evaluation of peer influence and school effectiveness.
Properly correcting for selection is crucial for policymakers to achieve
equitable resource allocation.

\begin{spacing}{1.2} \bibliographystyle{ecta}
\bibliography{endo_group}

\begin{thebibliography}{85}
\newcommand{\enquote}[1]{``#1''}
\expandafter\ifx\csname natexlab\endcsname\relax\def\natexlab#1{#1}\fi

\bibitem[\protect\citeauthoryear{Abdulkadiro{\u{g}}lu, Angrist, and
  Pathak}{Abdulkadiro{\u{g}}lu et~al.}{2014}]{abdulkadirouglu2014elite}
\textsc{Abdulkadiro{\u{g}}lu, A., J.~Angrist, and P.~Pathak} (2014):
  \enquote{The elite illusion: Achievement effects at Boston and New York exam
  schools,} \emph{Econometrica}, 82, 137--196.

\bibitem[\protect\citeauthoryear{Abdulkadiro{\u{g}}lu, Angrist, Narita, and
  Pathak}{Abdulkadiro{\u{g}}lu et~al.}{2022}]{abdulkadirouglu2022breaking}
\textsc{Abdulkadiro{\u{g}}lu, A., J.~D. Angrist, Y.~Narita, and P.~Pathak}
  (2022): \enquote{Breaking ties: Regression discontinuity design meets market
  design,} \emph{Econometrica}, 90, 117--151.

\bibitem[\protect\citeauthoryear{Abdulkadiro{\u{g}}lu, Angrist, Narita, and
  Pathak}{Abdulkadiro{\u{g}}lu et~al.}{2017}]{abdulkadirouglu2017research}
\textsc{Abdulkadiro{\u{g}}lu, A., J.~D. Angrist, Y.~Narita, and P.~A. Pathak}
  (2017): \enquote{Research design meets market design: Using centralized
  assignment for impact evaluation,} \emph{Econometrica}, 85, 1373--1432.

\bibitem[\protect\citeauthoryear{Abdulkadiro{\u{g}}lu, Pathak, Schellenberg,
  and Walters}{Abdulkadiro{\u{g}}lu et~al.}{2020}]{abdulkadirouglu2020parents}
\textsc{Abdulkadiro{\u{g}}lu, A., P.~A. Pathak, J.~Schellenberg, and C.~R.
  Walters} (2020): \enquote{Do parents value school effectiveness?}
  \emph{American Economic Review}, 110, 1502--1539.

\bibitem[\protect\citeauthoryear{Agarwal}{Agarwal}{2015}]{agarwal2015empirical}
\textsc{Agarwal, N.} (2015): \enquote{An Empirical Model of the Medical Match,}
  \emph{American Economic Review}, 105, 1939--1978.

\bibitem[\protect\citeauthoryear{Agarwal and Somaini}{Agarwal and
  Somaini}{2022}]{Agarwal2022}
\textsc{Agarwal, N. and P.~Somaini} (2022): \enquote{Demand Analysis under
  Latent Choice Constraints,} \emph{NBER Working Paper 29993}.

\bibitem[\protect\citeauthoryear{Aitchison and Silvey}{Aitchison and
  Silvey}{1958}]{aitchison1958maximum}
\textsc{Aitchison, J. and S.~Silvey} (1958): \enquote{Maximum-likelihood
  estimation of parameters subject to restraints,} \emph{The annals of
  mathematical Statistics}, 813--828.

\bibitem[\protect\citeauthoryear{Altonji and Mansfield}{Altonji and
  Mansfield}{2018}]{Altonji2018}
\textsc{Altonji, J.~G. and R.~K. Mansfield} (2018): \enquote{Estimating Group
  Effects Using Averages of Observables to Control for Sorting on
  Unobservables: School and Neighborhood Effects,} \emph{American Economic
  Review}, 108, 2902--2946.

\bibitem[\protect\citeauthoryear{Altonji and Matzkin}{Altonji and
  Matzkin}{2005}]{altonji_matzkin}
\textsc{Altonji, J.~G. and R.~L. Matzkin} (2005): \enquote{Cross Section and
  Panel Data Estimators for Nonseparable Models with Endogenous Regressors,}
  \emph{Econometrica}, 73, 1053--1102.

\bibitem[\protect\citeauthoryear{Ammermueller and Pischke}{Ammermueller and
  Pischke}{2009}]{ammermueller2009peer}
\textsc{Ammermueller, A. and J.-S. Pischke} (2009): \enquote{Peer effects in
  European primary schools: Evidence from the progress in international reading
  literacy study,} \emph{Journal of Labor Economics}, 27, 315--348.

\bibitem[\protect\citeauthoryear{Andrews}{Andrews}{1994}]{Andrews1994}
\textsc{Andrews, D.~W.} (1994): \enquote{Empirical Process Methods in
  Econometrics,} in \emph{Handbook of Econometrics}, Elsevier, vol.~4,
  2247--2294.

\bibitem[\protect\citeauthoryear{Angrist, Hull, Pathak, and Walters}{Angrist
  et~al.}{2024}]{angrist2024credible}
\textsc{Angrist, J., P.~Hull, P.~A. Pathak, and C.~Walters} (2024):
  \enquote{Credible school value-added with undersubscribed school lotteries,}
  \emph{Review of Economics and Statistics}, 106, 1--19.

\bibitem[\protect\citeauthoryear{Auerbach}{Auerbach}{2022}]{Auerbach2022}
\textsc{Auerbach, E.} (2022): \enquote{Identification and Estimation of a
  Partially Linear Regression Model Using Network Data,} \emph{Econometrica},
  90, 347--365.

\bibitem[\protect\citeauthoryear{Azevedo and Leshno}{Azevedo and
  Leshno}{2016}]{azevedo_supply_2016}
\textsc{Azevedo, E.~M. and J.~D. Leshno} (2016): \enquote{A Supply and Demand
  Framework for Two-Sided Matching Markets,} \emph{Journal of Political
  Economy}, 124, 1235--1268.

\bibitem[\protect\citeauthoryear{Barseghyan, Coughlin, Molinari, and
  Teitelbaum}{Barseghyan et~al.}{2021}]{Barseghyan2021}
\textsc{Barseghyan, L., M.~Coughlin, F.~Molinari, and J.~C. Teitelbaum} (2021):
  \enquote{Heterogeneous Choice Sets and Preferences,} \emph{Econometrica}, 89,
  2015--2048.

\bibitem[\protect\citeauthoryear{Bayer, Ross, and Topa}{Bayer
  et~al.}{2008}]{bayer2008place}
\textsc{Bayer, P., S.~L. Ross, and G.~Topa} (2008): \enquote{Place of Work and
  Place of Residence: Informal Hiring Networks and Labor Market Outcomes,}
  \emph{Journal of Political Economy}, 116, 1150--1196.

\bibitem[\protect\citeauthoryear{Blume, Brock, Durlauf, and Ioannides}{Blume
  et~al.}{2011}]{blume2011identification}
\textsc{Blume, L.~E., W.~A. Brock, S.~N. Durlauf, and Y.~M. Ioannides} (2011):
  \enquote{Identification of Social Interactions,} in \emph{Handbook of Social
  Economics}, Elsevier, vol.~1, 853--964.

\bibitem[\protect\citeauthoryear{Bramoull{\'e}, Djebbari, and
  Fortin}{Bramoull{\'e} et~al.}{2009}]{bramoulle2009identification}
\textsc{Bramoull{\'e}, Y., H.~Djebbari, and B.~Fortin} (2009):
  \enquote{Identification of Peer Effects Through Social Networks,}
  \emph{Journal of Econometrics}, 150, 41--55.

\bibitem[\protect\citeauthoryear{Brock and Durlauf}{Brock and
  Durlauf}{2001}]{brock2001interactions}
\textsc{Brock, W.~A. and S.~N. Durlauf} (2001): \enquote{Interactions-Based
  Models,} in \emph{Handbook of Econometrics}, Elsevier, vol.~5, 3297--3380.

\bibitem[\protect\citeauthoryear{Brock and Durlauf}{Brock and
  Durlauf}{2002}]{brock2002multinomial}
---\hspace{-.1pt}---\hspace{-.1pt}--- (2002): \enquote{A Multinomial-Choice
  Model of Neighborhood Effects,} \emph{American Economic Review}, 92,
  298--303.

\bibitem[\protect\citeauthoryear{Brock and Durlauf}{Brock and
  Durlauf}{2005}]{brock2005multinomial}
---\hspace{-.1pt}---\hspace{-.1pt}--- (2005): \enquote{Multinomial Choice with
  Social Interactions,} in \emph{The Economy As an Evolving Complex System,
  III}, Oxford University Press, 175--206.

\bibitem[\protect\citeauthoryear{Choo and Siow}{Choo and
  Siow}{2006}]{choo_who_2006}
\textsc{Choo, E. and A.~Siow} (2006): \enquote{Who {Marries} {Whom} and {Why},}
  \emph{Journal of Political Economy}, 114, 175--201.

\bibitem[\protect\citeauthoryear{Cosslett}{Cosslett}{1991}]{cosslett1991semiparametric}
\textsc{Cosslett, S.~R.} (1991): \enquote{Semiparametric estimation of a
  regression model with sample selectivity,} in \emph{Nonparametric and
  Semiparametric Methods in Econometrics and Statistics}, ed. by W.~A. Barnett,
  J.~L. Powell, and G.~Tauchen, Cambridge: Cambridge University Press,
  175--197.

\bibitem[\protect\citeauthoryear{Dahl}{Dahl}{2002}]{Dahl2002}
\textsc{Dahl, G.~B.} (2002): \enquote{Mobility and the Return to Education:
  Testing a Roy Model with Multiple Markets,} \emph{Econometrica}, 70,
  2367--2420.

\bibitem[\protect\citeauthoryear{Dale and Krueger}{Dale and
  Krueger}{2002}]{dale_estimating_2002}
\textsc{Dale, S.~B. and A.~B. Krueger} (2002): \enquote{Estimating the {Payoff}
  to {Attending} a {More} {Selective} {College}: {An} {Application} of
  {Selection} on {Observables} and {Unobservables},} \emph{Quarterly Journal of
  Economics}, 117, 1491--1527.

\bibitem[\protect\citeauthoryear{Das, Newey, and Vella}{Das
  et~al.}{2003}]{das2003nonparametric}
\textsc{Das, M., W.~K. Newey, and F.~Vella} (2003): \enquote{Nonparametric
  Estimation of Sample Selection Models,} \emph{Review of Economic Studies},
  70, 33--58.

\bibitem[\protect\citeauthoryear{Davezies, D'Haultfoeuille, and
  Foug{\`{e}}re}{Davezies et~al.}{2009}]{Davezies2009}
\textsc{Davezies, L., X.~D'Haultfoeuille, and D.~Foug{\`{e}}re} (2009):
  \enquote{Identification of Peer Effects Using Group Size Variation,}
  \emph{Econometrics Journal}, 12, 397--413.

\bibitem[\protect\citeauthoryear{Dobbie and Fryer~Jr}{Dobbie and
  Fryer~Jr}{2014}]{dobbie2014impact}
\textsc{Dobbie, W. and R.~G. Fryer~Jr} (2014): \enquote{The impact of attending
  a school with high-achieving peers: Evidence from the New York City exam
  schools,} \emph{American Economic Journal: Applied Economics}, 6, 58--75.

\bibitem[\protect\citeauthoryear{Duflo, Dupas, and Kremer}{Duflo
  et~al.}{2011}]{Duflo2011tracking}
\textsc{Duflo, E., P.~Dupas, and M.~Kremer} (2011): \enquote{Peer Effects,
  Teacher Incentives, and the Impact of Tracking: Evidence from a Randomized
  Evaluation in Kenya,} \emph{American Economic Review}, 101, 1739--1774.

\bibitem[\protect\citeauthoryear{Epple and Romano}{Epple and
  Romano}{2011}]{Epple2011}
\textsc{Epple, D. and R.~E. Romano} (2011): \enquote{Peer Effects in Education:
  A Survey of the Theory and Evidence,} in \emph{Handbook of Social Economics},
  Elsevier, vol.~1, 1053--1163.

\bibitem[\protect\citeauthoryear{Evans, Oates, and Schwab}{Evans
  et~al.}{1992}]{Evans1992}
\textsc{Evans, W.~N., W.~E. Oates, and R.~M. Schwab} (1992): \enquote{Measuring
  Peer Group Effects: A Study of Teenage Behavior,} \emph{Journal of Political
  Economy}, 100, 966--991.

\bibitem[\protect\citeauthoryear{Fack, Grenet, and He}{Fack
  et~al.}{2019}]{fack_beyond_2019}
\textsc{Fack, G., J.~Grenet, and Y.~He} (2019): \enquote{Beyond
  {Truth}-{Telling}: {Preference} {Estimation} with {Centralized} {School}
  {Choice} and {College} {Admissions},} \emph{American Economic Review}, 109,
  1486--1529.

\bibitem[\protect\citeauthoryear{Fox, Yang, and Hsu}{Fox
  et~al.}{2018}]{fox_unobserved_2018}
\textsc{Fox, J.~T., C.~Yang, and D.~H. Hsu} (2018): \enquote{Unobserved
  Heterogeneity in Matching Games,} \emph{Journal of Political Economy}, 126,
  1339--1373.

\bibitem[\protect\citeauthoryear{Gale and Shapley}{Gale and
  Shapley}{1962}]{gale_college_1962}
\textsc{Gale, D. and L.~Shapley} (1962): \enquote{College {Admissions} and the
  {Stability} of {Marriage},} \emph{American Mathematical Monthly}, 69, 9--15.

\bibitem[\protect\citeauthoryear{Galichon and Salani{\'e}}{Galichon and
  Salani{\'e}}{2022}]{galichon2022cupid}
\textsc{Galichon, A. and B.~Salani{\'e}} (2022): \enquote{Cupid's invisible
  hand: Social surplus and identification in matching models,} \emph{Review of
  Economic Studies}, 89, 2600--2629.

\bibitem[\protect\citeauthoryear{Gandhi}{Gandhi}{2022}]{gandhi2022picking}
\textsc{Gandhi, A.} (2022): \enquote{Picking Your Patients: Selective
  Admissions in the Nursing Home Industry,} \emph{SSRN:3613950}.

\bibitem[\protect\citeauthoryear{Gandhi and Houde}{Gandhi and
  Houde}{2019}]{gandhi2019measuring}
\textsc{Gandhi, A. and J.-F. Houde} (2019): \enquote{Measuring Substitution
  Patterns in Differentiated-Products Industries,} \emph{NBER Working Paper
  26375}.

\bibitem[\protect\citeauthoryear{Gazmuri}{Gazmuri}{2017}]{gazmuri2017school}
\textsc{Gazmuri, A.} (2017): \enquote{School Segregation in the Presence of
  Student Sorting and Cream-Skimming: Evidence from a School Voucher Reform,}
  \emph{Working Paper, Universit{\'e} Toulouse 1 Capitole.}

\bibitem[\protect\citeauthoryear{Goldsmith-Pinkham and
  Imbens}{Goldsmith-Pinkham and Imbens}{2013}]{goldsmith2013social}
\textsc{Goldsmith-Pinkham, P. and G.~W. Imbens} (2013): \enquote{Social
  Networks and the Identification of Peer Effects,} \emph{Journal of Business
  \& Economic Statistics}, 31, 253--264.

\bibitem[\protect\citeauthoryear{Graham}{Graham}{2008}]{graham2008identifying}
\textsc{Graham, B.~S.} (2008): \enquote{Identifying Social Interactions Through
  Conditional Variance Restrictions,} \emph{Econometrica}, 76, 643--660.

\bibitem[\protect\citeauthoryear{Graham}{Graham}{2017}]{Graham2017}
---\hspace{-.1pt}---\hspace{-.1pt}--- (2017): \enquote{An Econometric Model of
  Network Formation With Degree Heterogeneity,} \emph{Econometrica}, 85,
  1033--1063.

\bibitem[\protect\citeauthoryear{Griffith}{Griffith}{2024}]{griffith2024random}
\textsc{Griffith, A.} (2024): \enquote{Random Assignment with Nonrandom Peers:
  A Structural Approach to Counterfactual Treatment Assessment,} \emph{Review
  of Economics and Statistics}, 106, 859--871.

\bibitem[\protect\citeauthoryear{Gu, Li, Lin, and Tang}{Gu
  et~al.}{2024}]{gu2024peer}
\textsc{Gu, X., H.~Li, Z.~Lin, and X.~Tang} (2024): \enquote{Peer Effects with
  Sample Selection: An Application in Online Job Training,} \emph{Available at
  SSRN 4252140}.

\bibitem[\protect\citeauthoryear{Hahn and Ridder}{Hahn and
  Ridder}{2013}]{Hahn2013}
\textsc{Hahn, J. and G.~Ridder} (2013): \enquote{Asymptotic Variance of
  Semiparametric Estimators With Generated Regressors,} \emph{Econometrica},
  81, 315--340.

\bibitem[\protect\citeauthoryear{He, Sinha, and Sun}{He
  et~al.}{2024}]{HSS_twosided}
\textsc{He, Y., S.~Sinha, and X.~Sun} (2024): \enquote{Identification and
  Estimation in Many-to-One Two-Sided Matching Without Transfers,}
  \emph{Econometrica}, 92, 749--774.

\bibitem[\protect\citeauthoryear{Heckman}{Heckman}{1979}]{heckman1979sample}
\textsc{Heckman, J.~J.} (1979): \enquote{Sample Selection Bias as a
  Specification Error,} \emph{Econometrica}, 47, 153--161.

\bibitem[\protect\citeauthoryear{Hsieh and Lee}{Hsieh and
  Lee}{2016}]{hsieh2016social}
\textsc{Hsieh, C.-S. and L.~F. Lee} (2016): \enquote{A Social Interactions
  Model with Endogenous Friendship Formation and Selectivity,} \emph{Journal of
  Applied Econometrics}, 31, 301--319.

\bibitem[\protect\citeauthoryear{Hsieh, Lee, and Boucher}{Hsieh
  et~al.}{2020}]{hsieh2020specification}
\textsc{Hsieh, C.-S., L.-F. Lee, and V.~Boucher} (2020): \enquote{Specification
  and estimation of network formation and network interaction models with the
  exponential probability distribution,} \emph{Quantitative economics}, 11,
  1349--1390.

\bibitem[\protect\citeauthoryear{Hsieh and Van~Kippersluis}{Hsieh and
  Van~Kippersluis}{2018}]{hsieh2018smoking}
\textsc{Hsieh, C.-S. and H.~Van~Kippersluis} (2018): \enquote{Smoking
  initiation: Peers and personality,} \emph{Quantitative Economics}, 9,
  825--863.

\bibitem[\protect\citeauthoryear{Ioannides and Zabel}{Ioannides and
  Zabel}{2008}]{ioannides2008interactions}
\textsc{Ioannides, Y.~M. and J.~E. Zabel} (2008): \enquote{Interactions,
  Neighborhood Selection and Housing Demand,} \emph{Journal of Urban
  Economics}, 63, 229--252.

\bibitem[\protect\citeauthoryear{Jochmans}{Jochmans}{2023}]{jochmans2023testing}
\textsc{Jochmans, K.} (2023): \enquote{Testing random assignment to peer
  groups,} \emph{Journal of Applied Econometrics}, 38, 321--333.

\bibitem[\protect\citeauthoryear{Johnsson and Moon}{Johnsson and
  Moon}{2021}]{johnsson2021estimation}
\textsc{Johnsson, I. and H.~R. Moon} (2021): \enquote{Estimation of Peer
  Effects in Endogenous Social Networks: Control Function Approach,}
  \emph{Review of Economics and Statistics}, 1--51.

\bibitem[\protect\citeauthoryear{Katz, Kling, and Liebman}{Katz
  et~al.}{2001}]{katz2001moving}
\textsc{Katz, L.~F., J.~R. Kling, and J.~B. Liebman} (2001): \enquote{Moving to
  Opportunity in Boston: Early Results of a Randomized Mobility Experiment,}
  \emph{Quarterly Journal of Economics}, 116, 607--654.

\bibitem[\protect\citeauthoryear{Kirkeboen, Leuven, and Mogstad}{Kirkeboen
  et~al.}{2016}]{kirkeboen2016field}
\textsc{Kirkeboen, L.~J., E.~Leuven, and M.~Mogstad} (2016): \enquote{Field of
  Study, Earnings, and Self-Selection,} \emph{Quarterly Journal of Economics},
  131, 1057--1111.

\bibitem[\protect\citeauthoryear{Lee}{Lee}{1990}]{Lee1990}
\textsc{Lee, A.~J.} (1990): \emph{U-Statistics: Theory and Practice}, CRC
  Press.

\bibitem[\protect\citeauthoryear{Lee}{Lee}{1995}]{lee1995semiparametric}
\textsc{Lee, L.-F.} (1995): \enquote{Semiparametric maximum likelihood
  estimation of polychotomous and sequential choice models,} \emph{Journal of
  Econometrics}, 65, 381--428.

\bibitem[\protect\citeauthoryear{Lee}{Lee}{2002}]{Lee2002}
---\hspace{-.1pt}---\hspace{-.1pt}--- (2002): \enquote{Consistency and
  Efficiency of Least Squares Estimation for Mixed Regressive, Spatial
  Augotregressive Models,} \emph{Econometric Theory}, 18, 252--277.

\bibitem[\protect\citeauthoryear{Lee}{Lee}{2007}]{lee2007identification}
---\hspace{-.1pt}---\hspace{-.1pt}--- (2007): \enquote{Identification and
  Estimation of Econometric Models with Group Interactions, Contextual Factors
  and Fixed Effects,} \emph{Journal of Econometrics}, 140, 333--374.

\bibitem[\protect\citeauthoryear{Leung}{Leung}{2015}]{leung2015two}
\textsc{Leung, M.~P.} (2015): \enquote{Two-Step Estimation of Network-Formation
  Models with Incomplete Information,} \emph{Journal of Econometrics}, 188,
  182--195.

\bibitem[\protect\citeauthoryear{Leung and Moon}{Leung and
  Moon}{2023}]{Leung_Moon_2023}
\textsc{Leung, M.~P. and H.~R. Moon} (2023): \enquote{Normal Approximation in
  Large Network Models,} \emph{arXiv:1904.11060 [econ.EM]}.

\bibitem[\protect\citeauthoryear{Li and Racine}{Li and
  Racine}{2007}]{liracine2007}
\textsc{Li, Q. and J.~S. Racine} (2007): \emph{Nonparametric Econometrics:
  Theory and Practice}, Princeton University Press.

\bibitem[\protect\citeauthoryear{Lleras-Muney, Miller, Sheng, and
  Sovero}{Lleras-Muney et~al.}{2024}]{LlerasMuney2024}
\textsc{Lleras-Muney, A., M.~Miller, S.~Sheng, and V.~Sovero} (2024):
  \enquote{Party On: The Labor Market Returns to Social Networks in
  Adolescence,} \emph{Forthcoming in Journal of Labor Economics}.

\bibitem[\protect\citeauthoryear{Manski}{Manski}{1993}]{Manski1993}
\textsc{Manski, C.~F.} (1993): \enquote{Identification of Endogenous Social
  Effects: The Reflection Problem,} \emph{Review of Economic Studies}, 60,
  531--542.

\bibitem[\protect\citeauthoryear{Menzel}{Menzel}{2015}]{menzel_large_2015}
\textsc{Menzel, K.} (2015): \enquote{Large {Matching} {Markets} as
  {Two}-{Sided} {Demand} {Systems},} \emph{Econometrica}, 83, 897--941.

\bibitem[\protect\citeauthoryear{Menzel}{Menzel}{2021}]{menzel2021central}
---\hspace{-.1pt}---\hspace{-.1pt}--- (2021): \enquote{Central Limit Theory for
  Models of Strategic Network Formation,} \emph{arXiv:2111.01678 [econ.GN]}.

\bibitem[\protect\citeauthoryear{{Ministry of Education of Chile}}{{Ministry of
  Education of Chile}}{n.d.}]{simce}
\textsc{{Ministry of Education of Chile}} (n.d.): \enquote{Base de Datos de la
  Agencia de Calidad de la Educaci{\'o}n,} \emph{Santiago, Chile}.

\bibitem[\protect\citeauthoryear{Moffitt}{Moffitt}{2001}]{moffitt2001policy}
\textsc{Moffitt, R.~A.} (2001): \enquote{Policy Interventions, Low-Level
  Equilibria, and Social Interactions,} in \emph{Social Dynamics}, ed. by
  H.~P.~Y. Steven N.~Durlauf, The MIT Press, 45--82.

\bibitem[\protect\citeauthoryear{Newey}{Newey}{1994}]{newey1994asymptotic}
\textsc{Newey, W.~K.} (1994): \enquote{The Asymptotic Variance of
  Semiparametric Estimators,} \emph{Econometrica}, 62, 1349--1382.

\bibitem[\protect\citeauthoryear{Newey}{Newey}{1997}]{newey_convergence_1997}
---\hspace{-.1pt}---\hspace{-.1pt}--- (1997): \enquote{Convergence Rates and
  Asymptotic Normality for Series Estimators,} \emph{Journal of Econometrics},
  79, 147--168.

\bibitem[\protect\citeauthoryear{Newey}{Newey}{2009}]{newey2009two}
---\hspace{-.1pt}---\hspace{-.1pt}--- (2009): \enquote{Two-step series
  estimation of sample selection models,} \emph{Econometrics Journal}, 12,
  S217--S229.

\bibitem[\protect\citeauthoryear{Patacchini, Rainone, and Zenou}{Patacchini
  et~al.}{2017}]{patacchini2017heterogeneous}
\textsc{Patacchini, E., E.~Rainone, and Y.~Zenou} (2017):
  \enquote{Heterogeneous peer effects in education,} \emph{Journal of Economic
  Behavior \& Organization}, 134, 190--227.

\bibitem[\protect\citeauthoryear{Qu and Lee}{Qu and
  Lee}{2015}]{qu2015estimating}
\textsc{Qu, X. and L.-f. Lee} (2015): \enquote{Estimating a spatial
  autoregressive model with an endogenous spatial weight matrix,} \emph{Journal
  of Econometrics}, 184, 209--232.

\bibitem[\protect\citeauthoryear{Ridder and Sheng}{Ridder and
  Sheng}{2022}]{ridder2020estimation}
\textsc{Ridder, G. and S.~Sheng} (2022): \enquote{Two-Step Estimation of A
  Strategic Network Formation Model with Clustering,} \emph{arXiv:2001.03838
  [econ.EM]}.

\bibitem[\protect\citeauthoryear{Robinson}{Robinson}{1988}]{Robinson1988}
\textsc{Robinson, P.~M.} (1988): \enquote{Root-N-Consistent Semiparametric
  Regression,} \emph{Econometrica}, 56, 931.

\bibitem[\protect\citeauthoryear{Roth}{Roth}{1984}]{roth1984evolution}
\textsc{Roth, A.~E.} (1984): \enquote{The Evolution of the Labor Market for
  Medical Interns and Residents: A Case Study in Game Theory,} \emph{Journal of
  Political Economy}, 92, 991--1016.

\bibitem[\protect\citeauthoryear{Roth and Sotomayor}{Roth and
  Sotomayor}{1992}]{roth1992two}
\textsc{Roth, A.~E. and M.~Sotomayor} (1992): \enquote{Two-Sided Matching,} in
  \emph{Handbook of Game Theory with Economic Applications}, Elsevier, vol.~1,
  485--541.

\bibitem[\protect\citeauthoryear{Sacerdote}{Sacerdote}{2001}]{sacerdote2001peer}
\textsc{Sacerdote, B.} (2001): \enquote{Peer Effects with Random Assignment:
  Results for Dartmouth Roommates,} \emph{Quarterly Journal of Economics}, 116,
  681--704.

\bibitem[\protect\citeauthoryear{Sacerdote}{Sacerdote}{2011}]{Sacerdote2011}
---\hspace{-.1pt}---\hspace{-.1pt}--- (2011): \enquote{Peer Effects in
  Education: How Might They Work, How Big Are They and How Much Do We Know Thus
  Far?} in \emph{Handbook of the Economics of Education}, Elsevier, vol.~3,
  chap.~4, 249--277.

\bibitem[\protect\citeauthoryear{Su and Judd}{Su and
  Judd}{2012}]{su2012constrained}
\textsc{Su, C.-L. and K.~L. Judd} (2012): \enquote{Constrained optimization
  approaches to estimation of structural models,} \emph{Econometrica}, 80,
  2213--2230.

\bibitem[\protect\citeauthoryear{Sun}{Sun}{2019}]{Sun2019}
\textsc{Sun, X.} (2019): \enquote{Identification and Estimation of Many-to-One
  Matching with an Application to the U.S. College Admissions,} \emph{Working
  Paper, UCLA}.

\bibitem[\protect\citeauthoryear{Train}{Train}{2009}]{train2009discrete}
\textsc{Train, K.~E.} (2009): \emph{Discrete Choice Methods with Simulation},
  Cambridge University Press.

\bibitem[\protect\citeauthoryear{Valenzuela, L{\'o}pez, Allende, Ortiz, and
  Ram{\'i}rez}{Valenzuela et~al.}{2019}]{valenzuela2019acrecentando}
\textsc{Valenzuela, J.~P., V.~L{\'o}pez, C.~Allende, S.~Ortiz, and
  L.~Ram{\'i}rez} (2019): \enquote{Informe Final: Acrecentando la desigualdad
  en las oportunidades educativas al interior de los colegios: C{\'o}mo las
  pr{\'a}cticas punitivas se relacionan con la segregaci{\'o}n al interior de
  los establecimientos escolares,} Technical Report FON1900118, Universidad de
  Chile and Centro de Investigaci{\'o}n para la Educaci{\'o}n Inclusiva,
  Pontificia Universidad Cat{\'o}lica de Valpara{\'i}so, fONIDE Project Report.

\bibitem[\protect\citeauthoryear{Weyl}{Weyl}{1946}]{Weyl1946}
\textsc{Weyl, H.} (1946): \emph{The Classical Groups: Their Invariants and
  Representations}, Princeton University Press.

\bibitem[\protect\citeauthoryear{Wooldridge}{Wooldridge}{2010}]{wooldridge2010econometric}
\textsc{Wooldridge, J.~M.} (2010): \emph{Econometric Analysis of Cross Section
  and Panel Data}, The MIT Press.

\bibitem[\protect\citeauthoryear{Zimmerman}{Zimmerman}{2003}]{zimmerman2003peer}
\textsc{Zimmerman, D.~J.} (2003): \enquote{Peer Effects in Academic Outcomes:
  Evidence from a Natural Experiment,} \emph{Review of Economics and
  Statistics}, 85, 9--23.

\end{thebibliography}
\end{spacing} 

\newpage\vspace*{0pt}
\begin{figure}[H]
\begin{centering}
\caption{\protect\label{fig:peer_selection}Relationships Between Selection
and Peer Influence}
\includegraphics[totalheight=0.8\textheight]{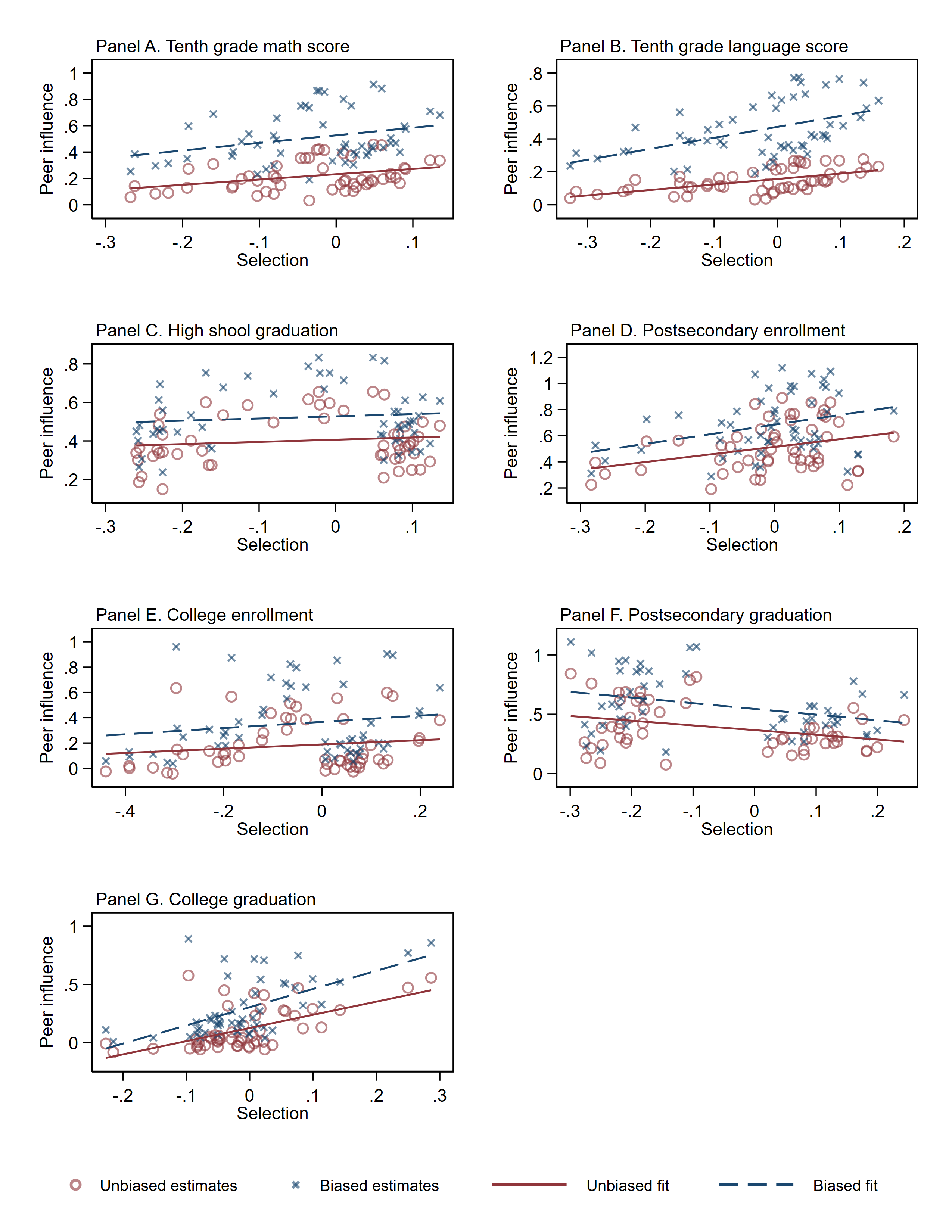}
\par\end{centering}
\begin{tablenotes}\item \scriptsize
\textit{Notes}: School average peer influence against selection across different educational outcomes. For each school, peer influence is estimated using both OLS models with school FE (shown in blue crosses and fitted dashed line) and sieve OLS models with school FE (shown in red circles and fitted solid lines). The biased estimates (blue) come from regressions of outcomes on school dummies and a set of individual student characteristics (female, mother's education, and Fonasa enrollment) along with peers' average outcomes and attributes. The unbiased estimates (red) come from sieve OLS models that extend OLS models by adding polynomial basis functions of elementary symmetric group formation indices up to order 2. Mean selection is calculated as the school-level average of individual selection $\hat{\lambda}(\hat{\pi}_i)$. Peer influence at the school level is calculated as the school-level average of individual peer influence $\sum_{j=1}^{n}w_{ij}y_{j}\hat{\gamma}_{1}+\sum_{j=1}^{n}w_{ij}x_{j}\hat{\gamma}_{2}$. Each panel presents scatter plots and fitted linear relationships between peer influence and mean selection. 
\end{tablenotes}
\end{figure}
\begin{figure}[H]
\begin{centering}
\caption{\protect\label{fig:school_selection}Relationships Between Selection
and School Effectiveness}
\includegraphics[totalheight=0.8\textheight]{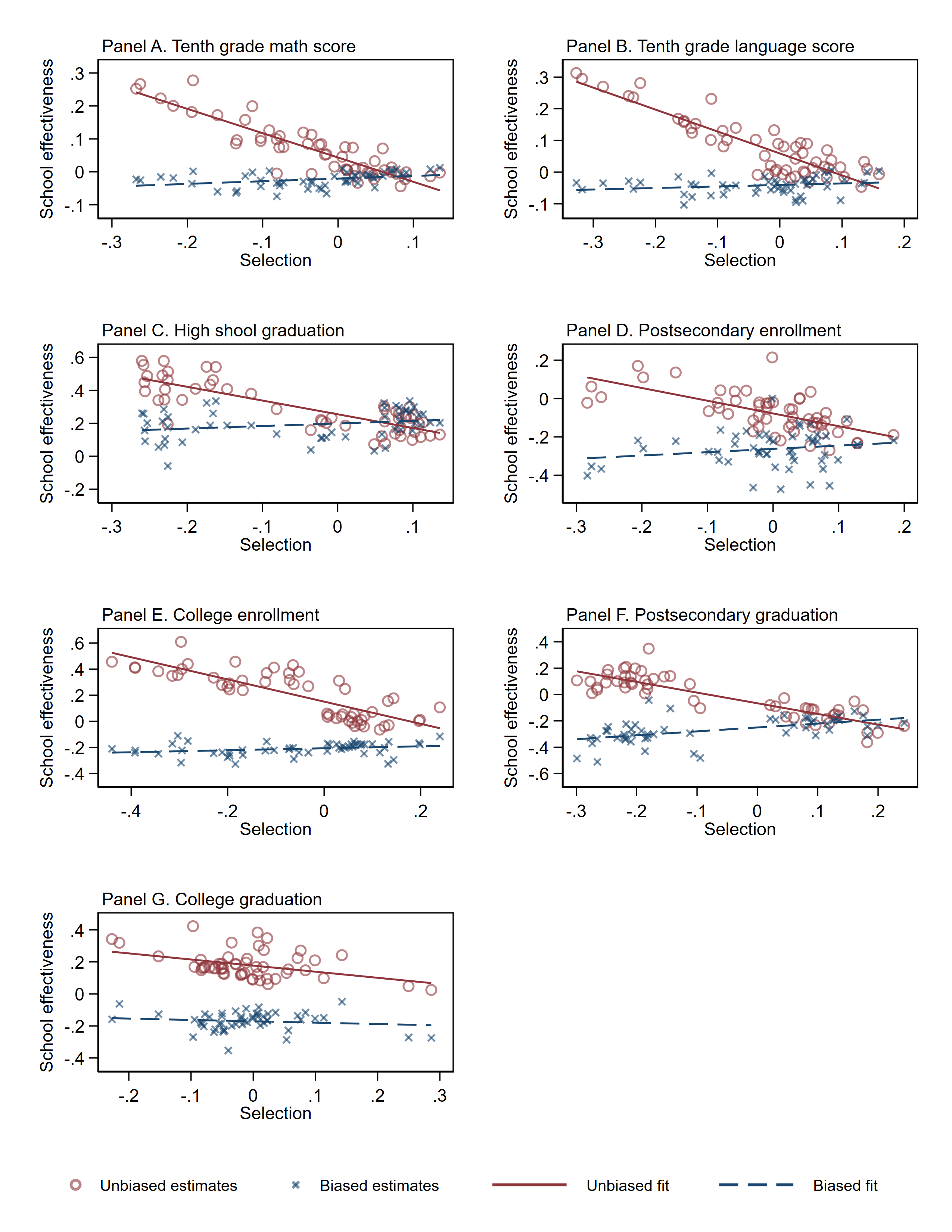}
\par\end{centering}
\begin{tablenotes}\item \footnotesize
\textit{Notes}: School effectiveness against school mean selection across different educational outcomes. For each school, school effectiveness is estimated using both OLS models with school FE (shown in blue crosses and fitted dashed lines) and sieve OLS models with school FE (shown in red circles and fitted solid lines), with model specifications detailed in Figure \ref{fig:peer_selection}. Each panel presents scatter plots and fitted linear relationships between school effectiveness and mean selection. 
\end{tablenotes}
\end{figure}
\begin{figure}[H]
\begin{centering}
\caption{\protect\label{fig:peer_dist}Distributions of Peer Influence: Biased
versus Unbiased Estimates}
\includegraphics[totalheight=0.8\textheight]{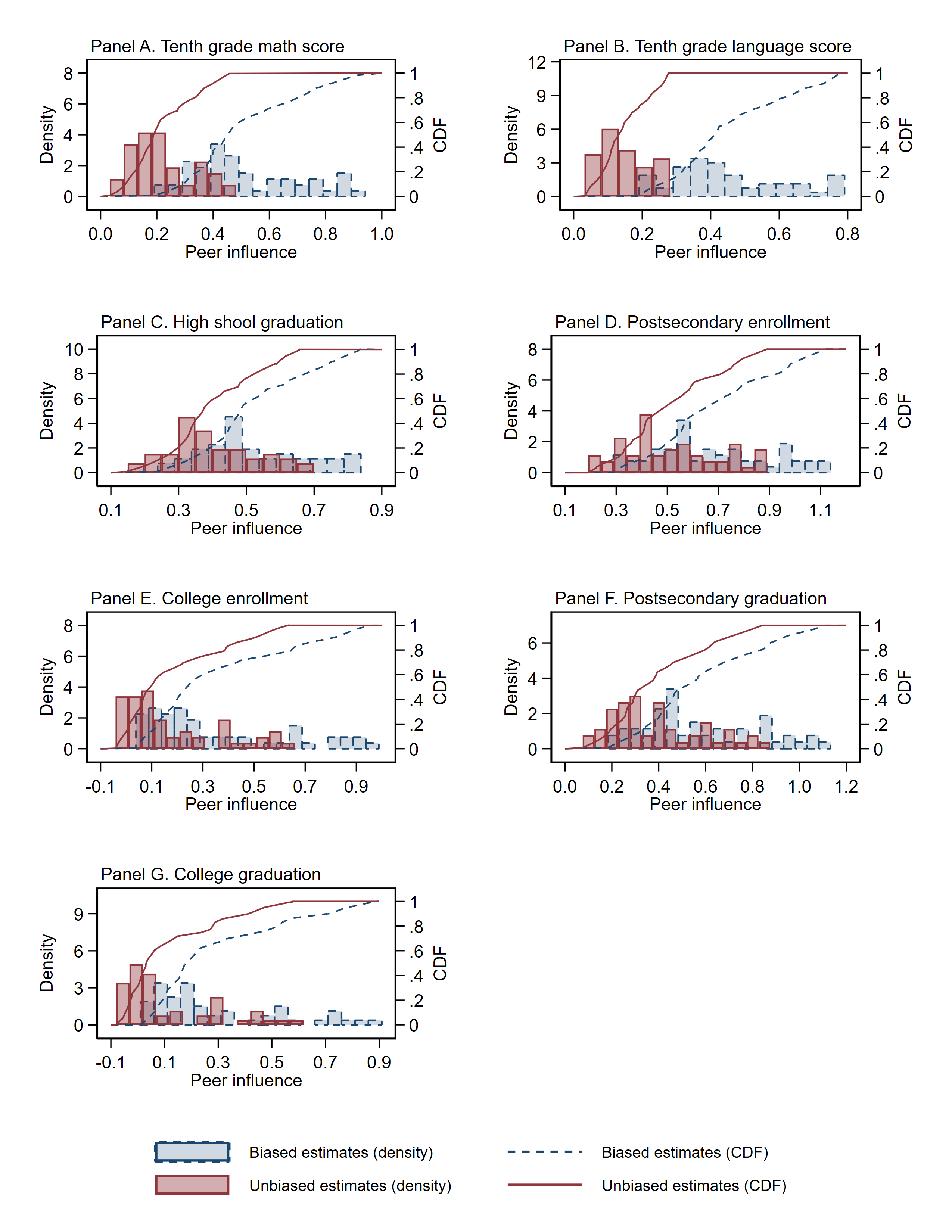}
\par\end{centering}
\begin{tablenotes}\item \footnotesize
\textit{Notes}: Histograms (bars, left y-axis) and cumulative distribution functions (lines, right y-axis) of school-average peer influence across outcomes, calculated using both biased estimates (OLS with school FE, shown in blue dashed bars/lines) and unbiased estimates (sieve OLS with school FE, shown in red solid bars/lines).
\end{tablenotes}
\end{figure}
\begin{figure}[H]
\begin{centering}
\caption{\protect\label{fig:school_dist}Distributions of School Effectiveness:
Biased versus Unbiased Estimates}
\includegraphics[totalheight=0.8\textheight]{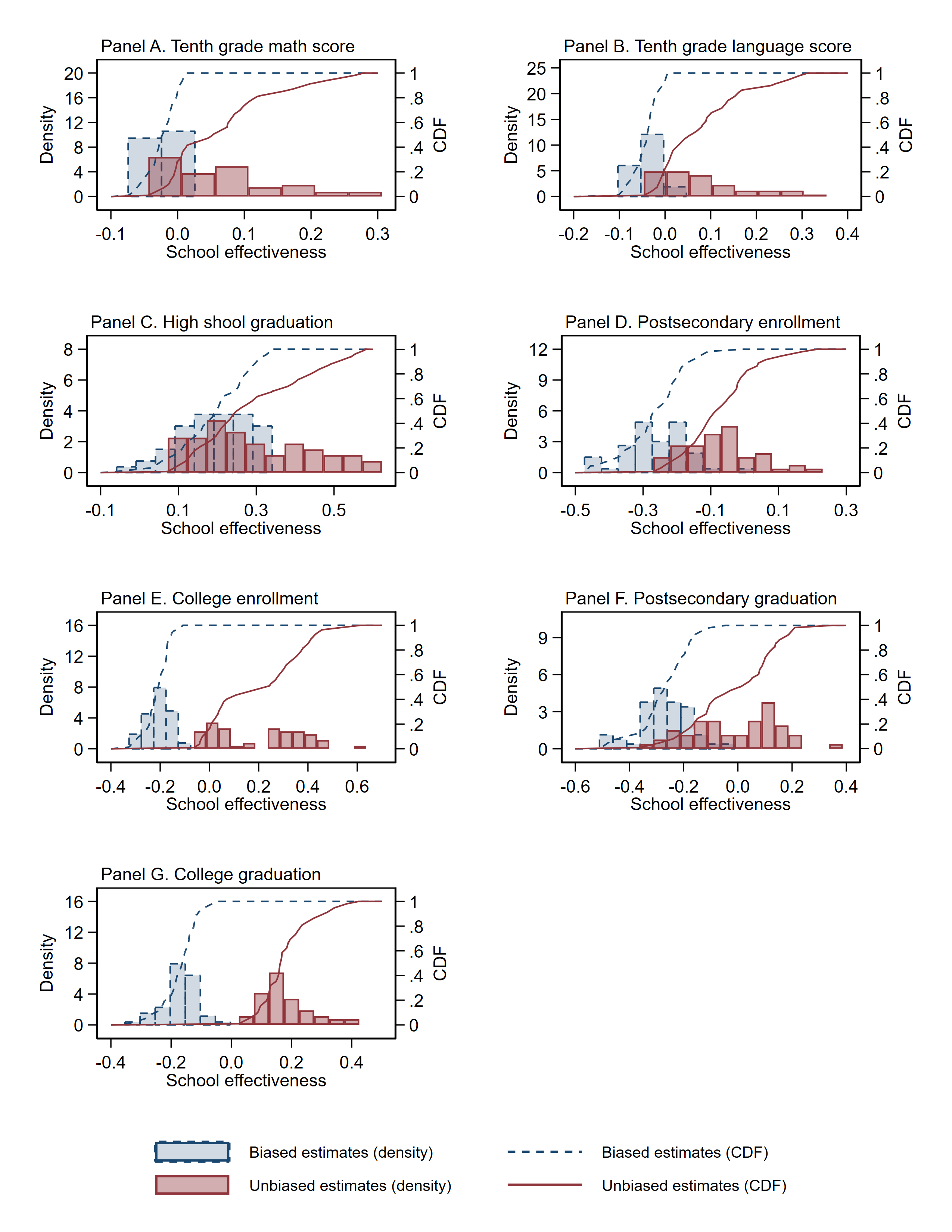}
\par\end{centering}
\begin{tablenotes}\item \footnotesize
\textit{Notes}: Histograms (bars, left y-axis) and cumulative distribution functions (lines, right y-axis) of school effectiveness across outcomes, calculated using both biased estimates (OLS with school FE, shown in blue dashed bars/lines) and unbiased estimates (sieve OLS with school FE, shown in red solid bars/lines).
\end{tablenotes}
\end{figure}
\begin{table}[H]
\centering{}\caption{\protect\label{tab:gamma_exog}Simulation Estimates: Without Endogenous
Effects ($\gamma_{1}=0$)}
{\footnotesize{}%
\begin{tabular}{ll>{\centering}p{1.45cm}>{\centering}p{1.6cm}>{\centering}p{1.65cm}>{\centering}p{1.65cm}>{\centering}p{1.65cm}>{\centering}p{1.65cm}}
\hline 
\hline &  & \multicolumn{3}{c}{{\footnotesize\textit{Panel A: Pure Groups}}} & \multicolumn{3}{c}{{\footnotesize\textit{Panel B: Networks}}}\tabularnewline
\cline{3-8}
 &  & {\footnotesize (1)}{\footnotesize\par}

{\footnotesize OLS} & {\footnotesize (2)}{\footnotesize\par}

{\footnotesize OLS} & {\footnotesize (3)}{\footnotesize\par}

{\footnotesize Sieve OLS} & {\footnotesize (4)}{\footnotesize\par}

{\footnotesize OLS} & {\footnotesize (5)}{\footnotesize\par}

{\footnotesize OLS} & {\footnotesize (6)}{\footnotesize\par}

{\footnotesize Sieve OLS}\tabularnewline
\hline 
{\footnotesize$\gamma_{2}$} & {\footnotesize Bias} & {\footnotesize 0.444} & {\footnotesize 0.555} & {\footnotesize 0.002} & {\footnotesize 0.295} & {\footnotesize 0.140} & {\footnotesize 0.001}\tabularnewline
 & {\footnotesize Std. Dev.} & {\footnotesize 0.139} & {\footnotesize 1.205} & {\footnotesize 0.080} & {\footnotesize 0.071} & {\footnotesize 0.110} & {\footnotesize 0.061}\tabularnewline
 & {\footnotesize RMSE} & {\footnotesize 0.465} & {\footnotesize 1.324} & {\footnotesize 0.080} & {\footnotesize 0.304} & {\footnotesize 0.178} & {\footnotesize 0.061}\tabularnewline
{\footnotesize$\gamma_{3}$} & {\footnotesize Bias} & {\footnotesize -0.006} & {\footnotesize -0.006} & {\footnotesize 0.000} & {\footnotesize -0.005} & {\footnotesize -0.006} & {\footnotesize 0.000}\tabularnewline
 & {\footnotesize Std. Dev.} & {\footnotesize 0.005} & {\footnotesize 0.006} & {\footnotesize 0.004} & {\footnotesize 0.005} & {\footnotesize 0.005} & {\footnotesize 0.004}\tabularnewline
 & {\footnotesize RMSE} & {\footnotesize 0.007} & {\footnotesize 0.008} & {\footnotesize 0.004} & {\footnotesize 0.007} & {\footnotesize 0.008} & {\footnotesize 0.004}\tabularnewline
\hline 
\multicolumn{2}{l}{{\footnotesize Selection Correction}} & {\footnotesize No} & {\footnotesize Group FE} & {\footnotesize Sieve} & {\footnotesize No} & {\footnotesize Group FE} & {\footnotesize Sieve}\tabularnewline
\hline 
\end{tabular}}\begin{tablenotes}[flushleft]\footnotesize \item 
\textit{Notes}: Average biases, standard deviations, and root mean squared errors of the social interaction parameter estimates obtained from 200 Monte Carlo samples in the absence of endogenous interactions. Each Monte Carlo sample consists of 5 groups and 1,680 individuals. Panel A presents the results for group averages that exclude oneself, and Panel B presents the results for within-group networks. Columns 1 and 4 estimate the parameters using OLS. Columns 2 and 5 estimate the parameters using OLS with group fixed effects. Columns 3 and 6 estimate the parameters using sieve OLS, where the basis functions are specified by a polynomial series of symmetric indices of utility differences and qualifications up to the second order.
\end{tablenotes}
\end{table}
\begin{table}[H]
\begin{centering}
\caption{\protect\label{tab:gamma_endo}Simulation Estimates: With Endogenous
Effects ($\gamma_{1}\protect\neq0$)}
{\footnotesize{}%
\begin{tabular}{ll>{\centering}p{1.45cm}>{\centering}p{1.6cm}>{\centering}p{1.65cm}>{\centering}p{1.65cm}>{\centering}p{1.65cm}>{\centering}p{1.65cm}}
\hline 
\hline &  & \multicolumn{3}{c}{{\footnotesize\textit{Panel A: Pure Groups}}} & \multicolumn{3}{c}{{\footnotesize\textit{Panel B: Networks}}}\tabularnewline
\cline{3-8}
 &  & {\footnotesize (1)}{\footnotesize\par}

{\footnotesize OLS} & {\footnotesize (2)}{\footnotesize\par}

{\footnotesize OLS} & {\footnotesize (3)}{\footnotesize\par}

{\footnotesize Sieve OLS} & {\footnotesize (4)}{\footnotesize\par}

{\footnotesize OLS} & {\footnotesize (5)}{\footnotesize\par}

{\footnotesize OLS} & {\footnotesize (6)}{\footnotesize\par}

{\footnotesize Sieve OLS}\tabularnewline
\hline 
$\gamma_{1}$ & {\footnotesize Bias} & {\footnotesize 0.377} & {\footnotesize -308.214} & {\footnotesize -0.046} & {\footnotesize 0.136} & {\footnotesize -0.845} & {\footnotesize 0.003}\tabularnewline
 & {\footnotesize Std. Dev.} & {\footnotesize 0.220} & {\footnotesize 3.572} & {\footnotesize 0.256} & {\footnotesize 0.035} & {\footnotesize 0.398} & {\footnotesize 0.029}\tabularnewline
 & {\footnotesize RMSE} & {\footnotesize 0.436} & {\footnotesize 308.235} & {\footnotesize 0.259} & {\footnotesize 0.140} & {\footnotesize 0.933} & {\footnotesize 0.029}\tabularnewline
{\footnotesize$\gamma_{2}$} & {\footnotesize Bias} & {\footnotesize -1.387} & {\footnotesize 302.827} & {\footnotesize 0.222} & {\footnotesize -0.192} & {\footnotesize 0.827} & {\footnotesize -0.01}\tabularnewline
 & {\footnotesize Std. Dev.} & {\footnotesize 1.073} & {\footnotesize 4.096} & {\footnotesize 1.229} & {\footnotesize 0.119} & {\footnotesize 0.389} & {\footnotesize 0.099}\tabularnewline
 & {\footnotesize RMSE} & {\footnotesize 1.752} & {\footnotesize 302.855} & {\footnotesize 1.246} & {\footnotesize 0.226} & {\footnotesize 0.913} & {\footnotesize 0.099}\tabularnewline
{\footnotesize$\gamma_{3}$} & {\footnotesize Bias} & {\footnotesize -0.010} & {\footnotesize -0.012} & {\footnotesize 0.001} & {\footnotesize -0.008} & {\footnotesize -0.007} & {\footnotesize 0.000}\tabularnewline
 & {\footnotesize Std. Dev.} & {\footnotesize 0.005} & {\footnotesize 0.007} & {\footnotesize 0.005} & {\footnotesize 0.005} & {\footnotesize 0.005} & {\footnotesize 0.004}\tabularnewline
 & {\footnotesize RMSE} & {\footnotesize 0.011} & {\footnotesize 0.014} & {\footnotesize 0.005} & {\footnotesize 0.009} & {\footnotesize 0.009} & {\footnotesize 0.004}\tabularnewline
\hline 
\multicolumn{2}{l}{{\footnotesize Selection Correction}} & {\footnotesize No} & {\footnotesize Group FE} & {\footnotesize Sieve} & {\footnotesize No} & {\footnotesize Group FE} & {\footnotesize Sieve}\tabularnewline
\hline 
\end{tabular}}{\footnotesize\par}
\par\end{centering}
\begin{tablenotes}[flushleft]\footnotesize \item 
\textit{Notes}: Average biases, standard deviations, and root mean squared errors of the social interaction parameter estimates obtained from 200 Monte Carlo samples in the presence of endogenous interactions. Each Monte Carlo sample consists of 5 groups and 1,680 individuals. Panel A presents the results for group averages that exclude oneself, and Panel B presents the results for within-group networks. Columns 1 and 4 estimate the parameters using OLS. Columns 2 and 5 estimate the parameters using OLS with group fixed effects. Columns 3 and 6 estimate the parameters using sieve OLS, where the basis functions are specified by a polynomial series of symmetric indices of utility differences and qualifications up to the second order.
\end{tablenotes}
\end{table}
\begin{table}[H]
\begin{centering}
{\footnotesize\caption{\protect\label{tab:summary}Descriptive Statistics for Tenth Graders
in Biobío Region in 2006}
}{\footnotesize{}%
\begin{tabular}{lcccccccc}
\hline 
\hline & \multicolumn{2}{c}{\begin{cellvarwidth}[t]
\centering
{\footnotesize Panel A. Individual}{\footnotesize\par}

{\footnotesize ($N=6,872$)}
\end{cellvarwidth}} &  & \multicolumn{2}{c}{\begin{cellvarwidth}[t]
\centering
{\footnotesize Panel B. Peer}{\footnotesize\par}

{\footnotesize ($N=6,872$)}
\end{cellvarwidth}} &  & \multicolumn{2}{c}{\begin{cellvarwidth}[t]
\centering
{\footnotesize Panel C. School}{\footnotesize\par}

{\footnotesize ($N=53$)}
\end{cellvarwidth}}\tabularnewline
\cline{2-9}
 & {\footnotesize Mean} & {\footnotesize Std. Dev.} &  & {\footnotesize Mean} & {\footnotesize Std. Dev.} &  & {\footnotesize Mean} & {\footnotesize Std. Dev.}\tabularnewline
\hline 
{\footnotesize Female} & {\footnotesize 0.508} & {\footnotesize 0.500} &  & {\footnotesize 0.508} & {\footnotesize 0.232} &  & {\footnotesize 0.502} & {\footnotesize 0.168}\tabularnewline
{\footnotesize Mother's education} & {\footnotesize 9.580} & {\footnotesize 3.396} &  & {\footnotesize 9.580} & {\footnotesize 1.978} &  & {\footnotesize 9.924} & {\footnotesize 2.377}\tabularnewline
{\footnotesize Fonasa} & {\footnotesize 0.869} & {\footnotesize 0.338} &  & {\footnotesize 0.869} & {\footnotesize 0.162} &  & {\footnotesize 0.822} & {\footnotesize 0.218}\tabularnewline
{\footnotesize Tenth-grade math score} & {\footnotesize 0.499} & {\footnotesize 0.288} &  & {\footnotesize 0.499} & {\footnotesize 0.204} &  & {\footnotesize 0.501} & {\footnotesize 0.199}\tabularnewline
{\footnotesize Tenth-grade language score} & {\footnotesize 0.500} & {\footnotesize 0.289} &  & {\footnotesize 0.500} & {\footnotesize 0.185} &  & {\footnotesize 0.500} & {\footnotesize 0.175}\tabularnewline
{\footnotesize High school graduation} & {\footnotesize 0.718} & {\footnotesize 0.450} &  & {\footnotesize 0.718} & {\footnotesize 0.188} &  & {\footnotesize 0.698} & {\footnotesize 0.170}\tabularnewline
{\footnotesize Postsecondary enrollment} & {\footnotesize 0.631} & {\footnotesize 0.483} &  & {\footnotesize 0.631} & {\footnotesize 0.257} &  & {\footnotesize 0.636} & {\footnotesize 0.254}\tabularnewline
{\footnotesize College enrollment} & {\footnotesize 0.303} & {\footnotesize 0.459} &  & {\footnotesize 0.303} & {\footnotesize 0.302} &  & {\footnotesize 0.337} & {\footnotesize 0.301}\tabularnewline
{\footnotesize Postsecondary graduation} & {\footnotesize 0.502} & {\footnotesize 0.500} &  & {\footnotesize 0.502} & {\footnotesize 0.260} &  & {\footnotesize 0.499} & {\footnotesize 0.258}\tabularnewline
{\footnotesize College graduation} & {\footnotesize 0.238} & {\footnotesize 0.426} &  & {\footnotesize 0.238} & {\footnotesize 0.260} &  & {\footnotesize 0.265} & {\footnotesize 0.260}\tabularnewline
{\footnotesize Class size/School size} & \multicolumn{1}{c}{-} & - &  & {\footnotesize 37} & {\footnotesize 6.455} &  & {\footnotesize 130} & {\footnotesize 107.189}\tabularnewline
\hline 
\end{tabular}}{\footnotesize\par}
\par\end{centering}
\begin{tablenotes}[flushleft] \footnotesize \item
\textit{Notes}:Descriptive statistics for tenth graders in the Biobío Region in 2006. Both tenth grade math and language scores come from SIMCE tests and are expressed as percentile ranks (0-1). Mother's education represents years of schooling. Fonasa indicates enrollment in the Fondo Nacional de Salud (FONASA), Chile's public health insurance system funded by taxes, which provides free or subsidized healthcare for those unable to afford private insurance. High school graduation equals one if a student completes high school at the end of 12th grade. Postsecondary/College enrollment and graduation are binary indicators measuring enrollment in and completion of higher education. The sample size for tenth grade scores is 6,073 students. The mean and standard deviation of class size are calculated using 184 classes. 
\end{tablenotes}
\end{table}
\begin{table}[H]
\centering
{\small\caption{\protect\label{tab:btw_vs_within_sch}Variance Decomposition of Peer
Characteristics and Outcomes}
}{\small{}%
\begin{tabular}{l>{\centering}p{0.1\textwidth}>{\centering}p{0.1\textwidth}>{\centering}p{0.1\textwidth}>{\centering}p{0.12\textwidth}}
\hline 
\hline & {\footnotesize Total variance} & {\footnotesize Within schools} & {\footnotesize Between schools} & {\footnotesize\% Between schools}\tabularnewline
\hline 
{\footnotesize Peer female} & {\footnotesize 0.054} & {\footnotesize 0.015} & {\footnotesize 0.039} & {\footnotesize 72.7}\tabularnewline
{\footnotesize Peer mother's education} & {\footnotesize 3.912} & {\footnotesize 0.457} & {\footnotesize 3.455} & {\footnotesize 88.3}\tabularnewline
{\footnotesize Peer Fonasa} & {\footnotesize 0.026} & {\footnotesize 0.003} & {\footnotesize 0.024} & {\footnotesize 90.5}\tabularnewline
{\footnotesize Peer tenth-grade math score} & {\footnotesize 0.037} & {\footnotesize 0.013} & {\footnotesize 0.027} & {\footnotesize 73.3}\tabularnewline
{\footnotesize Peer tenth-grade language score} & {\footnotesize 0.030} & {\footnotesize 0.010} & {\footnotesize 0.022} & {\footnotesize 74.0}\tabularnewline
{\footnotesize Peer high school graduation} & {\footnotesize 0.035} & {\footnotesize 0.018} & {\footnotesize 0.017} & {\footnotesize 48.4}\tabularnewline
{\footnotesize Peer postsecondary enrollment} & {\footnotesize 0.066} & {\footnotesize 0.022} & {\footnotesize 0.044} & {\footnotesize 66.0}\tabularnewline
{\footnotesize Peer college enrollment} & {\footnotesize 0.091} & {\footnotesize 0.023} & {\footnotesize 0.068} & {\footnotesize 74.7}\tabularnewline
{\footnotesize Peer postsecondary graduation} & {\footnotesize 0.067} & {\footnotesize 0.021} & {\footnotesize 0.046} & {\footnotesize 68.1}\tabularnewline
{\footnotesize Peer college graduation} & {\footnotesize 0.068} & {\footnotesize 0.016} & {\footnotesize 0.052} & {\footnotesize 76.0}\tabularnewline
\hline 
\end{tabular}}{\small\par}

\begin{tablenotes}[flushleft] \footnotesize \item
\textit{Notes}: Variance decomposition of peer characteristics and outcomes into within-school and between-school components. Let $x_{i}$ represent a peer characteristic or outcome of individual $i$, measured by the classroom average excluding $i$. Define $\bar{x}_{s}$ as the average of $x_{i}$ in school $s$, and $\bar{x}$ as the average of $x_{i}$ in the entire sample. The total variance of $x_{i}$ in the sample can be decomposed as $\frac{1}{n}\sum_{s=1}^{S}\sum_{i=1}^{n_{s}}(x_{i}-\bar{x})^{2}=\frac{1}{n}\sum_{s=1}^{S}\sum_{i=1}^{n_{s}}(x_{i}-\bar{x}_{s})^{2}+\frac{1}{n}\sum_{s=1}^{S}n_{s}(\bar{x}_{s}-\bar{x})^{2}$,
where the first term represents the within-school component and the second term captures the between-school component.
\end{tablenotes}
\end{table}
\begin{sidewaystable}[H]
\centering{}{\small\caption{\protect\label{tab:est_short}Peer Effect Estimates: Short-Run Outcomes}
}{\footnotesize{}%
\begin{tabular*}{1\textwidth}{@{\extracolsep{\fill}}lccccccccc}
\hline 
\hline & \multicolumn{3}{c}{{\footnotesize Tenth-grade math score}} & \multicolumn{3}{c}{{\footnotesize Tenth-grade language score}} & \multicolumn{3}{c}{{\footnotesize High school graduation}}\tabularnewline
\cline{2-10}
 & {\footnotesize (1)} & {\footnotesize (2)} & {\footnotesize (3)} & {\footnotesize (4)} & {\footnotesize (5)} & {\footnotesize (6)} & {\footnotesize (7)} & {\footnotesize (8)} & {\footnotesize (9)}\tabularnewline
 & {\footnotesize OLS} & {\footnotesize OLS} & {\footnotesize Sieve OLS} & {\footnotesize OLS} & {\footnotesize OLS} & {\footnotesize Sieve OLS} & {\footnotesize OLS} & {\footnotesize OLS} & {\footnotesize Sieve OLS}\tabularnewline
\hline 
{\footnotesize\textit{Peer avg./frac.}} &  &  &  &  &  &  &  &  & \tabularnewline
{\footnotesize$\:$Outcome} & {\footnotesize 0.929{*}{*}{*}} & {\footnotesize 0.818{*}{*}{*}} & {\footnotesize 0.543{*}{*}{*}} & {\footnotesize 0.886{*}{*}{*}} & {\footnotesize 0.733{*}{*}{*}} & {\footnotesize 0.403{*}{*}{*}} & {\footnotesize 0.799{*}{*}{*}} & {\footnotesize 0.459{*}{*}{*}} & {\footnotesize 0.413{*}{*}{*}}\tabularnewline
 & {\footnotesize (0.015)} & {\footnotesize (0.018)} & {\footnotesize (0.022)} & {\footnotesize (0.018)} & {\footnotesize (0.021)} & {\footnotesize (0.027)} & {\footnotesize (0.027)} & {\footnotesize (0.026)} & {\footnotesize (0.023)}\tabularnewline
{\footnotesize$\:$Female} & {\footnotesize 0.068{*}{*}{*}} & {\footnotesize 0.069{*}{*}{*}} & {\footnotesize 0.027{*}{*}} & {\footnotesize -0.012} & {\footnotesize 0.009} & {\footnotesize -0.009} & {\footnotesize -0.024} & {\footnotesize 0.073{*}{*}{*}} & {\footnotesize 0.068{*}{*}{*}}\tabularnewline
 & {\footnotesize (0.010)} & {\footnotesize (0.011)} & {\footnotesize (0.011)} & {\footnotesize (0.010)} & {\footnotesize (0.012)} & {\footnotesize (0.012)} & {\footnotesize (0.021)} & {\footnotesize (0.023)} & {\footnotesize (0.024)}\tabularnewline
{\footnotesize$\:$Mother's edu.} & {\footnotesize -0.002} & {\footnotesize 0.009{*}{*}{*}} & {\footnotesize -0.003} & {\footnotesize -0.002} & {\footnotesize 0.012{*}{*}{*}} & {\footnotesize -0.003} & {\footnotesize 0.003} & {\footnotesize 0.027{*}{*}{*}} & {\footnotesize 0.018{*}{*}{*}}\tabularnewline
 & {\footnotesize (0.002)} & {\footnotesize (0.002)} & {\footnotesize (0.002)} & {\footnotesize (0.002)} & {\footnotesize (0.002)} & {\footnotesize (0.003)} & {\footnotesize (0.004)} & {\footnotesize (0.003)} & {\footnotesize (0.003)}\tabularnewline
{\footnotesize$\:$Fonasa} & {\footnotesize -0.006} & {\footnotesize -0.010} & {\footnotesize -0.023} & {\footnotesize -0.001} & {\footnotesize -0.010} & {\footnotesize 0.003} & {\footnotesize -0.030} & {\footnotesize -0.070{*}{*}} & {\footnotesize -0.073{*}{*}}\tabularnewline
 & {\footnotesize (0.017)} & {\footnotesize (0.015)} & {\footnotesize (0.015)} & {\footnotesize (0.018)} & {\footnotesize (0.017)} & {\footnotesize (0.017)} & {\footnotesize (0.039)} & {\footnotesize (0.033)} & {\footnotesize (0.034)}\tabularnewline
{\footnotesize\textit{Individual attr.}} &  &  &  &  &  &  &  &  & \tabularnewline
{\footnotesize$\:$Female} & {\footnotesize -0.070{*}{*}{*}} & {\footnotesize -0.070{*}{*}{*}} & {\footnotesize -0.032{*}{*}{*}} & {\footnotesize 0.016{*}{*}{*}} & {\footnotesize 0.016{*}{*}{*}} & {\footnotesize 0.061{*}{*}{*}} & {\footnotesize 0.031{*}{*}{*}} & {\footnotesize 0.034{*}{*}{*}} & {\footnotesize 0.049{*}{*}{*}}\tabularnewline
 & {\footnotesize (0.005)} & {\footnotesize (0.005)} & {\footnotesize (0.005)} & {\footnotesize (0.005)} & {\footnotesize (0.005)} & {\footnotesize (0.005)} & {\footnotesize (0.010)} & {\footnotesize (0.011)} & {\footnotesize (0.011)}\tabularnewline
{\footnotesize$\:$Mother's edu.} & {\footnotesize 0.006{*}{*}{*}} & {\footnotesize 0.006{*}{*}{*}} & {\footnotesize 0.027{*}{*}} & {\footnotesize 0.008{*}{*}{*}} & {\footnotesize 0.008{*}{*}{*}} & {\footnotesize 0.027{*}{*}} & {\footnotesize 0.000} & {\footnotesize 0.001} & {\footnotesize 0.006}\tabularnewline
 & {\footnotesize (0.001)} & {\footnotesize (0.001)} & {\footnotesize (0.011)} & {\footnotesize (0.001)} & {\footnotesize (0.001)} & {\footnotesize (0.012)} & {\footnotesize (0.002)} & {\footnotesize (0.002)} & {\footnotesize (0.009)}\tabularnewline
{\footnotesize$\:$Fonasa} & {\footnotesize -0.002} & {\footnotesize -0.002} & {\footnotesize 0.001} & {\footnotesize 0.000} & {\footnotesize -0.000} & {\footnotesize 0.008} & {\footnotesize 0.021} & {\footnotesize 0.020} & {\footnotesize 0.025}\tabularnewline
 & {\footnotesize (0.007)} & {\footnotesize (0.007)} & {\footnotesize (0.007)} & {\footnotesize (0.007)} & {\footnotesize (0.007)} & {\footnotesize (0.008)} & {\footnotesize (0.015)} & {\footnotesize (0.015)} & {\footnotesize (0.015)}\tabularnewline
\hline 
{\footnotesize School FE} & {\footnotesize No} & {\footnotesize Yes} & {\footnotesize Yes} & {\footnotesize No} & {\footnotesize Yes} & {\footnotesize Yes} & {\footnotesize No} & {\footnotesize Yes} & {\footnotesize Yes}\tabularnewline
{\footnotesize Selection correction} & {\footnotesize No} & {\footnotesize No} & {\footnotesize Yes} & {\footnotesize No} & {\footnotesize No} & {\footnotesize Yes} & {\footnotesize No} & {\footnotesize No} & {\footnotesize Yes}\tabularnewline
{\footnotesize F-stat of School FE} &  & {\footnotesize 0.821} & {\footnotesize 2.879{*}{*}{*}} &  & {\footnotesize 1.144} & {\footnotesize 2.257{*}{*}{*}} &  & {\footnotesize 3.034{*}{*}{*}} & {\footnotesize 2.959{*}{*}{*}}\tabularnewline
{\footnotesize F-stat of Selection} &  &  & {\footnotesize 3.175{*}{*}{*}} &  &  & {\footnotesize 2.969{*}{*}{*}} &  &  & {\footnotesize 2.880{*}{*}{*}}\tabularnewline
{\footnotesize$R^{2}$} & {\footnotesize 0.493} & {\footnotesize 0.873} & {\footnotesize 0.910} & {\footnotesize 0.384} & {\footnotesize 0.846} & {\footnotesize 0.899} & {\footnotesize 0.184} & {\footnotesize 0.772} & {\footnotesize 0.776}\tabularnewline
{\footnotesize Observations} & {\footnotesize 6,073} & {\footnotesize 6,073} & {\footnotesize 6,073} & {\footnotesize 6,073} & {\footnotesize 6,073} & {\footnotesize 6,073} & {\footnotesize 6,872} & {\footnotesize 6,872} & {\footnotesize 6,872}\tabularnewline
\hline 
\end{tabular*}}\begin{tablenotes}[flushleft] \scriptsize \item
\textit{Notes}: Estimates of peer effects on short-term academic outcomes. For each outcome, we employ three progressive estimation strategies: (a) OLS without controls for school or selection (columns 1, 4, 7), (b) OLS with school fixed effects (columns 2, 5, 8), and (c) sieve OLS with both school fixed effects and selection correction (columns 3, 6, 9). The selection correction is constructed using a second-order polynomial series of the elementary symmetric functions of utility difference and qualification indices in high school admissions, as detailed in Supplemental Appendix \ref{online:exch_by_type}. Standard errors constructed based on \ref{thm:gamma_clt} are in parentheses. ***, **, * indicate statistical significance at 1\%, 5\%, and 10\% levels, respectively. F-statistics test the joint significance of school fixed effects and polynomial basis functions, respectively. 
\end{tablenotes}
\end{sidewaystable}
\begingroup
\setlength{\tabcolsep}{1pt}
\begin{sidewaystable}[H]
\centering{}{\scriptsize\caption{\protect\label{tab:est_long}Peer Effect Estimates: Long-Run Outcomes}
}{\scriptsize{}%
\begin{tabular*}{1\textwidth}{@{\extracolsep{\fill}}lcccccccccccc}
\hline 
\hline & \multicolumn{3}{c}{{\scriptsize Post-secondary enrollment}} & \multicolumn{3}{c}{{\scriptsize College enrollment}} & \multicolumn{3}{c}{{\scriptsize Post-secondary graduation}} & \multicolumn{3}{c}{{\scriptsize College graduation}}\tabularnewline
\cline{2-13}
 & {\scriptsize (1)} & {\scriptsize (2)} & {\scriptsize (3)} & {\scriptsize (4)} & {\scriptsize (5)} & {\scriptsize (6)} & {\scriptsize (7)} & {\scriptsize (8)} & {\scriptsize (9)} & {\scriptsize (10)} & {\scriptsize (11)} & {\scriptsize (12)}\tabularnewline
 & {\scriptsize OLS} & {\scriptsize OLS} & {\scriptsize Sieve OLS} & {\scriptsize OLS} & {\scriptsize OLS} & {\scriptsize Sieve OLS} & {\scriptsize OLS} & {\scriptsize OLS} & {\scriptsize Sieve OLS} & {\scriptsize OLS} & {\scriptsize OLS} & {\scriptsize Sieve OLS}\tabularnewline
\hline 
{\scriptsize\textit{Peer avg./frac.}} &  &  &  &  &  &  &  &  &  &  &  & \tabularnewline
{\scriptsize$\:$Outcome} & {\scriptsize 0.766{*}{*}{*}} & {\scriptsize 0.414{*}{*}{*}} & {\scriptsize 0.305{*}{*}{*}} & {\scriptsize 0.864{*}{*}{*}} & {\scriptsize 0.670{*}{*}{*}} & {\scriptsize 0.470{*}{*}{*}} & {\scriptsize 0.759{*}{*}{*}} & {\scriptsize 0.420{*}{*}{*}} & {\scriptsize 0.320{*}{*}{*}} & {\scriptsize 0.834{*}{*}{*}} & {\scriptsize 0.575{*}{*}{*}} & {\scriptsize 0.407{*}{*}{*}}\tabularnewline
 & {\scriptsize (0.026)} & {\scriptsize (0.033)} & {\scriptsize (0.028)} & {\scriptsize (0.024)} & {\scriptsize (0.028)} & {\scriptsize (0.025)} & {\scriptsize (0.029)} & {\scriptsize (0.034)} & {\scriptsize (0.029)} & {\scriptsize (0.030)} & {\scriptsize (0.036)} & {\scriptsize (0.032)}\tabularnewline
{\scriptsize$\:$Female} & {\scriptsize -0.037} & {\scriptsize 0.136{*}{*}{*}} & {\scriptsize 0.153{*}{*}{*}} & {\scriptsize -0.004} & {\scriptsize 0.023} & {\scriptsize 0.011} & {\scriptsize -0.072{*}{*}{*}} & {\scriptsize 0.059{*}{*}} & {\scriptsize 0.066{*}{*}} & {\scriptsize -0.041{*}{*}{*}} & {\scriptsize -0.019} & {\scriptsize -0.037{*}{*}}\tabularnewline
 & {\scriptsize (0.024)} & {\scriptsize (0.028)} & {\scriptsize (0.028)} & {\scriptsize (0.016)} & {\scriptsize (0.017)} & {\scriptsize (0.017)} & {\scriptsize (0.024)} & {\scriptsize (0.028)} & {\scriptsize (0.027)} & {\scriptsize (0.015)} & {\scriptsize (0.016)} & {\scriptsize (0.016)}\tabularnewline
{\scriptsize$\:$Mother's edu.} & {\scriptsize 0.004} & {\scriptsize 0.043{*}{*}{*}} & {\scriptsize 0.034{*}{*}{*}} & {\scriptsize -0.005} & {\scriptsize 0.021{*}{*}{*}} & {\scriptsize 0.013{*}{*}{*}} & {\scriptsize 0.008{*}} & {\scriptsize 0.047{*}{*}{*}} & {\scriptsize 0.035{*}{*}{*}} & {\scriptsize 0.001} & {\scriptsize 0.028{*}{*}{*}} & {\scriptsize 0.018{*}{*}{*}}\tabularnewline
 & {\scriptsize (0.004)} & {\scriptsize (0.004)} & {\scriptsize (0.005)} & {\scriptsize (0.004)} & {\scriptsize (0.004)} & {\scriptsize (0.004)} & {\scriptsize (0.005)} & {\scriptsize (0.004)} & {\scriptsize (0.005)} & {\scriptsize (0.004)} & {\scriptsize (0.004)} & {\scriptsize (0.004)}\tabularnewline
{\scriptsize$\:$Fonasa} & {\scriptsize 0.067{*}} & {\scriptsize -0.017} & {\scriptsize -0.067{*}{*}} & {\scriptsize -0.023} & {\scriptsize -0.086{*}{*}{*}} & {\scriptsize -0.127{*}{*}{*}} & {\scriptsize 0.046} & {\scriptsize -0.092{*}{*}{*}} & {\scriptsize -0.144{*}{*}{*}} & {\scriptsize -0.030} & {\scriptsize -0.132{*}{*}{*}} & {\scriptsize -0.172{*}{*}{*}}\tabularnewline
 & {\scriptsize (0.035)} & {\scriptsize (0.032)} & {\scriptsize (0.034)} & {\scriptsize (0.034)} & {\scriptsize (0.032)} & {\scriptsize (0.032)} & {\scriptsize (0.040)} & {\scriptsize (0.035)} & {\scriptsize (0.036)} & {\scriptsize (0.040)} & {\scriptsize (0.033)} & {\scriptsize (0.033)}\tabularnewline
{\scriptsize\textit{Individual attr.}} &  &  &  &  &  &  &  &  &  &  &  & \tabularnewline
{\scriptsize$\:$Female} & {\scriptsize 0.086{*}{*}{*}} & {\scriptsize 0.090{*}{*}{*}} & {\scriptsize 0.120{*}{*}{*}} & {\scriptsize 0.020{*}{*}} & {\scriptsize 0.021{*}{*}} & {\scriptsize 0.058{*}{*}{*}} & {\scriptsize 0.123{*}{*}{*}} & {\scriptsize 0.126{*}{*}{*}} & {\scriptsize 0.155{*}{*}{*}} & {\scriptsize 0.057{*}{*}{*}} & {\scriptsize 0.058{*}{*}{*}} & {\scriptsize 0.088{*}{*}{*}}\tabularnewline
 & {\scriptsize (0.010)} & {\scriptsize (0.010)} & {\scriptsize (0.010)} & {\scriptsize (0.009)} & {\scriptsize (0.009)} & {\scriptsize (0.009)} & {\scriptsize (0.011)} & {\scriptsize (0.011)} & {\scriptsize (0.011)} & {\scriptsize (0.009)} & {\scriptsize (0.009)} & {\scriptsize (0.009)}\tabularnewline
{\scriptsize$\:$Mother's edu.} & {\scriptsize 0.020{*}{*}{*}} & {\scriptsize 0.020{*}{*}{*}} & {\scriptsize 0.017{*}} & {\scriptsize 0.019{*}{*}{*}} & {\scriptsize 0.020{*}{*}{*}} & {\scriptsize -0.001} & {\scriptsize 0.015{*}{*}{*}} & {\scriptsize 0.016{*}{*}{*}} & {\scriptsize 0.013} & {\scriptsize 0.013{*}{*}{*}} & {\scriptsize 0.014{*}{*}{*}} & {\scriptsize -0.005}\tabularnewline
 & {\scriptsize (0.002)} & {\scriptsize (0.002)} & {\scriptsize (0.010)} & {\scriptsize (0.001)} & {\scriptsize (0.001)} & {\scriptsize (0.007)} & {\scriptsize (0.002)} & {\scriptsize (0.002)} & {\scriptsize (0.010)} & {\scriptsize (0.001)} & {\scriptsize (0.001)} & {\scriptsize (0.006)}\tabularnewline
{\scriptsize$\:$Fonasa} & {\scriptsize -0.001} & {\scriptsize -0.003} & {\scriptsize -0.001} & {\scriptsize -0.003} & {\scriptsize -0.005} & {\scriptsize 0.004} & {\scriptsize -0.002} & {\scriptsize -0.006} & {\scriptsize -0.003} & {\scriptsize -0.011} & {\scriptsize -0.014} & {\scriptsize -0.004}\tabularnewline
 & {\scriptsize (0.014)} & {\scriptsize (0.014)} & {\scriptsize (0.015)} & {\scriptsize (0.014)} & {\scriptsize (0.015)} & {\scriptsize (0.015)} & {\scriptsize (0.016)} & {\scriptsize (0.016)} & {\scriptsize (0.017)} & {\scriptsize (0.015)} & {\scriptsize (0.015)} & {\scriptsize (0.016)}\tabularnewline
\hline 
{\scriptsize School FE} & {\scriptsize No} & {\scriptsize Yes} & {\scriptsize Yes} & {\scriptsize No} & {\scriptsize Yes} & {\scriptsize Yes} & {\scriptsize No} & {\scriptsize Yes} & {\scriptsize Yes} & {\scriptsize No} & {\scriptsize Yes} & {\scriptsize Yes}\tabularnewline
{\scriptsize Selection correction} & {\scriptsize No} & {\scriptsize No} & {\scriptsize Yes} & {\scriptsize No} & {\scriptsize No} & {\scriptsize Yes} & {\scriptsize No} & {\scriptsize No} & {\scriptsize Yes} & {\scriptsize No} & {\scriptsize No} & {\scriptsize Yes}\tabularnewline
{\scriptsize F-stat of School FE} &  & {\scriptsize 3.629{*}{*}{*}} & {\scriptsize 3.502{*}{*}{*}} &  & {\scriptsize 1.406{*}{*}} & {\scriptsize 1.494{*}{*}} &  & {\scriptsize 2.798{*}{*}{*}} & {\scriptsize 2.284{*}{*}{*}} &  & {\scriptsize 1.567{*}{*}{*}} & {\scriptsize 1.773{*}{*}{*}}\tabularnewline
{\scriptsize F-stat of Selection} &  &  & {\scriptsize 3.750{*}{*}{*}} &  &  & {\scriptsize 1.331{*}} &  &  & {\scriptsize 1.891{*}{*}{*}} &  &  & {\scriptsize 1.518{*}{*}}\tabularnewline
{\scriptsize$R^{2}$} & {\scriptsize 0.271} & {\scriptsize 0.734} & {\scriptsize 0.746} & {\scriptsize 0.418} & {\scriptsize 0.596} & {\scriptsize 0.632} & {\scriptsize 0.257} & {\scriptsize 0.633} & {\scriptsize 0.646} & {\scriptsize 0.354} & {\scriptsize 0.511} & {\scriptsize 0.544}\tabularnewline
{\scriptsize Observations} & {\scriptsize 6,872} & {\scriptsize 6,872} & {\scriptsize 6,872} & {\scriptsize 6,872} & {\scriptsize 6,872} & {\scriptsize 6,872} & {\scriptsize 6,872} & {\scriptsize 6,872} & {\scriptsize 6,872} & {\scriptsize 6,872} & {\scriptsize 6,872} & {\scriptsize 6,872}\tabularnewline
\hline 
\end{tabular*}}\begin{tablenotes}[flushleft] \scriptsize \item
\textit{Notes}: Estimates of peer effects on long-term academic outcomes. For each outcome, we employ three progressive estimation strategies: (a) OLS without controls for school or selection (columns 1, 4, 7, 10), (b) OLS with school fixed effects (columns 2, 5, 8, 11), and (c) sieve OLS with both school fixed effects and selection correction (columns 3, 6, 9, 12). The selection correction is constructed using a second-order polynomial series of the elementary symmetric functions of utility difference and qualification indices in high school admissions, as detailed in Supplemental Appendix \ref{online:exch_by_type}. Standard errors constructed based on \ref{thm:gamma_clt} are in parentheses. ***, **, * indicate statistical significance at 1\%, 5\%, and 10\% levels, respectively. F-statistics test the joint significance of school fixed effects and polynomial basis functions. 
\end{tablenotes}
\end{sidewaystable}
\endgroup
\begin{sidewaystable}[H]
\centering{}\caption{\protect\label{tab:yvar_decompose}Variance Decomposition of Predicted
Outcomes}
\begin{tabular}{l>{\centering}p{0.08\textwidth}>{\centering}p{0.08\textwidth}>{\centering}p{0.08\textwidth}>{\centering}p{0.08\textwidth}>{\centering}p{0.08\textwidth}>{\centering}p{0.08\textwidth}>{\centering}p{0.08\textwidth}}
\hline 
\hline & {\footnotesize Math score} & {\footnotesize Language score} & {\footnotesize High school graduation} & {\footnotesize Post-sec. enrollment} & {\footnotesize College enrollment} & {\footnotesize Post-sec. graduation} & {\footnotesize College graduation}\tabularnewline
 & {\footnotesize (1)} & {\footnotesize (2)} & {\footnotesize (3)} & {\footnotesize (5)} & {\footnotesize (6)} & {\footnotesize (7)} & {\footnotesize (8)}\tabularnewline
\hline 
{\footnotesize Total variance of predicted outcome} & {\footnotesize 0.042} & {\footnotesize 0.040} & {\footnotesize 0.024} & {\footnotesize 0.049} & {\footnotesize 0.100} & {\footnotesize 0.051} & {\footnotesize 0.076}\tabularnewline
{\footnotesize Variance of peer influence} & {\footnotesize 0.013} & {\footnotesize 0.006} & {\footnotesize 0.015} & {\footnotesize 0.029} & {\footnotesize 0.033} & {\footnotesize 0.030} & {\footnotesize 0.025}\tabularnewline
{\footnotesize Variance of school effectiveness} & {\footnotesize 0.006} & {\footnotesize 0.007} & {\footnotesize 0.016} & {\footnotesize 0.009} & {\footnotesize 0.030} & {\footnotesize 0.022} & {\footnotesize 0.005}\tabularnewline
{\footnotesize Variance of self-selection} & {\footnotesize 0.026} & {\footnotesize 0.034} & {\footnotesize 0.022} & {\footnotesize 0.018} & {\footnotesize 0.046} & {\footnotesize 0.036} & {\footnotesize 0.018}\tabularnewline
{\footnotesize 2{*}Cov(peer infl., school effe.)} & {\footnotesize -0.001} & {\footnotesize -0.002} & {\footnotesize -0.004} & {\footnotesize -0.008} & {\footnotesize 0.025} & {\footnotesize 0.018} & {\footnotesize 0.014}\tabularnewline
{\footnotesize 2{*}Cov(peer infl., selection)} & {\footnotesize 0.010} & {\footnotesize 0.011} & {\footnotesize 0.005} & {\footnotesize 0.013} & {\footnotesize 0.014} & {\footnotesize -0.011} & {\footnotesize 0.016}\tabularnewline
{\footnotesize 2{*}Cov(school effe., selection)} & {\footnotesize -0.014} & {\footnotesize -0.018} & {\footnotesize -0.028} & {\footnotesize -0.012} & {\footnotesize -0.048} & {\footnotesize -0.044} & {\footnotesize -0.002}\tabularnewline
\hline 
\end{tabular}\begin{tablenotes}[flushleft] \scriptsize \item
\textit{Notes}: Variance decomposition of predicted outcome into three major components: peer influence, school effectiveness, and self-selection. The total variance of predicted outcomes represents the combined variation arising from these three components and their covariances, which can be decomposed into the sum of the individual component variances plus their respective covariance terms. These estimates are derived from sieve OLS models that regress each outcome on peer averages (both outcomes and attributes), individual attributes, school fixed effects, and selection controls, where the selection controls incorporate second-order polynomial basis functions of elementary indices of group formation. The three components are constructed as follows: (i) peer influence,
calculated as $\sum_{j=1}^{n}w_{ij}y_{j}\hat{\gamma}_{1}+\sum_{j=1}^{n}w_{ij}x_{j}\hat{\gamma}_{2}$,
where $\hat{\gamma}_{1}$ and $\hat{\gamma}_{2}$ are the estimates
of $\gamma_{1}$ and $\gamma_{2}$, (ii) school effectiveness, represented
by the estimated school FE for the school attending, and (iii) self
selection, calculated as $\hat{\lambda}(\hat{\pi}_{i})$, the estimated
polynomial series used for the selection correction.
\end{tablenotes}
\end{sidewaystable}
\begin{table}[H]
\centering
\caption{\protect\label{tab:test}Testing for Random Assignment into Classrooms}
\begin{tabular}{lcc}
\hline 
\hline & {\footnotesize Baseline} & {\footnotesize Selection corrected}\tabularnewline
\hline 
{\footnotesize Female} & {\footnotesize 0.109} & {\footnotesize 0.100}\tabularnewline
{\footnotesize Mother's education} & {\footnotesize 0.008} & {\footnotesize 0.243}\tabularnewline
{\footnotesize Fonasa} & {\footnotesize 0.063} & {\footnotesize 0.163}\tabularnewline
{\footnotesize Eighth-grade composite score} & {\footnotesize 0.002} & {\footnotesize 0.401}\tabularnewline
{\footnotesize Eighth-grade family income} & {\footnotesize 0.113} & {\footnotesize 0.334}\tabularnewline
{\footnotesize Tenth-grade family income} & {\footnotesize 0.073} & {\footnotesize 0.509}\tabularnewline
\hline 
\end{tabular}

\begin{tablenotes}[flushleft] \footnotesize \item
\textit{Notes}: P-values from tests of random assignment to classrooms for various student characteristics, following \cite{jochmans2023testing}. The table displays (i) a baseline test that controls for school FE, and (ii) a selection-corrected test that controls for both school FE and selection correction. The null hypothesis is that assignment to classrooms within schools is random.
\end{tablenotes}
\end{table}
\newpage{}

\appendix
\setcounter{page}{1}
\fancypagestyle{appendix}{%
    \fancyhf{} 
    \fancyfoot[C]{Supplemental Appendix \thepage} 
    \renewcommand{\headrulewidth}{0pt} 
}

\pagestyle{appendix}

\numberwithin{equation}{section}
\numberwithin{figure}{section}
\numberwithin{table}{section}
\begin{center}
{\LARGE Supplemental Appendix to}\\[1cm]{\LARGE Social Interactions
in Endogenous Groups}\\[1cm] {\large Shuyang Sheng \hspace{1cm} Xiaoting
Sun}\\[1cm]
\par\end{center}

\section{\protect\label{online:gf_estimate}Estimating Group Formation Parameters}

To establish the empirical relevance of our estimation approach, we
study a parametric simulation setting of group formation process.
We assume known distributions for the unobserved heterogeneity terms
$\xi_{ig}$ and $\eta_{ig}$. Let $z_{i}$ be the vector of observable
characteristics, which contains the individual-group specific variables$\{z_{1,ig}^{u},z_{1,ig}^{v}\}_{g=1}^{G}$
and individual specific variable $z_{2i}$. For each group $g$, the
conditional probability of an individual joining $g$ can be written
as
\begin{eqnarray*}
\sigma_{g}(\delta,\zeta,p;z_{i}) & \equiv & \mathbb{P}(g_{i}=g|z_{i})\\
 & = & \int\!\frac{\exp\!\left(\zeta_{g}+U_{ig}\right)\cdot\mathbf{1}\!\left(V_{ig}+\eta_{ig}>p_{g}\right)}{1+\sum_{h=1}^{G}\exp\!\left(\zeta_{h}+U_{ih}\right)\cdot\mathbf{1}\!\left(V_{ih}+\eta_{ih}>p_{h}\right)}\,dF(\eta_{i}),
\end{eqnarray*}
where $U_{ig}=\delta_{1}^{u}z_{1,ig}^{u}+\delta_{2}^{u}z_{2,i}$ and
$V_{ig}=\delta_{1}^{v}z_{1,ig}^{v}+\delta_{2}^{v}z_{2,i}$ represent
the deterministic part of the utility and qualification respectively,
$\eta_{i}=(\eta_{i1},\dots,\eta_{iG})'$, and $\mathbf{1}\left(\cdot\right)$
is an indicator function. Let $\boldsymbol{\sigma}(\delta,\alpha,p;z_{i})=(\sigma_{1}(\delta,\alpha,p;z_{i}),\dots,\sigma_{G}(\delta,\alpha,p;z_{i}))'$
denote the vector of conditional probabilities across all groups.

We estimate the group formation parameters $\delta$ and $\zeta$
by maximizing log-likelihood, where $p$ is treated as auxiliary parameters
that satisfy the market clearing condition. The estimator solves the
constrained optimization problem:
\[
\max_{\delta,\zeta,p}\;\mathcal{L}_{n}(\delta,\zeta,p;z_{i})\qquad\text{s.t.}\quad D_{n}(\delta,\zeta,p;z_{i})=S_{n},
\]
where the log likelihood function is given by:
\begin{eqnarray*}
\mathcal{L}_{n}(\delta,\zeta,p;z_{i}) & = & \sum_{i=1}^{n}\log l(\delta,\zeta,p;z_{i})\\
 & = & \sum_{i=1}^{n}\sum_{g=0}^{G}\mathbf{1}(g_{i}=g)\log\sigma_{g}(\delta,\zeta,p;z_{i}),
\end{eqnarray*}
the demand is given by
\[
D_{n}(\delta,\zeta,p)=\frac{1}{n}\sum_{i=1}^{n}\boldsymbol{\sigma}(\delta,\zeta,p;z_{i}),
\]
and $S_{n}$ is the vector of group capacities.

Since the match probabilities involve an integral without a closed
form, we approximate it through simulation. For each individual $i$,
we generate 300 independent random draws of the vector $\eta_{i}$
from its distribution to approximate the probabilities through Monte
Carlo integration. To address the numerical challenges posed by non-smooth
indicator functions, we adopt the smoothed A-R simulator \citep{train2009discrete}.
Specifically, we replace the indicator function $\mathbf{1}(p_{g}-(V_{ig}+\eta_{ig})<0)$
with a smoothed logistic function:$\left(1+\exp\left(\frac{p_{g}-(V_{ig}+\eta_{ig})}{\kappa}\right)\right)^{-1}$,
where the scale parameter $\kappa$ is set to 0.05. This smooth approximation
improves numerical properties while maintaining estimation accuracy
and converges to the original indicator function as $\kappa\rightarrow0$.

\citet{aitchison1958maximum} has established the $\sqrt{n}$ consistency
and asymptotic distribution of a constrained maximum likelihood estimator
like ours. We calculate our estimator using the MPEC (Mathematical
Programming with Equilibrium Constraints) algorithm \citep{su2012constrained}.
Table \ref{tab:groupformation} shows the estimation results. We find
that the estimator works well, with minimal bias across parameters
and reasonable standard errors.
\begin{table}
\centering
\caption{{\small\protect\label{tab:groupformation}Simulations: Estimates of
Group Formation Parameters}}
{\small\setlength{\tabcolsep}{4pt}}%
\begin{tabular}{ccccccccccc}
\hline 
\hline & \multicolumn{2}{c}{{\footnotesize\textit{Panel A. Slope Parameters}}} &  &  & \multicolumn{2}{c}{{\footnotesize\textit{Panel B. Group Fixed Effects}}} &  &  & \multicolumn{2}{c}{{\footnotesize\textit{Panel C. Cutoffs}}}\tabularnewline
\cline{1-3}\cline{5-7}\cline{9-11}
 & {\footnotesize Bias} & {\footnotesize Std. Dev.} &  &  & {\footnotesize Bias} & {\footnotesize Std. Dev.} &  &  & {\footnotesize Bias} & {\footnotesize Std. Dev.}\tabularnewline
{\footnotesize$\delta_{1}^{u}$} & {\footnotesize -0.032} & {\footnotesize 0.065} &  & {\footnotesize$\alpha_{1}$} & {\footnotesize 0.641} & {\footnotesize 0.757} &  & {\footnotesize$p_{1}$} & {\footnotesize 0.174} & {\footnotesize 0.254}\tabularnewline
{\footnotesize$\delta_{1}^{v}$} & {\footnotesize -0.051} & {\footnotesize 0.051} &  & {\footnotesize$\alpha_{2}$} & {\footnotesize 0.424} & {\footnotesize 0.547} &  & {\footnotesize$p_{2}$} & {\footnotesize 0.166} & {\footnotesize 0.244}\tabularnewline
{\footnotesize$\delta_{2}^{u}$} & {\footnotesize 0.021} & {\footnotesize 0.174} &  & {\footnotesize$\alpha_{3}$} & {\footnotesize 0.308} & {\footnotesize 0.473} &  & {\footnotesize$p_{3}$} & {\footnotesize 0.095} & {\footnotesize 0.258}\tabularnewline
{\footnotesize$\delta_{2}^{v}$} & {\footnotesize -0.031} & {\footnotesize 0.067} &  & {\footnotesize$\alpha_{4}$} & {\footnotesize 0.140} & {\footnotesize 0.354} &  & {\footnotesize$p_{4}$} & {\footnotesize 0.192} & {\footnotesize 0.180}\tabularnewline
 &  &  &  & {\footnotesize$\alpha_{5}$} & {\footnotesize 0.022} & {\footnotesize 0.308} &  & {\footnotesize$p_{5}$} & {\footnotesize 0.235} & {\footnotesize 0.255}\tabularnewline
\hline 
\end{tabular}\begin{tablenotes}[flushleft]\footnotesize \item 
\textit{Notes}: Average biases, and standard deviations of the group formation parameters obtained from 200 Monte Carlo samples. Each Monte Carlo sample consists of 2,000 individuals and 5 groups. 
\end{tablenotes}
\end{table}

\section{\protect\label{online:gf_empirical}Estimating Parameters in High
School Admissions}

We model the group formation process empirically by analyzing the
matching between students and high schools (grades 9-12) in Chile's
Biobío Region during 2005.\footnote{The Biobío Region includes twelve municipalities: Los Ángeles, Nacimiento,
Yumbel, Laja, Cabrero, Yungay, Mulchén, Negrete, Tucapel, Pemuco,
Quilleco, and Santa Bárbara. The high school admission market in Biobío
Region is relatively independent, with only $2.19\%$ of tenth graders
in Biobío Region schools residing outside the market area and $3.96\%$
of tenth graders living in the Biobío Region attending schools elsewhere
in 2006.} Table \ref{tab:sum_gf} summarizes the student and school characteristics.
Note that school average composite score and average mother's education
was calculated from student characteristics of the tenth graders in
a school in 2003, and thus are pre-determined in the 2005 admissions
that we study.
\begin{table}[t]
\centering
{\footnotesize\caption{\protect\label{tab:sum_gf}Summary Statistics of Student and School
Characteristics}
}{\footnotesize{}%
\begin{tabular}{lccccc}
\hline 
\hline & \multicolumn{2}{c}{{\footnotesize Public schools}} &  & \multicolumn{2}{c}{{\footnotesize Private schools}}\tabularnewline
\hline 
 & {\footnotesize Mean} & {\footnotesize Std. Dev.} &  & {\footnotesize Mean} & {\footnotesize Std. Dev.}\tabularnewline
\hline 
{\footnotesize\textit{Panel A. Student Characteristics}} &  &  &  &  & \tabularnewline
{\footnotesize$\quad$Eighth grade composite score} & {\footnotesize 0.48} & {\footnotesize 0.28} &  & {\footnotesize 0.54} & {\footnotesize 0.30}\tabularnewline
{\footnotesize$\quad$Mother's education (years)} & {\footnotesize 8.75} & {\footnotesize 2.98} &  & {\footnotesize 10.85} & {\footnotesize 3.59}\tabularnewline
{\footnotesize$\quad$Parental income (CLP)} & {\footnotesize 147,388} & {\footnotesize 142,698} &  & {\footnotesize 369,984} & {\footnotesize 479,055}\tabularnewline
{\footnotesize$\quad$Distance to enrolled school (km)} & {\footnotesize 25.65} & {\footnotesize 115.03} &  & {\footnotesize 25.06} & {\footnotesize 105.63}\tabularnewline
{\footnotesize$\quad$Observations} & {\footnotesize 4,700} &  &  & {\footnotesize 2,458} & \tabularnewline
{\footnotesize\textit{Panel B. School Characteristics}} &  &  &  &  & \tabularnewline
{\footnotesize$\quad$Average composite score} & {\footnotesize 0.45} & {\footnotesize 0.11} &  & {\footnotesize 0.55} & {\footnotesize 0.20}\tabularnewline
{\footnotesize$\quad$Average mother's edu. (years)} & {\footnotesize 7.87} & {\footnotesize 1.11} &  & {\footnotesize 10.41} & {\footnotesize 2.59}\tabularnewline
{\footnotesize$\quad$Tuition (CLP)} & {\footnotesize 1,392.04} & {\footnotesize 731.97} &  & {\footnotesize 13,367.66} & {\footnotesize 22,963.47}\tabularnewline
{\footnotesize$\quad$Capacity} & {\footnotesize -} & {\footnotesize -} &  & {\footnotesize 97.93} & {\footnotesize 64.38}\tabularnewline
{\footnotesize$\quad$Biobio student enrollment} & {\footnotesize 191.96} & {\footnotesize 125.32} &  & {\footnotesize 81.90} & {\footnotesize 57.01}\tabularnewline
{\footnotesize$\quad$Observations} & \multicolumn{2}{c}{{\footnotesize 23}} &  & \multicolumn{2}{c}{{\footnotesize 30}}\tabularnewline
\hline 
\end{tabular}}\begin{tablenotes}\footnotesize \item 
\textit{Notes}: Summary statistics of student and school characteristics in Biobío Region. Composite score is measured in percentile rank (from
0 to 1). CLP stands for Chilean peso. Parental income and tuition are measured in 2006 when 1 USD was about 530 CLP.
\end{tablenotes}
\end{table}

Our empirical framework for high school admissions follows closely
\citet{HSS_twosided}, while our estimation approach deviates by employing
frequentist methods rather than the Bayesian approach in \citet{HSS_twosided}.
Specifically, we allow student preferences to be school-type-specific.
For student~$i$, the utility of attending school~$g$ of type~$t$
$\in$ \{public, private\} is 
\begin{equation}
u_{ig}=Z_{ig}^{u\prime}\beta_{t}+\xi_{ig},\label{eq:stu_u}
\end{equation}
where $\xi_{ig}$ is i.i.d.\ extreme value type I; and $Z_{ig}^{u}$
are student-school-specific variables, including a constant term,
and
\begin{itemize}
\item Distance between $i$'s residence and school~$g$;
\item School attributes: tuition (in logarithm), average composite score,
average mother's education;
\item Interactions between school attributes and student characteristics:
tuition interacted with student's parental income, school average
mother's education interacted with student mother's education.
\end{itemize}
Each student has an outside option, $u_{i0}=\xi_{i0}$, with $\xi_{i0}$
being extreme value type I.

As for school preferences, public schools do not have a utility function
because they cannot select students. For private school~$g$, its
qualification function is 
\begin{equation}
v_{ig}=Z_{ig}^{v\prime}\theta+\eta_{ig},\label{eq:sch_u}
\end{equation}
where $\eta_{ig}$ is i.i.d. standard normal; and the vector $Z_{ig}^{v}$
includes a constant term, and
\begin{itemize}
\item Student characteristics: composite score, mother's education;
\item Interactions between student characteristics and school attributes:
student composite score interacted with school average composite score,
and student mother's education interacted with school average mother's
education .
\begin{table}[t]
\centering
\caption{\protect\label{tab:matching_est}Estimation Results: Student and School
Preferences}
\begin{tabular}{lccccc}
\hline 
\hline & \multicolumn{2}{c}{{\footnotesize Public schools}} &  & \multicolumn{2}{c}{{\footnotesize Private Schools}}\tabularnewline
\cline{2-6}
 & {\footnotesize{coef.}} & {\footnotesize{s.e.}} &  & {\footnotesize{coef.}} & {\footnotesize{s.e.}}\tabularnewline
\hline 
{\footnotesize\textit{Panel A. Student Preferences}} &  &  &  &  & \tabularnewline
{\footnotesize$\quad$Constant} & {\footnotesize 3.894{*}{*}{*}} & {\footnotesize 0.319} &  & {\footnotesize 9.521{*}{*}{*}} & {\footnotesize 0.383}\tabularnewline
{\footnotesize$\quad$Distance} & {\footnotesize -0.026{*}{*}{*}} & {\footnotesize 0.000} &  & {\footnotesize -0.019{*}{*}{*}} & {\footnotesize 0.000}\tabularnewline
{\footnotesize$\quad$log(tuition)} & {\footnotesize -2.909{*}{*}{*}} & {\footnotesize 0.128} &  & {\footnotesize -3.521{*}{*}{*}} & {\footnotesize 0.123}\tabularnewline
{\footnotesize$\quad$log(tuition) \texttimes{} log(income)} & {\footnotesize 0.243{*}{*}{*}} & {\footnotesize 0.010} &  & {\footnotesize 0.301{*}{*}{*}} & {\footnotesize 0.010}\tabularnewline
{\footnotesize$\quad$Average composite score} & {\footnotesize 0.796{*}{*}{*}} & {\footnotesize 0.275} &  & {\footnotesize -0.168} & {\footnotesize 0.303}\tabularnewline
{\footnotesize$\quad$Average mother's education} & {\footnotesize -0.032} & {\footnotesize 0.032} &  & {\footnotesize -0.811{*}{*}{*}} & {\footnotesize 0.039}\tabularnewline
{\footnotesize$\quad$Average mother's education \texttimes{} mother's
education} & {\footnotesize 0.010{*}{*}{*}} & {\footnotesize 0.002} &  & {\footnotesize 0.029{*}{*}{*}} & {\footnotesize 0.002}\tabularnewline
{\footnotesize\textit{Panel B. Student Qualification}} &  &  &  &  & \tabularnewline
{\footnotesize$\quad$Constant} &  &  &  & {\footnotesize 0.904{*}{*}{*}} & {\footnotesize 0.256}\tabularnewline
{\footnotesize$\quad$Composite score} &  &  &  & {\footnotesize -2.852{*}{*}{*}} & {\footnotesize 0.299}\tabularnewline
{\footnotesize$\quad$Composite score x average composite score} &  &  &  & {\footnotesize 3.491{*}{*}{*}} & {\footnotesize 0.683}\tabularnewline
{\footnotesize$\quad$Mother's education} &  &  &  & {\footnotesize -0.427{*}{*}{*}} & {\footnotesize 0.038}\tabularnewline
{\footnotesize$\quad$Mother's education \texttimes{} average mother's
education} &  &  &  & {\footnotesize 0.046{*}{*}{*}} & {\footnotesize 0.005}\tabularnewline
\hline 
\end{tabular}\begin{tablenotes}\footnotesize \item 
\textit{Notes}: Estimation results from a two-sided matching model of high school admissions in Chile's Biobío Region during 2005, estimated using MPEC algorithm detailed in Appendix \ref{online:gf_estimate}. Panel A reports student preference parameters for both public and private schools. Panel B reports qualification parameters for private schools only. ***, **, * indicate statistical significance at 1\%, 5\%, and 10\% levels, respectively.
\end{tablenotes}
\end{table}
\end{itemize}
We estimate the model by maximizing a simulated log-likelihood, subject
to equilibrium constraints, as illustrated through Monte Carlo simulations
(Appendix \ref{online:gf_estimate}). The estimation results are summarized
in Table \ref{tab:matching_est}. Panel A shows student preference
estimates, with expected coefficient signs. Both distance and tuition
negatively impact student preferences, though the latter effect diminishes
with parental income through interaction term. Students of more educated
mothers like schools with higher average mother\textquoteright s education,
regardless of school type. Panel B shows admission preferences among
private schools. Schools with higher average mother's education increasingly
prefer students with higher educated mother. The impact of student
score on school preferences depends critically on school quality -
positive preferences for high-achieving students emerge only in schools
whose existing student body maintains an average composite score above
0.8.

\section{\textmd{\protect\label{online:exch_by_type}}Conditional Exchangeability
by School Type}

In our empirical application, we impose exchangeability conditional
on school type (public or private). In this case, while the selection
function is invariant across schools of the same type, it has distinct
functional forms for public and private schools. Moreover, the selection
function is symmetric in the indices of schools within the same type,
rather than across all schools.

Given these properties, in our empirical setting, when constructing
the elementary symmetric functions, instead of aggregating over the
indices of all schools, we aggregate over the indices of public and
private schools separately. Moreover, in our sieve estimation, the
type-specific selection function translates into estimating separate
coefficients for the polynomial basis functions for each school type.
That is, we interact the basis functions with indicator variables
for each school type.

Specifically, let $\mathcal{G}_{1}$ and $\mathcal{G}_{2}$ be a partition
of $\mathcal{G}$, where $\mathcal{G}_{1}$ refers to the set of private
schools and $\mathcal{G}_{2}$ refers to the set of public schools.
Below we list the basis functions up to order 2 used in our empirical
application:
\begin{itemize}
\item $\iota_{ig_{i}}$, $\sum_{h\neq g_{i},h\in\mathcal{G}_{1}}\Delta_{g_{i}}\tau_{ih}$,
$\sum_{h\neq g_{i},h\in\mathcal{G}_{2}}\Delta_{g_{i}}\tau_{ih}$,
$\sum_{h\neq g_{i},h\in\mathcal{G}_{1}}\iota_{ih}$,
\item $\iota_{ig_{i}}^{2}$, $\left(\sum_{h\neq g_{i},h\in\mathcal{G}_{1}}\Delta_{g_{i}}\tau_{ih}\right)^{2}$,
$\left(\sum_{h\neq g_{i},h\in\mathcal{G}_{2}}\Delta_{g_{i}}\tau_{ih}\right)^{2}$,
$\left(\sum_{h\neq g_{i},h\in\mathcal{G}_{1}}\iota_{ih}\right)^{2}$,
\item $\iota_{ig_{i}}\left(\sum_{h\neq g_{i},h\in\mathcal{G}_{1}}\Delta_{g_{i}}\tau_{ih}\right)$,
$\iota_{ig_{i}}\left(\sum_{h\neq g_{i},h\in\mathcal{G}_{2}}\Delta_{g_{i}}\tau_{ih}\right)$,
$\iota_{ig_{i}}\left(\sum_{h\neq g_{i},h\in\mathcal{G}_{1}}\iota_{ih}\right)$,
\\
$\left(\sum_{h\neq g_{i},h\in\mathcal{G}_{1}}\Delta_{g_{i}}\tau_{ih}\right)\left(\sum_{h\neq g_{i},h\in\mathcal{G}_{2}}\Delta_{g_{i}}\tau_{ih}\right)$,
$\left(\sum_{h\neq g_{i},h\in\mathcal{G}_{1}}\Delta_{g_{i}}\tau_{ih}\right)\left(\sum_{h\neq g_{i},h\in\mathcal{G}_{1}}\iota_{ih}\right)$,\\
 $\left(\sum_{h\neq g_{i},h\in\mathcal{G}_{2}}\Delta_{g_{i}}\tau_{ih}\right)\left(\sum_{h\neq g_{i},h\in\mathcal{G}_{1}}\iota_{ih}\right)$,
\item $\sum_{h_{1},h_{2}\neq g_{i},h_{1},h_{2}\in\mathcal{G}_{1}}\Delta_{g_{i}}\tau_{ih_{1}}\Delta_{g_{i}}\tau_{ih_{2}}$,
$\sum_{h_{1},h_{2}\neq g_{i},h_{1},h_{2}\in\mathcal{G}_{2}}\Delta_{g_{i}}\tau_{ih_{1}}\Delta_{g_{i}}\tau_{ih_{2}}$,
$\sum_{h_{1},h_{2}\neq g_{i},h_{1},h_{2}\in\mathcal{G}_{1}}\iota_{ih_{1}}\iota_{ih_{2}}$,\\
 $\sum_{h_{1},h_{2}\neq g_{i},h_{1}\in\mathcal{G}_{1},h_{2}\in\mathcal{G}_{2}}\Delta_{g_{i}}\tau_{ih_{1}}\Delta_{g_{i}}\tau_{ih_{2}}$,
$\sum_{h_{1},h_{2}\neq g_{i},h_{1},h_{2}\in\mathcal{G}_{1}}\Delta_{g_{i}}\tau_{ih_{1}}\iota_{ih_{2}}$,
$\sum_{h_{1},h_{2}\neq g_{i},h_{1}\in\mathcal{G}_{2},h_{2}\in\mathcal{G}_{1}}\Delta_{g_{i}}\tau_{ih_{1}}\iota_{ih_{2}}$.
\end{itemize}
Note that the five terms containing $\iota_{ig_{i}}$ are included
only for students who attend a private school ($g_{i}\in\mathcal{G}_{1}$),
as only private schools have a qualification index, while public schools
cannot select students. For the same reason, the sum of $\iota_{ih}$
is calculated over private schools only. Overall, this implementation
results in 20 basis functions for students attending private schools
and 15 basis functions for students attending public schools, requiring
a total of 35 sieve coefficients to be estimated. The reduction in
nuisance parameters is substantial compared to a specification that
does not use symmetric functions, which would require 82 basis functions
of order one ($G+G_{1}-1=53+30-1$) and 3,403 basis functions of order
two $((G+G_{1}-1)+(G+G_{1}-1)(G+G_{1}-2)/2)$ given our empirical
setting with 23 public schools and 30 private schools.

\section{\protect\label{online:proofs}Proofs}

\paragraph{Notation}

Let $x=(x_{1},\dots,x_{n})'\in\mathbb{R}^{n}$ denote an $n\times1$
vector, and let $A=(a_{ij})\in\mathbb{R}^{n^{2}}$ denote an $n\times n$
matrix. We use $\|\cdot\|$ to denote the Frobenius norm, that is,
$\|x\|\equiv(\sum_{i=1}^{n}x_{i}^{2})^{1/2}$ and $\|A\|\equiv(\text{tr}(AA'))^{1/2}=(\text{\ensuremath{\sum}}_{i=1}^{n}\sum_{j=1}^{n}a_{ij}^{2})^{1/2}$.
For a matrix $A$, we denote the maximum row sum norm as $\interleave A\interleave_{\infty}\equiv\max_{1\leq i\leq n}\sum_{j=1}^{n}|a_{ij}|$
and the maximum column sum norm as $\interleave A\interleave_{1}\equiv\max_{1\leq j\leq n}\sum_{i=1}^{n}|a_{ij}|$.
Note that $\interleave A'\interleave_{\infty}=\interleave A\interleave_{1}$.
Furthermore, for both vectors and matrices, we denote the $l_{\infty}$
and $l_{1}$ norms as $\|\cdot\|_{\infty}$ and $\|\cdot\|_{1}$ respectively.
Specifically, we have $\|x\|_{\infty}\equiv\max_{1\leq i\leq n}|x_{i}|$,
$\|x\|_{1}\equiv\sum_{i=1}^{n}|x_{i}|$, $\|A\|_{\infty}\equiv\max_{1\leq i,j\leq n}|a_{ij}|$,
and $\|A\|_{1}\equiv\sum_{i,j=1}^{n}|a_{ij}|$. It follows that $\|A\|_{\infty}\leq\min\{\interleave A\interleave_{\infty},\interleave A\interleave_{1}\}$,
and $\max\{\interleave A\interleave_{\infty},\interleave A\interleave_{1}\}\leq n\max_{1\leq i,j\leq n}|a_{ij}|=n\|A\|_{\infty}$.
For matrices $A$ and $B$ and vectors $x$ and $y$, we can derive
$\|AB\|_{\infty}\leq\interleave A\interleave_{\infty}\|B\|_{\infty}$,
$\|AB\|_{1}\leq\interleave A\interleave_{1}\|B\|_{1}$, $\|Ax\|_{\infty}\leq\interleave A\interleave_{\infty}\|x\|_{\infty}$,
and $\|Ax\|_{1}\leq\interleave A\interleave_{1}\|x\|_{1}$. Moreover,
we have $|x'Ay|\leq\interleave A\interleave_{\infty}\|x\|_{\infty}\|y\|_{1}\leq n\interleave A\interleave_{\infty}\|x\|_{\infty}\|y\|_{\infty}$.\footnote{These results can be found in Horn and Johnson (1985, Section 5.6)
or proved similarly.} Finally, let $0<C<\infty$ denote a universal constant.

\subsection{Proofs in Sections \ref{sec:Selection_bias} and \ref{sec:Identification}}
\begin{proof}[Proof of Proposition \ref{prop:biascov}]
Following \citet{azevedo_supply_2016}, we show that $p_{n}\overset{p}{\rightarrow}p^{*}$
as $n\rightarrow\infty$, and the limiting cutoffs $p^{*}$ are deterministic.
The selection bias $\mathbb{E}[\epsilon_{i}|\boldsymbol{x},\boldsymbol{z},\boldsymbol{g}(\boldsymbol{z},\boldsymbol{\xi},\boldsymbol{\eta};p)]$
is continuous in $p$ because the cdf of the unobservables is continuous
under Assumption \ref{ass:regular}(ii). Therefore, by the continuous
mapping theorem, we have $\mathbb{E}[\epsilon_{i}|\boldsymbol{x},\boldsymbol{z},\boldsymbol{g}(\boldsymbol{z},\boldsymbol{\xi},\boldsymbol{\eta};p_{n})]\overset{p}{\rightarrow}\mathbb{E}[\epsilon_{i}|\boldsymbol{x},\boldsymbol{z},\boldsymbol{g}(\boldsymbol{z},\boldsymbol{\xi},\boldsymbol{\eta};p^{*})]$.

The selection bias evaluated at $p^{*}$ satisfies
\begin{eqnarray*}
\mathbb{E}[\epsilon_{i}|\boldsymbol{x},\boldsymbol{z},\boldsymbol{g}(\boldsymbol{z},\boldsymbol{\xi},\boldsymbol{\eta};p^{*})] & = & \mathbb{E}[\epsilon_{i}|x_{i},\boldsymbol{x}_{-i},z_{i},\boldsymbol{z}_{-i},g(z_{i},\xi_{i},\eta_{i};p^{*}),\boldsymbol{g}_{-i}(\boldsymbol{z}_{-i},\boldsymbol{\xi}_{-i},\boldsymbol{\eta}_{-i};p^{*})]\\
 & = & \mathbb{E}[\epsilon_{i}|x_{i},z_{i},g(z_{i},\xi_{i},\eta_{i};p^{*})],
\end{eqnarray*}
where $\boldsymbol{x}_{-i}=(x_{j},j\ne i)$ and $\boldsymbol{z}_{-i}$,
$\boldsymbol{g}_{-i}$, $\boldsymbol{\xi}_{-i}$, $\boldsymbol{\eta}_{-i}$
are defined analogously. The last equality follows because given deterministic
cutoffs $p^{*}$, $g_{j}$ only depends on $z_{j}$, $\xi_{j}$, and
$\eta_{j}$ for all $j\neq i$, which are independent of $\epsilon_{i}$
under Assumption \ref{ass:regular}(i). Combining the results proves
the proposition.
\end{proof}

\begin{proof}[Proof of Proposition \ref{prop:exch}]
 Let $f(\epsilon_{i},\xi_{i},\eta_{i})$ denote the joint pdf of
$(\epsilon_{i},\xi_{i},\eta_{i})$, and $f(\xi_{i},\eta_{i})$ the
joint pdf of $(\xi_{i},\eta_{i})$. By equation (\ref{eq:glink})
and the exogeneity of $(x_{i},z_{i})$ (Assumption \ref{ass:regular}(iii)),
individual $i$'s selection bias from joining group $g$ is 
\begin{align}
E[\epsilon_{i}|x_{i},z_{i},g_{i}=g] & =\mathbb{E}[\epsilon_{i}|\eta_{ig}\geq-\iota_{ig},\xi_{ih}-\xi_{ig}<-\Delta_{g}\tau_{ih}\text{ or }\eta_{ih}<-\iota_{ih},\forall h\neq g]\nonumber \\
 & =\frac{\int_{R_{g}(\iota_{ig};\Delta_{g}\tau_{ih},\iota_{ih},\forall h\neq g)}\epsilon_{i}f(\epsilon_{i},\xi_{i},\eta_{i})\textrm{d}\epsilon_{i}\textrm{d}\xi_{i}\textrm{d}\eta_{i}}{\int_{R_{g}(\iota_{ig};\Delta_{g}\tau_{ih},\iota_{ih},\forall h\neq g)}f(\xi_{i},\eta_{i})\textrm{d}\xi_{i}\textrm{d}\eta_{i}}\nonumber \\
 & \equiv\lambda_{g}^{e}(\iota_{ig};\Delta_{g}\tau_{ih},\iota_{ih},\forall h\neq g),\label{eq:lambda_integral}
\end{align}
where the integration region is defined by
\begin{eqnarray*}
 &  & R_{g}(\iota_{ig};\Delta_{g}\tau_{ih},\iota_{ih},\forall h\neq g)\\
 & \equiv & \{(\xi_{i},\eta_{i})\in\mathbb{R}^{2G}:\eta_{ig}\geq-\iota_{ig};\xi_{ih}-\xi_{ig}<-\Delta_{g}\tau_{ih}\text{ or }\eta_{ih}<-\iota_{ih},\forall h\neq g\}.
\end{eqnarray*}

(i) First, we show that $\lambda_{g}^{e}(\cdot)$ is invariant across
$g$. For any two groups $g_{1}$ and $g_{2}$, there exists a permutation
$(k_{1},\dots,k_{G})$ over $(1,\dots,G)$ such that the joint distribution
of $(\epsilon_{i},\eta_{ig_{1}},(\xi_{ih}-\xi_{ig_{1}},\eta_{ih})_{h\neq g_{1}})$
can be derived from the distribution $f(\epsilon_{i},\xi_{i1},...,\xi_{iG},\eta_{i1},...,\eta_{iG})$
in the same way that the joint distribution $(\epsilon_{i},\eta_{g_{2}i},(\xi_{ih}-\xi_{ig_{2}},\eta_{ih})_{h\neq g_{2}})$
is derived from the distribution $f(\epsilon_{i},\xi_{ik_{1}},...,\xi_{ik_{G}},\eta_{ik_{1}},...,\eta_{ik_{G}})$.
By exchangeability (Assmption \ref{ass:exch}), we have $f(\epsilon_{i},\xi_{i1},...,\xi_{iG},\eta_{i1},...,\eta_{iG})=f(\epsilon_{i},\xi_{ik_{1}},...,\xi_{ik_{G}},\eta_{ik_{1}},...,\eta_{ik_{G}})$.
Therefore, the joint distributions of $(\epsilon_{i},\eta_{ig_{1}},(\xi_{ih}-\xi_{ig_{1}},\eta_{ih})_{h\neq g_{1}})\text{ and }(\epsilon_{i},\eta_{ig_{2}},(\xi_{ih}-\xi_{ig_{2}},\eta_{ih})_{h\neq g_{2}})$
are identical. It follows from the fact that $R_{g_{1}}$and $R_{g_{2}}$
are structurally identical that $\lambda_{g_{1}}^{e}(\cdot)=\lambda_{g_{2}}^{e}(\cdot)\equiv\lambda^{e}(\cdot)$.

(ii) Next, we prove that $\lambda^{e}(\cdot)$ is symmetric in the
index pairs $(\Delta_{g}\tau_{ih},\iota_{ih})$ across $h\neq g$.
For any $h_{1},h_{2}\neq g$, exchangeability ensures that the joint
distribution of
\[
(\epsilon_{i},\eta_{ig},\xi_{ih_{1}}-\xi_{ig},\eta_{ih_{1}},\xi_{ih_{2}}-\xi_{ig},\eta_{ih_{2}},(\xi_{ik}-\xi_{ig},\eta_{ik})_{k\neq g,h_{1},h_{2}})
\]
remains unchanged when $h_{1}$ and $h_{2}$ are swapped. This implies
that swapping the index pairs $(\Delta_{g}\tau_{ih_{1}},\iota_{ih_{1}})$
and $(\Delta_{g}\tau_{ih_{2}},\iota_{ih_{2}})$ does not affect the
integrals in (\ref{eq:lambda_integral}). Therefore, $\lambda^{e}(\iota_{ig};\Delta_{g}\tau_{ih},\iota_{ih},\forall h\neq g)$
is symmetric in $(\Delta_{g}\tau_{ih_{1}},\iota_{ih_{1}})$ and $(\Delta_{g}\tau_{ih_{2}},\iota_{ih_{2}})$.
\end{proof}

\subsection{Proofs in Section \ref{sec:Estimation}}

\subsubsection{\protect\label{sec:consist_pf}Consistency of $\hat{\gamma}$}

Theorem \ref{thm:gamma_consist} is proved based on several lemmas
to be presented later in this section. Table \ref{tab:gamma_consist_pfstruct}
states the relationships between Theorem \ref{thm:gamma_consist}
and these lemmas. 

\begin{table}[t]
\centering
\caption{Relationships Between Theorem \ref{thm:gamma_consist} and Its Supporting
Lemmas}
\label{tab:gamma_consist_pfstruct}

\begin{tabular}{llll}
\toprule 
 & Referring to & Referring to & Referring to\tabularnewline
\midrule
Theorem \ref{thm:gamma_consist}  & Lemma \ref{lem:muhat_t_consist} & Lemma \ref{lem:wy_bd} & Lemma \ref{lem:network_bd}\tabularnewline
 &  & Lemma \ref{lem:Q_K} & \tabularnewline
 &  & Lemma \ref{lem:Q_Khat} & Lemma \ref{lem:Q_K}\tabularnewline
 &  & Lemma \ref{lem:betahat_dif} & Lemmas \ref{lem:Q_K}, \ref{lem:Q_Khat}\tabularnewline
 &  & Lemma \ref{lem:wt_betahat} & Lemmas \ref{lem:Q_K}, \ref{lem:network_bd} \tabularnewline
 & Lemma \ref{lem:xt_consist} & Lemma \ref{lem:aqb_consist} & Lemma \ref{lem:network_bd}\tabularnewline
\bottomrule
\end{tabular}
\end{table}

\begin{proof}[Proof of Theorem \ref{thm:gamma_consist}]
Recall that $y_{i}=X'_{i}\gamma_{0}+\epsilon_{i}$. From equation
(\ref{eq:gamma_hat}) we obtain
\begin{align}
\hat{\gamma}-\gamma_{0} & =\left(\frac{1}{n}\sum_{i=1}^{n}(X_{i}-\hat{\mu}^{X}(\hat{\pi}_{i}))X'_{i}\right)^{-1}\frac{1}{n}\sum_{i=1}^{n}(X_{i}-\hat{\mu}^{X}(\hat{\pi}_{i}))\epsilon_{i},\label{eq:gamma_ols}
\end{align}
For $t_{i}=X_{i}$ or $\epsilon_{i}$, by Lemmas \ref{lem:muhat_t_consist}
and \ref{lem:xt_consist}, we can derive
\begin{eqnarray}
\frac{1}{n}\sum_{i=1}^{n}(X_{i}-\hat{\mu}^{X}(\hat{\pi}_{i}))t'_{i} & = & \frac{1}{n}\sum_{i=1}^{n}((X_{i}-\mu_{0}^{X}(\pi_{i}))t'_{i}-\mathbb{E}[(X_{i}-\mu_{0}^{X}(\pi_{i}))t'_{i}])\nonumber \\
 &  & -\frac{1}{n}\sum_{i=1}^{n}(\hat{\mu}^{X}(\hat{\pi}_{i})-\mu_{0}^{X}(\pi_{i}))t'_{i}+\frac{1}{n}\sum_{i=1}^{n}\mathbb{E}[(X_{i}-\mu_{0}^{X}(\pi_{i}))t'_{i}]\nonumber \\
 & = & \frac{1}{n}\sum_{i=1}^{n}\mathbb{E}[(X_{i}-\mu_{0}^{X}(\pi_{i}))t'_{i}]+o_{p}(1).\label{eq:M_lln}
\end{eqnarray}
Note that $n^{-1}\sum_{i=1}^{n}\mathbb{E}[(X_{i}-\mu_{0}^{X}(\pi_{i}))X'_{i}]=n^{-1}\sum_{i=1}^{n}\mathbb{E}[(X_{i}-\mu_{0}^{X}(\pi_{i}))(X_{i}-\mu_{0}^{X}(\pi_{i}))']$
by iterated expectations. By the rank condition in Assumption \ref{ass:rank},
the resulting matrix is positive definite and therefore nonsingular.
By \citet[Corollary 3.1]{wooldridge2010econometric}, it remains to
demonstrate that $n^{-1}\sum_{i=1}^{n}\mathbb{E}[(X_{i}-\mu_{0}^{X}(\pi_{i}))\epsilon_{i}]=o(1)$,
which implies that the regressor $X_{i}$ is asymptotically exogenous.

Note that $\epsilon_{i}=\lambda(\pi_{i})+\nu_{i}$ and $\mathbb{E}[(X_{i}-\mu_{0}^{X}(\pi_{i}))\lambda(\pi_{i})]=0$
by iterated expectations. Hence,
\begin{eqnarray*}
\frac{1}{n}\sum_{i=1}^{n}\mathbb{E}[(X_{i}-\mu_{0}^{X}(\pi_{i}))\epsilon_{i}] & = & \frac{1}{n}\sum_{i=1}^{n}\mathbb{E}[(X_{i}-\mu_{0}^{X}(\pi_{i}))\nu_{i}].
\end{eqnarray*}
Recall that $X_{i}=(w_{i}\boldsymbol{y},w_{i}\boldsymbol{x},x'_{i})'$.
For $t_{i}=x_{i}$ or $w_{i}\boldsymbol{x}$, because $\boldsymbol{w}$
and $\boldsymbol{\nu}=\boldsymbol{\epsilon}-\boldsymbol{\lambda}(\boldsymbol{\pi})$
are independent conditional on $\text{\ensuremath{\boldsymbol{\psi}=(\boldsymbol{x},\boldsymbol{z},\boldsymbol{g})}}$
(Assumption \ref{ass:adj_exog}) and $\mathbb{E}[\nu_{i}|\boldsymbol{\psi}]=\mathbb{E}[\epsilon_{i}|\boldsymbol{\psi}]-\lambda(\pi_{i})=0$,
we have
\[
\mathbb{E}[(t_{i}-\mu_{0}^{t_{i}}(\pi_{i}))\nu_{i}]=\mathbb{E}[(t_{i}-\mu_{0}^{t_{i}}(\pi_{i}))\mathbb{E}[\nu_{i}|\boldsymbol{\psi}]]=0.
\]
For $t_{i}=w_{i}\boldsymbol{y}$, denote $\boldsymbol{s}=(I_{n}-\gamma_{1}\boldsymbol{w})^{-1}$
and $\mu_{0}^{\boldsymbol{w}\boldsymbol{y}}(\boldsymbol{\pi})=(\mu_{0}^{w_{1}\boldsymbol{y}}(\pi_{1}),\dots,\mu_{0}^{w_{n}\boldsymbol{y}}(\pi_{n}))'$,
where $\mu_{0}^{w_{i}\boldsymbol{y}}(\pi_{i})=\mathbb{E}[w_{i}\boldsymbol{y}|\pi_{i}]$.
Recall that $\boldsymbol{y}=\boldsymbol{s}(\boldsymbol{w}\boldsymbol{x}\gamma_{2}+\boldsymbol{x}\gamma_{3}+\boldsymbol{\lambda}+\boldsymbol{\nu})$.
We can derive
\begin{eqnarray*}
n^{-1}\sum_{i=1}^{n}\mathbb{E}[(w_{i}\boldsymbol{y}-\mu_{0}^{w_{i}\boldsymbol{y}}(\pi_{i}))\nu_{i}] & = & n^{-1}\mathbb{E}[(\boldsymbol{w}\boldsymbol{y}-\mu_{0}^{\boldsymbol{w}\boldsymbol{y}}(\boldsymbol{\pi}))'\boldsymbol{\nu}]\\
 & = & n^{-1}\mathbb{E}[(\boldsymbol{w}\boldsymbol{s}(\boldsymbol{w}\boldsymbol{x}\gamma_{2}+\boldsymbol{x}\gamma_{3}+\boldsymbol{\lambda}+\boldsymbol{\nu})-\mu_{0}^{\boldsymbol{w}\boldsymbol{y}}(\boldsymbol{\pi}))'\boldsymbol{\nu}]\\
 & = & n^{-1}\mathbb{E}[\boldsymbol{\nu}'\boldsymbol{s}'\boldsymbol{w}'\boldsymbol{\nu}]\\
 & = & n^{-1}\mathbb{E}[\text{tr}(\boldsymbol{w}\boldsymbol{s})\mathbb{E}[\nu_{i}^{2}|\boldsymbol{\psi}]].
\end{eqnarray*}
The third equality holds because all terms except $\boldsymbol{\nu}'\boldsymbol{s}'\boldsymbol{w}'\boldsymbol{\nu}$
have zero mean due to Assumption \ref{ass:adj_exog} and $\mathbb{E}[\nu_{i}|\boldsymbol{\psi}]=0$,
as previously discussed, and the fourth equality follows from i.i.d.
$\nu_{i}$.\footnote{Since $\nu_{i}$ and $\psi_{i}=(x'_{i},z'_{i},g_{i})$ are i.i.d.
across $i$, we can derive that $\mathbb{E}[\nu_{i}\nu_{j}|\boldsymbol{\psi}]=\mathbb{E}[\nu_{i}|\boldsymbol{\psi}]\mathbb{E}[\nu_{j}|\boldsymbol{\psi}]=0$
for all $i\neq j$.} Note that
\[
|\text{tr}(\boldsymbol{w}\boldsymbol{s})|=|\text{tr}(\boldsymbol{s}\boldsymbol{w})|\leq n\|\boldsymbol{s}\boldsymbol{w}\|_{\infty}\leq n\interleave\boldsymbol{s}\interleave_{\infty}\|\boldsymbol{w}\|_{\infty}\leq\frac{n}{1-|\gamma_{1}|}\|\boldsymbol{\boldsymbol{w}}\|_{\infty},
\]
where the last inequality follows from $\interleave\boldsymbol{s}\interleave_{\infty}\leq\sum_{r=0}^{\infty}|\gamma_{1}|^{r}\interleave\boldsymbol{w}\interleave_{\infty}^{r}=\sum_{r=0}^{\infty}|\gamma_{1}|^{r}=1/(1-|\gamma_{1}|)$
as $\interleave\boldsymbol{w}\interleave_{\infty}=1$. Hence we can
bound
\begin{align*}
n^{-1}\mathbb{E}[\text{tr}(\boldsymbol{s}\boldsymbol{w})\mathbb{E}[\nu_{i}^{2}|\boldsymbol{\psi}]] & \leq C\mathbb{E}[\|\boldsymbol{w}\|_{\infty}\mathbb{E}[\nu_{i}^{2}|\boldsymbol{\psi}]]\\
 & \leq C\mathbb{E}[\|\boldsymbol{w}\|_{\infty}^{2}]^{1/2}\mathbb{E}[\mathbb{E}[\nu_{i}^{2}|\boldsymbol{\psi}]^{2}]^{1/2}\\
 & =O(n^{-1}),
\end{align*}
where we used $\mathbb{E}[\|\boldsymbol{w}\|_{\infty}^{2}]\leq\mathbb{E}[\|\boldsymbol{w}\|_{\infty}^{8}]^{1/4}=O(n^{-2})$
by Assumption \ref{ass:w}(ii) and $\mathbb{E}[\mathbb{E}[\nu_{i}^{2}|\boldsymbol{\psi}]^{2}]\leq\mathbb{E}[\mathbb{E}[\nu_{i}^{4}|\boldsymbol{\psi}]]=\mathbb{E}[\nu_{i}^{4}]<\infty$
by Assumptions \ref{ass:regular}(ii), \ref{ass:compact}(i), \ref{ass:theta}(i),
and \ref{ass:smooth}(i).\footnote{\label{fn:v_bd}Because $\nu_{i}=\epsilon_{i}-\lambda(\pi_{i})$,
by $(a+b)^{4}\leq8(a^{4}+b^{4})$ we can bound $\mathbb{E}[\nu_{i}^{4}]\leq8(\mathbb{E}[\epsilon_{i}^{4}]+\mathbb{E}[\lambda(\pi_{i})^{4}])\leq8\mathbb{E}[\epsilon_{i}^{8}]^{1/2}+C<\infty$,
where $\lambda(\pi_{i})$ is bounded due to the boundedness of $\pi_{i}$
(Assumptions \ref{ass:compact}(i) and \ref{ass:theta}(i)) and the
continuity of $\lambda$ (Assumption \ref{ass:regular}(ii)).}
\end{proof}
\begin{lem}
\label{lem:muhat_t_consist}For $t_{i}=X_{i}$ or $\epsilon_{i}$,
we have\textup{
\begin{equation}
\frac{1}{n}\sum_{i=1}^{n}(\hat{\mu}^{X}(\hat{\pi}_{i})-\mu_{0}^{X}(\pi_{i}))t'_{i}=o_{p}(1).\label{eq:muhat_t}
\end{equation}
}
\end{lem}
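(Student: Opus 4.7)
The plan is to write $\hat\mu^X(\hat\pi_i)-\mu_0^X(\pi_i)$ as the sum of three pieces and bound each piece after averaging against $t_i'$. Let $\beta^X$ denote the best approximating coefficient vector from Assumption \ref{ass:sieve}(ii), and set $\hat\beta^X\equiv(\hat B_K'\hat B_K)^{-1}\hat B_K'\boldsymbol{X}$ so that $\hat\mu^X(\hat\pi_i)=b^K(\hat\pi_i)'\hat\beta^X$. Then
\begin{align*}
\hat\mu^X(\hat\pi_i)-\mu_0^X(\pi_i)
&=\underbrace{\bigl(b^K(\hat\pi_i)-b^K(\pi_i)\bigr)'\hat\beta^X}_{(I_i)}
+\underbrace{b^K(\pi_i)'(\hat\beta^X-\beta^X)}_{(II_i)}
+\underbrace{\bigl(b^K(\pi_i)'\beta^X-\mu_0^X(\pi_i)\bigr)}_{(III_i)}.
\end{align*}
The strategy is to show that each of $n^{-1}\sum_i (I_i)\,t_i'$, $n^{-1}\sum_i (II_i)\,t_i'$, and $n^{-1}\sum_i (III_i)\,t_i'$ is $o_p(1)$.

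For the sieve approximation term $(III_i)$, Assumption \ref{ass:sieve}(ii) gives $\sup_\pi\|b^K(\pi)'\beta^X-\mu_0^X(\pi)\|=O(K^{-a})$, so the average is bounded by $C K^{-a}\cdot n^{-1}\sum_i\|t_i\|$; the latter is $O_p(1)$ for $t_i=\epsilon_i$ by Assumption \ref{ass:smooth}(i), and for $t_i=X_i$ by combining boundedness of $x_i$ (Assumption \ref{ass:compact}(ii)) with the moment bound on $w_i\boldsymbol{y}$ obtained in Lemma \ref{lem:wy_bd}. For the coefficient-error term $(II_i)$, I would use Cauchy--Schwarz, $\bigl|n^{-1}\sum_i b^K(\pi_i)'(\hat\beta^X-\beta^X)\,t_i'\bigr|\le \|\hat\beta^X-\beta^X\|\cdot\bigl(n^{-1}\sum_i\|b^K(\pi_i)\|^2\bigr)^{1/2}\bigl(n^{-1}\sum_i\|t_i\|^2\bigr)^{1/2}$, then invoke Lemma \ref{lem:betahat_dif} to control $\|\hat\beta^X-\beta^X\|$, Lemma \ref{lem:Q_K} to bound the empirical Gram $n^{-1}\sum_i\|b^K(\pi_i)\|^2$ (whose population mean is $K$ under Assumption \ref{ass:sieve}(i)), and the moment bounds on $t_i$ above. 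For the generated-regressor term $(I_i)$, the mean-value theorem gives $\|b^K(\hat\pi_i)-b^K(\pi_i)\|\le\varrho_1(K)\|\hat\pi_i-\pi_i\|$, and $\|\hat\pi_i-\pi_i\|=O_p(\|\hat\theta-\theta_0\|)=O_p(n^{-1/2})$ by Assumption \ref{ass:theta}(ii); combined with Lemma \ref{lem:wt_betahat} to keep $\|\hat\beta^X\|$ stochastically bounded and the moment bound on $t_i$, this piece is of order $\varrho_1(K)/\sqrt{n}$, which vanishes by Assumption \ref{ass:sieve}(iv).

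The main obstacle will be term $(II_i)$, where the combination of random adjacency weights and generated indices complicates the sieve score $\hat B_K'\boldsymbol{X}/n$: when $t_i$ or $X_i$ contains $w_i\boldsymbol{y}$, the product $b^K(\pi_i)'(\hat\beta^X-\beta^X)t_i'$ involves both the sieve approximation and the stochastic network weights. Handling this term rigorously requires Lemmas \ref{lem:Q_K}, \ref{lem:Q_Khat}, and \ref{lem:betahat_dif}, which in turn rely on the network-dependence conditions in Assumption \ref{ass:w}(iii)--(iv) to show that $n^{-1}\hat B_K'\boldsymbol{X}$ concentrates around its population counterpart at the required rate. Once these ingredients are in place, combining the three bounds via the triangle inequality delivers \eqref{eq:muhat_t} for both $t_i=X_i$ and $t_i=\epsilon_i$.
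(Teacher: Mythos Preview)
Your three-term decomposition is the same as the paper's, and your treatment of $(III_i)$ and $(I_i)$ is essentially right (although for $(I_i)$ the bound $\|\hat\beta^X\|=O_p(1)$ comes from the eigenvalue argument in equation~\eqref{eq:betahat_tauhat}, not from Lemma~\ref{lem:wt_betahat}). The genuine problem is your bound for $(II_i)$.

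Your Cauchy--Schwarz step
\[
\Bigl\|n^{-1}\sum_i b^K(\pi_i)'(\hat\beta^X-\beta^X)\,t_i'\Bigr\|
\le \|\hat\beta^X-\beta^X\|\cdot\Bigl(n^{-1}\sum_i\|b^K(\pi_i)\|^2\Bigr)^{1/2}\Bigl(n^{-1}\sum_i\|t_i\|^2\Bigr)^{1/2}
\]
is valid but too crude: as you note yourself, $n^{-1}\sum_i\|b^K(\pi_i)\|^2=\mathrm{tr}(Q_K)$ has population mean $K$, so the middle factor is $O_p(\sqrt{K})$. The available rate on the coefficient error is only $\|\hat\beta^X(\hat{\boldsymbol\pi})-\beta^X\|=o_p(1)$ (combine Lemma~\ref{lem:betahat_dif} with Lemma~\ref{lem:wt_betahat} and \citealp[Lemma~15.3]{liracine2007}); no explicit rate faster than $o_p(1)$ is established for the network components $w_i\boldsymbol x$ and $w_i\boldsymbol y$. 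Hence your product is $o_p(1)\cdot O_p(\sqrt{K})$, which need not vanish as $K\to\infty$.

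The paper avoids this by applying submultiplicativity \emph{before} separating the coefficient error, bounding
\[
\Bigl\|n^{-1}\sum_i(\hat\mu^X(\hat\pi_i)-\mu_0^X(\pi_i))t_i'\Bigr\|^2\le n^{-1}\|\hat B_K\hat\beta^X(\hat{\boldsymbol\pi})-\boldsymbol\mu_0^X\|^2\cdot n^{-1}\|\boldsymbol t\|^2,
\]
and then controlling $n^{-1}\|B_K(\hat\beta^X(\hat{\boldsymbol\pi})-\beta^X)\|^2$ via the \emph{largest eigenvalue} of $Q_K=B_K'B_K/n$ (Lemma~\ref{lem:Q_K}), not its trace: since $\lambda_{\max}(Q_K)=O_p(1)$, one gets $n^{-1}\|B_K(\hat\beta^X(\hat{\boldsymbol\pi})-\beta^X)\|^2\le O_p(1)\|\hat\beta^X(\hat{\boldsymbol\pi})-\beta^X\|^2=o_p(1)$ with no stray $K$. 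Equivalently, in your route you could replace the crude Cauchy--Schwarz by $\|n^{-1}B_K'\boldsymbol t\|^2=n^{-2}\mathrm{tr}(\boldsymbol t'B_KB_K'\boldsymbol t)\le n^{-1}\lambda_{\max}(Q_K)\cdot n^{-1}\|\boldsymbol t\|^2=O_p(1)$; the key point is to exploit the operator-norm bound from Lemma~\ref{lem:Q_K} rather than the trace.
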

\begin{proof}
Recall that $X_{i}=(w_{i}\boldsymbol{y},w_{i}\boldsymbol{x},x'_{i})'$.
By construction, $\hat{\mu}^{X}(\hat{\pi}_{i})=\hat{\beta}^{X}(\hat{\boldsymbol{\pi}})'b^{K}(\hat{\pi}_{i})\in\mathbb{R}^{d_{X}}$
and $\hat{\mu}^{X}(\pi_{i})=\hat{\beta}^{X}(\boldsymbol{\pi})'b^{K}(\pi_{i})\in\mathbb{R}^{d_{X}}$,
where $\hat{\beta}^{X}(\hat{\boldsymbol{\pi}})=(\hat{B}_{K}'\hat{B}_{K})^{-1}\hat{B}_{K}'\boldsymbol{X}$
and $\hat{\beta}^{X}(\boldsymbol{\pi})=(B{}_{K}'B_{K})^{-1}B{}_{K}'\boldsymbol{X}$,
with $\hat{B}_{K}=B_{K}(\hat{\boldsymbol{\pi}})$ and $B_{K}=B_{K}(\boldsymbol{\pi})$.
Denote $\boldsymbol{\mu}_{0}^{X}=(\mu_{0}^{X}(\pi_{1}),\dots,\mu_{0}^{X}(\pi_{n}))'$
and $\boldsymbol{t}=(t_{1},\dots,t_{n})'$. The left-hand side of
equation (\ref{eq:muhat_t}) satisfies
\begin{eqnarray*}
\|n^{-1}\sum_{i=1}^{n}(\hat{\mu}^{X}(\hat{\pi}_{i})-\mu_{0}^{X}(\pi_{i}))t'_{i}\|^{2} & = & n^{-2}\|(\hat{B}{}_{K}\hat{\beta}^{X}(\hat{\boldsymbol{\pi}})-\boldsymbol{\mu}_{0}^{X})'\boldsymbol{t}\|^{2}\\
 & \leq & n^{-2}\|\hat{B}{}_{K}\hat{\beta}^{X}(\hat{\boldsymbol{\pi}})-\boldsymbol{\mu}_{0}^{X}\|^{2}\|\boldsymbol{t}\|^{2},
\end{eqnarray*}
where the inequality follows from the submultiplicativity of the Frobenius
norm.

For $t_{i}=\epsilon_{i}$, because $\epsilon_{i}$ is i.i.d., by the
law of large numbers and Assumption \ref{ass:smooth}(i) $n^{-1}\|\boldsymbol{t}\|^{2}=n^{-1}\sum_{i}\epsilon_{i}^{2}=\mathbb{E}[\epsilon_{i}^{2}]+o_{p}(1)=O_{p}(1)$.
For $t_{i}=X_{i}$, $n^{-1}\|\boldsymbol{t}\|^{2}=n^{-1}\sum_{i}\|X_{i}\|^{2}=n^{-1}\sum_{i}(w_{i}\boldsymbol{y})^{2}+n^{-1}\sum_{i}\|w_{i}\boldsymbol{x}\|^{2}+n^{-1}\sum_{i}\|x_{i}\|^{2}$.
The last two terms are bounded because $\max_{i}\|w_{i}\boldsymbol{x}\|<\infty$
and $\max_{i}\|x_{i}\|<\infty$ (Assumptions \ref{ass:compact}(ii)
and \ref{ass:w}(i)). Moreover, $n^{-1}\sum_{i}(w_{i}\boldsymbol{y})^{2}=n^{-1}(\boldsymbol{w}\boldsymbol{y})'\boldsymbol{w}\boldsymbol{y}=O_{p}(1)$
by Lemma \ref{lem:wy_bd}. We conclude that $n^{-1}\|\boldsymbol{t}\|^{2}=O_{p}(1)$.

By the triangle inequality and $(a+b+c)^{2}\leq3(a^{2}+b^{2}+c^{2})$,
\begin{eqnarray*}
 &  & n^{-1}\|\hat{B}{}_{K}\hat{\beta}^{X}(\hat{\boldsymbol{\pi}})-\boldsymbol{\mu}_{0}^{X}\|^{2}\\
 & \leq & n^{-1}(\|(\hat{B}{}_{K}-B{}_{K})\hat{\beta}^{X}(\hat{\boldsymbol{\pi}})\|+\|B{}_{K}(\hat{\beta}^{X}(\hat{\boldsymbol{\pi}})-\beta^{X})\|+\|B{}_{K}\beta^{X}-\boldsymbol{\mu}_{0}^{X}\|)^{2}\\
 & \leq & 3n^{-1}(\|\hat{B}{}_{K}-B{}_{K}\|^{2}\|\hat{\beta}^{X}(\hat{\boldsymbol{\pi}})\|^{2}+\|B{}_{K}(\hat{\beta}^{X}(\hat{\boldsymbol{\pi}})-\beta^{X})\|^{2}+\|B{}_{K}\beta^{X}-\boldsymbol{\mu}_{0}^{X}\|^{2}).
\end{eqnarray*}
It suffices to show that the last three terms are $o_{p}(1)$.

By equation (\ref{eq:Bhat_rate}), $n^{-1}\|\hat{B}{}_{K}-B{}_{K}\|^{2}=O_{p}(\varrho_{1}(K)^{2}/n)$.
Moreover,
\begin{align}
\|\hat{\beta}^{X}(\hat{\boldsymbol{\pi}})\|^{2} & =\text{tr}(\boldsymbol{X}'\hat{B}{}_{K}(\hat{B}'_{K}\hat{B}_{K})^{-2}\hat{B}'{}_{K}\boldsymbol{X})\nonumber \\
 & \leq O_{p}(n^{-1})\text{tr}(\boldsymbol{X}'\hat{B}{}_{K}(\hat{B}'_{K}\hat{B}_{K})^{-1}\hat{B}'{}_{K}\boldsymbol{X})\nonumber \\
 & \leq O_{p}(n^{-1})\text{tr}(\boldsymbol{X}'\boldsymbol{X})=O_{p}(1).\label{eq:betahat_tauhat}
\end{align}
The first inequality follows from Lemmas \ref{lem:Q_K} and \ref{lem:Q_Khat}.\footnote{By Lemmas \ref{lem:Q_K} and \ref{lem:Q_Khat}, the smallest eigenvalue
of $\hat{Q}_{K}=\hat{B}_{K}'\hat{B}_{K}/n$ converges to one in probability
and hence $(\hat{B}_{K}'\hat{B}_{K}/n)^{-1}\leq CI_{K}$ with probability
approaching one.} The second inequality follows because $\hat{B}{}_{K}(\hat{B}'_{K}\hat{B}_{K})^{-1}\hat{B}'{}_{K}$
is idempotent and thus $\hat{B}{}_{K}(\hat{B}'_{K}\hat{B}_{K})^{-1}\hat{B}'{}_{K}\leq I_{K}$.
The last equality holds because $n^{-1}\text{tr}(\boldsymbol{X}'\boldsymbol{X})=n^{-1}\sum_{i}\|X_{i}\|^{2}=O_{p}(1)$
as previously shown. We conclude that $n^{-1}\|\hat{B}{}_{K}-B{}_{K}\|^{2}\|\hat{\beta}^{X}(\hat{\boldsymbol{\pi}})\|^{2}=o_{p}(1)$.

Observe that $n^{-1}\|B{}_{K}(\hat{\beta}^{X}(\hat{\boldsymbol{\pi}})-\beta^{X})\|^{2}=n^{-1}\|B_{K}(\hat{\beta}^{X}(\hat{\boldsymbol{\pi}})-\beta^{X})\|^{2}=n^{-1}\text{tr}((\hat{\beta}^{X}(\hat{\boldsymbol{\pi}})-\beta^{X})'B'_{K}B{}_{K}(\hat{\beta}^{X}(\hat{\boldsymbol{\pi}})-\beta^{Z}))\leq O_{p}(1)\|\hat{\beta}^{X}(\hat{\boldsymbol{\pi}})-\beta^{X}\|^{2}$,
where the inequality holds because by Lemma \ref{lem:Q_K} $B'_{K}B{}_{K}/n\leq CI_{K}$
with probability approaching one. By the triangle inequality, $\|\hat{\beta}^{X}(\hat{\boldsymbol{\pi}})-\beta^{X}\|\leq\|\hat{\beta}^{X}(\hat{\boldsymbol{\pi}})-\hat{\beta}^{X}(\boldsymbol{\pi})\|+\|\hat{\beta}^{X}(\boldsymbol{\pi})-\beta^{X}\|$.
Lemma \ref{lem:betahat_dif} shows that $\|\hat{\beta}^{X}(\hat{\boldsymbol{\pi}})-\hat{\beta}^{X}(\boldsymbol{\pi})\|=O_{p}(\varrho_{1}(K)/\sqrt{n})$.
Moreover, by Lemma 15.3 in \citet{liracine2007} for $x_{i}$, Lemma
\ref{lem:wt_betahat} for $w_{i}\boldsymbol{x}$ and $w_{i}\boldsymbol{y}$,
we have $\|\hat{\beta}^{X}(\boldsymbol{\pi})-\beta^{X}\|=o_{p}(1)$.
Combining these results yields $n^{-1}\|B{}_{K}(\hat{\beta}^{X}(\hat{\boldsymbol{\pi}})-\beta^{X})\|^{2}=o_{p}(1)$.

Finally, $n^{-1}\|B_{K}\beta^{X}-\boldsymbol{\mu}_{0}^{X}\|^{2}=n^{-1}\sum_{i=1}^{n}\|\beta^{X\prime}b^{K}(\pi_{i})-\mu_{0}^{X}(\pi_{i})\|^{2}\leq\sup_{\pi}\|\beta^{X\prime}b^{K}(\pi)-\mu_{0}^{X}(\pi)\|^{2}=O(K^{-2a})$
by Assumption \ref{ass:sieve}(ii).
\end{proof}
\begin{lem}
\label{lem:xt_consist}For $t_{i}=X_{i}$ or $\epsilon_{i}$, we have
\begin{equation}
\frac{1}{n}\sum_{i=1}^{n}((X_{i}-\mu_{0}^{X}(\pi_{i}))t'_{i}-\mathbb{E}[(X_{i}-\mu_{0}^{X}(\pi_{i}))t'_{i}])=o_{p}(1).\label{eq:ulln}
\end{equation}
\end{lem}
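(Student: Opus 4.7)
The strategy is to prove this as a centered law of large numbers by bounding the second moment of $n^{-1}\sum_{i}((X_{i}-\mu_{0}^{X}(\pi_{i}))t'_{i}-\mathbb{E}[(X_{i}-\mu_{0}^{X}(\pi_{i}))t'_{i}])$; since this quantity has mean zero by construction, Chebyshev's inequality then delivers the $o_{p}(1)$ conclusion.

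First, I would decompose $X_{i}=(w_{i}\boldsymbol{y},w_{i}\boldsymbol{x},x'_{i})'$ into its three blocks and handle each block of the $d_{X}\times\dim(t_{i})$ matrix separately. Pure individual-level entries such as $(x_{i}-\mu_{0}^{x}(\pi_{i}))x'_{i}$ and $(x_{i}-\mu_{0}^{x}(\pi_{i}))\epsilon_{i}$ are sums of i.i.d. random variables bounded or with finite moments (Assumptions \ref{ass:compact}, \ref{ass:smooth}(i), \ref{ass:theta}(i)), and Khinchine's weak law applies directly.

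Next, every remaining entry involves the random matrix $\boldsymbol{w}$ and must be reduced to a bilinear form $n^{-1}\sum_{i,j}a_{i}q_{ij}b_{j}$ for some matrix $\boldsymbol{q}$ listed in Assumption \ref{ass:w}(iv). Blocks containing $w_{i}\boldsymbol{x}$ are already in this shape with $\boldsymbol{q}=\boldsymbol{w}$. For blocks involving $w_{i}\boldsymbol{y}$, I would substitute the reduced-form expression $\boldsymbol{y}=\boldsymbol{s}(\boldsymbol{w}\boldsymbol{x}\gamma_{2}+\boldsymbol{x}\gamma_{3}+\boldsymbol{\lambda}+\boldsymbol{\nu})$, so that $\boldsymbol{w}\boldsymbol{y}$ becomes a linear combination of the terms $\boldsymbol{s}\boldsymbol{w}^{t}\boldsymbol{x}$ for $t=1,2$, $\boldsymbol{s}\boldsymbol{w}\boldsymbol{\lambda}$, and $\boldsymbol{s}\boldsymbol{w}\boldsymbol{\nu}$. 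When a second occurrence of $\boldsymbol{w}$ comes in through the $(w_{i}\boldsymbol{x})'$ or $(w_{i}\boldsymbol{y})'$ factor, we obtain bilinear forms with $\boldsymbol{q}$ equal to $\boldsymbol{w}'\boldsymbol{w}$, $\boldsymbol{w}'\boldsymbol{s}\boldsymbol{w}^{t}$, or $(\boldsymbol{w}')^{r}\boldsymbol{s}'\boldsymbol{s}\boldsymbol{w}^{t}$, all of which are precisely among the forms covered by Assumption \ref{ass:w}(iv).

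Lemma \ref{lem:aqb_consist} is then invoked on each resulting bilinear form: density of the weights (Assumption \ref{ass:w}(i)--(ii)) together with the decay of conditional covariances on disjoint index pairs (Assumption \ref{ass:w}(iv), via Lemma \ref{lem:network_bd}) forces each weighted sum to concentrate at its expectation with $o_{p}(1)$ error. The main obstacle will be bookkeeping: the centering $\mathbb{E}[(X_{i}-\mu_{0}^{X}(\pi_{i}))t'_{i}]$ (obtained by iterated expectations through $\pi_{i}$) must be aligned with the expectations appearing inside the generic bilinear concentration inequality, and the $\boldsymbol{\lambda}$ and $\boldsymbol{\nu}$ contributions inside $\boldsymbol{w}\boldsymbol{y}$ must be handled carefully. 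These contributions are nonlinear functions of $(\pi_{i},\epsilon_{i},\xi_{i},\eta_{i})$, but they have finite $L^{2}$ norms by Assumptions \ref{ass:regular}(ii), \ref{ass:smooth}(i), and the boundedness of $\lambda(\cdot)$ on the compact support of $\pi_{i}$ (as argued in footnote \ref{fn:v_bd}), so the general bilinear bound continues to apply.
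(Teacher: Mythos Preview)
Your proposal is correct and follows essentially the same route as the paper: decompose $X_{i}$ into its three blocks, dispatch the purely i.i.d.\ pieces by the ordinary law of large numbers, substitute the reduced form for $\boldsymbol{w}\boldsymbol{y}$, and feed each resulting bilinear form $n^{-1}\boldsymbol{a}'\boldsymbol{q}\boldsymbol{b}$ (with $\boldsymbol{q}$ among $\boldsymbol{w}$, $\boldsymbol{w}'\boldsymbol{w}$, $\boldsymbol{s}\boldsymbol{w}^{t}$, $\boldsymbol{w}'\boldsymbol{s}\boldsymbol{w}^{t}$, $(\boldsymbol{w}')^{r}\boldsymbol{s}'\boldsymbol{s}\boldsymbol{w}^{t}$) into Lemma~\ref{lem:aqb_consist}. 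The only cosmetic difference is that the paper writes $\boldsymbol{y}=\boldsymbol{s}(\boldsymbol{w}\boldsymbol{x}\gamma_{2}+\boldsymbol{x}\gamma_{3}+\boldsymbol{\epsilon})$ and applies Lemma~\ref{lem:aqb_consist} with $\boldsymbol{b}=\boldsymbol{\epsilon}$ directly, whereas you split $\boldsymbol{\epsilon}=\boldsymbol{\lambda}+\boldsymbol{\nu}$; since $\lambda(\pi_{i})$ is a bounded function of $\psi_{i}$ and $\nu_{i}=\epsilon_{i}-\lambda(\pi_{i})$, the two decompositions are equivalent by linearity, and using $\boldsymbol{\epsilon}$ directly avoids having to argue separately that $\nu_{i}$ fits the hypotheses of Lemma~\ref{lem:aqb_consist}.
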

\begin{proof}
Recall that $X_{i}=(w_{i}\boldsymbol{y},w_{i}\boldsymbol{x},x'_{i})'$.
Because both $X_{i}$ and $t_{i}$ are finite-dimensional, we can
prove equation (\ref{eq:ulln}) component-wise. Depending on the components
of $X_{i}$ and $t_{i}$ under consideration, we will divide the proof
into nine cases, as listed in Table \ref{tab:xt_consist_cases}.
\begin{table}[t]
\centering
\caption{\protect\label{tab:xt_consist_cases}The Cases in Lemma \ref{lem:xt_consist}}
\smallskip{}
\begin{tabular}{lcccc}
\toprule 
 &  & \multicolumn{3}{c}{Component of $t_{i}$}\tabularnewline
\cmidrule{3-5}
 &  & $x_{i},\epsilon_{i}$ & $w_{i}\boldsymbol{x}$ & $w_{i}\boldsymbol{y}$\tabularnewline
\midrule
\multirow{3}{*}{Component of $X_{i}$} & $x_{i}$ & Case (a) & Case (b) & Case (c)\tabularnewline
 & $w_{i}\boldsymbol{x}$ & Case (d) & Case (e) & Case (f)\tabularnewline
 & $w_{i}\boldsymbol{y}$ & Case (g) & Case (h) & Case (i)\tabularnewline
\bottomrule
\end{tabular}
\end{table}

Case (a): Since $x_{i}$ and $\epsilon_{i}$ are i.i.d., equation
(\ref{eq:ulln}) follows by the law of large numbers.

Case (b): We can write
\begin{eqnarray}
 &  & n^{-1}\sum_{i=1}^{n}((x_{i}-\mu_{0}^{x}(\pi_{i}))w_{i}\boldsymbol{x}-\mathbb{E}[(x_{i}-\mu_{0}^{x}(\pi_{i}))w_{i}\boldsymbol{x}])\nonumber \\
 & = & n^{-1}\sum_{i=1}^{n}(x_{i}w_{i}\boldsymbol{x}-\mathbb{E}[x_{i}w_{i}\boldsymbol{x}])-n^{-1}\sum_{i=1}^{n}(\mu_{0}^{x}(\pi_{i})w_{i}\boldsymbol{x}-\mathbb{E}[\mu_{0}^{x}(\pi_{i})w_{i}\boldsymbol{x}])\nonumber \\
 & = & n^{-1}(\boldsymbol{x}'\boldsymbol{w}\boldsymbol{x}-\mathbb{E}[\boldsymbol{x}'\boldsymbol{w}\boldsymbol{x}])-n^{-1}(\boldsymbol{\mu}_{0}^{\boldsymbol{x}}(\boldsymbol{\pi})'\boldsymbol{w}\boldsymbol{x}-\mathbb{E}[\boldsymbol{\mu}_{0}^{\boldsymbol{x}}(\boldsymbol{\pi})'\boldsymbol{w}\boldsymbol{x}]),\label{eq:case_x_wx}
\end{eqnarray}
where $\boldsymbol{\mu}_{0}^{\boldsymbol{x}}(\boldsymbol{\pi})=(\mu_{0}^{x}(\pi_{1})',\dots,\mu_{0}^{x}(\pi_{n})')'$.
Applying Lemma \ref{lem:aqb_consist} to the last two terms with $\boldsymbol{a}=\boldsymbol{x}$
or $\boldsymbol{\mu}_{0}^{\boldsymbol{x}}(\boldsymbol{\pi})$, $\boldsymbol{b}=\boldsymbol{x}$,
and $\boldsymbol{q}=\boldsymbol{w}$, we can show that both terms
are $o_{p}(1)$.

Case (c): 
\begin{eqnarray}
 &  & n^{-1}\sum_{i=1}^{n}((x_{i}-\mu_{0}^{x}(\pi_{i}))w_{i}\boldsymbol{y}-\mathbb{E}[(x_{i}-\mu_{0}^{x}(\pi_{i}))w_{i}\boldsymbol{y}])\nonumber \\
 & = & n^{-1}\sum_{i=1}^{n}(x_{i}w_{i}\boldsymbol{y}-\mathbb{E}[x_{i}w_{i}\boldsymbol{y}])-n^{-1}\sum_{i=1}^{n}(\mu_{0}^{x}(\pi_{i})w_{i}\boldsymbol{y}-\mathbb{E}[\mu_{0}^{x}(\pi_{i})w_{i}\boldsymbol{y}])\nonumber \\
 & = & n^{-1}(\boldsymbol{x}'\boldsymbol{w}\boldsymbol{y}-\mathbb{E}[\boldsymbol{x}'\boldsymbol{w}\boldsymbol{y}])-n^{-1}(\boldsymbol{\mu}_{0}^{\boldsymbol{x}}(\boldsymbol{\pi})'\boldsymbol{w}\boldsymbol{y}-\mathbb{E}[\boldsymbol{\mu}_{0}^{\boldsymbol{x}}(\boldsymbol{\pi})'\boldsymbol{w}\boldsymbol{y}]).\label{eq:case_x_wy}
\end{eqnarray}
Recall that $\boldsymbol{y}=\boldsymbol{s}(\boldsymbol{w}\boldsymbol{x}\gamma_{2}+\boldsymbol{x}\gamma_{3}+\boldsymbol{\epsilon})$,
where $\boldsymbol{s}=(I_{n}-\gamma_{1}\boldsymbol{w})^{-1}$, and
thus $\boldsymbol{w}\boldsymbol{y}=\boldsymbol{s}(\boldsymbol{w}^{2}\boldsymbol{x}\gamma_{2}+\boldsymbol{w}\boldsymbol{x}\gamma_{3}+\boldsymbol{w}\boldsymbol{\epsilon})$.
Therefore, we can express the last two terms as
\begin{eqnarray}
 &  & n^{-1}(\boldsymbol{x}'\boldsymbol{w}\boldsymbol{y}-\mathbb{E}[\boldsymbol{x}'\boldsymbol{w}\boldsymbol{y}])\nonumber \\
 & = & n^{-1}(\boldsymbol{x}'\boldsymbol{s}\boldsymbol{w}^{2}\boldsymbol{x}-\mathbb{E}[\boldsymbol{x}'\boldsymbol{s}\boldsymbol{w}^{2}\boldsymbol{x}])\gamma_{2}+n^{-1}(\boldsymbol{x}'\boldsymbol{s}\boldsymbol{w}\boldsymbol{x}-\mathbb{E}[\boldsymbol{x}'\boldsymbol{s}\boldsymbol{w}\boldsymbol{x}])\gamma_{3}\nonumber \\
 &  & +n^{-1}(\boldsymbol{x}'\boldsymbol{s}\boldsymbol{w}\boldsymbol{\epsilon}-\mathbb{E}[\boldsymbol{x}'\boldsymbol{s}\boldsymbol{w}\boldsymbol{\epsilon}])\label{eq:case_x_wy_1}
\end{eqnarray}
and 
\begin{eqnarray}
 &  & n^{-1}(\boldsymbol{\mu}_{0}^{\boldsymbol{x}}(\boldsymbol{\pi})'\boldsymbol{w}\boldsymbol{y}-\mathbb{E}[\boldsymbol{\mu}_{0}^{\boldsymbol{x}}(\boldsymbol{\pi})'\boldsymbol{w}\boldsymbol{y}])\nonumber \\
 & = & n^{-1}(\boldsymbol{\mu}_{0}^{\boldsymbol{x}}(\boldsymbol{\pi})'\boldsymbol{s}\boldsymbol{w}^{2}\boldsymbol{x}-\mathbb{E}[\boldsymbol{\mu}_{0}^{\boldsymbol{x}}(\boldsymbol{\pi})'\boldsymbol{s}\boldsymbol{w}^{2}\boldsymbol{x}])\gamma_{2}+n^{-1}(\boldsymbol{\mu}_{0}^{\boldsymbol{x}}(\boldsymbol{\pi})'\boldsymbol{s}\boldsymbol{w}\boldsymbol{x}-\mathbb{E}[\boldsymbol{\mu}_{0}^{\boldsymbol{x}}(\boldsymbol{\pi})'\boldsymbol{s}\boldsymbol{w}\boldsymbol{x}])\gamma_{3}\nonumber \\
 &  & +n^{-1}(\boldsymbol{\mu}_{0}^{\boldsymbol{x}}(\boldsymbol{\pi})'\boldsymbol{s}\boldsymbol{w}\boldsymbol{\epsilon}-\mathbb{E}[\boldsymbol{\mu}_{0}^{\boldsymbol{x}}(\boldsymbol{\pi})'\boldsymbol{s}\boldsymbol{w}\boldsymbol{\epsilon}]).\label{eq:case_x_wy_2}
\end{eqnarray}
Applying Lemma \ref{lem:aqb_consist} to each term on the right-hand
sides of equations (\ref{eq:case_x_wy_1}) and (\ref{eq:case_x_wy_2})
with $\boldsymbol{a}=\boldsymbol{x}$ or $\boldsymbol{\mu}_{0}^{\boldsymbol{x}}(\boldsymbol{\pi})$,
$\boldsymbol{b}=\boldsymbol{x}$ or $\boldsymbol{\epsilon}$, and
$\boldsymbol{q}=\boldsymbol{s}\boldsymbol{w}^{2}$ or $\boldsymbol{s}\boldsymbol{w}$,
we can show that both (\ref{eq:case_x_wy_1}) and (\ref{eq:case_x_wy_2})
are equal to $o_{p}(1)$.

Case (d): We take $t_{i}=x_{i}$ as an example; the case for $t_{i}=\epsilon_{i}$
can be proved similarly. Taking transpose yields
\begin{eqnarray*}
 &  & n^{-1}\sum_{i=1}^{n}(x{}_{i}(w_{i}\boldsymbol{x}-\mu_{0}^{w_{i}\boldsymbol{x}}(\pi_{i}))-\mathbb{E}[x_{i}(w_{i}\boldsymbol{x}-\mu_{0}^{w_{i}\boldsymbol{x}}(\pi_{i}))])\\
 & = & n^{-1}\sum_{i=1}^{n}(x_{i}w_{i}\boldsymbol{x}-\mathbb{E}[x_{i}w_{i}\boldsymbol{x}])-n^{-1}\sum_{i=1}^{n}(x{}_{i}\mu_{0}^{w_{i}\boldsymbol{x}}(\pi_{i})-\mathbb{E}[x_{i}\mu_{0}^{w_{i}\boldsymbol{x}}(\pi_{i})]).
\end{eqnarray*}
Since $x_{i}\mu_{0}^{w_{i}\boldsymbol{x}}(\pi_{i})$ is independent
across $i,$ the second term on the right-hand side is $o_{p}(1)$
by the law of large numbers. The first term on the right-hand side
coincides with the first term in the second line of equation (\ref{eq:case_x_wx})
and is thus $o_{p}(1)$.

Case (e): We can write
\begin{eqnarray*}
 &  & n^{-1}\sum_{i=1}^{n}((w_{i}\boldsymbol{x}-\mu_{0}^{w_{i}\boldsymbol{x}}(\pi_{i}))'w_{i}\boldsymbol{x}-\mathbb{E}[(w_{i}\boldsymbol{x}-\mu_{0}^{w_{i}\boldsymbol{x}}(\pi_{i}))'w_{i}\boldsymbol{x}])\\
 & = & n^{-1}\sum_{i=1}^{n}((w_{i}\boldsymbol{x})'w_{i}\boldsymbol{x}-\mathbb{E}[(w_{i}\boldsymbol{x})'w_{i}\boldsymbol{x}])-n^{-1}\sum_{i=1}^{n}(\mu_{0}^{w_{i}\boldsymbol{x}}(\pi_{i})'w_{i}\boldsymbol{x}-\mathbb{E}[\mu_{0}^{w_{i}\boldsymbol{x}}(\pi_{i})'w_{i}\boldsymbol{x}])\\
 & = & n^{-1}(\boldsymbol{x}'\boldsymbol{w}'\boldsymbol{w}\boldsymbol{x}-\mathbb{E}[\boldsymbol{x}'\boldsymbol{w}'\boldsymbol{w}\boldsymbol{x}])-n^{-1}(\boldsymbol{\mu}_{0}^{\boldsymbol{w}\boldsymbol{x}}(\boldsymbol{\pi})'\boldsymbol{w}\boldsymbol{x}-\mathbb{E}[\boldsymbol{\mu}_{0}^{\boldsymbol{w}\boldsymbol{x}}(\boldsymbol{\pi})'\boldsymbol{w}\boldsymbol{x}]),
\end{eqnarray*}
where $\boldsymbol{\mu}_{0}^{\boldsymbol{w}\boldsymbol{x}}(\boldsymbol{\pi})=(\mu_{0}^{w_{1}\boldsymbol{x}}(\pi_{1})',\dots,\mu_{0}^{w_{n}\boldsymbol{x}}(\pi_{n})')'$.
Applying Lemma \ref{lem:aqb_consist} to the last two terms with $\boldsymbol{a}=\boldsymbol{x}$
or $\boldsymbol{\mu}_{0}^{\boldsymbol{w}\boldsymbol{x}}(\boldsymbol{\pi})$,
$\boldsymbol{b}=\boldsymbol{x}$, and $\boldsymbol{q}=\boldsymbol{w}$
or $\boldsymbol{w}'\boldsymbol{w}$, we can show that both terms are
$o_{p}(1)$.

Case (f): 
\begin{eqnarray*}
 &  & n^{-1}\sum_{i=1}^{n}((w_{i}\boldsymbol{x}-\mu_{0}^{w_{i}\boldsymbol{x}}(\pi_{i}))'w_{i}\boldsymbol{y}-\mathbb{E}[(w_{i}\boldsymbol{x}-\mu_{0}^{w_{i}\boldsymbol{x}}(\pi_{i}))'w_{i}\boldsymbol{y}])\\
 & = & n^{-1}\sum_{i=1}^{n}((w_{i}\boldsymbol{x})'w_{i}\boldsymbol{y}-\mathbb{E}[(w_{i}\boldsymbol{x})'w_{i}\boldsymbol{y}])-n^{-1}\sum_{i=1}^{n}(\mu_{0}^{w_{i}\boldsymbol{x}}(\pi_{i})'w_{i}\boldsymbol{y}-\mathbb{E}[\mu_{0}^{w_{i}\boldsymbol{x}}(\pi_{i})'w_{i}\boldsymbol{y}])\\
 & = & n^{-1}(\boldsymbol{x}'\boldsymbol{w}'\boldsymbol{w}\boldsymbol{y}-\mathbb{E}[\boldsymbol{x}'\boldsymbol{w}'\boldsymbol{w}\boldsymbol{y}])-n^{-1}(\boldsymbol{\mu}_{0}^{\boldsymbol{w}\boldsymbol{x}}(\boldsymbol{\pi})'\boldsymbol{w}\boldsymbol{y}-\mathbb{E}[\boldsymbol{\mu}_{0}^{\boldsymbol{w}\boldsymbol{x}}(\boldsymbol{\pi})'\boldsymbol{w}\boldsymbol{y}]).
\end{eqnarray*}
Similarly as in case (c), we can express
\begin{eqnarray}
 &  & n^{-1}(\boldsymbol{x}'\boldsymbol{w}'\boldsymbol{w}\boldsymbol{y}-\mathbb{E}[\boldsymbol{x}'\boldsymbol{w}'\boldsymbol{w}\boldsymbol{y}])\nonumber \\
 & = & n^{-1}(\boldsymbol{x}'\boldsymbol{w}'\boldsymbol{s}\boldsymbol{w}^{2}\boldsymbol{x}-\mathbb{E}[\boldsymbol{x}'\boldsymbol{w}'\boldsymbol{s}\boldsymbol{w}^{2}\boldsymbol{x}])\gamma_{2}+n^{-1}(\boldsymbol{x}'\boldsymbol{w}'\boldsymbol{s}\boldsymbol{w}\boldsymbol{x}-\mathbb{E}[\boldsymbol{x}'\boldsymbol{w}'\boldsymbol{s}\boldsymbol{w}\boldsymbol{x}])\gamma_{3}\nonumber \\
 &  & +n^{-1}(\boldsymbol{x}'\boldsymbol{w}'\boldsymbol{s}\boldsymbol{w}\boldsymbol{\epsilon}-\mathbb{E}[\boldsymbol{x}'\boldsymbol{w}'\boldsymbol{s}\boldsymbol{w}\boldsymbol{\epsilon}])\label{eq:case_wx_wy_1}
\end{eqnarray}
and 
\begin{eqnarray}
 &  & n^{-1}(\boldsymbol{\mu}_{0}^{\boldsymbol{w}\boldsymbol{x}}(\boldsymbol{\pi})'\boldsymbol{w}\boldsymbol{y}-\mathbb{E}[\boldsymbol{\mu}_{0}^{\boldsymbol{w}\boldsymbol{x}}(\boldsymbol{\pi})'\boldsymbol{w}\boldsymbol{y}])\nonumber \\
 & = & n^{-1}(\boldsymbol{\mu}_{0}^{\boldsymbol{w}\boldsymbol{x}}(\boldsymbol{\pi})'\boldsymbol{s}\boldsymbol{w}^{2}\boldsymbol{x}-\mathbb{E}[\boldsymbol{\mu}_{0}^{\boldsymbol{w}\boldsymbol{x}}(\boldsymbol{\pi})'\boldsymbol{s}\boldsymbol{w}^{2}\boldsymbol{x}])\gamma_{2}+n^{-1}(\boldsymbol{\mu}_{0}^{\boldsymbol{w}\boldsymbol{x}}(\boldsymbol{\pi})'\boldsymbol{s}\boldsymbol{w}\boldsymbol{x}-\mathbb{E}[\boldsymbol{\mu}_{0}^{\boldsymbol{w}\boldsymbol{x}}(\boldsymbol{\pi})'\boldsymbol{s}\boldsymbol{w}\boldsymbol{x}])\gamma_{3}\nonumber \\
 &  & +n^{-1}(\boldsymbol{\mu}_{0}^{\boldsymbol{w}\boldsymbol{x}}(\boldsymbol{\pi})'\boldsymbol{s}\boldsymbol{w}\boldsymbol{\epsilon}-\mathbb{E}[\boldsymbol{\mu}_{0}^{\boldsymbol{w}\boldsymbol{x}}(\boldsymbol{\pi})'\boldsymbol{s}\boldsymbol{w}\boldsymbol{\epsilon}]).\label{eq:case_wx_wy_2}
\end{eqnarray}
Applying Lemma \ref{lem:aqb_consist} to each term on the right-hand
sides of equations (\ref{eq:case_wx_wy_1}) and (\ref{eq:case_wx_wy_2})
with $\boldsymbol{a}=\boldsymbol{x}$ or $\boldsymbol{\mu}_{0}^{\boldsymbol{x}}(\boldsymbol{\pi})$,
$\boldsymbol{b}=\boldsymbol{x}$ or $\boldsymbol{\epsilon}$, and
$\boldsymbol{q}=\boldsymbol{w}'\boldsymbol{s}\boldsymbol{w}^{2}$,
$\boldsymbol{w}'\boldsymbol{s}\boldsymbol{w}$, $\boldsymbol{s}\boldsymbol{w}^{2}$,
or $\boldsymbol{s}\boldsymbol{w}$, we can show that both (\ref{eq:case_wx_wy_1})
and (\ref{eq:case_wx_wy_2}) are equal to $o_{p}(1)$.

Case (g): We consider $t_{i}=x_{i}$. Taking transpose yields
\begin{eqnarray*}
 &  & n^{-1}\sum_{i=1}^{n}(x{}_{i}(w_{i}\boldsymbol{y}-\mu_{0}^{w_{i}\boldsymbol{y}}(\pi_{i}))-\mathbb{E}[x_{i}(w_{i}\boldsymbol{y}-\mu_{0}^{w_{i}\boldsymbol{y}}(\pi_{i}))])\\
 & = & n^{-1}\sum_{i=1}^{n}(x_{i}w_{i}\boldsymbol{y}-\mathbb{E}[x_{i}w_{i}\boldsymbol{y}])-n^{-1}\sum_{i=1}^{n}(x{}_{i}\mu_{0}^{w_{i}\boldsymbol{\boldsymbol{y}}}(\pi_{i})-\mathbb{E}[x_{i}\mu_{0}^{w_{i}\boldsymbol{\boldsymbol{y}}}(\pi_{i})])
\end{eqnarray*}
The first term on the right-hand side is $o_{p}(1)$ following the
argument for equation (\ref{eq:case_x_wy_1}). Moreover, because $x_{i}\mu_{0}^{w_{i}\boldsymbol{\boldsymbol{y}}}(\pi_{i})$
is independent across $i$, the second term on the right-hand side
is $o_{p}(1)$ by the law of large numbers.

Case (h): 
\begin{eqnarray}
 &  & n^{-1}\sum_{i=1}^{n}((w_{i}\boldsymbol{y}-\mu_{0}^{w_{i}\boldsymbol{y}}(\pi_{i}))w_{i}\boldsymbol{x}-\mathbb{E}[(w_{i}\boldsymbol{y}-\mu_{0}^{w_{i}\boldsymbol{y}}(\pi_{i}))w_{i}\boldsymbol{x}])\nonumber \\
 & = & n^{-1}\sum_{i=1}^{n}(w_{i}\boldsymbol{y}w_{i}\boldsymbol{x}-\mathbb{E}[w_{i}\boldsymbol{y}w_{i}\boldsymbol{x}])-n^{-1}\sum_{i=1}^{n}(\mu_{0}^{w_{i}\boldsymbol{y}}(\pi_{i})w_{i}\boldsymbol{x}-\mathbb{E}[\mu_{0}^{w_{i}\boldsymbol{y}}(\pi_{i})w_{i}\boldsymbol{x}])\nonumber \\
 & = & n^{-1}(\boldsymbol{y}'\boldsymbol{w}'\boldsymbol{w}\boldsymbol{x}-\mathbb{E}[\boldsymbol{y}'\boldsymbol{w}'\boldsymbol{w}\boldsymbol{x}])-n^{-1}(\boldsymbol{\mu}_{0}^{\boldsymbol{w}\boldsymbol{y}}(\boldsymbol{\pi})'\boldsymbol{w}\boldsymbol{x}-\mathbb{E}[\boldsymbol{\mu}_{0}^{\boldsymbol{w}\boldsymbol{y}}(\boldsymbol{\pi})'\boldsymbol{w}\boldsymbol{x}]).\label{eq:case_wy_wx}
\end{eqnarray}
where $\boldsymbol{\mu}_{0}^{\boldsymbol{w}\boldsymbol{y}}(\boldsymbol{\pi})=(\mu_{0}^{w_{1}\boldsymbol{y}}(\pi_{1}),\dots,\mu_{0}^{w_{n}\boldsymbol{y}}(\pi_{n}))'$.
Following the argument for equation (\ref{eq:case_wx_wy_1}), the
first term in the last line is $o_{p}(1)$. In addition, applying
Lemma \ref{lem:aqb_consist} with $\boldsymbol{a}=\boldsymbol{\mu}_{0}^{\boldsymbol{w}\boldsymbol{y}}(\boldsymbol{\pi})$,
$\boldsymbol{b}=\boldsymbol{x}$, and $\boldsymbol{q}=\boldsymbol{w}$,
the second term in the last line is also $o_{p}(1)$.

Case (i): 
\begin{eqnarray*}
 &  & n^{-1}\sum_{i=1}^{n}((w_{i}\boldsymbol{y}-\mu_{0}^{w_{i}\boldsymbol{y}}(\pi_{i}))w_{i}\boldsymbol{y}-\mathbb{E}[(w_{i}\boldsymbol{y}-\mu_{0}^{w_{i}\boldsymbol{y}}(\pi_{i}))w_{i}\boldsymbol{y}])\\
 & = & n^{-1}\sum_{i=1}^{n}(w_{i}\boldsymbol{y}w_{i}\boldsymbol{y}-\mathbb{E}[w_{i}\boldsymbol{y}w_{i}\boldsymbol{y}])-n^{-1}(\boldsymbol{\mu}_{0}^{\boldsymbol{w}\boldsymbol{y}}(\boldsymbol{\pi})'\boldsymbol{w}\boldsymbol{y}-\mathbb{E}[\boldsymbol{\mu}_{0}^{\boldsymbol{w}\boldsymbol{y}}(\boldsymbol{\pi})'\boldsymbol{w}\boldsymbol{y}])\\
 & = & n^{-1}(\boldsymbol{y}'\boldsymbol{w}'\boldsymbol{w}\boldsymbol{y}-\mathbb{E}[\boldsymbol{y}'\boldsymbol{w}'\boldsymbol{w}\boldsymbol{y}])-n^{-1}(\mu_{0}^{w_{i}\boldsymbol{y}}(\pi_{i})w_{i}\boldsymbol{y}-\mathbb{E}[\mu_{0}^{w_{i}\boldsymbol{y}}(\pi_{i})w_{i}\boldsymbol{y}])
\end{eqnarray*}
The second term on the right-hand side can be analyzed similarly to
equation (\ref{eq:case_wx_wy_2}) and is $o_{p}(1)$. To show the
first term on the right-hand side is $o_{p}(1)$, note that
\begin{eqnarray}
 &  & n^{-1}(\boldsymbol{y}'\boldsymbol{w}'\boldsymbol{w}\boldsymbol{y}-\mathbb{E}[\boldsymbol{y}'\boldsymbol{w}'\boldsymbol{w}\boldsymbol{y}])\nonumber \\
 & = & n^{-1}\gamma'_{2}(\boldsymbol{x}'(\boldsymbol{w}')^{2}\boldsymbol{s}'\boldsymbol{s}\boldsymbol{w}^{2}\boldsymbol{x}-\mathbb{E}[\boldsymbol{x}'(\boldsymbol{w}')^{2}\boldsymbol{s}'\boldsymbol{s}\boldsymbol{w}^{2}\boldsymbol{x}])\gamma_{2}\nonumber \\
 &  & +n^{-1}\gamma'_{2}(\boldsymbol{x}'(\boldsymbol{w}')^{2}\boldsymbol{s}'\boldsymbol{s}\boldsymbol{w}\boldsymbol{x}-\mathbb{E}[\boldsymbol{x}'(\boldsymbol{w}')^{2}\boldsymbol{s}'\boldsymbol{s}\boldsymbol{w}\boldsymbol{x}])\gamma_{3}\nonumber \\
 &  & +n^{-1}\gamma'_{2}(\boldsymbol{x}'(\boldsymbol{w}')^{2}\boldsymbol{s}'\boldsymbol{s}\boldsymbol{w}\boldsymbol{\epsilon}-\mathbb{E}[\boldsymbol{x}'(\boldsymbol{w}')^{2}\boldsymbol{s}'\boldsymbol{s}\boldsymbol{w}\boldsymbol{\epsilon}])\nonumber \\
 &  & +n^{-1}\gamma'_{3}(\boldsymbol{x}'\boldsymbol{w}'\boldsymbol{s}'\boldsymbol{s}\boldsymbol{w}^{2}\boldsymbol{x}-\mathbb{E}[\boldsymbol{x}'\boldsymbol{w}'\boldsymbol{s}'\boldsymbol{s}\boldsymbol{w}^{2}\boldsymbol{x}])\gamma_{2}+n^{-1}\gamma'_{3}(\boldsymbol{x}'\boldsymbol{w}'\boldsymbol{s}'\boldsymbol{s}\boldsymbol{w}\boldsymbol{x}-\mathbb{E}[\boldsymbol{x}'\boldsymbol{w}'\boldsymbol{s}'\boldsymbol{s}\boldsymbol{w}\boldsymbol{x}])\gamma_{3}\nonumber \\
 &  & +n^{-1}\gamma'_{3}(\boldsymbol{x}'\boldsymbol{w}'\boldsymbol{s}'\boldsymbol{s}\boldsymbol{w}\boldsymbol{\epsilon}-\mathbb{E}[\boldsymbol{x}'\boldsymbol{w}'\boldsymbol{s}'\boldsymbol{s}\boldsymbol{w}\boldsymbol{\epsilon}])+n^{-1}(\boldsymbol{\epsilon}'\boldsymbol{w}'\boldsymbol{s}'\boldsymbol{s}\boldsymbol{w}^{2}\boldsymbol{x}-\mathbb{E}[\boldsymbol{\epsilon}'\boldsymbol{w}'\boldsymbol{s}'\boldsymbol{s}\boldsymbol{w}^{2}\boldsymbol{x}])\gamma_{2}\nonumber \\
 &  & +n^{-1}(\boldsymbol{\epsilon}'\boldsymbol{w}'\boldsymbol{s}'\boldsymbol{s}\boldsymbol{w}\boldsymbol{x}-\mathbb{E}[\boldsymbol{\epsilon}'\boldsymbol{w}'\boldsymbol{s}'\boldsymbol{s}\boldsymbol{w}\boldsymbol{x}])\gamma_{2}+n^{-1}(\boldsymbol{\epsilon}'\boldsymbol{w}'\boldsymbol{s}'\boldsymbol{s}\boldsymbol{w}\boldsymbol{\epsilon}-\mathbb{E}[\boldsymbol{\epsilon}'\boldsymbol{w}'\boldsymbol{s}'\boldsymbol{s}\boldsymbol{w}\boldsymbol{\epsilon}]).\label{eq:case_wy_wy}
\end{eqnarray}
The result follows by applying Lemma \ref{lem:aqb_consist} to each
term on the right-hand side with $\boldsymbol{a}=\boldsymbol{x}$
or $\boldsymbol{\epsilon}$, $\boldsymbol{b}=\boldsymbol{x}$ or $\boldsymbol{\epsilon}$,
and $\boldsymbol{q}=(\boldsymbol{w}')^{2}\boldsymbol{s}'\boldsymbol{s}\boldsymbol{w}^{2}$,
$(\boldsymbol{w}')^{2}\boldsymbol{s}'\boldsymbol{s}\boldsymbol{w}$,
$\boldsymbol{w}'\boldsymbol{s}'\boldsymbol{s}\boldsymbol{w}^{2}$,
or $\boldsymbol{w}'\boldsymbol{s}'\boldsymbol{s}\boldsymbol{w}$.
\end{proof}

\begin{lem}[Boundness of $\boldsymbol{w}\boldsymbol{y}$]
\label{lem:wy_bd}$n^{-1}(\boldsymbol{w}\boldsymbol{y})'\boldsymbol{w}\boldsymbol{y}=O_{p}(1).$
\end{lem}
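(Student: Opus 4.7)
The plan is to start from the reduced form
\[
\boldsymbol{w}\boldsymbol{y} \;=\; \boldsymbol{s}\boldsymbol{w}^{2}\boldsymbol{x}\gamma_{2} \;+\; \boldsymbol{s}\boldsymbol{w}\boldsymbol{x}\gamma_{3} \;+\; \boldsymbol{s}\boldsymbol{w}\boldsymbol{\epsilon},
\]
obtained by pre-multiplying the structural equation $\boldsymbol{y}=\boldsymbol{w}\boldsymbol{y}\gamma_{1}+\boldsymbol{w}\boldsymbol{x}\gamma_{2}+\boldsymbol{x}\gamma_{3}+\boldsymbol{\epsilon}$ by $\boldsymbol{w}$ and using $\boldsymbol{s}=(I_{n}-\gamma_{1}\boldsymbol{w})^{-1}$ (recall $\boldsymbol{w}\boldsymbol{s}=\boldsymbol{s}\boldsymbol{w}$). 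Since $(a+b+c)^{2}\le 3(a^{2}+b^{2}+c^{2})$, it suffices to show that each of $n^{-1}\|\boldsymbol{s}\boldsymbol{w}^{2}\boldsymbol{x}\gamma_{2}\|^{2}$, $n^{-1}\|\boldsymbol{s}\boldsymbol{w}\boldsymbol{x}\gamma_{3}\|^{2}$, and $n^{-1}\|\boldsymbol{s}\boldsymbol{w}\boldsymbol{\epsilon}\|^{2}$ is $O_{p}(1)$.

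For the two bounded-covariate terms I would bound the $i$-th entry of $\boldsymbol{s}\boldsymbol{w}^{t}\boldsymbol{x}\gamma_{j}$ uniformly by $\interleave\boldsymbol{s}\boldsymbol{w}^{t}\interleave_{\infty}\|\boldsymbol{x}\gamma_{j}\|_{\infty}$. Submultiplicativity of the row-sum norm, together with $\interleave\boldsymbol{w}\interleave_{\infty}=1$ (Assumption \ref{ass:w}(i)) and the Neumann expansion $\boldsymbol{s}=\sum_{k\ge 0}\gamma_{1}^{k}\boldsymbol{w}^{k}$ (which converges in that norm because $|\gamma_{1}|<1$), gives $\interleave\boldsymbol{s}\boldsymbol{w}^{t}\interleave_{\infty}\le 1/(1-|\gamma_{1}|)$. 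Combined with the boundedness of $x_{i}$ from Assumption \ref{ass:compact}(ii), this yields a uniform-in-$i$ scalar bound, so the squared norm is $O(n)$ and $n^{-1}$ times it is $O(1)$.

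The main obstacle is the stochastic term $n^{-1}\|\boldsymbol{s}\boldsymbol{w}\boldsymbol{\epsilon}\|^{2}$, because $\epsilon_{j}$ is unbounded and the column-sum norm $\interleave\boldsymbol{s}\boldsymbol{w}\interleave_{1}$ is not directly controlled by the row-normalization assumption. My strategy is to apply Cauchy--Schwarz within each row to dominate $((\boldsymbol{s}\boldsymbol{w}\boldsymbol{\epsilon})_{i})^{2}$ by $\interleave\boldsymbol{s}\boldsymbol{w}\interleave_{\infty}\sum_{j}|(\boldsymbol{s}\boldsymbol{w})_{ij}|\epsilon_{j}^{2}$, sum over $i$, and then swap the order of summation so that the weight on $\epsilon_{j}^{2}$ becomes $\sum_{i}|(\boldsymbol{s}\boldsymbol{w})_{ij}|$. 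The key move is then to replace this intractable column sum by the elementwise bound $n\|\boldsymbol{s}\boldsymbol{w}\|_{\infty}\le n\interleave\boldsymbol{s}\interleave_{\infty}\|\boldsymbol{w}\|_{\infty}$, thereby converting the column-norm into the density quantity $\|\boldsymbol{w}\|_{\infty}$. This produces the bound
\[
n^{-1}\|\boldsymbol{s}\boldsymbol{w}\boldsymbol{\epsilon}\|^{2} \;\le\; C\,\bigl(n\|\boldsymbol{w}\|_{\infty}\bigr)\cdot n^{-1}\textstyle\sum_{j}\epsilon_{j}^{2}.
\]
By Assumption \ref{ass:w}(ii), $\mathbb{E}[\|\boldsymbol{w}\|_{\infty}^{8}]=O(n^{-8})$ gives $n\|\boldsymbol{w}\|_{\infty}=O_{p}(1)$, and by Assumptions \ref{ass:regular}(i) and \ref{ass:smooth}(i) together with the law of large numbers, $n^{-1}\sum_{j}\epsilon_{j}^{2}=O_{p}(1)$. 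Their product is $O_{p}(1)$, as required.
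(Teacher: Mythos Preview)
Your proof is correct and follows essentially the same approach as the paper: both start from the reduced form $\boldsymbol{w}\boldsymbol{y}=\boldsymbol{s}(\boldsymbol{w}^{2}\boldsymbol{x}\gamma_{2}+\boldsymbol{w}\boldsymbol{x}\gamma_{3}+\boldsymbol{w}\boldsymbol{\epsilon})$, handle the bounded-covariate pieces via $\interleave\boldsymbol{s}\boldsymbol{w}^{t}\interleave_{\infty}\le 1/(1-|\gamma_{1}|)$, and control the stochastic term using the density condition $n\|\boldsymbol{w}\|_{\infty}=O_{p}(1)$ from Assumption~\ref{ass:w}(ii). The only cosmetic difference is that the paper bounds $n^{-1}\boldsymbol{\epsilon}'\boldsymbol{w}'\boldsymbol{s}'\boldsymbol{s}\boldsymbol{w}\boldsymbol{\epsilon}$ via the spectral-radius inequality $\lambda_{\max}(\boldsymbol{w}'\boldsymbol{s}'\boldsymbol{s}\boldsymbol{w})\le\interleave\boldsymbol{w}'\boldsymbol{s}'\boldsymbol{s}\boldsymbol{w}\interleave_{\infty}$ (then invoking $\interleave\boldsymbol{w}\interleave_{1},\interleave\boldsymbol{s}\interleave_{1}\le Cn\|\boldsymbol{w}\|_{\infty}$), whereas you arrive at the same $C(n\|\boldsymbol{w}\|_{\infty})\cdot n^{-1}\sum_{j}\epsilon_{j}^{2}$ bound through a row-wise Cauchy--Schwarz and an elementwise column-sum estimate; the underlying mechanism is identical.
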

\begin{proof}
Let $T=\boldsymbol{w}^{2}\boldsymbol{x}\gamma_{2}+\boldsymbol{w}\boldsymbol{x}\gamma_{3}$
be an $n\times1$ vector, and recall that $\boldsymbol{w}\boldsymbol{y}=\boldsymbol{s}(\boldsymbol{w}^{2}\boldsymbol{x}\gamma_{2}+\boldsymbol{w}\boldsymbol{x}\gamma_{3}+\boldsymbol{w}\boldsymbol{\epsilon})=\boldsymbol{s}(T+\boldsymbol{w}\boldsymbol{\epsilon})$.
We can write
\begin{eqnarray}
n^{-1}(\boldsymbol{w}\boldsymbol{y})'\boldsymbol{w}\boldsymbol{y} & = & n^{-1}(T+\boldsymbol{w}\boldsymbol{\epsilon})'\boldsymbol{s}'\boldsymbol{s}(T+\boldsymbol{w}\boldsymbol{\epsilon})\nonumber \\
 & = & n^{-1}T^{\prime}\boldsymbol{s}'\boldsymbol{s}T+2n^{-1}T^{\prime}\boldsymbol{s}'\boldsymbol{s}\boldsymbol{w}\boldsymbol{\epsilon}+n^{-1}\boldsymbol{\epsilon}'\boldsymbol{w}'\boldsymbol{s}'\boldsymbol{s}\boldsymbol{w}\boldsymbol{\epsilon}.\label{eq:wy_bd}
\end{eqnarray}

By Lemma \ref{lem:network_bd}, both $\boldsymbol{w}$ and $\boldsymbol{s}$
are uniformly bounded in both row and column sums. By the boundedness
of $x_{i}$ and $\gamma$, we can bound $\|T\|_{\infty}\leq\interleave\boldsymbol{w}\interleave_{\infty}^{2}\|\boldsymbol{x}\gamma_{2}\|_{\infty}+\interleave\boldsymbol{w}\interleave_{\infty}\|\boldsymbol{x}\gamma_{3}\|_{\infty}<\infty$.
Therefore, the first term in the last line of (\ref{eq:wy_bd}) is
$n^{-1}T^{\prime}\boldsymbol{s}'\boldsymbol{s}T\leq\interleave\boldsymbol{s}'\boldsymbol{s}\interleave_{\infty}\|T\|_{\infty}^{2}\leq\interleave\boldsymbol{s}\interleave_{1}\interleave\boldsymbol{s}\interleave_{\infty}\|T\|_{\infty}^{2}=O_{p}(1)$.
The second to last term in equation (\ref{eq:wy_bd}) satisfies $n^{-1}|T^{\prime}\boldsymbol{s}'\boldsymbol{s}\boldsymbol{w}\boldsymbol{\epsilon}|\leq\interleave\boldsymbol{s}'\boldsymbol{s}\boldsymbol{w}\interleave_{\infty}\|T\|_{\infty}\|\boldsymbol{\epsilon}/n\|_{1}=O_{p}(1)$,
because $\|\boldsymbol{\epsilon}/n\|_{1}=n^{-1}\sum_{i}|\epsilon_{i}|=\mathbb{E}[|\epsilon_{i}|]+o_{p}(1)=O_{p}(1)$
by the law of large numbers and Assumption \ref{ass:smooth}(i) and
$\interleave\boldsymbol{s}'\boldsymbol{s}\boldsymbol{w}\interleave_{\infty}\leq\interleave\boldsymbol{s}\interleave_{1}\interleave\boldsymbol{s}\interleave_{\infty}\interleave\boldsymbol{w}\interleave_{\infty}=O_{p}(1)$.
Finally, the last term in (\ref{eq:wy_bd}) satisfies $n^{-1}\boldsymbol{\epsilon}'\boldsymbol{w}'\boldsymbol{s}'\boldsymbol{s}\boldsymbol{w}\boldsymbol{\epsilon}\leq n^{-1}\lambda_{\max}(\boldsymbol{w}'\boldsymbol{s}'\boldsymbol{s}\boldsymbol{w})\boldsymbol{\epsilon}'\boldsymbol{\epsilon}=O_{p}(1)$,
because $n^{-1}\boldsymbol{\epsilon}'\boldsymbol{\epsilon}=n^{-1}\sum_{i=1}^{n}\epsilon_{i}^{2}=\mathbb{E}[\epsilon_{i}^{2}]+o_{p}(1)=O_{p}(1)$
and $\lambda_{\max}(\boldsymbol{w}'\boldsymbol{s}'\boldsymbol{s}\boldsymbol{w})\leq\interleave\boldsymbol{w}'\boldsymbol{s}'\boldsymbol{s}\boldsymbol{w}\interleave_{\infty}\leq\interleave\boldsymbol{w}\interleave_{1}\interleave\boldsymbol{s}\interleave_{1}\interleave\boldsymbol{s}\interleave_{\infty}\interleave\boldsymbol{w}\interleave_{\infty}=O_{p}(1)$.
Combining the three terms, we complete the proof.
\end{proof}

\begin{lem}
\label{lem:Q_K}\textup{Let $Q_{K}=B_{K}'B_{K}/n$. Then $\|Q_{K}-I_{K}\|=O_{p}(\varrho_{0}(K)\sqrt{K/n})$.}
\end{lem}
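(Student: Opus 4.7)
The plan is to treat $Q_K - I_K$ as a sample average of i.i.d.\ mean-zero random matrices and bound its expected squared Frobenius norm using the sup-norm control on $b^K$ from Assumption~\ref{ass:sieve}(iii).

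First I would observe that under Assumptions~\ref{ass:regular}(i) and~\ref{ass:limit}, because the limiting cutoffs $p^*$ are deterministic, each equilibrium group $g_i = g(z_i,\xi_i,\eta_i;p^*)$ is determined by $i$'s own primitives alone. Hence the indices $\pi_i = \pi(z_i,g_i,\theta_0)$ are i.i.d.\ across $i$. Defining $A_i \equiv b^K(\pi_i) b^K(\pi_i)' - I_K$, Assumption~\ref{ass:sieve}(i) yields $\mathbb{E}[A_i]=0$, and I can write
\[
Q_K - I_K = \frac{1}{n}\sum_{i=1}^n A_i.
\]

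Next I would compute the expected squared Frobenius norm. Since $A_i$ is symmetric, $\|A_i\|^2 = \mathrm{tr}(A_i^2)$; by independence and the mean-zero property, the cross terms drop out, giving
\[
\mathbb{E}\|Q_K - I_K\|^2 = \frac{1}{n^2}\sum_{i,j=1}^n \mathbb{E}[\mathrm{tr}(A_iA_j)] = \frac{1}{n}\,\mathbb{E}\|A_1\|^2.
\]
A direct expansion gives $\|A_1\|^2 = \|b^K(\pi_1)\|^4 - 2\|b^K(\pi_1)\|^2 + K$. Using $\|b^K(\pi_1)\|^2 \leq \varrho_0(K)^2$ from Assumption~\ref{ass:sieve}(iii) together with $\mathbb{E}\|b^K(\pi_1)\|^2 = \mathrm{tr}(I_K) = K$ from part (i), I would bound
\[
\mathbb{E}\|A_1\|^2 \leq \mathbb{E}[\|b^K(\pi_1)\|^4] \leq \varrho_0(K)^2\,\mathbb{E}\|b^K(\pi_1)\|^2 = \varrho_0(K)^2 K.
\]
Thus $\mathbb{E}\|Q_K - I_K\|^2 \leq \varrho_0(K)^2 K/n$, and Markov's inequality delivers $\|Q_K - I_K\| = O_p(\varrho_0(K)\sqrt{K/n})$.

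The only step requiring real care is confirming the i.i.d.\ structure of $\pi_i$, which is immediate once Assumption~\ref{ass:limit} fixes the cutoffs at their nonstochastic limit so that $g_i$ no longer couples across individuals through the equilibrium. Beyond that, the argument is the standard second-moment calculation for nonparametric series Gram matrices, so no substantive obstacle is expected.
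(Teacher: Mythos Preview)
Your proof is correct. The paper's own proof simply cites Lemma~15.2 in \citet{liracine2007}, and your argument is precisely the standard second-moment calculation underlying that lemma, so the approaches coincide; your explicit justification that the $\pi_i$ are i.i.d.\ under Assumption~\ref{ass:limit} is a useful clarification the paper leaves implicit.
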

\begin{proof}
The result follows from Lemma 15.2 in \citet[p.481]{liracine2007}.
\end{proof}

\begin{lem}
\label{lem:Q_Khat} Let $\hat{Q}_{K}=\hat{B}_{K}'\hat{B}_{K}/n$.
Then $\|\hat{Q}_{K}-Q_{K}\|=O_{p}(\varrho_{1}(K)/\sqrt{n})$.
\end{lem}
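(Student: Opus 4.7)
Write $\Delta_K \equiv \hat{B}_K - B_K$ and expand
\[
\hat{B}_K'\hat{B}_K - B_K'B_K = B_K'\Delta_K + \Delta_K'B_K + \Delta_K'\Delta_K,
\]
so by the triangle inequality $\|\hat{Q}_K - Q_K\| \leq 2n^{-1}\|B_K'\Delta_K\| + n^{-1}\|\Delta_K\|^2$. The plan is to show that (a) $\|\Delta_K\| = O_p(\varrho_1(K))$, and (b) $\|B_K'\Delta_K\| \leq \|B_K\|_{\mathrm{op}}\|\Delta_K\|$ with $\|B_K\|_{\mathrm{op}} = O_p(\sqrt{n})$. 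Combining these will give the desired rate, with the $\Delta_K'\Delta_K$ term being of lower order.

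For (a), apply the mean value theorem componentwise: for each $i$, $b^K(\hat{\pi}_i) - b^K(\pi_i) = [\partial b^K(\tilde{\pi}_i)/\partial\pi'](\hat{\pi}_i - \pi_i)$ for some $\tilde{\pi}_i$ between $\pi_i$ and $\hat{\pi}_i$. Assumption \ref{ass:sieve}(iv) yields $\|b^K(\hat{\pi}_i) - b^K(\pi_i)\| \leq \varrho_1(K)\|\hat{\pi}_i - \pi_i\|$, so
\[
\|\Delta_K\|^2 = \sum_{i=1}^n \|b^K(\hat{\pi}_i) - b^K(\pi_i)\|^2 \leq \varrho_1(K)^2 \sum_{i=1}^n \|\hat{\pi}_i - \pi_i\|^2.
\]
Since $\pi_i = \pi(z_i, g_i, \theta_0)$ is continuously differentiable in $\theta$ on the compact set $\Theta$ (Assumption \ref{ass:theta}(i)) and $z_i$ is bounded (Assumption \ref{ass:compact}(i)), a uniform Lipschitz argument combined with Assumption \ref{ass:theta}(ii) delivers $\max_i \|\hat{\pi}_i - \pi_i\| \leq C\|\hat{\theta}-\theta_0\| = O_p(n^{-1/2})$, so $n^{-1}\sum_i \|\hat{\pi}_i - \pi_i\|^2 = O_p(n^{-1})$ and hence $\|\Delta_K\| = O_p(\varrho_1(K))$. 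This also gives equation (\ref{eq:Bhat_rate}) used elsewhere.

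For (b), I will use the sharper operator-norm bound $\|B_K'\Delta_K\| \leq \|B_K\|_{\mathrm{op}}\|\Delta_K\|$ (valid since $\|\cdot\|$ is Frobenius). Note $\|B_K\|_{\mathrm{op}}^2 = n\,\lambda_{\max}(Q_K)$. By Lemma \ref{lem:Q_K}, $\|Q_K - I_K\| = O_p(\varrho_0(K)\sqrt{K/n}) = o_p(1)$ under Assumption \ref{ass:sieve}(iii), and since the operator norm is dominated by the Frobenius norm, $\lambda_{\max}(Q_K) \leq 1 + \|Q_K - I_K\|_{\mathrm{op}} = 1 + o_p(1)$. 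Therefore $\|B_K\|_{\mathrm{op}} \leq \sqrt{n}(1 + o_p(1))$, so
\[
n^{-1}\|B_K'\Delta_K\| \leq n^{-1}\|B_K\|_{\mathrm{op}}\|\Delta_K\| = O_p(\varrho_1(K)/\sqrt{n}).
\]
Finally $n^{-1}\|\Delta_K\|^2 = O_p(\varrho_1(K)^2/n) = o_p(\varrho_1(K)/\sqrt{n})$ because $\varrho_1(K)/\sqrt{n} \to 0$ by Assumption \ref{ass:sieve}(iv). Combining these bounds yields $\|\hat{Q}_K - Q_K\| = O_p(\varrho_1(K)/\sqrt{n})$.

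The main obstacle is step (b): a naive Frobenius bound $\|B_K'\Delta_K\| \leq \|B_K\|\|\Delta_K\|$ would produce $\|B_K\| = O_p(\sqrt{nK})$ and yield only $O_p(\varrho_1(K)\sqrt{K/n})$, which is too loose. Exploiting $\lambda_{\max}(Q_K) \to 1$ via Lemma \ref{lem:Q_K} is what saves the factor of $\sqrt{K}$.
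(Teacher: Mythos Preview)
Your proof is correct and follows essentially the same approach as the paper: the same algebraic decomposition, the same mean-value bound giving $\|\hat{B}_K-B_K\|=O_p(\varrho_1(K))$, and the same reliance on Lemma~\ref{lem:Q_K} to control the cross term. The only cosmetic difference is that you bound the cross term via $\|B_K'\Delta_K\|\le\|B_K\|_{\mathrm{op}}\|\Delta_K\|$ with $\|B_K\|_{\mathrm{op}}^2=n\lambda_{\max}(Q_K)$, whereas the paper inserts $I_K\le C Q_K^{-1}$ and then uses idempotence of $B_K(B_K'B_K)^{-1}B_K'$---these encode the same eigenvalue bound and your version is arguably cleaner.
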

\begin{proof}
Because $\hat{B}{}_{K}'\hat{B}{}_{K}-B_{K}'B{}_{K}=(\hat{B}_{K}-B_{K})^{2}+B_{K}'(\hat{B}_{K}-B_{K})+(\hat{B}_{K}-B_{K})'B_{K}$,
we have $\|\hat{Q}_{K}-Q_{K}\|=\|\hat{B}{}_{K}'\hat{B}{}_{K}-B_{K}'B{}_{K}\|/n\leq\|\hat{B}{}_{K}-B_{K}\|^{2}/n+2\|(\hat{B}_{K}-B_{K})'B_{K}\|/n$.
The $\sqrt{n}$-consistency of $\hat{\theta}$ and boundedness of
$z$ (Assumptions \ref{ass:compact}(i) and \ref{ass:theta}(ii))
imply that $\max_{i}\|\hat{\pi}_{i}-\pi_{i}\|=O_{p}(n^{-1/2})$. Therefore,
\begin{equation}
\|\hat{B}_{K}-B_{K}\|=(\sum_{i=1}^{n}\|b^{K}(\hat{\pi}_{i})-b^{K}(\pi_{i})\|^{2})^{1/2}\leq n^{1/2}\varrho_{1}(K)\max_{i}\text{\ensuremath{\|}}\hat{\pi}_{i}-\pi_{i}\|=O_{p}(\varrho_{1}(K)),\label{eq:Bhat_rate}
\end{equation}
by the mean-value theorem and Assumption \ref{ass:sieve}(iv). Moreover,
\begin{eqnarray*}
\|(\hat{B}_{K}-B_{K})'B_{K}\|/n & = & \text{tr}((\hat{B}_{K}-B_{K})'B_{K}B_{K}'(\hat{B}_{K}-B_{K}))^{1/2}/n\\
 & \leq & O_{p}(1)\text{tr}((\hat{B}_{K}-B_{K})'B_{K}(B_{K}'B_{K})^{-1}B_{K}'(\hat{B}_{K}-B_{K}))^{1/2}/\sqrt{n}\\
 & \leq & O_{p}(1)\|\hat{B}_{K}-B_{K}\|/\sqrt{n}=O_{p}(\varrho_{1}(K)/\sqrt{n}).
\end{eqnarray*}
The first inequality above holds because by Lemma \ref{lem:Q_K} $I_{K}\leq C(B'_{K}B{}_{K}/n)^{-1}$
with probability approaching one. The second inequality follows by
$B_{K}(B_{K}'B_{K})^{-1}B_{K}'$ idempotent. The last equality follows
from equation (\ref{eq:Bhat_rate}). We conclude that $\|\hat{Q}_{K}-Q_{K}\|\leq O_{p}(\varrho_{1}(K)^{2}/n)+O_{p}(\varrho_{1}(K)/\sqrt{n})=O_{p}(\varrho_{1}(K)/\sqrt{n})$.
\end{proof}

\begin{lem}
\label{lem:betahat_dif}$\|\hat{\beta}^{X}(\hat{\boldsymbol{\pi}})-\hat{\beta}^{X}(\boldsymbol{\pi})\|=O_{p}(\varrho_{1}(K)/\sqrt{n})$.
\end{lem}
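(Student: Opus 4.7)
The plan is to introduce the Gram matrices $\hat{Q}_K \equiv \hat{B}_K'\hat{B}_K/n$ and $Q_K \equiv B_K'B_K/n$ and decompose the difference using the standard perturbation identity $A^{-1} - B^{-1} = A^{-1}(B-A)B^{-1}$. Specifically, I would write
\begin{equation*}
\hat{\beta}^X(\hat{\boldsymbol{\pi}}) - \hat{\beta}^X(\boldsymbol{\pi}) = \hat{Q}_K^{-1}\bigl((\hat{B}_K - B_K)'\boldsymbol{X}/n\bigr) + \bigl(\hat{Q}_K^{-1} - Q_K^{-1}\bigr)\bigl(B_K'\boldsymbol{X}/n\bigr),
\end{equation*}
which cleanly separates the perturbation arising from the design matrix $\hat{B}_K - B_K$ from the perturbation arising from the Gram matrix $\hat{Q}_K - Q_K$. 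Each term will be bounded at rate $\varrho_1(K)/\sqrt{n}$.

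For the first term, Lemmas \ref{lem:Q_K} and \ref{lem:Q_Khat} together imply that $\hat{Q}_K$ converges to $I_K$ in Frobenius norm, so $\|\hat{Q}_K^{-1}\|_{op} = O_p(1)$ with probability approaching one. Combining $\|(\hat{B}_K - B_K)'\boldsymbol{X}/n\| \leq n^{-1}\|\hat{B}_K - B_K\|\,\|\boldsymbol{X}\|$ with equation (\ref{eq:Bhat_rate}) (which gives $\|\hat{B}_K - B_K\| = O_p(\varrho_1(K))$) and the bound $\|\boldsymbol{X}\|^2/n = n^{-1}\sum_i \|X_i\|^2 = O_p(1)$ yields the first term as $O_p(\varrho_1(K)/\sqrt{n})$.

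For the second term, applying the identity $\hat{Q}_K^{-1} - Q_K^{-1} = \hat{Q}_K^{-1}(Q_K - \hat{Q}_K)Q_K^{-1}$ together with Lemma \ref{lem:Q_Khat} and the boundedness of both inverses gives $\|\hat{Q}_K^{-1} - Q_K^{-1}\|_{op} = O_p(\varrho_1(K)/\sqrt{n})$. To control $\|B_K'\boldsymbol{X}/n\|$, I would replicate the argument in equation (\ref{eq:betahat_tauhat}): since $B_K(B_K'B_K)^{-1}B_K'$ is idempotent, $\|\hat{\beta}^X(\boldsymbol{\pi})\|^2 \leq O_p(n^{-1})\,\text{tr}(\boldsymbol{X}'\boldsymbol{X}) = O_p(1)$, and since $\|Q_K\|_{op}$ is bounded w.p.a.1, $\|B_K'\boldsymbol{X}/n\| = \|Q_K\hat{\beta}^X(\boldsymbol{\pi})\| = O_p(1)$. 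This gives the second term as $O_p(\varrho_1(K)/\sqrt{n})$, completing the proof after combining via the triangle inequality.

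The main technical point — and the only place where anything substantive beyond algebraic manipulation is used — is establishing $n^{-1}\|\boldsymbol{X}\|^2 = O_p(1)$, which is nontrivial because $X_i$ contains the endogenous peer outcome $w_i\boldsymbol{y}$. This step relies essentially on Lemma \ref{lem:wy_bd}, whose proof in turn exploits the density of $\boldsymbol{w}$ and the uniform boundedness of $\boldsymbol{s}$ in row and column sums. All remaining steps are mechanical applications of results previously established in this section.
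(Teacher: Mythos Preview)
Your proof is correct and follows essentially the same route as the paper's. Both arguments rely on the same ingredients---Lemmas \ref{lem:Q_K} and \ref{lem:Q_Khat} for the Gram matrices, equation (\ref{eq:Bhat_rate}) for $\|\hat{B}_K-B_K\|$, and $n^{-1}\mathrm{tr}(\boldsymbol{X}'\boldsymbol{X})=O_p(1)$ via Lemma \ref{lem:wy_bd}---and differ only cosmetically: the paper first factors out $\boldsymbol{X}$ via a trace and decomposes $\hat{Q}_K^{-1}\hat{B}_K'-Q_K^{-1}B_K'$ by adding and subtracting $Q_K^{-1}\hat{B}_K'$, whereas you add and subtract $\hat{Q}_K^{-1}B_K'\boldsymbol{X}/n$ directly, which is an equivalent rearrangement.
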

\begin{proof}
Recall that $\hat{\beta}^{X}(\hat{\boldsymbol{\pi}})=\hat{Q}_{K}^{-1}\hat{B}'_{K}\boldsymbol{X}/n$
and $\hat{\beta}^{X}(\boldsymbol{\pi})=Q_{K}^{-1}B'{}_{K}\boldsymbol{X}/n$.
We have
\begin{eqnarray*}
\|\hat{\beta}^{X}(\hat{\boldsymbol{\pi}})-\hat{\beta}^{X}(\boldsymbol{\pi})\| & = & \textrm{tr}(\boldsymbol{X}'(\hat{Q}_{K}^{-1}\hat{B}'_{K}-Q_{K}^{-1}B'{}_{K})'(\hat{Q}_{K}^{-1}\hat{B}'_{K}-Q_{K}^{-1}B'{}_{K})\boldsymbol{X}/n^{2})^{1/2}\\
 & \leq & \|(\hat{Q}_{K}^{-1}\hat{B}'_{K}-Q_{K}^{-1}B'{}_{K})/\sqrt{n}\|\textrm{tr}(\boldsymbol{X}'\boldsymbol{X}/n)^{1/2}.
\end{eqnarray*}
As shown in Lemma \ref{lem:muhat_t_consist}, $\text{tr}(\boldsymbol{X}'\boldsymbol{X}/n)=n^{-1}\sum_{i=1}^{n}\|X_{i}\|^{2}=O_{p}(1)$.
Moreover, $\|(\hat{Q}_{K}^{-1}\hat{B}'_{K}-Q_{K}^{-1}B'{}_{K})/\sqrt{n}\|\leq\|(\hat{Q}_{K}^{-1}-Q_{K}^{-1})\hat{B}'_{K}/\sqrt{n}\|+\|Q_{K}^{-1}(\hat{B}{}_{K}-B_{K})'/\sqrt{n}\|$.
Observe
\begin{eqnarray*}
\|(\hat{Q}_{K}^{-1}-Q_{K}^{-1})\hat{B}'_{K}/\sqrt{n}\| & = & \textrm{tr}((\hat{Q}_{K}^{-1}-Q_{K}^{-1})\hat{B}'_{K}\hat{B}{}_{K}(\hat{Q}_{K}^{-1}-Q_{K}^{-1})/n)^{1/2}\\
 & = & \textrm{tr}(Q_{K}^{-1}(Q_{K}-\hat{Q}_{K})\hat{Q}_{K}^{-1}(Q_{K}-\hat{Q}_{K})Q_{K}^{-1})^{1/2}\\
 & \leq & O_{p}(1)\textrm{tr}((Q_{K}-\hat{Q}_{K})Q_{K}^{-2}(Q_{K}-\hat{Q}_{K}))^{1/2}\\
 & \leq & O_{p}(1)\|Q_{K}-\hat{Q}_{K}\|=O_{p}(\varrho_{1}(K)/\sqrt{n}),
\end{eqnarray*}
where the inequalities follow from Lemmas \ref{lem:Q_K} and \ref{lem:Q_Khat}.\footnote{By Lemmas \ref{lem:Q_K} and \ref{lem:Q_Khat}, the smallest eigenvalue
of $\hat{Q}_{K}$ converges to one in probability and hence the largest
eigenvalue of $\hat{Q}_{K}^{-1}$ is bounded with probability approaching
one. Similarly, by Lemma \ref{lem:Q_K}, the largest eigenvalue of
$Q_{K}^{-2}$ is bounded with probability approaching one.} The last equality follows from Lemma \ref{lem:Q_Khat}. As for the
second term, by equation (\ref{eq:Bhat_rate}), we have $\|Q_{K}^{-1}(\hat{B}{}_{K}-B{}_{K})'/\sqrt{n}\|=\textrm{tr}((\hat{B}{}_{K}-B{}_{K})Q_{K}^{-2}(\hat{B}{}_{K}-B{}_{K})'/n)^{1/2}\leq O_{p}(1)\|(\hat{B}{}_{K}-B{}_{K})/\sqrt{n}\|=O_{p}(\varrho_{1}(K)/\sqrt{n}).$
\end{proof}
\begin{lem}[Consistency of sieve]
\label{lem:wt_betahat}For $\boldsymbol{t}=\boldsymbol{x}$ or $\boldsymbol{y}$,
$\|\hat{\beta}^{\boldsymbol{w}\boldsymbol{t}}(\boldsymbol{\pi})-\beta^{\boldsymbol{w}\boldsymbol{t}}\|=o_{p}(1)$.
\end{lem}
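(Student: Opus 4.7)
The plan is to bound $\|\hat{\beta}^{\boldsymbol{wt}}(\boldsymbol{\pi})-\beta^{\boldsymbol{wt}}\|$ by the standard sieve-OLS variance--bias decomposition. Write
\[
\hat{\beta}^{\boldsymbol{wt}}(\boldsymbol{\pi})-\beta^{\boldsymbol{wt}} = (B_K'B_K/n)^{-1}\bigl[B_K'\boldsymbol{\eta}/n + B_K'\boldsymbol{r}/n\bigr],
\]
where $\boldsymbol{\eta}=\boldsymbol{wt}-\boldsymbol{\mu}_0^{\boldsymbol{wt}}(\boldsymbol{\pi})$ is the centered ``noise'' and $\boldsymbol{r}=\boldsymbol{\mu}_0^{\boldsymbol{wt}}(\boldsymbol{\pi})-B_K\beta^{\boldsymbol{wt}}$ is the sieve approximation residual. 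Lemma \ref{lem:Q_K} ensures the inverse factor has $O_p(1)$ operator norm. The bias contribution is routine: using the same lemma together with Assumption \ref{ass:sieve}(ii),
\[
\|B_K'\boldsymbol{r}/n\|^2 \leq \lambda_{\max}(B_KB_K'/n)\cdot\|\boldsymbol{r}\|^2/n \leq O_p(1)\cdot O(K^{-2a}) = o_p(1).
\]

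The substantive step is to control the noise, $\|B_K'\boldsymbol{\eta}/n\|$. I would proceed by a second-moment argument:
\[
\mathbb{E}\bigl[\|B_K'\boldsymbol{\eta}/n\|^2\bigr] = n^{-2}\sum_{i,j=1}^n \mathbb{E}\bigl[(b^K(\pi_i)'b^K(\pi_j))\,(\eta_i\eta_j')\bigr],
\]
condition on $\boldsymbol{\psi}=(\boldsymbol{x},\boldsymbol{z},\boldsymbol{g})$ so that $\mu_0^{w_i\boldsymbol{t}}(\pi_i)$ and $b^K(\pi_i)$ become deterministic, and handle diagonal and off-diagonal indices separately. For $\boldsymbol{t}=\boldsymbol{x}$, expanding $\eta_i=\sum_k w_{ik}x_k'-\mu_0^{w_i\boldsymbol{x}}(\pi_i)$ reduces the inner expectation to a quadratic form in the entries of $\boldsymbol{q}=\boldsymbol{w}$. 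The $i=j$ contribution is $O(K/n)$ via $\mathbb{E}[\|b^K(\pi_i)\|^2]=\text{tr}(I_K)=K$ (Assumption \ref{ass:sieve}(i)), $\mathbb{E}[\|\boldsymbol{w}\|_\infty^2]=O(n^{-2})$ (Assumption \ref{ass:w}(ii)), and the rate condition $K/n\to 0$. For the off-diagonal $i\neq j$ contribution, the conditional covariance bound in Assumption \ref{ass:w}(iii) for $\boldsymbol{q}=\boldsymbol{w}$ -- with its crucial $o(n^{-4}/K)$ rate -- is precisely what absorbs the extra factor $K$ coming from $|b^K(\pi_i)'b^K(\pi_j)|\leq\varrho_0(K)^2$, leaving an $o(1)$ bound. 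Markov's inequality then upgrades the $L^2$ bound to $o_p(1)$.

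For $\boldsymbol{t}=\boldsymbol{y}$, I would first substitute the reduced form $\boldsymbol{wy}=\boldsymbol{sw}^2\boldsymbol{x}\gamma_2+\boldsymbol{swx}\gamma_3+\boldsymbol{sw}\boldsymbol{\epsilon}$, so that $\boldsymbol{\eta}$ splits into three analogous centered pieces with weight matrices $\boldsymbol{sw}$ and $\boldsymbol{sw}^2$ in place of $\boldsymbol{w}$. Each piece is controlled by the same second-moment argument, invoking the versions of Assumption \ref{ass:w}(iii) that explicitly cover $\boldsymbol{q}=\boldsymbol{sw}^t$, $t=1,2$, along with boundedness of $\boldsymbol{x}$ (Assumption \ref{ass:compact}(ii)) and finiteness of $\mathbb{E}[\epsilon_i^8]$ (Assumption \ref{ass:smooth}(i)); the $\boldsymbol{sw\epsilon}$ piece additionally uses the conditional independence of $\boldsymbol{w}$ and $\boldsymbol{\epsilon}$ given $\boldsymbol{\psi}$ (Assumption \ref{ass:adj_exog}) to factor $\mathbb{E}[\epsilon_k\epsilon_l|\boldsymbol{\psi}]$ out of the quadratic form. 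The main obstacle is precisely this interplay between the growing basis dimension $K$ and the network-induced dependence: the classical i.i.d. sieve bound $O(K/n)$ is no longer direct, and the off-diagonal cancellation must be driven by the strengthened $o(n^{-4}/K)$ rate in Assumption \ref{ass:w}(iii), which was introduced for exactly this purpose.
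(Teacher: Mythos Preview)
Your overall plan matches the paper's proof: the same variance--bias split, the same use of Lemma \ref{lem:Q_K} for $(B_K'B_K/n)^{-1}$, the same second-moment argument on $B_K'(\boldsymbol{wt}-\boldsymbol{\mu}_0^{\boldsymbol{wt}})/n$, and the same reduction of $\boldsymbol{t}=\boldsymbol{y}$ to three $\boldsymbol{q}$-weighted pieces via the reduced form, each handled through Assumption \ref{ass:w}(iii). The $\boldsymbol{sw}\boldsymbol{\epsilon}$ piece does indeed use Assumption \ref{ass:adj_exog} to replace $\epsilon_j$ by $\lambda(\pi_j)$ inside the relevant conditional means, exactly as the paper does.

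There is, however, a real gap in the off-diagonal control. The pointwise bound $|b^K(\pi_i)'b^K(\pi_j)|\leq\varrho_0(K)^2$ is too crude: nothing in Assumption \ref{ass:sieve} forces $\varrho_0(K)^2=O(K)$ (for power-series bases $\varrho_0(K)=O(K)$, so $\varrho_0(K)^2=O(K^2)$), and pairing $\varrho_0(K)^2$ with $o(n^{-4}/K)^{1/2}$ will not deliver $o(n^{-2})$ per term. The paper instead expands fully to the quadruple sum $n^{-2}\sum_{(i,j)}\sum_{(k,l)}\mathbb{E}[r_{ij}'r_{kl}]$ with $r_{ij}=b^K(\pi_i)(w_{ij}x_j-\mathbb{E}[w_{ij}x_j\mid\pi_i])$ and splits by whether $\{i,j\}\cap\{k,l\}=\emptyset$; this is strictly finer than your outer-index $i=j$ versus $i\neq j$ split, since after expanding the two $\eta$'s there remain $O(n^3)$ overlapping terms (e.g.\ an inner index equal to $j$) that Assumption \ref{ass:w}(iii) does not cover and must be bounded crudely. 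On the disjoint block the paper first peels off the product of two independent mean-zero factors $b^K(\pi_i)(\mathbb{E}[w_{ij}\mid\psi_i,\psi_j]x_j-\mathbb{E}[w_{ij}x_j\mid\pi_i])$ and its $(k,l)$ counterpart, and then applies Cauchy--Schwarz with the sharp $L^2$ bound $\mathbb{E}[(b^K(\pi_i)'b^K(\pi_k))^2]=\text{tr}(I_K)=K$ (valid for $i\neq k$ under the normalization in Assumption \ref{ass:sieve}(i)) to the residual $e_{ij,kl}^{wx}$. It is this $\sqrt{K}$---not $\varrho_0(K)^2$---that the rate $o(n^{-4}/K)^{1/2}=o(n^{-2}/\sqrt{K})$ is calibrated to absorb.
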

\begin{proof}
Recall that $\hat{\beta}^{\boldsymbol{w}\boldsymbol{x}}(\boldsymbol{\pi})=Q_{K}^{-1}B'_{K}\boldsymbol{w}\boldsymbol{x}/n$.
We can write $\hat{\beta}^{\boldsymbol{w}\boldsymbol{t}}(\boldsymbol{\pi})-\beta^{\boldsymbol{w}\boldsymbol{t}}=Q_{K}^{-1}B'_{K}(\boldsymbol{w}\boldsymbol{t}-B_{K}\beta^{\boldsymbol{w}\boldsymbol{t}})/n$.
Observe that 
\begin{eqnarray*}
\|\hat{\beta}^{\boldsymbol{w}\boldsymbol{t}}(\boldsymbol{\pi})-\beta^{\boldsymbol{w}\boldsymbol{t}}\| & = & \text{tr}((\boldsymbol{w}\boldsymbol{t}-B_{K}\beta^{\boldsymbol{w}\boldsymbol{t}})'B_{K}Q_{K}^{-2}B'_{K}(\boldsymbol{w}\boldsymbol{t}-B_{K}\beta^{\boldsymbol{w}\boldsymbol{t}})/n^{2})^{1/2}\\
 & \leq & O_{p}(1)\|B'_{K}(\boldsymbol{w}\boldsymbol{t}-B_{K}\beta^{\boldsymbol{w}\boldsymbol{t}})/n\|,
\end{eqnarray*}
where we used that the largest eigenvalue of $Q_{K}^{-2}$ is bounded
with probability approaching one. Write $\boldsymbol{w}\boldsymbol{t}-B_{K}\beta^{\boldsymbol{w}\boldsymbol{t}}=(\boldsymbol{w}\boldsymbol{t}-\boldsymbol{\mu}_{0}^{\boldsymbol{w}\boldsymbol{t}})+(\boldsymbol{\mu}_{0}^{\boldsymbol{w}\boldsymbol{t}}-B_{K}\beta^{\boldsymbol{w}\boldsymbol{t}})$.
We derive
\begin{eqnarray*}
 &  & \|B'_{K}(\boldsymbol{\mu}_{0}^{\boldsymbol{w}\boldsymbol{t}}-B_{K}\beta^{\boldsymbol{w}\boldsymbol{t}})/n\|\\
 & = & \text{tr}((\boldsymbol{\mu}_{0}^{\boldsymbol{w}\boldsymbol{t}}-B_{K}\beta^{\boldsymbol{w}\boldsymbol{t}})'B_{K}B'_{K}(\boldsymbol{\mu}_{0}^{\boldsymbol{w}\boldsymbol{t}}-B_{K}\beta^{\boldsymbol{w}\boldsymbol{t}})/n^{2})^{1/2}\\
 & \leq & O_{p}(1)\text{tr}((\boldsymbol{\mu}_{0}^{\boldsymbol{w}\boldsymbol{t}}-B_{K}\beta^{\boldsymbol{w}\boldsymbol{t}})'B_{K}(B'_{K}B_{K})^{-1}B'_{K}(\boldsymbol{\mu}_{0}^{\boldsymbol{w}\boldsymbol{t}}-B_{K}\beta^{\boldsymbol{w}\boldsymbol{t}})/n)^{1/2}\\
 & \leq & O_{p}(1)\|(\boldsymbol{\mu}_{0}^{\boldsymbol{w}\boldsymbol{t}}-B_{K}\beta^{\boldsymbol{w}\boldsymbol{t}})/\sqrt{n}\|=O_{p}(K^{-a}).
\end{eqnarray*}
The first inequality follows from Lemma \ref{lem:Q_K}.\footnote{By Lemma \ref{lem:Q_K}, the largest eigenvalue of $Q_{K}=B'_{K}B_{K}/n$
converges to one in probability and hence $CI_{K}\leq(B'_{K}B_{K}/n)^{-1}$
with probability approaching one.} The second inequality holds because $B_{K}(B'_{K}B_{K})^{-1}B'_{K}$
is idempotent, and the last equality follows by Assumption \ref{ass:sieve}(ii).
If we can show $B'_{K}(\boldsymbol{w}\boldsymbol{t}-\boldsymbol{\mu}_{0}^{\boldsymbol{w}\boldsymbol{t}})/n=o_{p}(1)$,
then combining the results completes the proof. Because $x_{i}$ is
finite dimensional, we can prove the equation for each element of
$x_{i}$ separately. Without loss of generality, we assume that $x_{i}$
is a scalar for notation simplicity.

We start with $\boldsymbol{t}=\boldsymbol{x}$. Write $B'_{K}(\boldsymbol{w}\boldsymbol{x}-\boldsymbol{\mu}_{0}^{\boldsymbol{w}\boldsymbol{x}})/n=n^{-1}\sum_{i}\sum_{j}b^{K}(\pi_{i})(w_{ij}x_{j}-\mathbb{E}[w_{ij}x_{j}|\pi_{i}])=n^{-1}\sum_{i}\sum_{j}r_{ij}^{wx}$,
where $r_{ij}^{wx}\equiv b^{K}(\pi_{i})(w_{ij}x_{j}-\mathbb{E}[w_{ij}x_{j}|\pi_{i}])$.
Since $\mathbb{E}[r_{ij}^{wx}|\pi_{i}]=0$, we have $\mathbb{E}[r_{ij}^{wx}]=0$.
Then
\begin{equation}
\mathbb{E}\|B'_{K}(\boldsymbol{wx}-\boldsymbol{\mu}_{0}^{\boldsymbol{wx}})/n\|^{2}=n^{-2}\sum_{(i,j)}\sum_{(k,l):\{i,j\}\cap\{k,l\}\neq\emptyset}\mathbb{E}[r_{ij}^{wx\prime}r_{kl}^{wx}]+n^{-2}\sum_{(i,j)}\sum_{(k,l):\{i,j\}\cap\{k,l\}=\emptyset}\mathbb{E}[r_{ij}^{wx\prime}r_{kl}^{wx}].\label{eq:Bwx_sq}
\end{equation}
For any $i,j,k,l\in\mathcal{N}$, $|\mathbb{E}[r_{ij}^{wx\prime}r_{kl}^{wx}]|\leq\mathbb{E}|b^{K}(\pi_{i})'b^{K}(\pi_{k})(w_{ij}x_{j}-\mathbb{E}[w_{ij}x_{j}|\pi_{i}])(w_{kl}x_{l}-\mathbb{E}[w_{kl}x_{l}|\pi_{k}])|\leq O(n^{-2})(\mathbb{E}[(b^{K}(\pi_{i})'b^{K}(\pi_{k}))^{2}])^{1/2}=O(n^{-2}\sqrt{K})$.
The second inequality follows from Assumptions \ref{ass:compact}(ii)
and \ref{ass:w}(ii).\footnote{\label{fn:cov(wx)}By Cauchy-Schwarz inequality, $(a+b)^{4}\leq8(a^{4}+b^{4})$,
Jensen's inequality, and iterated expectations, we have $(\mathbb{E}[(w_{ij}x_{j}-\mathbb{E}[w_{ij}x_{j}|\pi_{i}])^{2}(w_{kl}x_{l}-\mathbb{E}[w_{kl}x_{l}|\pi_{k}])^{2}])^{1/2}\leq(\mathbb{E}[(w_{ij}x_{j}-\mathbb{E}[w_{ij}x_{j}|\pi_{i}])^{4}])^{1/4}(\mathbb{E}[(w_{kl}x_{l}-\mathbb{E}[w_{kl}x_{l}|\pi_{k}])^{4}])^{1/4}\leq4(\mathbb{E}[(w_{ij}x_{j})^{4}])^{1/4}(\mathbb{E}[(w_{kl}x_{l})^{4}])^{1/4}\leq C\mathbb{E}[\|\boldsymbol{w}\|_{\infty}^{4}]^{1/2}=O(n^{-2}).$} The last equality holds because by Assumptions \ref{ass:regular}(i)
and \ref{ass:sieve}(i), $\mathbb{E}[(b^{K}(\pi_{i})'b^{K}(\pi_{k}))^{2}]=\mathbb{E}[b^{K}(\pi_{i})'b^{K}(\pi_{k})b^{K}(\pi_{k})'b^{K}(\pi_{i})]=\mathbb{E}[\text{tr}(b^{K}(\pi_{i})b^{K}(\pi_{i})'b^{K}(\pi_{k})b^{K}(\pi_{k})')]=\text{tr}(\mathbb{E}[b^{K}(\pi_{i})b^{K}(\pi_{i})']\mathbb{E}[b^{K}(\pi_{k})b^{K}(\pi_{k})'])=\text{tr}(I_{K})=K$.
The sum over overlapping $\{i,j\}$ and $\{k,l\}$ contains $O(n^{3})$
terms. Therefore, the first term in equation (\ref{eq:Bwx_sq}) is
$n^{-2}\cdot O(n^{3})\cdot O(n^{-2}\sqrt{K})=O(\sqrt{K}/n)$.

Moreover, for disjoint $\{i,j\}$ and $\{k,l\}$, we have
\begin{eqnarray}
\mathbb{E}[r_{ij}^{wx\prime}r_{kl}^{wx}|\boldsymbol{\psi}] & = & b^{K}(\pi_{i})'b^{K}(\pi_{k})\mathbb{E}[(w_{ij}x_{j}-\mathbb{E}[w_{ij}x_{j}|\pi_{i}])(w_{kl}x_{l}-\mathbb{E}[w_{kl}x_{l}|\pi_{k}])|\boldsymbol{\psi}]\nonumber \\
 & = & b^{K}(\pi_{i})'b^{K}(\pi_{k})(\mathbb{E}[w_{ij}w_{kl}|\boldsymbol{\psi}]x_{j}x_{l}-\mathbb{E}[w_{ij}|\boldsymbol{\psi}]x_{j}\mathbb{E}[w_{kl}x_{l}|\pi_{k}]\nonumber \\
 &  & -\mathbb{E}[w_{ij}x_{j}|\pi_{i}]\mathbb{E}[w_{kl}|\boldsymbol{\psi}]x_{l}+\mathbb{E}[w_{ij}x_{j}|\pi_{i}])\mathbb{E}[w_{kl}x_{l}|\pi_{k}])\nonumber \\
 & = & b^{K}(\pi_{i})'b^{K}(\pi_{k})(\mathbb{E}[w_{ij}|\psi_{i},\psi_{j}]x_{j}-\mathbb{E}[w_{ij}x_{j}|\pi_{i}])(\mathbb{E}[w_{kl}|\psi_{k},\psi_{l}]x_{l}-\mathbb{E}[w_{kl}x_{l}|\pi_{k}])\nonumber \\
 &  & +b^{K}(\pi_{i})'b^{K}(\pi_{k})e_{ij,kl}^{wx},\label{eq:cov(rij)_wx}
\end{eqnarray}
where
\begin{eqnarray*}
e_{ij,kl}^{wx} & \equiv & (\mathbb{E}[w_{ij}w_{kl}|\boldsymbol{\psi}]-\mathbb{E}[w_{ij}|\psi_{i},\psi_{j}]\mathbb{E}[w_{kl}|\psi_{k},\psi_{l}])x_{j}x_{l}\\
 &  & -(\mathbb{E}[w_{ij}|\boldsymbol{\psi}]-\mathbb{E}[w_{ij}|\psi_{i},\psi_{j}])x_{j}\mathbb{E}[w_{kl}x_{l}|\pi_{k}]\\
 &  & -(\mathbb{E}[w_{kl}|\boldsymbol{\psi}]-\mathbb{E}[w_{kl}|\psi_{k},\psi_{l}])x_{l}\mathbb{E}[w_{ij}x_{j}|\pi_{i}].
\end{eqnarray*}
Observe that for disjoint $\{i,j\}$ and $\{k,l\}$ the terms $b^{K}(\pi_{i})(\mathbb{E}[w_{ij}|\psi_{i},\psi_{j}]x_{j}-\mathbb{E}[w_{ij}x_{j}|\pi_{i}])$
and $b^{K}(\pi_{k})(\mathbb{E}[w_{kl}|\psi_{k},\psi_{l}]x_{l}-\mathbb{E}[w_{kl}x_{l}|\pi_{k}])$
are independent, both with mean zero. Hence, the first term in the
last line of equation (\ref{eq:cov(rij)_wx}) has mean zero. Moreover,
by Assumptions \ref{ass:compact}(ii) and \ref{ass:w}(ii)(iii) and
Cauchy-Schwarz inequality, we obtain $\max_{i,j,k,l\in\mathcal{N}:\{i,j\}\cap\{k,l\}=\emptyset}\mathbb{E}[(e_{ij,kl}^{wx})^{2}]\leq o(n^{-4}/K)$.
Therefore, we can derive the bound $|\mathbb{E}[r_{ij}^{wx\prime}r_{kl}^{wx}]|=|\mathbb{E}[b^{K}(\pi_{i})'b^{K}(\pi_{k})e_{ij,kl}^{wx}]|\leq(\mathbb{E}[(b^{K}(\pi_{i})'b^{K}(\pi_{k}))^{2}])^{1/2}(\mathbb{E}[(e_{ij,kl}^{wx})^{2}])^{1/2}\leq\sqrt{K}\cdot o(n^{-2}/\sqrt{K})=o(n^{-2})$
uniformly in disjoint $\{i,j\}$ and $\{k,l\}$. The sum over disjoint
$\{i,j\}$ and $\{k,l\}$ contains $O(n^{4})$ terms. Hence, the second
term in (\ref{eq:Bwx_sq}) can be bounded by $n^{-2}\cdot O(n^{4})\cdot o(n^{-2})=o(1)$.
Combining the results we prove $\mathbb{E}\|B'_{K}(\boldsymbol{wx}-\boldsymbol{\mu}_{0}^{\boldsymbol{wx}})/n\|^{2}=o(1)$
and thus $B'_{K}(\boldsymbol{wx}-\boldsymbol{\mu}_{0}^{\boldsymbol{wx}})/n=o_{p}(1)$.

Next we consider $\boldsymbol{t}=\boldsymbol{y}$. Recall that $\boldsymbol{w}\boldsymbol{y}=\boldsymbol{s}(\boldsymbol{w}^{2}\boldsymbol{x}\gamma_{2}+\boldsymbol{w}\boldsymbol{x}\gamma_{3}+\boldsymbol{w}\boldsymbol{\epsilon})$
and $\boldsymbol{\mu}_{0}^{\boldsymbol{w}\boldsymbol{y}}=\boldsymbol{\mu}_{0}^{\boldsymbol{s}\boldsymbol{w}^{2}\boldsymbol{x}}\gamma_{2}+\boldsymbol{\mu}_{0}^{\boldsymbol{s}\boldsymbol{w}\boldsymbol{x}}\gamma_{3}+\boldsymbol{\mu}_{0}^{\boldsymbol{\boldsymbol{s}\boldsymbol{w}}\boldsymbol{\epsilon}}$.
We can write
\begin{eqnarray*}
 &  & B'_{K}(\boldsymbol{wy}-\boldsymbol{\mu}_{0}^{\boldsymbol{wy}})/n\\
 & = & B'_{K}(\boldsymbol{s}\boldsymbol{w}^{2}\boldsymbol{x}-\boldsymbol{\mu}_{0}^{\boldsymbol{s}\boldsymbol{w}^{2}\boldsymbol{x}})\gamma_{2}/n+B'_{K}(\boldsymbol{s}\boldsymbol{w}\boldsymbol{x}-\boldsymbol{\mu}_{0}^{\boldsymbol{s}\boldsymbol{w}\boldsymbol{x}})\gamma_{3}/n+B'_{K}(\boldsymbol{s}\boldsymbol{w}\boldsymbol{\epsilon}-\boldsymbol{\mu}_{0}^{\boldsymbol{s}\boldsymbol{w}\boldsymbol{\epsilon}})/n.
\end{eqnarray*}
We will demonstrate that all three terms on the right-hand side are
$o_{p}(1)$, following the proof for $B'_{K}(\boldsymbol{wx}-\boldsymbol{\mu}_{0}^{\boldsymbol{wx}})/n$
with some modifications. First, we modify the proof for $B'_{K}(\boldsymbol{wx}-\boldsymbol{\mu}_{0}^{\boldsymbol{wx}})/n$
by replacing $\boldsymbol{w}$ with $\boldsymbol{q}=\boldsymbol{s}\boldsymbol{w}^{2}$
or $\boldsymbol{s}\boldsymbol{w}$, replacing $r_{ij}^{wx}$ with
$r_{ij}^{qx}\equiv b^{K}(\pi_{i})(q_{ij}x_{j}-\mathbb{E}[q_{ij}x_{j}|\pi_{i}])$,
and replacing $e_{ij,kl}^{wx}$ with
\begin{eqnarray*}
e_{ij,kl}^{qx} & \equiv & (\mathbb{E}[q_{ij}q_{kl}|\boldsymbol{\psi}]-\mathbb{E}[q_{ij}|\psi_{i},\psi_{j}]\mathbb{E}[q_{kl}|\psi_{k},\psi_{l}])x_{j}x_{l}\\
 &  & -(\mathbb{E}[q_{ij}|\boldsymbol{\psi}]-\mathbb{E}[q_{ij}|\psi_{i},\psi_{j}])x_{j}\mathbb{E}[q_{kl}x_{l}|\pi_{k}]\\
 &  & -(\mathbb{E}[q_{kl}|\boldsymbol{\psi}]-\mathbb{E}[q_{kl}|\psi_{k},\psi_{l}])x_{l}\mathbb{E}[q_{ij}x_{j}|\pi_{i}],
\end{eqnarray*}
which also satisfies $\max_{i,j,k,l\in\mathcal{N}:\{i,j\}\cap\{k,l\}=\emptyset}\mathbb{E}[(e_{ij,kl}^{qx})^{2}]\leq o(n^{-4}/K)$
under Assumptions \ref{ass:compact}(ii) and \ref{ass:w}(i)-(iii).
Following a similar argument, we can show that $B'_{K}(\boldsymbol{q}\boldsymbol{x}-\boldsymbol{\mu}_{0}^{\boldsymbol{q}\boldsymbol{x}})/n=o_{p}(1)$
for both $\boldsymbol{q}=\boldsymbol{s}\boldsymbol{w}^{2}$ and $\boldsymbol{q}=\boldsymbol{s}\boldsymbol{w}$.\footnote{The argument in Footnote \ref{fn:cov(wx)} still holds when we replace
$\boldsymbol{w}$ by $\boldsymbol{q}=\boldsymbol{s}\boldsymbol{w}^{2}$
or $\boldsymbol{s}\boldsymbol{w}$ because by Lemma \ref{lem:network_bd}
and Assumption \ref{ass:w}(i)(ii) we have $\mathbb{E}[\|\boldsymbol{q}\|_{\infty}^{4}]\leq n^{4}\mathbb{E}[\|\boldsymbol{w}\|_{\infty}^{8}]=O(n^{-4})$
for both forms of $\boldsymbol{q}$.} Second, we modify the proof for $B'_{K}(\boldsymbol{wx}-\boldsymbol{\mu}_{0}^{\boldsymbol{wx}})/n$
by replacing $\boldsymbol{w}$ with $\boldsymbol{q}=\boldsymbol{s}\boldsymbol{w}$,
replacing $\boldsymbol{x}$ with $\boldsymbol{\epsilon}$, and replacing
$r_{ij}^{wx}$ with $r_{ij}^{q\epsilon}\equiv b^{K}(\pi_{i})(q_{ij}\epsilon_{j}-\mathbb{E}[q_{ij}\epsilon_{j}|\pi_{i}])$.
Note that by Assumption \ref{ass:adj_exog} and iterated expectations,
we have $\mathbb{E}[q_{ij}\epsilon_{j}|\pi_{i}]=\mathbb{E}[\mathbb{E}[q_{ij}\epsilon_{j}|\boldsymbol{\psi}]|\pi_{i}]=\mathbb{E}[\mathbb{E}[q_{ij}|\boldsymbol{\psi}]\mathbb{E}[\epsilon_{j}|\boldsymbol{\psi}]|\pi_{i}]=\mathbb{E}[\mathbb{E}[q_{ij}|\boldsymbol{\psi}]\lambda(\pi_{j})|\pi_{i}]=\mathbb{E}[q_{ij}\lambda(\pi_{j})|\pi_{i}]$.
Moreover, we can derive
\begin{eqnarray*}
\mathbb{E}[r_{ij}^{q\epsilon\prime}r_{kl}^{q\epsilon}] & = & \mathbb{E}[b^{K}(\pi_{i})'b^{K}(\pi_{k})(q_{ij}\epsilon_{j}-\mathbb{E}[q_{ij}\epsilon_{j}|\pi_{i}])(q_{kl}\epsilon_{l}-\mathbb{E}[q_{kl}\epsilon_{l}|\pi_{k}])]\\
 & = & \mathbb{E}[b^{K}(\pi_{i})'b^{K}(\pi_{k})(\mathbb{E}[q_{ij}q_{kl}|\boldsymbol{\psi}]\lambda(\pi_{j})\lambda(\pi_{l})-\mathbb{E}[q_{ij}|\boldsymbol{\psi}]\lambda(\pi_{j})\mathbb{E}[q_{kl}\lambda(\pi_{l})|\pi_{k}]\\
 &  & -\mathbb{E}[q_{ij}\lambda(\pi_{j})|\pi_{i}]\mathbb{E}[q_{kl}|\boldsymbol{\psi}]\lambda(\pi_{l})+\mathbb{E}[q_{ij}\lambda(\pi_{j})|\pi_{i}]\mathbb{E}[q_{kl}\lambda(\pi_{l})|\pi_{k}])]\\
 & = & \mathbb{E}[b^{K}(\pi_{i})'b^{K}(\pi_{k})(q_{ij}\lambda(\pi_{j})-\mathbb{E}[q_{ij}\lambda(\pi_{j})|\pi_{i}])(q_{kl}\lambda(\pi_{l})-\mathbb{E}[q_{kl}\lambda(\pi_{l})|\pi_{k}])].
\end{eqnarray*}
Furthermore, for disjoint $\{i,j\}$ and $\{k,l\}$, we can show
\begin{eqnarray*}
\mathbb{E}[r_{ij}^{q\epsilon\prime}r_{kl}^{q\epsilon}|\boldsymbol{\psi}] & = & b^{K}(\pi_{i})'b^{K}(\pi_{k})\mathbb{E}[(q_{ij}\epsilon_{j}-\mathbb{E}[q_{ij}\epsilon_{j}|\pi_{i}])(q_{kl}\epsilon_{l}-\mathbb{E}[q_{kl}\epsilon_{l}|\pi_{k}])|\boldsymbol{\psi}]\\
 & = & b^{K}(\pi_{i})'b^{K}(\pi_{k})(\mathbb{E}[q_{ij}q_{kl}|\boldsymbol{\psi}]\lambda(\pi_{j})\lambda(\pi_{l})-\mathbb{E}[q_{ij}|\boldsymbol{\psi}]\lambda(\pi_{j})\mathbb{E}[q_{kl}\lambda(\pi_{l})|\pi_{k}]\\
 &  & -\mathbb{E}[q_{ij}\lambda(\pi_{j})|\pi_{i}]\mathbb{E}[q_{kl}|\boldsymbol{\psi}]\lambda(\pi_{l})+\mathbb{E}[q_{ij}\lambda(\pi_{j})|\pi_{i}])\mathbb{E}[q_{kl}\lambda(\pi_{l})|\pi_{k}])\\
 & = & b^{K}(\pi_{i})'b^{K}(\pi_{k})(\mathbb{E}[q_{ij}q_{kl}|\boldsymbol{\psi}]\lambda(\pi_{j})\lambda(\pi_{l})-\mathbb{E}[q_{ij}|\boldsymbol{\psi}]\lambda(\pi_{j})\mathbb{E}[q_{kl}\lambda(\pi_{l})|\pi_{k}]\\
 &  & -\mathbb{E}[q_{ij}\lambda(\pi_{j})|\pi_{i}]\mathbb{E}[q_{kl}|\boldsymbol{\psi}]\lambda(\pi_{l})+\mathbb{E}[q_{ij}\lambda(\pi_{j})|\pi_{i}])\mathbb{E}[q_{kl}\lambda(\pi_{l})|\pi_{k}])\\
 & = & b^{K}(\pi_{i})'b^{K}(\pi_{k})(\mathbb{E}[q_{ij}|\psi_{i},\psi_{j}]\lambda(\pi_{j})-\mathbb{E}[q_{ij}\lambda(\pi_{j})|\pi_{i}])(\mathbb{E}[q_{kl}|\psi_{k},\psi_{l}]\lambda(\pi_{l})-\mathbb{E}[q_{kl}\lambda(\pi_{l})|\pi_{k}])\\
 &  & +b^{K}(\pi_{i})'b^{K}(\pi_{k})e_{ij,kl}^{q\epsilon},
\end{eqnarray*}
where
\begin{eqnarray*}
e_{ij,kl}^{q\epsilon} & \equiv & (\mathbb{E}[q_{ij}q_{kl}|\boldsymbol{\psi}]-\mathbb{E}[q_{ij}|\psi_{i},\psi_{j}]\mathbb{E}[q_{kl}|\psi_{k},\psi_{l}])\lambda(\pi_{j})\lambda(\pi_{l})\\
 &  & -(\mathbb{E}[q_{ij}|\boldsymbol{\psi}]-\mathbb{E}[q_{ij}|\psi_{i},\psi_{j}])\lambda(\pi_{j})\mathbb{E}[q_{kl}\lambda(\pi_{l})|\pi_{k}]\\
 &  & -(\mathbb{E}[q_{kl}|\boldsymbol{\psi}]-\mathbb{E}[q_{kl}|\psi_{k},\psi_{l}])\lambda(\pi_{l})\mathbb{E}[q_{ij}\lambda(\pi_{j})|\pi_{i}],
\end{eqnarray*}
which again satisfies $\max_{i,j,k,l\in\mathcal{N}:\{i,j\}\cap\{k,l\}=\emptyset}\mathbb{E}[(e_{ij,kl}^{q\epsilon})^{2}]\leq o(n^{-4}/K)$
under Assumptions \ref{ass:regular}(ii), \ref{ass:compact}(i), \ref{ass:theta}(i),
and \ref{ass:w}(i)-(iii). Therefore, following a similar argument
with $\lambda_{i}$ playing the role of $x_{i}$, we can show that
$B'_{K}(\boldsymbol{q}\boldsymbol{\epsilon}-\boldsymbol{\mu}_{0}^{\boldsymbol{q}\boldsymbol{\epsilon}})/n=o_{p}(1)$
for $\boldsymbol{q}=\boldsymbol{s}\boldsymbol{w}$. The proof is complete.
\end{proof}
\begin{lem}
\label{lem:aqb_consist}Suppose that $a_{i}$ and $b_{i}$ are independent
across $i$. Each $a_{i}$ ($b_{i}$) is either equal to $\epsilon_{i}$,
or a vector function of $\psi_{i}$ with $\text{\ensuremath{\max_{i\in\mathcal{N}}\|a_{i}\|}}<\infty$
(\textup{$\text{\ensuremath{\max_{i\in\mathcal{N}}\|b_{i}\|}}<\infty$}).
Define $\boldsymbol{a}=(a_{1},\dots,a_{n})'$ and $\boldsymbol{b}=(b_{1},\dots,b_{n})'$.
Let $\boldsymbol{q}=(q_{ij})$ denote a matrix that takes one of the
following forms: (a) $\boldsymbol{w}$, (b) $\boldsymbol{w}'\boldsymbol{w}$,
(c) $\boldsymbol{s}\boldsymbol{w}^{t}$, (d) $\boldsymbol{w}'\boldsymbol{s}\boldsymbol{w}^{t}$,
or (e) $(\boldsymbol{w}')^{r}\boldsymbol{s}'\boldsymbol{s}\boldsymbol{w}^{t}$,
$r,t=1,2$, where $\boldsymbol{s}=(I_{n}-\gamma_{1}\boldsymbol{w})^{-1}$.
Then we have \textup{$n^{-1}(\boldsymbol{a}'\boldsymbol{q}\boldsymbol{b}-\mathbb{E}[\boldsymbol{a}'\boldsymbol{q}\boldsymbol{b}])=o_{p}(1)$.}
\end{lem}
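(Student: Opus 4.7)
The plan is to apply Chebyshev's inequality: since the statement concerns convergence of a centered random variable, it suffices to show $\text{Var}(n^{-1}\boldsymbol{a}'\boldsymbol{q}\boldsymbol{b}) = o(1)$. Expanding
\begin{equation*}
\text{Var}\!\left(n^{-1}\boldsymbol{a}'\boldsymbol{q}\boldsymbol{b}\right) = n^{-2}\sum_{i,j,k,l}\text{Cov}(a_i b_j q_{ij},\, a_k b_l q_{kl}),
\end{equation*}
I would split the sum by whether $\{i,j\}\cap\{k,l\}=\emptyset$. For overlapping pairs ($O(n^{3})$ tuples), I would invoke Lemma \ref{lem:network_bd}, which propagates $\mathbb{E}[\|\boldsymbol{w}\|_{\infty}^{8}]=O(n^{-8})$ from Assumption \ref{ass:w}(ii) to every listed $\boldsymbol{q}$, yielding $\mathbb{E}[|q_{ij}q_{kl}|]\leq\mathbb{E}[\|\boldsymbol{q}\|_{\infty}^{2}]=O(n^{-2})$. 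Combined with Cauchy-Schwarz and the moment bounds on $a_i,b_j$ (bounded when they are functions of $\psi_i$, eighth moment finite when they equal $\epsilon_i$ via Assumption \ref{ass:smooth}(i)), each covariance is $O(n^{-2})$. The overlapping contribution to the variance is therefore $n^{-2}\cdot O(n^3)\cdot O(n^{-2}) = O(n^{-1})$.

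For disjoint pairs, the key step is to prove, for all-distinct $i,j,k,l$, the clean covariance identity
\begin{equation*}
\text{Cov}(a_i b_j q_{ij},\, a_k b_l q_{kl}) = \mathbb{E}\!\left[\tilde a_i \tilde b_j \tilde a_k \tilde b_l\!\left(\mathbb{E}[q_{ij}q_{kl}|\boldsymbol{\psi}] - \mathbb{E}[q_{ij}|\psi_i,\psi_j]\,\mathbb{E}[q_{kl}|\psi_k,\psi_l]\right)\right],
\end{equation*}
where $\tilde a_i \equiv \mathbb{E}[a_i|\psi_i]$ equals $a_i$ itself when $a_i$ is a function of $\psi_i$ and equals $\lambda(\pi_i)$ when $a_i=\epsilon_i$, with $\tilde b_j$ defined analogously. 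The derivation would use (i) $\boldsymbol{\epsilon}\perp\boldsymbol{q}\mid\boldsymbol{\psi}$ from Assumption \ref{ass:adj_exog} to decouple $(a_i,a_k,b_j,b_l)$ from $q_{ij}q_{kl}$ under $\mathbb{E}[\cdot|\boldsymbol{\psi}]$; (ii) the iid structure of Assumption \ref{ass:regular}(i) together with Assumption \ref{ass:limit} to factor $\mathbb{E}[a_i b_j a_k b_l|\boldsymbol{\psi}]=\tilde a_i \tilde b_j \tilde a_k \tilde b_l$ and to write $\mathbb{E}[a_i b_j q_{ij}] = \mathbb{E}[\tilde a_i\tilde b_j\mathbb{E}[q_{ij}|\psi_i,\psi_j]]$ via iterated expectations; and (iii) the resulting independence across disjoint index sets of $\tilde a_i\tilde b_j \mathbb{E}[q_{ij}|\psi_i,\psi_j]$ and $\tilde a_k\tilde b_l \mathbb{E}[q_{kl}|\psi_k,\psi_l]$, which makes the product of marginals equal the joint mean and cleanly cancels against the $\mathbb{E}[q_{ij}|\psi_i,\psi_j]\mathbb{E}[q_{kl}|\psi_k,\psi_l]$ piece. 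Given this identity, Cauchy-Schwarz on the right-hand side together with Assumption \ref{ass:w}(iv) gives $|\text{Cov}| = o(n^{-2})$, and the disjoint contribution to the variance is $n^{-2}\cdot O(n^4)\cdot o(n^{-2}) = o(1)$.

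The main obstacle will be establishing the covariance identity uniformly across all four combinations of $(a_i,b_i)$ being $\epsilon_i$ or a function of $\psi_i$, since the construction of $\tilde a_i,\tilde b_j$ and the invocation of conditional independence differ by case and must be matched carefully to the tower-of-expectations step that reduces $\mathbb{E}[q_{ij}|\boldsymbol{\psi}]$ to $\mathbb{E}[q_{ij}|\psi_i,\psi_j]$. Once past this, the remaining work is routine: diagonal boundary terms ($i=j$ or $k=l$) add at most $O(n^3)$ further terms absorbed by the overlapping bound, and all listed forms of $\boldsymbol{q}$ are handled uniformly because Lemma \ref{lem:network_bd} and Assumption \ref{ass:w}(iv) are stated to cover the full list. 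Combining the $O(n^{-1})$ overlapping bound and the $o(1)$ disjoint bound yields $\text{Var}(n^{-1}\boldsymbol{a}'\boldsymbol{q}\boldsymbol{b})=o(1)$, and Chebyshev concludes.
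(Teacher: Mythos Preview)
Your proposal is correct and follows essentially the same approach as the paper: a second-moment (Markov/Chebyshev) argument, the same split into overlapping versus disjoint index pairs, the same use of Lemma \ref{lem:network_bd} with Assumption \ref{ass:w}(ii) for the $O(n^{3})$ overlapping terms, and the same reduction of the disjoint covariance to the quantity controlled by Assumption \ref{ass:w}(iv). Your organization via $\tilde a_i=\mathbb{E}[a_i|\psi_i]$ and the single covariance identity is a slight repackaging of the paper's chain of equalities in (\ref{eq:aqb_ijkl}), but the ingredients and rates are identical.
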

\begin{proof}
The result can be proved element-wise. Without loss of generality,
we assume that $a_{i}$ and $b_{i}$ are scalars for notation simplicity.
By Markov's inequality, it suffices if the second moment of $n^{-1}(\boldsymbol{a}'\boldsymbol{q}\boldsymbol{b}-\mathbb{E}[\boldsymbol{a}'\boldsymbol{q}\boldsymbol{b}])$
is $o(1)$. The second moment is given by
\begin{eqnarray}
 &  & n^{-2}\mathbb{E}[(\boldsymbol{a}'\boldsymbol{q}\boldsymbol{b}-\mathbb{E}[\boldsymbol{a}'\boldsymbol{q}\boldsymbol{b}])^{2}]\nonumber \\
 & = & n^{-2}\mathbb{E}\left[\left(\sum_{i}\sum_{j}(q_{ij}a_{i}b_{j}-\mathbb{E}[q_{ij}a_{i}b_{j}]\right)^{2}\right]\\
 & = & n^{-2}\sum_{i}\sum_{j}\sum_{k}\sum_{l}(\mathbb{E}[q_{ij}q_{kl}a_{i}b_{j}a_{k}b_{l}]-\mathbb{E}[q_{ij}a_{i}b_{j}]\mathbb{E}[q_{kl}a_{k}b_{l}])\nonumber \\
 & = & n^{-2}\sum_{(i,j,k,l):\{i,j\}\cap\{k,l\}\neq\emptyset}(\mathbb{E}[q_{ij}q_{kl}a_{i}b_{j}a_{k}b_{l}]-\mathbb{E}[q_{ij}a_{i}b_{j}]\mathbb{E}[q_{kl}a_{k}b_{l}])\nonumber \\
 &  & +n^{-2}\sum_{(i,j,k,l):\{i,j\}\cap\{k,l\}=\emptyset}(\mathbb{E}[q_{ij}q_{kl}a_{i}b_{j}a_{k}b_{l}]-\mathbb{E}[q_{ij}a_{i}b_{j}]\mathbb{E}[q_{kl}a_{k}b_{l}]).\label{eq:aqb_var}
\end{eqnarray}
In the last expression, the first term sums over all indices $i$,
$j$, $k$, and $l$ such that $\{i,j\}$ and $\{k,l\}$ share at
least one common element, and the second term sums over all indices
$i$, $j$, $k$, and $l$ such that $\{i,j\}$ and $\{k,l\}$ do
not overlap.

The first sum in (\ref{eq:aqb_var}) consists of $O(n^{3})$ terms.
By Lemma \ref{lem:network_bd} and Assumption \ref{ass:w}(ii), $\mathbb{E}[\|\boldsymbol{q}\|_{\infty}^{4}]\leq Cn^{4}\mathbb{E}[\|\boldsymbol{w}\|_{\infty}^{8}]=O(n^{-4})$.
Therefore, we can bound
\begin{eqnarray*}
|\mathbb{E}[q_{ij}q_{kl}a_{i}b_{j}a_{k}b_{l}]| & \leq & \mathbb{E}[\|\boldsymbol{q}\|_{\infty}^{4}]^{1/2}\mathbb{E}[(a_{i}b_{j}a_{k}b_{l})^{2}]^{1/2}\\
 & \leq & C\mathbb{E}[\|\boldsymbol{q}\|_{\infty}^{4}]^{1/2}\mathbb{E}[\epsilon_{i}^{8}]^{1/2}\\
 & = & O(n^{-2})
\end{eqnarray*}
uniformly in $i$, $j$, $k$, and $l$, where the second inequality
holds by the definition of $a_{i}$ and $b_{i}$, and the last equality
follows from $\mathbb{E}[\epsilon_{i}^{8}]<\infty$ (Assumption \ref{ass:smooth}(i)).
Hence, the first sum in (\ref{eq:aqb_var}) is $n^{-2}\cdot O(n^{3})\cdot O(n^{-2})=o(1)$.

The second sum in (\ref{eq:aqb_var}) consists of $O(n^{4})$ terms.
For any disjoint $\{i,j\}$ and $\{k,l\}$, we can derive
\begin{eqnarray}
\mathbb{E}[q_{ij}q_{kl}a_{i}b_{j}a_{k}b_{l}] & = & \mathbb{E}[\mathbb{E}[q_{ij}q_{kl}|\boldsymbol{\psi},\boldsymbol{\epsilon}]a_{i}b_{j}a_{k}b_{l}]\nonumber \\
 & = & \mathbb{E}[\mathbb{E}[q_{ij}q_{kl}|\boldsymbol{\psi}]a_{i}b_{j}a_{k}b_{l}]\nonumber \\
 & = & \mathbb{E}[\mathbb{E}[q_{ij}|\psi_{i},\psi_{j}]\mathbb{E}[q_{kl}|\psi_{k},\psi_{l}]a_{i}b_{j}a_{k}b_{l}]+o(n^{-2})\nonumber \\
 & = & \mathbb{E}[\mathbb{E}[q_{ij}|\psi_{i},\psi_{j}]a_{i}b_{j}]\mathbb{E}[\mathbb{E}[q_{kl}|\psi_{k},\psi_{l}]a_{k}b_{l}]+o(n^{-2})\nonumber \\
 & = & \mathbb{E}[\mathbb{E}[q_{ij}|\psi_{i},\psi_{j},\epsilon_{i},\epsilon_{j}]a_{i}b_{j}]\mathbb{E}[\mathbb{E}[q_{kl}|\psi_{k},\psi_{l},\epsilon_{k},\epsilon_{l}]a_{k}b_{l}]+o(n^{-2})\nonumber \\
 & = & \mathbb{E}[q_{ij}a_{i}b_{j}]\mathbb{E}[q_{kl}a_{k}b_{l}]+o(n^{-2}).\label{eq:aqb_ijkl}
\end{eqnarray}
The second equality holds because $\boldsymbol{q}$ and $\boldsymbol{\epsilon}$
are independent conditional on $\boldsymbol{\psi}$. The third equality
follows because 
\begin{eqnarray*}
 &  & |\mathbb{E}[(\mathbb{E}[q_{ij}q_{kl}|\boldsymbol{\psi}]-\mathbb{E}[q_{ij}|\psi_{i},\psi_{j}]\mathbb{E}[q_{kl}|\psi_{k},\psi_{l}])a_{i}b_{j}a_{k}b_{l}]|\\
 & \leq & \mathbb{E}[|\mathbb{E}[q_{ij}q_{kl}|\boldsymbol{\psi}]-\mathbb{E}[q_{ij}|\psi_{i},\psi_{j}]\mathbb{E}[q_{kl}|\psi_{k},\psi_{l}]|^{2}]^{1/2}\mathbb{E}[(a_{i}b_{j}a_{k}b_{l})^{2}]^{1/2}\\
 & \leq & o(n^{-2})\mathbb{E}[\epsilon_{i}^{4}]=o(n^{-2})
\end{eqnarray*}
by the definition of $a_{i}$ and $b_{i}$ and Assumptions \ref{ass:w}(iv)
and \ref{ass:smooth}(i). The derivation also indicates that the $o(n^{-2})$
term is uniformly in $i$, $j$, $k$, and $l$ such that $\{i,j\}\cap\{k,l\}=\emptyset$.
The forth equality in (\ref{eq:aqb_ijkl}) holds because $\mathbb{E}[q_{ij}|\psi_{i},\psi_{j}]a_{i}b_{j}$
and $\mathbb{E}[q_{kl}|\psi_{k},\psi_{l}]a_{k}b_{l}$ are independent
for disjoint $\{i,j\}$ and $\{k,l\}$. The fifth equality in (\ref{eq:aqb_ijkl})
holds because from the conditional independence of $\boldsymbol{q}$
and $\boldsymbol{\epsilon}$ given $\boldsymbol{\psi}$ and i.i.d.
$\psi_{i}$ and $\epsilon_{i}$, we can show that $q_{ij}$ is independent
of $\epsilon_{i}$ and $\epsilon_{j}$ conditional on $\psi_{i}$
and $\psi_{j}$. Hence, the second sum in equation (\ref{eq:aqb_var})
is $n^{-2}\cdot O(n^{4})\cdot o(n^{-2})=o(1)$.
\end{proof}
\begin{lem}[Boundness of network]
\label{lem:network_bd}(i) For the matrix $\boldsymbol{s}=(I_{n}-\gamma_{1}\boldsymbol{w})^{-1}$,
we have $\interleave\boldsymbol{s}\interleave_{\infty}\leq C$ and
$\interleave\boldsymbol{s}\interleave_{1}\leq Cn\|\boldsymbol{w}\|_{\infty}$;
(ii) For the matrix $\boldsymbol{q}$ that takes the form of (a) $\boldsymbol{w}$,
(b) $\boldsymbol{w}'\boldsymbol{w}$, (c) $\boldsymbol{s}\boldsymbol{w}^{t}$,
(d) $\boldsymbol{w}'\boldsymbol{s}\boldsymbol{w}^{t}$, and (e) $(\boldsymbol{w}')^{r}\boldsymbol{s}'\boldsymbol{s}\boldsymbol{w}^{t}$,
$r,t=1,2$, where $\boldsymbol{s}$ is given in part (i), we have
$\|\boldsymbol{q}\|_{\infty}\leq Cn\|\boldsymbol{w}\|_{\infty}^{2}$.
\end{lem}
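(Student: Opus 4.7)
The plan is to prove the two parts separately, relying on the Neumann series expansion $\boldsymbol{s}=\sum_{k=0}^{\infty}\gamma_{1}^{k}\boldsymbol{w}^{k}$, which converges in the matrix norm $\interleave\cdot\interleave_{\infty}$ because $|\gamma_{1}|<1$ and $\interleave\boldsymbol{w}\interleave_{\infty}=1$ (Assumption \ref{ass:w}(i)). For part (i), the bound $\interleave\boldsymbol{s}\interleave_{\infty}\leq C$ follows immediately by submultiplicativity: $\interleave\boldsymbol{w}^{k}\interleave_{\infty}\leq\interleave\boldsymbol{w}\interleave_{\infty}^{k}=1$, so $\interleave\boldsymbol{s}\interleave_{\infty}\leq\sum_{k=0}^{\infty}|\gamma_{1}|^{k}=(1-|\gamma_{1}|)^{-1}$.

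For the bound on $\interleave\boldsymbol{s}\interleave_{1}$, I would separate the identity term from the Neumann tail and combine the elementwise inequality $\interleave A\interleave_{1}\leq n\|A\|_{\infty}$ with the decay $\|\boldsymbol{w}^{k}\|_{\infty}\leq\|\boldsymbol{w}\|_{\infty}$ for all $k\geq1$. The latter follows from $(\boldsymbol{w}^{k})_{ij}=\sum_{l}(\boldsymbol{w}^{k-1})_{il}w_{lj}$, giving $|(\boldsymbol{w}^{k})_{ij}|\leq\|\boldsymbol{w}\|_{\infty}\interleave\boldsymbol{w}^{k-1}\interleave_{\infty}\leq\|\boldsymbol{w}\|_{\infty}$. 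Summing, $\interleave\boldsymbol{s}\interleave_{1}\leq\interleave I_{n}\interleave_{1}+\sum_{k\geq1}|\gamma_{1}|^{k}\interleave\boldsymbol{w}^{k}\interleave_{1}\leq1+Cn\|\boldsymbol{w}\|_{\infty}$, which collapses to the stated $Cn\|\boldsymbol{w}\|_{\infty}$ under the maintained regime where $n\|\boldsymbol{w}\|_{\infty}$ is bounded away from zero (as for normalized group averages, where $w_{ij}=1/n_{g}\geq 1/n$).

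For part (ii), I would proceed case by case using the submultiplicative inequalities $\|AB\|_{\infty}\leq\interleave A\interleave_{\infty}\|B\|_{\infty}$ and $\|AB\|_{\infty}\leq\|A\|_{\infty}\interleave B\interleave_{1}$, together with the column-sum bound $\interleave\boldsymbol{w}\interleave_{1}\leq n\|\boldsymbol{w}\|_{\infty}$. For $\boldsymbol{w}'\boldsymbol{w}$, the direct calculation $|(\boldsymbol{w}'\boldsymbol{w})_{ij}|\leq\|\boldsymbol{w}\|_{\infty}\sum_{k}|w_{kj}|\leq\|\boldsymbol{w}\|_{\infty}\interleave\boldsymbol{w}\interleave_{1}\leq n\|\boldsymbol{w}\|_{\infty}^{2}$ delivers the bound. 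For products containing $\boldsymbol{s}$ and higher powers of $\boldsymbol{w}$, I would peel off factors one at a time: each factor of $\boldsymbol{w}'$ contributes $\interleave\boldsymbol{w}\interleave_{1}\leq n\|\boldsymbol{w}\|_{\infty}$, each factor of $\boldsymbol{s}'$ contributes $\interleave\boldsymbol{s}\interleave_{1}$ from part (i), each factor of $\boldsymbol{s}$ or $\boldsymbol{w}^{k}$ on the row side contributes $\interleave\cdot\interleave_{\infty}\leq C$, and the innermost factor is bounded elementwise by $\|\boldsymbol{w}\|_{\infty}$. The arithmetic then gives $\|\boldsymbol{q}\|_{\infty}\leq Cn\|\boldsymbol{w}\|_{\infty}^{2}$ in every case.

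The main technical nuisance is the bookkeeping in part (ii): for each of the five forms of $\boldsymbol{q}$, one must choose the correct side on which to split the product so that exactly one column-sum factor ($\interleave\boldsymbol{w}\interleave_{1}$ or $\interleave\boldsymbol{s}\interleave_{1}$) and one elementwise factor ($\|\boldsymbol{w}\|_{\infty}$) appear, with all other factors absorbed into constants via $\interleave\boldsymbol{s}\interleave_{\infty}\leq C$ and $\interleave\boldsymbol{w}\interleave_{\infty}=1$. No nontrivial inequality is required, but the argument must be carried out carefully in each case to avoid an extra factor of $n$; the forms involving $\boldsymbol{s}'$ are the most delicate because they require reusing the part-(i) bound on $\interleave\boldsymbol{s}\interleave_{1}$ rather than the cheap row-sum bound.
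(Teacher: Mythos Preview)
Your proposal is correct and follows essentially the same approach as the paper: Neumann series, the elementwise bound $\|\boldsymbol{w}^{k}\|_{\infty}\leq\|\boldsymbol{w}\|_{\infty}$ for $k\geq1$, and case-by-case use of $\|AB\|_{\infty}\leq\interleave A\interleave_{\infty}\|B\|_{\infty}$ together with $\interleave\boldsymbol{w}\interleave_{1}\leq n\|\boldsymbol{w}\|_{\infty}$. The only minor point is that you need not appeal to the group-average example to absorb the identity term in part (i): the inequality $n\|\boldsymbol{w}\|_{\infty}\geq\interleave\boldsymbol{w}\interleave_{\infty}=1$ holds directly from Assumption~\ref{ass:w}(i), which is how the paper handles it.
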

\begin{proof}
(i) Note that $\interleave I_{n}\interleave_{\infty}=1$ and $\interleave I_{n}\interleave_{1}=1$.
By $\interleave\boldsymbol{w}\interleave_{\infty}=1$ (Assumption
\ref{ass:w}(i)), for any $k\geq1$, we can bound $\interleave\boldsymbol{w}^{k}\interleave_{\infty}\leq\interleave\boldsymbol{w}\interleave_{\infty}^{k}=1$.
Hence, $\interleave\boldsymbol{s}\interleave_{\infty}\leq\sum_{k=0}^{\infty}|\gamma_{1}|^{k}\interleave\boldsymbol{w}^{k}\interleave_{\infty}\leq\sum_{k=0}^{\infty}|\gamma_{1}|^{k}=1/(1-\gamma_{1})<\infty$.
Moreover, for any $k\geq1$, we can bound $\|\boldsymbol{w}^{k}\|_{\infty}\leq\interleave\boldsymbol{w}\interleave_{\infty}^{k-1}\|\boldsymbol{w}\|_{\infty}=\|\boldsymbol{w}\|_{\infty}$
and hence $\interleave\boldsymbol{w}^{k}\interleave_{1}\leq n\|\boldsymbol{w}^{k}\|_{\infty}\leq n\|\boldsymbol{w}\|_{\infty}$.
Therefore, $\interleave\boldsymbol{s}\interleave_{1}\leq\sum_{k=0}^{\infty}|\gamma_{1}|^{k}\interleave\boldsymbol{w}^{k}\interleave_{1}\leq n\|\boldsymbol{w}\|_{\infty}/(1-\gamma_{1})$.

(ii) For case (a) with $\boldsymbol{q}=\boldsymbol{w}$, the result
follows immediately as $n\|\boldsymbol{w}\|_{\infty}\geq\interleave\boldsymbol{w}\interleave_{\infty}=1$.
For case (b) with $\boldsymbol{q}=\boldsymbol{w}'\boldsymbol{w}$,
we can bound $\|\boldsymbol{q}\|_{\infty}=\|\boldsymbol{w}'\boldsymbol{w}\|_{\infty}\leq\interleave\boldsymbol{w}'\interleave_{\infty}\|\boldsymbol{w}\|_{\infty}=\interleave\boldsymbol{w}\interleave_{1}\|\boldsymbol{w}\|_{\infty}\leq n\|\boldsymbol{w}\|_{\infty}^{2}$.
For case (c) with $\boldsymbol{q}=\boldsymbol{s}\boldsymbol{w}^{t}$,
$t=1,2$, we have $\|\boldsymbol{q}\|_{\infty}=\|\boldsymbol{s}\boldsymbol{w}^{t}\|_{\infty}\leq\sum_{k=0}^{\infty}|\gamma_{1}|^{k}\|\boldsymbol{w}^{t+k}\|_{\infty}\leq\|\boldsymbol{w}\|_{\infty}/(1-\gamma_{1})$.
For case (d) with $\boldsymbol{q}=\boldsymbol{w}'\boldsymbol{s}\boldsymbol{w}^{t}$,
$t=1,2$, we can derive $\|\boldsymbol{q}\|_{\infty}=\|\boldsymbol{w}'\boldsymbol{s}\boldsymbol{w}^{t}\|_{\infty}\leq\interleave\boldsymbol{w}\interleave_{1}\|\boldsymbol{s}\boldsymbol{w}^{t}\|_{\infty}\leq n\|\boldsymbol{w}\|_{\infty}^{2}/(1-\gamma_{1})$.
For case (e) with $\boldsymbol{q}=(\boldsymbol{w}')^{r}\boldsymbol{s}'\boldsymbol{s}\boldsymbol{w}^{t}$,
note that we can bound $\interleave\boldsymbol{sw}^{r}\interleave_{1}\leq\sum_{k=0}^{\infty}|\gamma_{1}|^{k}\interleave\boldsymbol{w}^{r+k}\interleave_{1}\leq n\|\boldsymbol{w}\|_{\infty}/(1-\gamma_{1})$
and thus $\|\boldsymbol{q}\|_{\infty}=\|(\boldsymbol{w}')^{r}\boldsymbol{s}'\boldsymbol{s}\boldsymbol{w}^{t}\|_{\infty}\leq\interleave(\boldsymbol{w}')^{r}\boldsymbol{s}'\interleave_{\infty}\|\boldsymbol{s}\boldsymbol{w}^{t}\|_{\infty}=\interleave\boldsymbol{sw}^{r}\interleave_{1}\|\boldsymbol{s}\boldsymbol{w}^{t}\|_{\infty}\leq n\|\boldsymbol{w}\|_{\infty}^{2}/(1-\gamma_{1})^{2}$.
\end{proof}

\subsubsection{Asymptotic Distribution of $\hat{\gamma}$}

Theorem \ref{thm:gamma_clt} is proved based on several lemmas presented
in Section \ref{sec:consist_pf} and later in this section. Table
\ref{tab:gamma_clt_pfstruct} states the relationships between Theorem
\ref{thm:gamma_clt} and these lemmas. 

\begin{table}
\centering

\caption{Relationships Between Theorem \ref{thm:gamma_clt} and Its Supporting
Lemmas}
\label{tab:gamma_clt_pfstruct}

\begin{tabular}{llll}
\toprule 
 & Referring to & Referring to & Referring to\tabularnewline
\midrule
Theorem \ref{thm:gamma_clt} & Lemma \ref{lem:m_asym_linear} & Lemma \ref{lem:muhat_emp} & Lemmas \ref{lem:wy_bd}, \ref{lem:betahat_dif}, \ref{lem:wt_betahat}\tabularnewline
 &  & Lemma \ref{lem:alpha} & Lemma \ref{lem:xt_consist}\tabularnewline
 & Lemma \ref{lem:m_clt} & Lemma \ref{lem:weightedU} & Lemma \ref{lem:network_bd}\tabularnewline
 &  & Lemma \ref{lem:qvv}  & Lemma \ref{lem:network_bd}\tabularnewline
\bottomrule
\end{tabular}
\end{table}

\begin{proof}[Proof of Theorem \ref{thm:gamma_clt}]
Recall from equation (\ref{eq:gamma_ols}) that
\[
\hat{\gamma}-\gamma_{0}=\left(\frac{1}{n}\sum_{i=1}^{n}(X_{i}-\hat{\mu}^{X}(\hat{\pi}_{i}))X'_{i}\right)^{-1}\frac{1}{n}\sum_{i=1}^{n}(X_{i}-\hat{\mu}^{X}(\hat{\pi}_{i}))\epsilon_{i}.
\]
By equation (\ref{eq:M_lln}),
\[
\frac{1}{n}\sum_{i=1}^{n}(X_{i}-\hat{\mu}^{X}(\hat{\pi}_{i}))X'_{i}=\frac{1}{n}\sum_{i=1}^{n}\mathbb{E}[(X_{i}-\mu_{0}^{X}(\pi_{i}))X'_{i}]+o_{p}(1).
\]
Denote $M_{n}\equiv\frac{1}{n}\sum_{i=1}^{n}\mathbb{E}[(X_{i}-\mu_{0}^{X}(\pi_{i}))X'_{i}]$.
Moreover, by Lemmas \ref{lem:m_asym_linear} and \ref{lem:m_clt},
\[
\Omega_{n}^{-1/2}\frac{1}{\sqrt{n}}\sum_{i=1}^{n}(X_{i}-\hat{\mu}^{X}(\hat{\pi}_{i}))\epsilon_{i}\overset{d}{\rightarrow}N(0,I_{d_{X}}),
\]
where $\Omega_{n}$ is defined in Lemma \ref{lem:m_clt}. From these
results and Slutsky's theorem, we obtain
\[
\sqrt{n}\Omega_{n}^{-1/2}M_{n}(\hat{\gamma}-\gamma_{0})\overset{d}{\rightarrow}N(0,I_{d_{X}}).
\]
\end{proof}
\begin{lem}[Asymptotically linear representation of the moment]
\label{lem:m_asym_linear}We have
\begin{equation}
\frac{1}{\sqrt{n}}\sum_{i=1}^{n}(X_{i}-\hat{\mu}^{X}(\hat{\pi}_{i}))\epsilon_{i}=\frac{1}{\sqrt{n}}\sum_{i=1}^{n}((X_{i}-\mu_{0}^{X}(\pi_{i}))\nu_{i}+M_{\theta}\phi_{\theta}(z_{i},\theta_{0}))+o_{p}(1),\label{eq:m.asym.linear}
\end{equation}
where $M_{\theta}=-\mathbb{E}[(\mathbb{E}[X_{i}|z_{i}]-\mu_{0}^{X}(\pi_{i}))\frac{\partial\lambda_{0}(\pi_{i})}{\partial\pi}\frac{\partial\pi(z_{i},g_{i},\theta_{0})}{\partial\theta}]$.
\end{lem}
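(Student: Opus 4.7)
The plan is to exploit the outcome decomposition $\epsilon_{i}=\lambda_{0}(\pi_{i})+\nu_{i}$ (with $\mathbb{E}[\nu_{i}\mid\pi_{i}]=0$) to split the target into
\[
\frac{1}{\sqrt{n}}\sum_{i=1}^{n}(X_{i}-\hat{\mu}^{X}(\hat{\pi}_{i}))\epsilon_{i}\;=\;A_{n}+B_{n},
\]
where $A_{n}\equiv n^{-1/2}\sum_{i}(X_{i}-\hat{\mu}^{X}(\hat{\pi}_{i}))\nu_{i}$ and $B_{n}\equiv n^{-1/2}\sum_{i}(X_{i}-\hat{\mu}^{X}(\hat{\pi}_{i}))\lambda_{0}(\pi_{i})$. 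My target is to show $A_{n}=n^{-1/2}\sum_{i}(X_{i}-\mu_{0}^{X}(\pi_{i}))\nu_{i}+o_{p}(1)$ and $B_{n}=M_{\theta}\,n^{-1/2}\sum_{i}\phi_{\theta}(z_{i},\theta_{0})+o_{p}(1)$; adding these gives the stated expansion.

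For $A_{n}$, I would add and subtract $\mu_{0}^{X}(\pi_{i})\nu_{i}$ and argue that the empirical process term $n^{-1/2}\sum_{i}(\hat{\mu}^{X}(\hat{\pi}_{i})-\mu_{0}^{X}(\pi_{i}))\nu_{i}$ is $o_{p}(1)$. This is a stochastic-equicontinuity claim and is the content of Lemma~\ref{lem:muhat_emp}; it relies on the conditional mean-zero property of $\nu_{i}$ (via Assumption~\ref{ass:adj_exog}), the sieve approximation rates in Assumption~\ref{ass:sieve}, the linear representation of $\hat{\theta}-\theta_{0}$ in Assumption~\ref{ass:theta}(ii), and the $L^{2}$-boundedness of the sieve coefficients established in Lemmas~\ref{lem:wy_bd}, \ref{lem:betahat_dif}, and \ref{lem:wt_betahat}.

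For $B_{n}$, I would further split
\[
B_{n}=\underbrace{\tfrac{1}{\sqrt{n}}\sum_{i}(X_{i}-\hat{\mu}^{X}(\hat{\pi}_{i}))\bigl[\lambda_{0}(\pi_{i})-\lambda_{0}(\hat{\pi}_{i})\bigr]}_{B_{1n}}+\underbrace{\tfrac{1}{\sqrt{n}}\sum_{i}(X_{i}-\hat{\mu}^{X}(\hat{\pi}_{i}))\lambda_{0}(\hat{\pi}_{i})}_{B_{2n}}.
\]
For $B_{2n}$, the first-order condition of the sieve regression gives $\sum_{i}(X_{i}-\hat{\mu}^{X}(\hat{\pi}_{i}))b^{K}(\hat{\pi}_{i})'=0$, so $B_{2n}=n^{-1/2}\sum_{i}(X_{i}-\hat{\mu}^{X}(\hat{\pi}_{i}))(\lambda_{0}(\hat{\pi}_{i})-b^{K}(\hat{\pi}_{i})'\beta^{\lambda})$ for any $\beta^{\lambda}$; choosing $\beta^{\lambda}$ from the best $L^{\infty}$ approximation to $\lambda_{0}$ and combining Assumption~\ref{ass:sieve}(ii) with the $L^{2}$-bound $n^{-1}\sum_{i}\|X_{i}-\hat{\mu}^{X}(\hat{\pi}_{i})\|^{2}=O_{p}(1)$ yields $B_{2n}=O_{p}(\sqrt{n}\,K^{-a})=o_{p}(1)$ under the standard sieve rate $\sqrt{n}K^{-a}\to0$. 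For $B_{1n}$, a first-order Taylor expansion combined with the chain rule $\hat{\pi}_{i}-\pi_{i}=\tfrac{\partial\pi(z_{i},g_{i},\theta_{0})}{\partial\theta}(\hat{\theta}-\theta_{0})+O_{p}(\|\hat{\theta}-\theta_{0}\|^{2})$ gives
\[
B_{1n}=-\Biggl(\tfrac{1}{n}\sum_{i}(X_{i}-\hat{\mu}^{X}(\hat{\pi}_{i}))\tfrac{\partial\lambda_{0}(\pi_{i})}{\partial\pi}\tfrac{\partial\pi(z_{i},g_{i},\theta_{0})}{\partial\theta}\Biggr)\sqrt{n}(\hat{\theta}-\theta_{0})+o_{p}(1).
\]
Here Lemma~\ref{lem:alpha} is invoked to establish that the sample average in parentheses converges in probability to $-M_{\theta}$: iterating expectations conditional on $z_{i}$ collapses $X_{i}$ to $\mathbb{E}[X_{i}\mid z_{i}]$ (since $\partial\lambda_{0}/\partial\pi\cdot\partial\pi/\partial\theta$ is $(z_{i},g_{i})$-measurable and then further to $z_{i}$-measurable after summing out $g_{i}$ via the exogeneity of $z_{i}$ in Assumption~\ref{ass:regular}(iii)), and Lemma~\ref{lem:xt_consist} delivers the uniform LLN for the weighted sample average with the rows of $X_{i}$ that involve $\boldsymbol{w}$. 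Plugging in the asymptotically linear representation of $\hat{\theta}-\theta_{0}$ from Assumption~\ref{ass:theta}(ii) then produces the $M_{\theta}\,\phi_{\theta}$ term.

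The hard part will be $B_{2n}$ and the empirical-average step inside $B_{1n}$: both have to be controlled even though $X_{i}$ contains the endogenous regressor $w_{i}\boldsymbol{y}$, so the rows are neither independent nor uniformly bounded. Controlling $B_{2n}$ forces the sieve rate $\sqrt{n}K^{-a}\to0$ through a Cauchy--Schwarz argument that combines Assumption~\ref{ass:sieve}(ii) with the second-moment bound on $\|X_{i}-\hat{\mu}^{X}(\hat{\pi}_{i})\|$ built up from Lemmas~\ref{lem:wy_bd}--\ref{lem:wt_betahat}; controlling the coefficient of $\sqrt{n}(\hat{\theta}-\theta_{0})$ in $B_{1n}$ requires replacing $\hat{\mu}^{X}(\hat{\pi}_{i})$ by $\mu_{0}^{X}(\pi_{i})$ inside a sum weighted by the (unbounded) derivatives of $\lambda_{0}$ and $\pi$, which is where the smoothness Assumption~\ref{ass:smooth}(ii) and the boundedness Assumption~\ref{ass:compact} enter decisively.
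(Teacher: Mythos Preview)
Your decomposition via $\epsilon_i = \lambda_0(\pi_i) + \nu_i$ is a genuinely different route from the paper's. The paper instead adds and subtracts $\mu_0^X(\hat{\pi}_i)$ and splits the result into (i) a leading term $n^{-1/2}\sum_i(X_i-\mu_0^X(\hat{\pi}_i))\epsilon_i$, (ii) a population integral capturing the effect of estimating $\mu_0^X$, handled by \citet{newey1994asymptotic} Proposition~4 to produce the correction $\alpha_0^X(\omega_i,\pi_i)=-(X_i-\mu_0^X(\pi_i))\lambda_0(\pi_i)$, (iii) a population integral capturing the generated-regressor correction for $\hat{\theta}$, handled by \citet{Hahn2013} Theorem~4 to produce $M_\theta$ directly, and (iv) the empirical-process remainder of Lemma~\ref{lem:muhat_emp}. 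The identity $(X_i-\mu_0^X(\pi_i))\epsilon_i+\alpha_0^X(\omega_i,\pi_i)=(X_i-\mu_0^X(\pi_i))\nu_i$ then delivers the $\nu_i$ term. Your approach is more elementary in that it avoids the black-box semiparametric theorems, trading them for the sieve-orthogonality trick in $B_{2n}$ and a direct Taylor expansion in $B_{1n}$; this makes the undersmoothing requirement visible rather than buried.

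A few caveats on your execution. First, Lemma~\ref{lem:muhat_emp} is stated for $\epsilon_i$, not $\nu_i$; you need the obvious adaptation, and crucially the fact that the population counterpart $\sqrt{n}\int(\hat{\mu}^X(\hat{\pi}_i)-\mu_0^X(\pi_i))\nu_i\,dF$ vanishes because $\mathbb{E}[\nu_i\mid z_i,g_i]=0$. Second, your $B_{2n}$ bound explicitly requires $\sqrt{n}K^{-a}\to 0$ \emph{and} a sieve approximation result for $\lambda_0$, neither of which appears in Assumption~\ref{ass:sieve}(ii) (which covers only $\mu_0^X$); the paper hides the same requirements inside the appeal to Newey's Proposition~4, so this is not a flaw unique to your route, but you should flag it. Third, your invocation of Lemma~\ref{lem:alpha} is loose: that lemma shows $n^{-1}\sum_i\partial\alpha^X/\partial\theta'=o_p(1)$, a zero-mean LLN, not the convergence of your $B_{1n}$ coefficient to $-M_\theta$. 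What you actually need is the LLN for $n^{-1}\sum_i(X_i-\mu_0^X(\pi_i))D_i$ with $D_i=\tfrac{\partial\lambda_0}{\partial\pi}\tfrac{\partial\pi}{\partial\theta}$ bounded---this is the second displayed equation in the \emph{proof} of Lemma~\ref{lem:alpha} (via Lemma~\ref{lem:xt_consist})---plus an analogue of Lemma~\ref{lem:muhat_t_consist} with $t_i=D_i$ to replace $\hat{\mu}^X(\hat{\pi}_i)$ by $\mu_0^X(\pi_i)$ inside the coefficient.
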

\begin{proof}
Consider the decomposition
\begin{align}
 & \frac{1}{\sqrt{n}}\sum_{i=1}^{n}(X_{i}-\hat{\mu}^{X}(\hat{\pi}_{i}))\epsilon_{i}\nonumber \\
= & \frac{1}{\sqrt{n}}\sum_{i=1}^{n}(X_{i}-\mu_{0}^{X}(\hat{\pi}_{i}))\epsilon_{i}+\sqrt{n}\int D(\epsilon_{i},\hat{\mu}^{X}(\hat{\pi}_{i})-\mu_{0}^{X}(\hat{\pi}_{i}))dF(z_{i},g_{i},\epsilon_{i})\nonumber \\
 & +\sqrt{n}\int D(\epsilon_{i},\mu^{X}(\hat{\pi}_{i})-\mu_{0}^{X}(\pi_{i}))dF(z_{i},g_{i},\epsilon_{i})\nonumber \\
 & +\frac{1}{\sqrt{n}}\sum_{i=1}^{n}(D(\epsilon_{i},\hat{\mu}^{X}(\hat{\pi}_{i})-\mu_{0}^{X}(\pi_{i}))-\int D(\epsilon_{i},\hat{\mu}^{X}(\hat{\pi}_{i})-\mu_{0}^{X}(\pi_{i}))dF(z_{i},g_{i},\epsilon_{i})),\label{eq:m.dcmp}
\end{align}
where $D(\epsilon_{i},\mu)=-\mu\epsilon_{i}$ for any $\mu\in\mathbb{R}^{d_{X}}$,
$\mu^{X}(\hat{\pi}_{i})=\mathbb{E}[X_{i}|\pi(z_{i},g_{i},\hat{\theta})]$,
and $F(z_{i},g_{i},\epsilon_{i})$ denotes the cdf of $(z_{i},g_{i},\epsilon_{i})$.
The first term is a leading term. The second term is to adjust for
the estimation of $\mu_{0}^{X}$, and the third term is to adjust
for the estimation of $\theta_{0}$ \citep{Hahn2013}, Both terms
contribute to the asymptotic distribution of $\hat{\gamma}$. The
last term is $o_{p}(1)$ by Lemma \ref{lem:muhat_emp}.

The second term in equation (\ref{eq:m.dcmp}) can be analyzed following
\citet{newey1994asymptotic}. For an arbitrary mean square integrable
function $\mu(\pi(z_{i},g_{i},\theta))\in\mathbb{R}^{d_{X}}$ that
is continuously differentiable in $s$, by iterated expectations $\mathbb{E}[D(\epsilon_{i},\mu(\pi(z_{i},g_{i},\theta))]=-\mathbb{E}[\mu(\pi(z_{i},g_{i},\theta))\mu^{\epsilon}(\pi(z_{i},g_{i},\theta))]$,
where $\mu^{\epsilon}(\pi(z_{i},g_{i},\theta))=\mathbb{E}[\epsilon_{i}|\pi(z_{i},g_{i},\theta)]$.
Hence, the correction term in \citet[Proposition 4]{newey1994asymptotic}
takes the form $\alpha^{X}(\omega_{i},\pi(z_{i},g_{i},\theta))=-(X_{i}-\mu^{X}(\pi(z_{i},g_{i},\theta)))\mu^{\epsilon}(\pi(z_{i},g_{i},\theta))$,
where $\omega_{i}=(X_{i},z_{i},g_{i}),$ and thus $\sqrt{n}\int D(\epsilon_{i},\hat{\mu}^{X}(\hat{\pi}_{i})-\mu^{X}(\hat{\pi}_{i}))dF(z_{i},g_{i},\epsilon_{i})=n^{-1/2}\sum_{i=1}^{n}\alpha_{0}^{X}(\omega_{i},\hat{\pi}_{i})$.
Also recall that $\hat{\pi}_{i}=\pi(z_{i},g_{i},\hat{\theta})$ and
$\pi_{i}=\pi(z_{i},g_{i},\theta_{0})$. Define $\alpha_{0}^{X}(\omega_{i},\pi_{i})=-(X_{i}-\mu_{0}^{X}(\pi_{i}))\lambda_{0}(\pi_{i})$.
Under Assumption \ref{ass:smooth}(ii), expanding $\alpha^{X}(\omega_{i},\hat{\pi}_{i})$
around $\theta_{0}$ yields $\alpha^{X}(\omega_{i},\hat{\pi}_{i})=\alpha_{0}^{X}(\omega_{i},\pi_{i})+\frac{\partial\alpha^{X}(\omega_{i},x_{i})}{\partial\theta'}(\hat{\theta}-\theta_{0})+o_{p}(\|\hat{\theta}-\theta_{0}\|)$.
By Lemma \ref{lem:alpha} and Assumption \ref{ass:theta}(ii), $n^{-1}\sum_{i=1}^{n}\frac{\partial\alpha^{X}(\omega_{i},\pi_{i})}{\partial\theta'}=o_{p}(1)$
and $\sqrt{n}(\hat{\theta}-\theta_{0})=O_{p}(1)$. We thus have $\sqrt{n}\int D(\epsilon_{i},\hat{\mu}^{X}(\hat{\pi}_{i})-\mu^{X}(\hat{\pi}_{i}))dF(z_{i},g_{i},\epsilon_{i})=n^{-1/2}\sum_{i=1}^{n}\alpha_{0}^{X}(\omega_{i},\pi_{i})+o_{p}(1).$

The third term in equation (\ref{eq:m.dcmp}) can be analyzed following
\citet{Hahn2013}. Observe that $\frac{\partial D(\epsilon_{i},\mu_{0}^{X}(\pi_{i}))}{\partial\mu^{X}}=-\epsilon_{i}$
and $\mathbb{E}[\frac{\partial D(\epsilon_{i},\mu_{0}^{X}(\pi_{i}))}{\partial\mu^{X}}|\pi_{i}=\pi]=-\lambda_{0}(s)$.
The first term in \citet[Theorem 4]{Hahn2013} takes the form $-\mathbb{E}[(\epsilon_{i}-\lambda_{0}(\pi_{i}))\frac{\partial\mu_{0}^{X}(\pi_{i})}{\partial\pi}\frac{\partial\pi(z_{i},g_{i},\theta_{0})}{\partial\theta}]=0,$
where we used $\epsilon_{i}-\lambda_{0}(\pi_{i})=\nu_{i}$ and $\mathbb{E}[\nu_{i}|z_{i},g_{i}]=0$.
Hence, by \citet[Theorem 4]{Hahn2013},
\begin{eqnarray*}
 &  & \sqrt{n}\int D(\epsilon_{i},\mu^{X}(\hat{\pi}_{i})-\mu_{0}^{X}(\pi_{i}))dF(z_{i},g_{i},\epsilon_{i})\\
 & = & -\mathbb{E}\left[(\mathbb{E}[X_{i}|z_{i}]-\mu_{0}^{X}(\pi_{i}))\frac{\partial\lambda_{0}(\pi_{i})}{\partial s}\frac{\partial\pi(z_{i},g_{i},\theta_{0})}{\partial\theta}\right]\sqrt{n}(\hat{\theta}-\theta_{0})=M_{\theta}\sqrt{n}(\hat{\theta}-\theta_{0}).
\end{eqnarray*}
Because $\sqrt{n}(\hat{\theta}-\theta_{0})=\frac{1}{\sqrt{n}}\sum_{i=1}^{n}\phi_{\theta}(z_{i},\theta_{0})+o_{p}(1)$,
we can represent $\sqrt{n}\int D(\epsilon_{i},\mu^{X}(\hat{\pi}_{i})-\mu_{0}^{X}(\pi_{i}))dF(z_{i},g_{i},\epsilon_{i})=n^{-1/2}\sum_{i=1}^{n}M_{\theta}\phi_{\theta}(z_{i},\theta_{0})+o_{p}(1)$.
Note that $(X_{i}-\mu_{0}^{X}(\pi_{i}))\epsilon_{i}+\alpha^{X}(\omega_{i},\pi_{i})=(X_{i}-\mu_{0}^{X}(\pi_{i}))\nu_{i}$.
Combining the results we obtain equation (\ref{eq:m.asym.linear}).
\end{proof}

\begin{lem}[CLT of the moment]
\label{lem:m_clt}Let $\Phi_{n}=n^{-1/2}\sum_{i=1}^{n}((X_{i}-\mu_{0}^{X}(\pi_{i}))\nu_{i}+M_{\theta}\phi_{\theta}(z_{i},\theta_{0}))$.
Then $\Omega_{n}^{-1/2}\Phi_{n}\overset{d}{\rightarrow}N(0,I_{d_{X}})$,
where $\Omega_{n}=n^{-1}\sum_{i=1}^{n}\mathbb{E}[\varphi_{n}(\tilde{\psi}_{i},\nu_{i})\varphi_{n}(\tilde{\psi}_{i},\nu_{i})']$,
$\varphi_{n}(\tilde{\psi}_{i},\nu_{i})\in\mathbb{R}^{d_{X}}$ is defined
in equation (\ref{eq:gamma_infl}), and $I_{d_{X}}$ is the $d_{X}\times d_{X}$
identity matrix.
\end{lem}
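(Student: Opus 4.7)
The plan is to reduce the moment $\Phi_n$ to a sum of terms indexed by individuals whose conditional structure (given $\tilde{\boldsymbol{\psi}}$) enables a standard triangular array CLT, and then to verify the Lyapunov condition via the moment bounds already established.

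First, I will decompose $\Phi_n$ according to the three components of $X_i = (w_i\boldsymbol{y}, w_i\boldsymbol{x}, x_i')'$. The pieces involving $x_i\nu_i$, $\mu_0^X(\pi_i)\nu_i$, and $M_\theta\phi_\theta(z_i,\theta_0)$ are sums of i.i.d.\ mean-zero terms (using Assumptions \ref{ass:regular}(i) and \ref{ass:theta}(ii), together with $\mathbb{E}[\nu_i|\pi_i]=0$), so they pose no difficulty. The nontrivial contributions are $n^{-1/2}\sum_i (w_i\boldsymbol{x})\nu_i$ and $n^{-1/2}\sum_i (w_i\boldsymbol{y})\nu_i$. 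For the latter, I will substitute the reduced-form expansion $\boldsymbol{w}\boldsymbol{y} = \boldsymbol{s}\boldsymbol{w}^2\boldsymbol{x}\gamma_2 + \boldsymbol{s}\boldsymbol{w}\boldsymbol{x}\gamma_3 + \boldsymbol{s}\boldsymbol{w}\boldsymbol{\epsilon}$, writing $\boldsymbol{\epsilon} = \boldsymbol{\lambda}(\boldsymbol{\pi}) + \boldsymbol{\nu}$, so that every remaining term takes the form $n^{-1/2}\sum_{i,j} q_{ij} a_j \nu_i$ for $\boldsymbol{q}\in\{\boldsymbol{w}, \boldsymbol{sw}, \boldsymbol{sw}^2\}$ and $a_j$ a bounded function of $\psi_j$ (or $a_j=\nu_j$ in the quadratic-in-$\boldsymbol{\nu}$ piece, whose mean is negligible by the argument at the end of the proof of Theorem~\ref{thm:gamma_consist}).

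Next, for each such weighted sum I will apply the generalized Hoeffding-type decomposition conditional on $\tilde{\boldsymbol{\psi}}$ introduced in Section~\ref{sec:Estimation}. Writing
\[
\frac{1}{\sqrt{n}}\sum_{i\ne j} q_{ij}a_j\nu_i = \frac{1}{\sqrt{n}}\sum_{i\ne j}\mathbb{E}[q_{ij}\mid \tilde\psi_i,\tilde\psi_j]a_j\nu_i + \frac{1}{\sqrt{n}}\sum_{i\ne j}(q_{ij}-\mathbb{E}[q_{ij}\mid \tilde\psi_i,\tilde\psi_j])a_j\nu_i,
\]
the first piece becomes a (conditionally) standard weighted $U$-statistic to which Lemma~\ref{lem:weightedU} applies, yielding an asymptotically linear representation with influence contributions of the form $\varphi_n^{(q)}(\tilde\psi_i,\nu_i) = n^{-1}\sum_j \mathbb{E}[q_{ij}\mid \tilde\psi_i,\tilde\psi_j]\,a_j\,\nu_i$ plus symmetric conjugates, matching the paper's definition of $\varphi_n(\tilde\psi_i,\nu_i)$. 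The second piece has variance that collapses by Assumption \ref{ass:w}(v) combined with part (iv), exactly as used in Lemma~\ref{lem:qvv}: the pair $(i,j)$ versus $(k,l)$ covariances split into disjoint-index and shared-index groups, the former controlled by part (iv) and the latter by part (v), giving an $o(1)$ second moment. Collecting terms identifies $\varphi_n(\tilde\psi_i,\nu_i)$ as stated and $\Omega_n = n^{-1}\sum_i \mathbb{E}[\varphi_n\varphi_n']$ as the correct normalizing covariance.

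Finally, the representation $\Omega_n^{-1/2}\Phi_n = \Omega_n^{-1/2} n^{-1/2}\sum_{i=1}^n \varphi_n(\tilde\psi_i,\nu_i) + o_p(1)$ is a triangular array of i.i.d.\ (across $i$) mean-zero vectors, since $(\tilde\psi_i,\nu_i)$ is i.i.d. I will verify a Lyapunov-type condition for, say, the $(2+\delta)$-moment with $\delta=2$: by Assumption~\ref{ass:w}(ii), the density of $\boldsymbol{w}$ implies $\mathbb{E}\|\varphi_n(\tilde\psi_i,\nu_i)\|^4 = O(1)$ uniformly in $n$ (using boundedness of $x_i$, $z_i$, $\lambda(\pi_i)$ and the eighth-moment bound on $\epsilon_i$ from Assumption \ref{ass:smooth}(i) together with footnote \ref{fn:v_bd}), while $\Omega_n$ is bounded below in the limit by the same positive-definite quantity that appears in Assumption~\ref{ass:rank}, so $\Omega_n^{-1/2}\varphi_n$ has uniformly bounded fourth moment; a Lindeberg--Feller CLT then delivers the asymptotic normality.

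The main obstacle will be the second step: handling the Hoeffding remainders with random weights and showing that the network dependence introduced through $\boldsymbol{w}$ dies out fast enough. This is precisely where Assumption~\ref{ass:w}(iii)--(v) does the heavy lifting, and the delicate point is that conditioning must be on $\tilde{\boldsymbol{\psi}}$ (not $\boldsymbol{\psi}$) so that $\boldsymbol{w}\perp\boldsymbol{\epsilon}$ holds and the shared-node dependence $\mathbb{E}[q_{ij}q_{ik}\mid\tilde{\boldsymbol{\psi}}]$ factorizes up to an $o(n^{-4})$ error, matching the form needed to generalize \citet[Section 3.7.5]{Lee1990} to stochastic weights.
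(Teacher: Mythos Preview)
Your overall strategy matches the paper's: decompose by components of $X_i$, expand $w_i\boldsymbol{y}$ via the reduced form, reduce each weighted sum $n^{-1/2}\sum_{i,j}q_{ij}t_j\nu_i$ to an i.i.d.\ sum over $i$ via a Hoeffding projection, and finish with a triangular-array CLT. However, two pieces are mishandled. First, the quadratic-in-$\boldsymbol{\nu}$ term $n^{-1/2}\boldsymbol{\nu}'\boldsymbol{sw}\boldsymbol{\nu}$ is not disposed of by ``the argument at the end of the proof of Theorem~\ref{thm:gamma_consist}'': that argument only shows the \emph{mean} is $O(n^{-1})$, which gives $\mathbb{E}[n^{-1/2}\boldsymbol{\nu}'\boldsymbol{sw}\boldsymbol{\nu}]=O(n^{-1/2})$ but says nothing about the variance at the $\sqrt{n}$ scale. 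You need Lemma~\ref{lem:qvv}, which bounds the second moment directly and yields $n^{-1/2}\boldsymbol{\nu}'\boldsymbol{sw}\boldsymbol{\nu}=o_p(1)$; your later invocation of Lemma~\ref{lem:qvv} is misplaced (you apply it to the remainder of your $q_{ij}$ split, where its structure does not fit). Second, the influence function you write down, $\varphi_n^{(q)}(\tilde\psi_i,\nu_i)=n^{-1}\sum_j\mathbb{E}[q_{ij}\mid\tilde\psi_i,\tilde\psi_j]a_j\nu_i$, is not a function of $(\tilde\psi_i,\nu_i)$ alone---it still depends on every $\tilde\psi_j$---so the i.i.d.\ structure you need for the CLT does not hold for this object. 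The correct projection from Lemma~\ref{lem:weightedU} is $h_n^*(\tilde\psi_i,\nu_i)=\sum_j\mathbb{E}[q_{ij}t_j\mid\tilde\psi_i,\nu_i]\,\nu_i$, and the ``symmetric conjugate'' you mention is identically zero here (footnote~\ref{fn:qji}).

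Relatedly, your preliminary split $q_{ij}=\mathbb{E}[q_{ij}\mid\tilde\psi_i,\tilde\psi_j]+\text{remainder}$ is a detour: Lemma~\ref{lem:weightedU} is stated and proved for the full random $q_{ij}$ (that is the whole point of extending \citet[Section 3.7.5]{Lee1990} to stochastic weights), so you can apply it directly to $W_n=n^{-1/2}\sum_{i,j}q_{ij}\nu_i t_j$ without the intermediate decomposition. Once you drop that split, invoke Lemma~\ref{lem:qvv} for the quadratic term, and state the projection correctly, your argument coincides with the paper's; your Lyapunov verification in place of the paper's Lindeberg check is a harmless variation under the available moment bounds.
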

\begin{proof}
Recall that $X_{i}=(w_{i}\boldsymbol{y},w_{i}\boldsymbol{x},x'_{i})'$.
While $x_{i}$ is i.i.d., both $w_{i}\boldsymbol{x}$ and $w_{i}\boldsymbol{y}$
are dependent across $i$. Lemma \ref{lem:weightedU} establishes
the Hoeffding projection 
\begin{align}
n^{-1/2}\sum_{i=1}^{n}(w_{i}\boldsymbol{x})'\nu_{i} & =n^{-1/2}\sum_{i=1}^{n}\sum_{j=1}^{n}\mathbb{E}[w_{ij}x_{j}|\tilde{\psi}_{i},\nu_{i}]\nu_{i}+o_{p}(1)\nonumber \\
 & =n^{-1/2}\sum_{i=1}^{n}\mathbb{E}[w_{i}\boldsymbol{x}|\tilde{\psi}_{i},\nu_{i}]\nu_{i}+o_{p}(1).\label{eq:wx*v}
\end{align}
As for $w_{i}\boldsymbol{y}$, note that $\sum_{i=1}^{n}(w_{i}\boldsymbol{y})'\nu_{i}=\boldsymbol{\nu}'\boldsymbol{w}\boldsymbol{y}=\boldsymbol{\nu}'\boldsymbol{s}(\boldsymbol{w}^{2}\boldsymbol{x}\gamma_{2}+\boldsymbol{w}\boldsymbol{x}\gamma_{3}+\boldsymbol{w}\boldsymbol{\lambda}+\boldsymbol{w}\boldsymbol{\nu})$.
Applying Lemma \ref{lem:weightedU} to each of $n^{-1/2}\boldsymbol{\nu}'\boldsymbol{s}\boldsymbol{w}^{2}\boldsymbol{x}$,
$n^{-1/2}\boldsymbol{\nu}'\boldsymbol{s}\boldsymbol{w}\boldsymbol{x}$,
and $n^{-1/2}\boldsymbol{\nu}'\boldsymbol{s}\boldsymbol{w}\boldsymbol{\lambda}$
with $q=\boldsymbol{s}\boldsymbol{w}^{2}$ or $\boldsymbol{s}\boldsymbol{w}$
and $t_{i}=x_{i}$ or $\lambda(\pi_{i})$, we derive
\begin{align*}
n^{-1/2}\boldsymbol{\nu}'\boldsymbol{s}\boldsymbol{w}^{2}\boldsymbol{x} & =n^{-1/2}\sum_{i=1}^{n}\sum_{j=1}^{n}\mathbb{E}[(\boldsymbol{s}\boldsymbol{w}^{2})_{ij}x_{j}|\tilde{\psi}_{i},\nu_{i}]\nu_{i}+o_{p}(1)\\
n^{-1/2}\boldsymbol{\nu}'\boldsymbol{s}\boldsymbol{w}\boldsymbol{x} & =n^{-1/2}\sum_{i=1}^{n}\sum_{j=1}^{n}\mathbb{E}[(\boldsymbol{s}\boldsymbol{w})_{ij}x_{j}|\tilde{\psi}_{i},\nu_{i}]\nu_{i}+o_{p}(1)\\
n^{-1/2}\boldsymbol{\nu}'\boldsymbol{s}\boldsymbol{w}\boldsymbol{\lambda} & =n^{-1/2}\sum_{i=1}^{n}\sum_{j=1}^{n}\mathbb{E}[(\boldsymbol{s}\boldsymbol{w})_{ij}\lambda(\pi_{j})|\tilde{\psi}_{i},\nu_{i}]\nu_{i}+o_{p}(1),
\end{align*}
where $(\boldsymbol{s}\boldsymbol{w}^{2})_{ij}$ denotes the $(i,j)$
element of $\boldsymbol{s}\boldsymbol{w}^{2}$ and similarly for $(\boldsymbol{s}\boldsymbol{w})_{ij}$.
Moreover, Lemma \ref{lem:qvv} establishes that $n^{-1/2}\boldsymbol{\nu}'\boldsymbol{s}\boldsymbol{w}\boldsymbol{\nu}=o_{p}(1)$.
Combining these results yields the Hoeffding decomposition of $n^{-1/2}\sum_{i=1}^{n}(w_{i}\boldsymbol{y})'\nu_{i}$
as follows:
\begin{equation}
n^{-1/2}\sum_{i=1}^{n}(w_{i}\boldsymbol{y})'\nu_{i}=n^{-1/2}\sum_{i=1}^{n}\mathbb{E}[(w_{i}\boldsymbol{y})^{*}|\tilde{\psi}_{i},\nu_{i}]\nu_{i}+o_{p}(1)\label{eq:wy*v}
\end{equation}
where $(w_{i}\boldsymbol{y})^{*}$ denotes the deterministic part
of $w_{i}\boldsymbol{y},$
\[
(w_{i}\boldsymbol{y})^{*}\equiv\sum_{j=1}^{n}(\boldsymbol{s}\boldsymbol{w}^{2})_{ij}x_{j}\gamma_{2}+\sum_{j=1}^{n}(\boldsymbol{s}\boldsymbol{w})_{ij}x_{j}\gamma_{3}+\sum_{j=1}^{n}(\boldsymbol{s}\boldsymbol{w})_{ij}\lambda(\pi_{j}).
\]
Furthermore, by the definition of $\mu_{0}^{w_{i}\boldsymbol{y}}(\pi_{i})$
and $\mathbb{E}[(\boldsymbol{s}\boldsymbol{w})_{ij}\nu_{j}|\pi_{i}]=\mathbb{E}[\mathbb{E}[(\boldsymbol{s}\boldsymbol{w})_{ij}\nu_{j}|\boldsymbol{\psi}]|\pi_{i}]=\mathbb{E}[\mathbb{E}[(\boldsymbol{s}\boldsymbol{w})_{ij}|\boldsymbol{\psi}]\mathbb{E}[\nu_{j}|\boldsymbol{\psi}]|\pi_{i}]=0$,
we can write
\begin{equation}
n^{-1/2}\sum_{i=1}^{n}\mu_{0}^{w_{i}\boldsymbol{y}}(\pi_{i})\nu_{i}=n^{-1/2}\sum_{i=1}^{n}\mathbb{E}[(w_{i}\boldsymbol{y})^{*}|\pi_{i}]\nu_{i}.\label{eq:Ewy*v}
\end{equation}

Define the function $\varphi_{n}(\tilde{\psi}_{i},\nu_{i})\in\mathbb{R}^{d_{X}}$
by
\begin{equation}
\varphi_{n}(\tilde{\psi}_{i},\nu_{i})=n^{-1/2}\left(\begin{pmatrix}\mathbb{E}[(w_{i}\boldsymbol{y})^{*}|\tilde{\psi}_{i},\nu_{i}]-\mathbb{E}[(w_{i}\boldsymbol{y})^{*}|\pi_{i}]\\
\mathbb{E}[(w_{i}\boldsymbol{x})'|\tilde{\psi}_{i},\nu_{i}]-\mathbb{E}[(w_{i}\boldsymbol{x})'|\pi_{i}]\\
x_{i}-\mathbb{E}[x{}_{i}|\pi_{i}]
\end{pmatrix}\nu_{i}+M_{\theta}\phi_{\theta}(z_{i},\theta_{0})\right)\label{eq:gamma_infl}
\end{equation}
It follows from (\ref{eq:wx*v})-(\ref{eq:Ewy*v}) that $\Phi_{n}=\sum_{i=1}^{n}\varphi_{n}(\tilde{\psi}_{i},\nu_{i})+o_{p}(1)$.
Because $\mathbb{E}[\nu_{i}|\boldsymbol{\psi}]=0$, and $\mathbb{E}[\phi_{\theta}(z_{i},\theta_{0})]=0$,
we can derive $\mathbb{E}[\varphi_{n}(\tilde{\psi}_{i},\nu_{i})]=0$.

Write $\varphi_{ni}=\varphi_{n}(\tilde{\psi}_{i},\nu_{i})$. Observe
that $\{\varphi_{ni},i=1,\dots,n\}$ forms a triangular array. We
apply the Lindeberg-Feller CLT to derive the asymptotic distribution
of $\sum_{i=1}^{n}\varphi_{ni}$. By the Cramer-Wold device it suffices
to show that $a^{\prime}\sum_{i=1}^{n}\varphi_{ni}$ satisfies the
Lindeberg condition for any $d_{X}\times1$ vector of constants $a\in\mathbb{R}^{d_{X}}$.
The Lindeberg condition is that for any $\kappa>0$, $\lim_{n\rightarrow\infty}\sum_{i=1}^{n}\mathbb{E}[\frac{(a^{\prime}\varphi_{ni})^{2}}{a^{\prime}\Omega_{n}a}1\{|a^{\prime}\varphi_{ni}|\geq\kappa\sqrt{a^{\prime}\Omega_{n}a}\}]=0$.
The sum is bounded by $\mathbb{E}[\sum_{i}\frac{(a^{\prime}\varphi_{ni})^{2}}{a^{\prime}\Omega_{n}a}1\{\max_{i}|a^{\prime}\varphi_{ni}|\geq\kappa\sqrt{a^{\prime}\Omega_{n}a}\}]$,
where the random variable $\sum_{i}\frac{(a^{\prime}\varphi_{ni})^{2}}{a^{\prime}\Omega_{n}a}$
has a finite expectation and is therefore $O_{p}(1)$. Moreover, we
can derive $\max_{i}|a^{\prime}\varphi_{ni}|=o_{p}(1)$.\footnote{By Assumptions \ref{ass:compact}, \ref{ass:w}(v), and \ref{ass:smooth}(i),
the components of $(\tilde{\psi}_{i},\nu_{i})$ are either bounded
or have finite fourth moment. By Assumption \ref{ass:w}(ii), we can
bound $\mathbb{E}[\max_{i}(a^{\prime}\varphi_{ni})^{2}]\leq\|a\|^{2}\mathbb{E}[\max_{i}\|\varphi_{ni}\|^{2}]\leq O(n^{-1})=o(1)$.} Therefore $\sum_{i}\frac{(a^{\prime}\varphi_{ni})^{2}}{a^{\prime}\Omega_{n}a}1\{\max_{i}|a^{\prime}\varphi_{ni}|\geq\kappa\sqrt{a^{\prime}\Omega_{n}a}\}=O_{p}(1)o_{p}(1)=o_{p}(1)$.
This random variable is bounded by $\sum_{i}\frac{(a^{\prime}\varphi_{ni})^{2}}{a^{\prime}\Omega_{n}a}$
which has a finite expectation. By dominated convergence, the Lindeberg
condition is satisfied. By Lindeberg-Feller CLT, $\Omega_{n}^{-1/2}\Phi_{n}=\Omega_{n}^{-1/2}\sum_{i=1}^{n}\varphi_{n}(\tilde{\psi}_{i},\nu_{i})+o_{p}(1)\overset{d}{\rightarrow}N(0,I_{d_{X}})$.
\end{proof}

\begin{lem}
\label{lem:muhat_emp}
\[
\frac{1}{\sqrt{n}}\sum_{i=1}^{n}(D(\epsilon_{i},\hat{\mu}^{X}(\hat{\pi}_{i})-\mu_{0}^{X}(\pi_{i}))-\int D(\epsilon_{i},\hat{\mu}^{X}(\hat{\pi}_{i})-\mu_{0}^{X}(\pi_{i}))dF(z_{i},g_{i},\epsilon_{i}))=o_{p}(1).
\]
\end{lem}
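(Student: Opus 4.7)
The plan exploits the linearity of $D(\epsilon_i,\mu)=-\mu\epsilon_i$ in $\mu$, which recasts the claim as
\begin{equation*}
\frac{1}{\sqrt{n}}\sum_{i=1}^{n}\Big[\big(\hat{\mu}^X(\hat{\pi}_i)-\mu_0^X(\pi_i)\big)\epsilon_i - \mathbb{E}\big[(\hat{\mu}^X(\hat{\pi}(z,g,\hat{\theta}))-\mu_0^X(\pi(z,g,\theta_0)))\epsilon\big]\Big]=o_p(1),
\end{equation*}
where the expectation treats $\hat{\mu}^X$ and $\hat{\theta}$ as fixed functionals of the data. I would decompose
$\hat{\mu}^X(\hat{\pi}_i)-\mu_0^X(\pi_i)=A_i+B_i+C_i$
with $A_i=b^K(\hat{\pi}_i)'(\hat{\beta}^X(\hat{\boldsymbol{\pi}})-\beta^X)$ (sieve coefficient error), $B_i=[b^K(\hat{\pi}_i)-b^K(\pi_i)]'\beta^X$ (index estimation error), and $C_i=b^K(\pi_i)'\beta^X-\mu_0^X(\pi_i)$ (sieve approximation error), then handle the three pieces separately using Chebyshev-type variance bounds on each centered empirical average.

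Piece $C_i$ is uniformly bounded by $O(K^{-a})$ via Assumption \ref{ass:sieve}(ii), so the variance of the corresponding centered average is $O(K^{-2a}\mathbb{E}[\epsilon_i^{2}])=o(1)$ under Assumption \ref{ass:smooth}(i). For $B_i$, the mean value theorem together with Assumption \ref{ass:sieve}(iv) and the $\sqrt{n}$-consistency of $\hat{\theta}$ (Assumption \ref{ass:theta}) gives $\sup_i|B_i|=O_p(\varrho_1(K)/\sqrt{n})$, and the rate $\varrho_1(K)/\sqrt{n}\to 0$ in Assumption \ref{ass:sieve}(iv) makes the implied variance $o(1)$. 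For the coefficient error, I would rewrite $n^{-1/2}\sum_i A_i\epsilon_i=(\hat{\beta}^X(\hat{\boldsymbol{\pi}})-\beta^X)'\hat{B}_K'\boldsymbol{\epsilon}/\sqrt{n}$ and bound it by Cauchy--Schwarz: Lemmas \ref{lem:Q_K} and \ref{lem:Q_Khat} together with a variance calculation yield $\|\hat{B}_K'\boldsymbol{\epsilon}/\sqrt{n}\|=O_p(\sqrt{K})$, while a sharpened rate $\|\hat{\beta}^X(\hat{\boldsymbol{\pi}})-\beta^X\|=O_p(\sqrt{K/n}+K^{-a}+\varrho_1(K)/\sqrt{n})$ (obtained by refining Lemmas \ref{lem:betahat_dif} and \ref{lem:wt_betahat} with the usual sieve mean-square projection bound) makes the product $o_p(1)$ under Assumption \ref{ass:sieve}. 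The centering term is controlled by the same Cauchy--Schwarz bound applied to the population analog.

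The main obstacle is that $\hat{\beta}^X$ and $\hat{\theta}$ are sample-dependent, so the $i$-th summand is not independent of the fitted coefficients and I cannot simply condition and apply a CLT. The resolution is that the linearity of $D$ in $\mu$ plus the uniform variance control above is enough: the class $\{b^K(\cdot)'\beta:\beta\in\mathbb{R}^K\}$ has complexity of order $K$, and the bound on the stochastic term $\|\hat{B}_K'\boldsymbol{\epsilon}/\sqrt{n}\|$ is uniform in $\beta$. A subtler point is that $\hat{\mu}^X$ is computed from $\boldsymbol{X}$, which contains $w_i\boldsymbol{y}$ and $w_i\boldsymbol{x}$; the network-related Assumption \ref{ass:w}(ii)--(v) must be invoked to verify that the variance calculation giving $\mathbb{E}[\|\hat{B}_K'\boldsymbol{\epsilon}/\sqrt{n}\|^{2}]=O(K)$ remains valid despite the cross-observation dependence induced by the adjacency matrix, mirroring how those conditions are used in Lemma \ref{lem:wt_betahat}.
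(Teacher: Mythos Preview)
Your decomposition and the $B_i$, $C_i$ pieces are fine, but the Cauchy--Schwarz route for $A_i$ has a genuine gap. You need $\|\hat{\beta}^X(\hat{\boldsymbol{\pi}})-\beta^X\|\cdot O_p(\sqrt{K})=o_p(1)$, i.e., a rate $\|\hat{\beta}^X-\beta^X\|=o_p(K^{-1/2})$. The ``usual sieve mean-square projection bound'' you invoke delivers this for the i.i.d.\ component $x_i$, but not for $w_i\boldsymbol{x}$ and $w_i\boldsymbol{y}$. Inspect the proof of Lemma~\ref{lem:wt_betahat}: the disjoint-index contribution to $\mathbb{E}\|B_K'(\boldsymbol{wt}-\boldsymbol{\mu}_0^{\boldsymbol{wt}})/n\|^2$ is bounded there by $n^{2}\cdot\sqrt{K}\cdot o(n^{-2}/\sqrt{K})=o(1)$ under Assumption~\ref{ass:w}(iii), where the $K$-dependence built into that assumption exactly cancels the $\sqrt{K}$ coming from $\mathbb{E}[(b^K(\pi_i)'b^K(\pi_k))^2]=K$. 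No residual rate in $K$ survives, so Lemma~\ref{lem:wt_betahat} yields only $\|\hat{\beta}^{\boldsymbol{wt}}-\beta^{\boldsymbol{wt}}\|=o_p(1)$, not $o_p(K^{-1/2})$. ``Refining'' it to your claimed rate would require strengthening Assumption~\ref{ass:w}(iii) to something like $o(n^{-4}/K^2)$, which the paper does not impose. A secondary issue: even in the i.i.d.\ case, your product bound $O_p(\sqrt{K/n}+K^{-a})\cdot O_p(\sqrt{K})$ needs $a>1/2$ for the $K^{1/2-a}$ term to vanish, whereas Assumption~\ref{ass:sieve}(ii) assumes only $a>0$.

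The paper sidesteps both problems by writing the claim as $\mathbb{G}_n(\hat{\mu}^X(\hat{\boldsymbol{\pi}}))-\mathbb{G}_n(\mu_0^X(\boldsymbol{\pi}))$ for the empirical process $\mathbb{G}_n(\mu)=n^{-1/2}\sum_i(D(\epsilon_i,\mu)-\mathbb{E}[D(\epsilon_i,\mu)])$, invoking stochastic equicontinuity of $\mathbb{G}_n$ in the $L_2$ pseudometric (via the linearity of $D$ in $\mu$ and $\mathbb{E}[\epsilon_i^2]<\infty$; \citealp{Andrews1994}), and then verifying only $\int\|\hat{\mu}^X(\hat{\pi}_i)-\mu_0^X(\pi_i)\|^2\,dF(z_i,g_i)=o_p(1)$. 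That last step requires merely $\|\hat{\beta}^X-\beta^X\|=o_p(1)$ and $K^{-a}\to 0$, which is exactly what Lemmas~\ref{lem:betahat_dif}--\ref{lem:wt_betahat} and Assumption~\ref{ass:sieve}(ii) deliver. The equicontinuity framing is therefore not cosmetic here: it is what decouples the argument from any rate on the sieve coefficient error, which your direct approach cannot avoid.
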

\begin{proof}
Let $\mu=\mu(\pi(z_{i},g_{i},\theta))\in\mathbb{R}^{d_{X}}$ be a
function of $\pi(z_{i},g_{i},\theta)$. Define the empirical process
$\mathbb{G}_{n}(\mu)=\frac{1}{\sqrt{n}}\sum_{i}(D(\epsilon_{i},\mu)-\mathbb{E}[D(\epsilon_{i},\mu)])$
indexed by $\mu$. We can represent the left-hand side of the above
equation as $\mathbb{G}_{n}(\hat{\mu}^{X}(\hat{\boldsymbol{\pi}}))-\mathbb{G}_{n}(\mu_{0}^{X}(\boldsymbol{\pi}))$.

Observe that $D(\epsilon_{i},\mu)=-\mu\epsilon_{i}$ is linear in
$\mu$. This together with the stochastic boundedness of $X_{i}$
and $\mathbb{E}[\epsilon_{i}^{2}]<\infty$ (Lemma \ref{lem:wy_bd}
and Assumptions \ref{ass:compact}(ii), \ref{ass:w}(i), and \ref{ass:smooth}(i))
implies that the empirical process $\mathbb{G}_{n}(\mu)$ is stochastically
equicontinuous under $L_{2}$ norm \citep[Theorems 1-2]{Andrews1994}.
It remains to show that $\int\|\hat{\mu}^{X}(\hat{\pi}_{i})-\mu_{0}^{X}(\pi_{i}))\|^{2}dF(z_{i},g_{i})=o_{p}(1)$,
where $F(z_{i},g_{i})$ denotes the cdf of $(z_{i},g_{i})$. We prove
it following \citet[Theorem 1]{newey_convergence_1997}.

By the triangle inequality and $(a+b+c)^{2}\leq3(a^{2}+b^{2}+c^{2})$,
we derive
\begin{eqnarray}
 &  & \int\|\hat{\mu}^{X}(\hat{\pi}_{i})-\mu_{0}^{X}(\pi_{i}))\|^{2}dF(z_{i},g_{i})\nonumber \\
 & \leq & 3\int(\|\hat{\beta}^{X}(\hat{\boldsymbol{\pi}})'(b^{K}(\hat{\pi}_{i})-b^{K}(\pi_{i}))\|^{2}+\|(\hat{\beta}^{X}(\hat{\boldsymbol{\pi}})-\beta^{X})'b^{K}(\pi_{i})\|^{2}\nonumber \\
 &  & \|\beta^{X\prime}b^{K}(\pi_{i})-\mu_{0}^{X}(\pi_{i})\|^{2})dF(z_{i},g_{i}).\label{eq:muhat_L2}
\end{eqnarray}
Consider the three terms in the last equation. The first term satisfies{\small
\[
\int\|\hat{\beta}^{X}(\hat{\boldsymbol{\pi}})'(b^{K}(\hat{\pi}_{i})-b^{K}(\pi_{i}))\|^{2}dF(z_{i},g_{i})\leq O_{p}(\varrho_{1}(K)^{2})\int\max_{1\leq i\leq n}\|\hat{\pi}_{i}-\pi_{i}\|^{2}dF(z_{i},g_{i})=O_{p}(\varrho_{1}(K)^{2}/n),
\]
}where the inequality holds by equation (\ref{eq:betahat_tauhat}),
the mean-value theorem and Assumption \ref{ass:sieve}(iv), and the
equality holds because the $\sqrt{n}$-consistency of $\hat{\theta}$
and boundedness of $z$ imply that $\max_{1\leq i\leq n}\|\hat{\pi}_{i}-\pi_{i}\|=O_{p}(n^{-1/2})$.
As for the second term in (\ref{eq:muhat_L2}), by $\mathbb{E}[b^{K}(\pi_{i})b^{K\prime}(\pi_{i})]=I_{K}$
we obtain
\begin{eqnarray*}
 &  & \int\|(\hat{\beta}^{X}(\hat{\boldsymbol{\pi}})-\beta^{X})'b^{K}(\pi_{i})\|^{2}dF(z_{i},g_{i})\\
 & = & \text{tr}((\hat{\beta}^{X}(\hat{\boldsymbol{\pi}})-\beta^{X})'\int b^{K}(\pi_{i})b^{K\prime}(\pi_{i})dF(z_{i},g_{i})(\hat{\beta}^{X}(\hat{\boldsymbol{\pi}})-\beta^{X}))\\
 & = & \|\hat{\beta}^{X}(\hat{\boldsymbol{\pi}})-\beta^{X}\|^{2}=O_{p}(\varrho_{1}(K)^{2}/n)+o_{p}(1),
\end{eqnarray*}
where the last equality follows from $\|\hat{\beta}^{X}(\hat{\boldsymbol{\pi}})-\beta^{X}\|^{2}\leq2(\|\hat{\beta}^{X}(\hat{\boldsymbol{\pi}})-\hat{\beta}^{X}(\boldsymbol{\pi})\|^{2}+\|\hat{\beta}^{X}(\boldsymbol{\pi})-\beta^{X}\|^{2})$,
Lemmas \ref{lem:betahat_dif} and \ref{lem:wt_betahat}, and \citet[Lemma 15.3]{liracine2007}.
The third term in (\ref{eq:muhat_L2}) has the bound $\int\|\beta^{X\prime}b^{K}(\pi_{i})-\mu_{0}^{X}(\pi_{i})\|^{2}dF(z_{i},g_{i})\leq\sup_{\pi}\|\beta^{X\prime}b^{K}(\pi)-\mu_{0}^{X}(\pi)\|=O(K^{-2a})$
by Assumption \ref{ass:sieve}(ii). Combining the results yields $\int\|\hat{\mu}^{X}(\hat{\pi}_{i})-\mu_{0}^{X}(\pi_{i}))\|^{2}dF(z_{i},g_{i})=o_{p}(1)$
and $\mathbb{G}_{n}(\hat{\mu}^{X}(\hat{\boldsymbol{\pi}}))-\mathbb{G}_{n}(\mu_{0}^{X}(\boldsymbol{\pi}))=o_{p}(1)$.
\end{proof}
\begin{lem}
\label{lem:alpha}$\frac{1}{n}\sum_{i=1}^{n}\frac{\partial\alpha^{X}(\omega_{i},\pi_{i})}{\partial\theta'}=o_{p}(1)$.
\end{lem}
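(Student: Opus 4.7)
The plan is to establish Lemma \ref{lem:alpha} in three steps: first compute the derivative explicitly via the chain rule, then show that its population mean vanishes at $\theta_{0}$, and finally invoke a law of large numbers that accommodates the non-i.i.d.\ structure of $X_{i}$ to conclude the sample average is $o_{p}(1)$.

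First, I would apply the chain rule. Since $\alpha^{X}(\omega_{i},\pi)=-(X_{i}-\mu_{0}^{X}(\pi))\lambda_{0}(\pi)$ and $\pi_{i}=\pi(z_{i},g_{i},\theta_{0})$, we obtain
\[
\frac{\partial\alpha^{X}(\omega_{i},\pi_{i})}{\partial\theta'}=\left[\frac{\partial\mu_{0}^{X}(\pi_{i})}{\partial\pi'}\lambda_{0}(\pi_{i})-(X_{i}-\mu_{0}^{X}(\pi_{i}))\frac{\partial\lambda_{0}(\pi_{i})}{\partial\pi'}\right]\frac{\partial\pi(z_{i},g_{i},\theta_{0})}{\partial\theta'}.
\]
The partials $\partial\mu_{0}^{X}/\partial\pi'$ and $\partial\lambda_{0}/\partial\pi'$ exist and are continuous in $\pi$ by Assumption \ref{ass:smooth}(ii), while $\partial\pi(z_{i},g_{i},\theta_{0})/\partial\theta'$ is a bounded function of $(z_{i},g_{i})$ given the linear structure of $\tau_{ig}$ and $\iota_{ig}$ in $\theta$ combined with Assumption \ref{ass:compact}(i).

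Second, I would show that $\mathbb{E}[\partial\alpha^{X}(\omega_{i},\pi_{i})/\partial\theta']=0$. The Newey (1994) correction $\alpha^{X}$ is constructed so that the identity $\mathbb{E}[\alpha^{X}(\omega_{i},\pi(z_{i},g_{i},\theta))]=0$ holds for all $\theta$ in a neighborhood of $\theta_{0}$ when the conditional expectations $\mu^{X}$ and $\mu^{\epsilon}$ are interpreted as the true conditional expectations given the $\theta$-dependent index $\pi(z_{i},g_{i},\theta)$; this follows from iterated expectations since $\mathbb{E}[X_{i}-\mu^{X}(\pi(z_{i},g_{i},\theta))\mid\pi(z_{i},g_{i},\theta)]=0$. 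Differentiating this identity at $\theta_{0}$ and interchanging differentiation and expectation---justified by Assumption \ref{ass:smooth}(ii), the compactness in Assumption \ref{ass:theta}(i), the boundedness in Assumption \ref{ass:compact}, and dominated convergence---yields the population moment condition $\mathbb{E}[\partial\alpha^{X}/\partial\theta']=0$.

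Finally, I would invoke a law of large numbers to establish sample-mean convergence to this zero expectation. The term $\partial\mu_{0}^{X}(\pi_{i})/\partial\pi'\cdot\lambda_{0}(\pi_{i})\cdot\partial\pi(z_{i},g_{i},\theta_{0})/\partial\theta'$ is a bounded i.i.d.\ function of $(z_{i},g_{i})$ under Assumptions \ref{ass:regular}(i) and \ref{ass:limit}, and converges to its mean by Khintchine's LLN. The second term involves $(X_{i}-\mu_{0}^{X}(\pi_{i}))$ multiplied by a bounded function of $(z_{i},g_{i})$: for the $x_{i}$ component this reduces to an i.i.d.\ sum, while for the $w_{i}\boldsymbol{x}$ and $w_{i}\boldsymbol{y}$ components the weighted sums over network connections converge to their expectations via Lemma \ref{lem:aqb_consist} under Assumption \ref{ass:w}(ii)--(iv), analogous to the treatment of cases (c), (g), and (h) in Lemma \ref{lem:xt_consist}. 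The main obstacle will be the second step---carefully fixing the interpretation of $\alpha^{X}$ so that the Neyman-orthogonality identity holds locally in $\theta$ and then rigorously justifying the differentiation-expectation interchange---since under a naive interpretation where $\mu_{0}^{X}$ and $\lambda_{0}$ are held fixed while only the argument $\pi(z_{i},g_{i},\theta)$ varies, the expectation need not vanish.
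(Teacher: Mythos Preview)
Your proposal is correct and follows essentially the same approach as the paper's proof: compute the $\theta$-derivative of $\alpha^{X}$ via the chain rule, establish the mean-zero identity $\mathbb{E}[\alpha^{X}(\omega_{i},\pi(z_{i},g_{i},\theta))]=0$ for all $\theta$ by iterated expectations (with $\mu^{X}$ and $\mu^{\epsilon}$ interpreted as the true conditional expectations given the $\theta$-dependent index), differentiate under the expectation, and then apply the LLN to the i.i.d.\ piece together with the Lemma~\ref{lem:xt_consist}/\ref{lem:aqb_consist} machinery for the $X_{i}$-dependent piece. Your final remark on the interpretation of $\alpha^{X}$ correctly anticipates and resolves the only subtlety; the paper handles it identically without elaboration.
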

\begin{proof}
Recall that $\alpha^{X}(\omega_{i},\pi(z_{i},g_{i},\theta))=-(X_{i}-\mu^{X}(\pi(z_{i},g_{i},\theta)))\mu^{\epsilon}(\pi(z_{i},g_{i},\theta))$,
where we have $\mu^{X}(\pi(z_{i},g_{i},\theta))=\mathbb{E}[X_{i}|\pi(z_{i},g_{i},\theta)]$
and $\mu^{\epsilon}(\pi(z_{i},g_{i},\theta))=\mathbb{E}[\epsilon_{i}|\pi(z_{i},g_{i},\theta)].$
By iterated expectations $\mathbb{E}[\alpha^{X}(\omega_{i},\pi(z_{i},g_{i},\theta))]=0$,
so $\mathbb{E}[\partial\alpha^{X}(\omega_{i},\pi_{i})/\partial\theta']=\partial\mathbb{E}[\alpha^{X}(\omega_{i},\pi(z_{i},g_{i},\theta))]/\partial\theta'=0$.

Differentiating $\alpha^{X}(\omega_{i},\pi(z_{i},g_{i},\theta))$
with respect to $\theta$ at $\theta_{0}$ yields 
\[
\frac{\partial\alpha^{X}(\omega_{i},\pi_{i})}{\partial\theta'}=\left(\frac{\partial\mu^{X}(\pi_{i})}{\partial\pi_{i}}\mu^{\epsilon}(\pi_{i})-(X_{i}-\mu_{0}^{X}(\pi_{i}))\frac{\partial\mu^{\epsilon}(\pi_{i})}{\partial\pi_{i}}\right)\frac{\partial\pi(z_{i},g_{i},\theta_{0})}{\partial\theta'}.
\]
Because $\pi_{i}=\pi(z_{i},g_{i},\theta_{0})$ is bounded and $\mu^{X}(\pi_{i})$
and $\mu^{\epsilon}(\pi_{i})$ are continuously differentiable in
$\pi_{i}$ (Assumptions \ref{ass:compact}(i), \ref{ass:theta}(i),
and \ref{ass:smooth}(ii)), $\mu^{X}(\pi_{i})$, $\mu^{\epsilon}(\pi_{i})$,
$\frac{\partial\mu^{X}(\pi_{i})}{\partial\pi_{i}}$, and $\frac{\partial\mu^{\epsilon}(\pi_{i})}{\partial\pi_{i}}$
are bounded. Observe that $(z_{i},\pi_{i})$ is i.i.d.. By the law
of large numbers, we have
\[
\frac{1}{n}\sum_{i=1}^{n}\left(\frac{\partial\mu^{X}(\pi_{i})}{\partial\pi_{i}}\mu^{\epsilon}(\pi_{i})\frac{\partial\pi(z_{i},g_{i},\theta_{0})}{\partial\theta'}-\mathbb{E}\left[\frac{\partial\mu^{Z}(\pi_{i})}{\partial\pi_{i}}\mu^{\epsilon}(\pi_{i})\frac{\partial\pi(z_{i},g_{i},\theta_{0})}{\partial\theta'}\right]\right)=o_{p}(1).
\]
Moreover, following Lemma \ref{lem:xt_consist} we can show that
\[
\frac{1}{n}\sum_{i=1}^{n}\left((X_{i}-\mu_{0}^{X}(\pi_{i}))\frac{\partial\mu^{\epsilon}(\pi_{i})}{\partial\pi_{i}}\frac{\partial\pi(z_{i},g_{i},\theta_{0})}{\partial\theta'}-\mathbb{E}\left[(X_{i}-\mu_{0}^{X}(\pi_{i}))\frac{\partial\mu^{\epsilon}(\pi_{i})}{\partial\pi_{i}}\frac{\partial\pi(z_{i},g_{i},\theta_{0})}{\partial\theta'}\right]\right)=o_{p}(1).
\]
Combining the above two equations proves the lemma.
\end{proof}
\begin{lem}[Hoeffding projection]
\label{lem:weightedU}Let $\boldsymbol{q}=(q_{ij})$ denote $\boldsymbol{w}$
or $\boldsymbol{sw}^{t}$, $t=1,2$, and $t_{i}$ denote $x_{i}$\textup{
or $\lambda(\pi_{i})$}. Suppose that there exists $\tilde{\psi}_{i}$
that satisfies Assumption \ref{ass:w}(v). Consider the statistic
$W_{n}=n^{-1/2}\sum_{i=1}^{n}\sum_{j=1}^{n}q_{ij}\nu_{i}t_{j}$. Define
$W_{n}^{*}\equiv n^{-1/2}\sum_{i=1}^{n}h_{n}^{*}(\tilde{\psi}_{i},\nu_{i})$,
where $h_{n}^{*}(\tilde{\psi}_{i},\nu_{i})\equiv\sum_{j=1}^{n}\mathbb{E}[q_{ij}\nu_{i}t_{j}|\tilde{\psi}_{i},\nu_{i}]$.
Then $\|W_{n}-W_{n}^{*}\|=o_{p}(1)$.
\end{lem}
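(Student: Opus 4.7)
The plan is to prove $W_{n}-W_{n}^{*}=o_{p}(1)$ by showing $\mathbb{E}\|W_{n}-W_{n}^{*}\|^{2}=o(1)$ and invoking Markov's inequality. Writing
\begin{equation*}
r_{ij}\equiv q_{ij}\nu_{i}t_{j}-\mathbb{E}[q_{ij}\nu_{i}t_{j}\mid\tilde{\psi}_{i},\nu_{i}],
\end{equation*}
we have $W_{n}-W_{n}^{*}=n^{-1/2}\sum_{i,j}r_{ij}$, $\mathbb{E}[r_{ij}]=0$, and
\begin{equation*}
\mathbb{E}\|W_{n}-W_{n}^{*}\|^{2}=\frac{1}{n}\sum_{i,j,k,l}\mathbb{E}[r_{ij}'r_{kl}],
\end{equation*}
so the task reduces to bounding this quadruple sum by splitting according to whether $i=k$ and $j=l$. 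A preparatory step exploits Assumption \ref{ass:w}(v)(c)(d): since $\boldsymbol{q}\perp\boldsymbol{\nu}\mid\tilde{\boldsymbol{\psi}}$ and $t_{j}$ (equal to $x_{j}$ or $\lambda(\pi_{j})$) is $\tilde{\psi}_{j}$-measurable, $\mathbb{E}[q_{ij}t_{j}\mid\tilde{\psi}_{i},\nu_{i}]$ does not depend on $\nu_{i}$, hence $r_{ij}=\nu_{i}(q_{ij}t_{j}-a_{ij})$ with $a_{ij}\equiv\mathbb{E}[q_{ij}t_{j}\mid\tilde{\psi}_{i}]$. Moreover, the distribution of $\nu_{i}$ given $\tilde{\boldsymbol{\psi}}$ reduces to its distribution given $\psi_{i}$ by the i.i.d.\ structure and Assumption \ref{ass:w}(v)(d), and $\mathbb{E}[\nu_{i}\mid\psi_{i}]=\mathbb{E}[\epsilon_{i}-\lambda(\pi_{i})\mid\psi_{i}]=0$ because $\pi_{i}$ is $\psi_{i}$-measurable.

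For the off-diagonal case $i\neq k$, I would iterate expectations on $\tilde{\boldsymbol{\psi}}$ and use the conditional independence of $\nu_{i}$ and $\nu_{k}$ given $\tilde{\boldsymbol{\psi}}$ to obtain
\begin{equation*}
\mathbb{E}[r_{ij}'r_{kl}]=\mathbb{E}\bigl[\mathbb{E}[\nu_{i}\mid\tilde{\boldsymbol{\psi}}]\,\mathbb{E}[\nu_{k}\mid\tilde{\boldsymbol{\psi}}]\,(q_{ij}t_{j}-a_{ij})'(q_{kl}t_{l}-a_{kl})\bigr]=0.
\end{equation*}
This kills the dominant $O(n^{4})$ block and justifies why the projection $W_{n}^{*}$ is chosen at the level of $(\tilde{\psi}_{i},\nu_{i})$ rather than $(\psi_{i},\nu_{i})$. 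The case $i=k$, $j=l$ admits the elementary bound $\mathbb{E}[\|r_{ij}\|^{2}]=O(n^{-2})$ by Cauchy--Schwarz combined with $\mathbb{E}[\|\boldsymbol{q}\|_{\infty}^{4}]\leq Cn^{4}\mathbb{E}[\|\boldsymbol{w}\|_{\infty}^{8}]=O(n^{-4})$ from Lemma \ref{lem:network_bd} and Assumption \ref{ass:w}(ii), and the bounded fourth moment of $\nu_{i}t_{j}$ from Assumption \ref{ass:smooth}(i); summing the $O(n^{2})$ such terms contributes $O(n^{-1})$.

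The heart of the argument, and my main obstacle, is the shared-index case $i=k$, $j\neq l$, which contains $O(n^{3})$ terms, so I need a per-term bound of size $o(n^{-2})$. Setting $b_{ij}\equiv\mathbb{E}[q_{ij}\mid\tilde{\psi}_{i},\tilde{\psi}_{j}]$ and conditioning on $\tilde{\boldsymbol{\psi}}$, I would express
\begin{equation*}
\mathbb{E}[r_{ij}'r_{il}]=\mathbb{E}\!\left[\mathbb{E}[\nu_{i}^{2}\mid\tilde{\boldsymbol{\psi}}]\,t_{j}'t_{l}\bigl(\mathbb{E}[q_{ij}q_{il}\mid\tilde{\boldsymbol{\psi}}]-b_{ij}b_{il}\bigr)\right]+R_{ijl},
\end{equation*}
where $R_{ijl}$ collects the correction terms produced when $a_{ij}$ is replaced by $\mathbb{E}[b_{ij}t_{j}\mid\tilde{\psi}_{i}]$ and similarly for $a_{il}$. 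A careful cross-cancellation shows that each $R_{ijl}$ term ultimately factors through the same centred quantity $\mathbb{E}[q_{ij}q_{il}\mid\tilde{\boldsymbol{\psi}}]-b_{ij}b_{il}$, so Cauchy--Schwarz together with Assumption \ref{ass:w}(v) (which provides exactly $\max_{i,j\neq l}\mathbb{E}[(\mathbb{E}[q_{ij}q_{il}\mid\tilde{\boldsymbol{\psi}}]-b_{ij}b_{il})^{2}]=o(n^{-4})$), the boundedness of $t_{j}$ from Assumptions \ref{ass:compact} and \ref{ass:regular}(ii), and $\mathbb{E}[\nu_{i}^{4}]<\infty$ together yield $|\mathbb{E}[r_{ij}'r_{il}]|=o(n^{-2})$ uniformly in $j\neq l$. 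The resulting aggregate is $n^{-1}\cdot O(n^{3})\cdot o(n^{-2})=o(1)$.

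The delicate algebraic step is the organization of $R_{ijl}$: since the only dependence hypothesis available for node-sharing triples is the single product-form bound of Assumption \ref{ass:w}(v), I must verify that no auxiliary assumption on $\mathbb{E}[q_{ij}\mid\tilde{\boldsymbol{\psi}}]-b_{ij}$ is needed, which amounts to showing that the pointwise discrepancies always enter in the combination $\mathbb{E}[q_{ij}q_{il}\mid\tilde{\boldsymbol{\psi}}]-b_{ij}b_{il}$. Once this bookkeeping is done, combining the three cases yields $\mathbb{E}\|W_{n}-W_{n}^{*}\|^{2}=o(1)$ and hence the claim.
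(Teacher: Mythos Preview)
Your approach is correct and differs from the paper's only in organization. The paper first establishes the projection orthogonality $\mathbb{E}[W_n'W_n^*]=\mathbb{E}\|W_n^*\|^2$ (by showing that the summands with index $\tilde{i}\neq i$ vanish because $\mathbb{E}[\nu_{\tilde{i}}\mid\tilde{\boldsymbol{\psi}}]=0$), and hence reduces the problem to computing $\mathbb{E}\|W_n\|^2-\mathbb{E}\|W_n^*\|^2$; you instead expand $\mathbb{E}\|W_n-W_n^*\|^2=n^{-1}\sum_{i,j,k,l}\mathbb{E}[r_{ij}'r_{kl}]$ directly. The two routes meet at exactly the same triple sum over $(i,j,l)$ with $j\neq l$ and invoke Assumption~\ref{ass:w}(v) in the same way.

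Your hesitation about the remainder $R_{ijl}$ is unnecessary: the cross terms cancel \emph{exactly}, not merely up to something controllable. Since $t_j$ is $\tilde{\psi}_j$-measurable, the tower property gives $a_{ij}=\mathbb{E}[q_{ij}t_j\mid\tilde{\psi}_i]=\mathbb{E}[b_{ij}t_j\mid\tilde{\psi}_i]$ as an identity. Writing $\sigma_i^2\equiv\mathbb{E}[\nu_i^2\mid\tilde{\boldsymbol{\psi}}]=\mathbb{E}[\nu_i^2\mid\psi_i]$, one finds that the two cross terms $-\mathbb{E}[\sigma_i^2\,\mathbb{E}[q_{ij}\mid\tilde{\boldsymbol{\psi}}]t_j'a_{il}]$ and $-\mathbb{E}[\sigma_i^2\,a_{ij}'\mathbb{E}[q_{il}\mid\tilde{\boldsymbol{\psi}}]t_l]$ each equal $-\mathbb{E}[\sigma_i^2\,a_{ij}'a_{il}]$ after iterating on $\tilde{\psi}_i$, while for $j\neq l$ (both distinct from $i$) the conditional independence of $\tilde{\psi}_j$ and $\tilde{\psi}_l$ given $\tilde{\psi}_i$ yields $\mathbb{E}[\sigma_i^2\,a_{ij}'a_{il}]=\mathbb{E}[\sigma_i^2\,b_{ij}b_{il}t_j't_l]$. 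Consequently
\begin{equation*}
\mathbb{E}[r_{ij}'r_{il}]=\mathbb{E}\!\bigl[\sigma_i^2\,t_j't_l\,(\mathbb{E}[q_{ij}q_{il}\mid\tilde{\boldsymbol{\psi}}]-b_{ij}b_{il})\bigr]
\end{equation*}
with $R_{ijl}=0$, so no auxiliary bound on $\mathbb{E}[q_{ij}\mid\tilde{\boldsymbol{\psi}}]-b_{ij}$ is needed. The paper's orthogonality identity sidesteps this algebra, but your direct route is equally valid and arguably more transparent once the cancellation is made explicit.
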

\begin{proof}
Our proof extends the results in \citet[Secion 3.7.5]{Lee1990} for
weighted $U$-statistics. In contrast to Lee's setting which assumes
deterministic weights in the $U$-statistic, our proof allows the
weights $q_{ij}$ to be random.

Let $\omega_{i}\equiv(\tilde{\psi}'_{i},\nu_{i})'$. Note that $t_{i}$
is a function of $\psi_{i}$ and thus $\tilde{\psi}_{i}$. Define
$h(\omega_{i},\omega_{j})\equiv\nu_{i}t_{j}$. We can write $W_{n}=n^{-1/2}\sum_{i}\sum_{j}q_{ij}h(\omega_{i},\omega_{j})$,
$W_{n}^{*}=n^{-1/2}\sum_{i}h_{n}^{*}(\omega_{i})$, and $h_{n}^{*}(\omega_{i})=\sum_{j}\mathbb{E}[q_{ij}h(\omega_{i},\omega_{j})|\omega_{i}]$.\footnote{\label{fn:qji}The standard Hoeffding projection is given by $\sum_{j}\mathbb{E}[q_{ij}h(\omega_{i},\omega_{j})|\omega_{i}]+\sum_{j\neq i}\mathbb{E}[q_{ji}h(\omega_{j},\omega_{i})|\omega_{i}]$,
but in our case, for $j\neq i$, $\mathbb{E}[q_{ji}h(\omega_{j},\omega_{i})|\omega_{i}]=\mathbb{E}[\mathbb{E}[q_{ji}\nu_{j}|\tilde{\boldsymbol{\psi}},\nu_{i}]|\tilde{\psi}_{i},\nu_{i}]t_{i}=\mathbb{E}[\mathbb{E}[q_{ji}|\tilde{\boldsymbol{\psi}}]\mathbb{E}[\nu_{j}|\tilde{\boldsymbol{\psi}}]|\tilde{\psi}_{i},\nu_{i}]t_{i}=0$.} Assumption \ref{ass:w}(v) implies that $\boldsymbol{q}$ and $\boldsymbol{\nu}$
are independent conditional on $\tilde{\boldsymbol{\psi}}$ and $\mathbb{E}[\nu_{i}|\tilde{\boldsymbol{\psi}}]=\mathbb{E}[\nu_{i}|\tilde{\boldsymbol{\psi}},\boldsymbol{\psi}]=\mathbb{E}[\nu_{i}|\boldsymbol{\psi}]=0$.
It follows that $\mathbb{E}[h(\omega_{i},\omega_{j})|\tilde{\boldsymbol{\psi}}]=\mathbb{E}[\nu_{i}|\tilde{\boldsymbol{\psi}}]t_{j}=0$.
Therefore, we obtain $\text{\ensuremath{\mathbb{E}}}[q_{ij}h(\omega_{i},\omega_{j})]=\text{\ensuremath{\mathbb{E}}}[\text{\ensuremath{\mathbb{E}}}[q_{ij}|\tilde{\boldsymbol{\psi}}]\text{\ensuremath{\mathbb{E}}}[h(\omega_{i},\omega_{j})|\tilde{\boldsymbol{\psi}}]]=0$
and $\text{\ensuremath{\mathbb{E}}}[h_{n}^{*}(\omega_{i})]=\sum_{j}\mathbb{E}[q_{ij}h(\omega_{i},\omega_{j})]=0$.
By Markov's inequality, it suffices if $\mathbb{E}\|W_{n}-W_{n}^{*}\|^{2}=o(1)$.

By definition, $\mathbb{E}[W'_{n}W_{n}^{*}]=n^{-1/2}\sum_{i}\mathbb{E}[W'_{n}h_{n}^{*}(\omega_{i})]$
and for each $i$, 
\begin{align*}
\mathbb{E}[W'_{n}h_{n}^{*}(\omega_{i})] & =n^{-1/2}\sum_{\tilde{i}}\sum_{j}\mathbb{E}[q_{\tilde{i}j}h(\omega_{\tilde{i}},\omega_{j})'h_{n}^{*}(\omega_{i})]\\
 & =n^{-1/2}\sum_{j}\mathbb{E}[q_{ij}h(\omega_{i},\omega_{j})'h_{n}^{*}(\omega_{i})]\\
 & =n^{-1/2}\mathbb{E}[h_{n}^{*}(\omega_{i})'h_{n}^{*}(\omega_{i})].
\end{align*}
The last equality follows by iterated expectations. The second to
last equality holds because for any $\tilde{i}\neq i$, we have $\mathbb{E}[q_{\tilde{i}j}h(\omega_{\tilde{i}},\omega_{j})'h_{n}^{*}(\omega_{i})]=\mathbb{E}[\mathbb{E}[q_{\tilde{i}j}|\tilde{\boldsymbol{\psi}}]\mathbb{E}[h(\omega_{\tilde{i}},\omega_{j})'h_{n}^{*}(\omega_{i})|\tilde{\boldsymbol{\psi}}]]=0$
as $\mathbb{E}[h(\omega_{\tilde{i}},\omega_{j})'h_{n}^{*}(\omega_{i})|\tilde{\boldsymbol{\psi}}]=\mathbb{E}[\nu_{\tilde{i}}h_{n}^{*}(\omega_{i})|\tilde{\boldsymbol{\psi}}]t'_{j}=\mathbb{E}[\nu_{\tilde{i}}|\tilde{\boldsymbol{\psi}}]\mathbb{E}[h_{n}^{*}(\omega_{i})|\tilde{\boldsymbol{\psi}}]t'_{j}=0$.
It then follows that $\mathbb{E}[W'_{n}W_{n}^{*}]=n^{-1}\sum_{i}\mathbb{E}[h_{n}^{*}(\omega_{i})'h_{n}^{*}(\omega_{i})]=\mathbb{E}\|W_{n}^{*}\|^{2}$
and thus $\mathbb{E}\|W_{n}-W_{n}^{*}\|^{2}=\mathbb{E}\|W_{n}\|^{2}-\mathbb{E}\|W_{n}^{*}\|^{2}$.
It remains to show that $\mathbb{E}\|W_{n}\|^{2}-\mathbb{E}\|W_{n}^{*}\|^{2}=o(1)$.

To show the last result, note that for any $\{i,j\}$ and $\{k,l\}$
with $i\neq k$, we can derive $\mathbb{E}[q_{ij}q_{kl}h(\omega_{i},\omega_{j})'h(\omega_{k},\omega_{l})]=\mathbb{E}[\mathbb{E}[q_{ij}q_{kl}|\tilde{\boldsymbol{\psi}}]\mathbb{E}[h(\omega_{i},\omega_{j})'h(\omega_{k},\omega_{l})|\tilde{\boldsymbol{\psi}}]]=0$.
The last equality follows because $\mathbb{E}[h(\omega_{i},\omega_{j})'h(\omega_{k},\omega_{l})|\tilde{\boldsymbol{\psi}}]=\mathbb{E}[\nu_{i}\nu_{k}|\tilde{\boldsymbol{\psi}}]t'_{j}t_{l}=\mathbb{E}[\nu_{i}|\tilde{\boldsymbol{\psi}}]\mathbb{E}[\nu_{k}|\tilde{\boldsymbol{\psi}}]t'_{j}t_{l}=0$.
Hence,
\begin{eqnarray*}
\mathbb{E}\|W_{n}\|^{2} & = & n^{-1}\sum_{i}\sum_{j}\sum_{k}\sum_{l}\mathbb{E}[q_{ij}q_{kl}h(\omega_{i},\omega_{j})'h(\omega_{k},\omega_{l})]\\
 & = & n^{-1}\sum_{i}\sum_{j}\sum_{k\neq j}\mathbb{E}[q_{ij}q_{ik}\nu_{i}^{2}t'_{j}t_{k}]+n^{-1}\sum_{i}\sum_{j}\mathbb{E}[q_{ij}^{2}\nu_{i}^{2}t'_{j}t_{j}].
\end{eqnarray*}
For comparison, because $\mathbb{E}\|W_{n}^{*}\|^{2}=n^{-1}\sum_{i}\mathbb{E}\|h_{n}^{*}(\omega_{i})\|^{2}$
we can write
\begin{eqnarray*}
\mathbb{E}\|W_{n}^{*}\|^{2} & = & n^{-1}\sum_{i}\sum_{j}\sum_{k}\mathbb{E}[\mathbb{E}[q_{ij}h(\omega_{i},\omega_{j})'|\omega_{i}]\mathbb{E}[q_{ik}h(\omega_{i},\omega_{k})|\omega_{i}]]\\
 & = & n^{-1}\sum_{i}\sum_{j}\sum_{k\neq j}\mathbb{E}[\mathbb{E}[q_{ij}\nu_{i}t'_{j}|\omega_{i}]\mathbb{E}[q_{ik}\nu_{i}t_{k}|\omega_{i}]]\\
 &  & +n^{-1}\sum_{i}\sum_{j}\mathbb{E}[\mathbb{E}[q_{ij}\nu_{i}t'_{j}|\omega_{i}]\mathbb{E}[q_{ij}\nu_{i}t_{j}|\omega_{i}]].
\end{eqnarray*}

Consider the triple sums over $i$, $j$, and $k\neq j$ in $\mathbb{E}\|W_{n}\|^{2}$
and $\mathbb{E}\|W_{n}^{*}\|^{2}$. They consist of the same number
of terms. For any $j\neq k$, we can derive
\begin{eqnarray*}
\mathbb{E}[q_{ij}q_{ik}\nu_{i}^{2}t'_{j}t_{k}] & = & \mathbb{E}[\mathbb{E}[q_{ij}q_{ik}|\tilde{\boldsymbol{\psi}},\nu_{i}]\nu_{i}^{2}t'_{j}t_{k}]\\
 & = & \mathbb{E}[\mathbb{E}[q_{ij}q_{ik}|\tilde{\boldsymbol{\psi}}]\nu_{i}^{2}t'_{j}t_{k}]\\
 & = & \mathbb{E}[\mathbb{E}[q_{ij}|\tilde{\psi}_{i},\tilde{\psi}_{j}]\mathbb{E}[q_{ik}|\tilde{\psi_{i}},\tilde{\psi}_{k}]\nu_{i}^{2}t'_{j}t_{k}]+o(n^{-2})\\
 & = & \mathbb{E}[\mathbb{E}[\mathbb{E}[q_{ij}|\tilde{\psi}_{i},\tilde{\psi}_{j}]\mathbb{E}[q_{ik}|\tilde{\psi_{i}},\tilde{\psi}_{k}]\nu_{i}^{2}t'_{j}t_{k}|\omega_{i}]]+o(n^{-2})\\
 & = & \mathbb{E}[\mathbb{E}[\mathbb{E}[q_{ij}|\tilde{\psi}_{i},\tilde{\psi}_{j}]\nu_{i}t'_{j}|\omega_{i}]\mathbb{E}[\mathbb{E}[q_{ik}|\tilde{\psi_{i}},\tilde{\psi}_{k}]\nu_{i}t_{k}|\omega_{i}]]+o(n^{-2})\\
 & = & \mathbb{E}[\mathbb{E}[q_{ij}\nu_{i}t'_{j}|\omega_{i}]\mathbb{E}[q_{ik}\nu_{i}t_{k}|\omega_{i}]]+o(n^{-2}).
\end{eqnarray*}
The second equality holds because $\boldsymbol{q}$ and $\boldsymbol{\nu}$
are independent given $\tilde{\boldsymbol{\psi}}$. The third equality
follows from $\mathbb{E}[(\mathbb{E}[q_{ij}q_{ik}|\tilde{\boldsymbol{\psi}}]-\mathbb{E}[q_{ij}|\tilde{\psi}_{i},\tilde{\psi}_{j}]\mathbb{E}[q_{ik}|\tilde{\psi}_{i},\tilde{\psi}_{k}])\nu_{i}^{2}t'_{j}t_{k}]\leq\max_{i,j,k\in\mathcal{N}:k\neq j}\mathbb{E}[(\mathbb{E}[q_{ij}q_{ik}|\tilde{\boldsymbol{\psi}}]-\mathbb{E}[q_{ij}|\tilde{\psi}_{i},\tilde{\psi}_{j}]\mathbb{E}[q_{ik}|\tilde{\psi}_{i},\tilde{\psi}_{k}])^{2}]^{1/2}\mathbb{E}[\nu_{i}^{4}]^{1/2}\max_{i}\|t_{i}\|^{2}\leq o(n^{-2})$
by Assumption \ref{ass:w}(v), the boundedness of $t_{i}$, and $\mathbb{E}[\nu_{i}^{4}]<\infty$
(Footnote \ref{fn:v_bd}). This also indicates that the $o(n^{-2})$
term does not depend on $i$, $j$ and $k$. The fifth equality follows
because for $j\neq k$, the terms $\mathbb{E}[q_{ij}|\tilde{\psi}_{i},\tilde{\psi}_{j}]\nu_{i}t_{j}$
and $\mathbb{E}[q_{ik}|\tilde{\psi}_{i},\tilde{\psi}_{k}]\nu_{i}t{}_{k}$
are independent conditional on $\omega_{i}$. The sixth equality follows
from $\mathbb{E}[q_{ij}|\tilde{\psi}_{i},\tilde{\psi}_{j}]=\mathbb{E}[q_{ij}|\tilde{\psi}_{i},\tilde{\psi}_{j},\nu_{i}]$
and iterated expectations.\footnote{Because $\boldsymbol{q}$ and $\boldsymbol{\nu}$ are independent
conditional on $\tilde{\boldsymbol{\psi}}$ and $\tilde{\psi}_{i}$
is i.i.d., we can show that $q_{ij}$ and $\nu_{i}$ are independent
conditional on $\tilde{\psi}_{i}$ and $\tilde{\psi}_{j}$.} Since the triple sums in $\mathbb{E}\|W_{n}\|^{2}$ and $\mathbb{E}\|W_{n}^{*}\|^{2}$
consist of $O(n^{3})$ terms, they differ by $n^{-1}\cdot O(n^{3})\cdot o(n^{-2})=o(1)$.

The double sums over $i$ and $j$ in $\mathbb{E}\|W_{n}\|^{2}$ and
$\mathbb{E}\|W_{n}^{*}\|^{2}$ consist of $O(n^{2})$ terms. For any
$i$ and $j$, both $\mathbb{E}[q_{ij}^{2}\nu_{i}^{2}t'_{j}t_{j}]$
and $\mathbb{E}[\mathbb{E}[q_{ij}\nu_{i}t'_{j}|\omega_{i}]\mathbb{E}[q_{ij}\nu_{i}t_{j}|\omega_{i}]]$
can be uniformly bounded by $O(n^{-2})$ because of $\mathbb{E}[\|\boldsymbol{q}\|_{\infty}^{4}]=O(n^{-4})$
(Lemma \ref{lem:network_bd} and Assumption \ref{ass:w}(ii)) and
$\mathbb{E}[\nu_{i}^{4}]<\infty$. Therefore, the second sums in $\mathbb{E}\|W_{n}\|^{2}$
and $\mathbb{E}\|W_{n}^{*}\|^{2}$ are both $n^{-1}\cdot O(n^{2})\cdot O(n^{-2})=o(1)$.
We conclude that $\mathbb{E}\|W_{n}\|^{2}-\mathbb{E}\|W_{n}^{*}\|^{2}=o(1)$.
\end{proof}

\begin{lem}
\label{lem:qvv}Let $\boldsymbol{q}=\boldsymbol{sw}$. Then 
\[
W_{n}=n^{-1/2}\sum_{i}\sum_{j}q_{ij}\nu_{i}\nu_{j}=o_{p}(1).
\]
\end{lem}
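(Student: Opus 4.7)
The plan is to show $\mathbb{E}[W_{n}^{2}]=o(1)$ and conclude by Markov's inequality. The key structural fact I would exploit is that, under Assumptions \ref{ass:regular}(i) and \ref{ass:limit}, the residuals $\nu_{1},\dots,\nu_{n}$ are mutually independent conditional on $\boldsymbol{\psi}=(\boldsymbol{x},\boldsymbol{z},\boldsymbol{g})$, with $\mathbb{E}[\nu_{i}\mid\boldsymbol{\psi}]=0$. This holds because under limiting cutoffs $g_{i}$ is a deterministic function of $(z_{i},\xi_{i},\eta_{i})$, so the i.i.d.\ structure of $(x_{i},z_{i},\epsilon_{i},\xi_{i},\eta_{i})$ forces the conditional joint density of $(\epsilon_{1},\dots,\epsilon_{n})$ given $\boldsymbol{\psi}$ to factor across $i$, and $\lambda(\pi_{i})=\mathbb{E}[\epsilon_{i}\mid z_{i},g_{i}]$ is the corresponding conditional mean. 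Combined with Assumption \ref{ass:adj_exog}, which makes $\boldsymbol{q}=\boldsymbol{sw}$ and $\boldsymbol{\nu}$ independent given $\boldsymbol{\psi}$, this reduces the variance computation to a pure pairing argument.

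Conditioning on $\boldsymbol{\psi}$, I would expand
\[
\mathbb{E}[(\boldsymbol{\nu}'\boldsymbol{q}\boldsymbol{\nu})^{2}\mid\boldsymbol{\psi}]=\sum_{i,j,k,l}\mathbb{E}[q_{ij}q_{kl}\mid\boldsymbol{\psi}]\,\mathbb{E}[\nu_{i}\nu_{j}\nu_{k}\nu_{l}\mid\boldsymbol{\psi}].
\]
Conditional independence and mean-zero kill every term except those in which the four indices pair up, leaving only the configurations $\{i=j,k=l,i\neq k\}$, $\{i=k,j=l,i\neq j\}$, $\{i=l,j=k,i\neq j\}$, and $\{i=j=k=l\}$. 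Writing $\sigma^{2}(\psi_{i})=\mathbb{E}[\nu_{i}^{2}\mid\psi_{i}]$ and $\kappa(\psi_{i})=\mathbb{E}[\nu_{i}^{4}\mid\psi_{i}]$, both of which inherit finite unconditional moments from $\mathbb{E}[\nu_{i}^{8}]<\infty$ (by the argument of footnote \ref{fn:v_bd} combined with $\mathbb{E}[\epsilon_{i}^{8}]<\infty$), this bounds $|\mathbb{E}[(\boldsymbol{\nu}'\boldsymbol{q}\boldsymbol{\nu})^{2}]|$ by the expectations of $\sum_{i\neq k}|q_{ii}q_{kk}|\sigma^{2}(\psi_{i})\sigma^{2}(\psi_{k})$, $\sum_{i\neq j}(q_{ij}^{2}+|q_{ij}q_{ji}|)\sigma^{2}(\psi_{i})\sigma^{2}(\psi_{j})$, and $\sum_{i}q_{ii}^{2}\kappa(\psi_{i})$. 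For each I would apply Cauchy--Schwarz to separate the $\boldsymbol{q}$ factors from the moment factors. Since $\boldsymbol{q}=\boldsymbol{sw}$, Lemma \ref{lem:network_bd} gives $\|\boldsymbol{q}\|_{\infty}\leq Cn\|\boldsymbol{w}\|_{\infty}^{2}$, which combined with Assumption \ref{ass:w}(ii) yields $\mathbb{E}[\|\boldsymbol{q}\|_{\infty}^{4}]=O(n^{-4})$. Each summand is then $O(n^{-2})$: the first two sums have $O(n^{2})$ terms and contribute $O(1)$, while the last has $O(n)$ terms and contributes $O(n^{-1})$. Hence $\mathbb{E}[(\boldsymbol{\nu}'\boldsymbol{q}\boldsymbol{\nu})^{2}]=O(1)$, so $\mathbb{E}[W_{n}^{2}]=O(n^{-1})=o(1)$, and Markov's inequality delivers $W_{n}=o_{p}(1)$.

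The main point requiring care is the cross-pairing $\sum_{i\neq j}|q_{ij}q_{ji}|$: the crude pointwise bound $|q_{ij}|\leq\|\boldsymbol{q}\|_{\infty}$ just barely suffices here because $\boldsymbol{q}=\boldsymbol{sw}$ is sparse enough that $\mathbb{E}[\|\boldsymbol{q}\|_{\infty}^{4}]=O(n^{-4})$; a more robust route is the trace identity $|\text{tr}(\boldsymbol{q}^{2})|\leq n\interleave\boldsymbol{q}\interleave_{\infty}\|\boldsymbol{q}\|_{\infty}$, which leverages the row-sum bound $\interleave\boldsymbol{q}\interleave_{\infty}\leq\interleave\boldsymbol{s}\interleave_{\infty}\interleave\boldsymbol{w}\interleave_{\infty}\leq C$ from Lemma \ref{lem:network_bd}. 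Everything else is standard bookkeeping with Cauchy--Schwarz and i.i.d.\ moment bounds.
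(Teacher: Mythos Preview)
Your proposal is correct and follows essentially the same route as the paper: bound $\mathbb{E}[W_{n}^{2}]$ via the conditional independence of the $\nu_{i}$ given $\boldsymbol{\psi}$ together with Assumption \ref{ass:adj_exog}, reduce to the paired index configurations, and control each with Cauchy--Schwarz, Lemma \ref{lem:network_bd}, and $\mathbb{E}[\|\boldsymbol{w}\|_{\infty}^{8}]=O(n^{-8})$. If anything you are slightly more careful than the paper in explicitly tracking the $q_{ij}q_{ji}$ cross-pairing, which the paper's displayed expansion folds silently into the $q_{ij}^{2}$ term.
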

\begin{proof}
By Markov's inequality, it is sufficient to show that $\mathbb{E}[W_{n}^{2}]=o(1)$.
Write $W_{n}=n^{-1/2}(\sum_{i}q_{ii}\nu_{i}^{2}+\sum_{i}\sum_{j\neq i}q_{ij}\nu_{i}\nu_{j})$.
Recall that $\boldsymbol{q}$ is independent of $\boldsymbol{\nu}$
conditional on $\boldsymbol{\psi}=(\boldsymbol{x},\boldsymbol{z},\boldsymbol{g})$
(Assumption \ref{ass:adj_exog}), and for any $i\neq j$, $\mathbb{E}[\nu_{i}\nu_{j}|\boldsymbol{\psi}]=\mathbb{E}[\nu_{i}|\boldsymbol{\psi}]\mathbb{E}[\nu_{j}|\boldsymbol{\psi}]=0$
by i.i.d. $\psi_{i}$ and $\nu_{i}$. Therefore, we can derive
\begin{eqnarray*}
\mathbb{E}[W_{n}^{2}] & = & \frac{1}{n}\mathbb{E}(\sum_{i}q_{ii}\nu_{i}^{2}+\sum_{i}\sum_{j\neq i}q_{ij}\nu_{i}\nu_{j})^{2}\\
 & = & \frac{1}{n}\sum_{i}\mathbb{E}[q_{ii}^{2}\nu_{i}^{4}]+\frac{1}{n}\sum_{i}\sum_{j\neq i}\mathbb{E}[q_{ii}q_{jj}\nu_{i}^{2}\nu_{j}^{2}]+\frac{1}{n}\sum_{i}\sum_{j\neq i}\mathbb{E}[q_{ij}^{2}\nu_{i}^{2}\nu_{j}^{2}]\\
 &  & +\frac{1}{n}\sum_{i}\sum_{j\neq i}\sum_{k\neq i,j}\mathbb{E}[q_{ij}(q_{ik}+q_{ki})\nu_{i}^{2}\nu_{j}\nu_{k}+q_{ij}(q_{jk}+q_{kj})\nu_{i}\nu_{j}^{2}\nu_{k}]\\
 &  & +\frac{1}{n}\sum_{i}\sum_{j\neq i}\sum_{k\neq i,j}\sum_{l\neq i,j,k}\mathbb{E}[q_{ij}q_{kl}\nu_{i}\nu_{j}\nu_{k}\nu_{l}]+\frac{2}{n}\mathbb{E}(\sum_{i}q_{ii}\nu_{i}^{2})(\sum_{i}\sum_{j\neq i}q_{ij}\nu_{i}\nu_{j})\\
 & = & \frac{1}{n}\sum_{i}\mathbb{E}[q_{ii}^{2}\nu_{i}^{4}]+\frac{1}{n}\sum_{i}\sum_{j\neq i}\mathbb{E}[q_{ii}q_{jj}\nu_{i}^{2}\nu_{j}^{2}]+\frac{1}{n}\sum_{i}\sum_{j\neq i}\mathbb{E}[q_{ij}^{2}\nu_{i}^{2}\nu_{j}^{2}].
\end{eqnarray*}
The last equality follows because by iterated expectations and the
arguments above we have $\mathbb{E}[q_{ij}q_{ik}\nu_{i}^{2}\nu_{j}\nu_{k}]=\mathbb{E}[\mathbb{E}[q_{ij}q_{ik}|\boldsymbol{\psi}]\mathbb{E}[\nu_{i}^{2}\nu_{j}\nu_{k}|\boldsymbol{\psi}]]=0$
and similarly for the other terms in the third and fourth lines. These
terms have some $\nu_{i}$ that appears linearly, so they reduce to
zero.

As for the last three sums, we have $\mathbb{E}[q_{ii}^{2}\nu_{i}^{4}]\leq\mathbb{E}[q_{ii}^{4}]^{1/2}\mathbb{E}[\nu_{i}^{8}]^{1/2}=O(n^{-2})$,
$\mathbb{E}[q_{ii}q_{jj}\nu_{i}^{2}\nu_{j}^{2}]\leq\mathbb{E}[q_{ii}^{2}q_{jj}^{2}]^{1/2}\mathbb{E}[\nu_{i}^{4}]=O(n^{-2})$
and $\mathbb{E}[q_{ij}^{2}\nu_{i}^{2}\nu_{j}^{2}]\leq\mathbb{E}[q_{ij}^{4}]^{1/2}\mathbb{E}[\nu_{i}^{4}]=O(n^{-2})$
by Lemma \ref{lem:network_bd}, Assumptions \ref{ass:w}(ii) and \ref{ass:smooth}(i),
and i.i.d. $\nu_{i}$. Since each sum has at most $O(n^{2})$ terms,
we can bound $\mathbb{E}[W_{n}^{2}]$ by $n^{-1}\cdot O(n^{2})\cdot O(n^{-2})=o(1)$.
The proof is complete.
\end{proof}

\section{\protect\label{online:w}Examples of the Adjacency Matrix}

In this section, we verify Assumption \ref{ass:w} for several adjacency
matrices that are commonly used in the literature.
\begin{example}[Group averages that include oneself]
\label{ex:w.group.in}Suppose that $\boldsymbol{w}$ represents group
averages that include oneself and the group capacities are binding.
We can write $w_{ij}=\sum_{g=1}^{G}\frac{1}{n_{g}}1\{g_{i}=g\}1\{g_{j}=g\}$.
By construction, $\interleave\boldsymbol{w}\interleave_{\infty}=\max_{i\in\mathcal{N}}\sum_{j=1}^{n}|w_{ij}|=1$
and $\|\boldsymbol{w}\|_{\infty}=\max_{i,j\in\mathcal{N}}|w_{ij}|\leq\max_{g\in\mathcal{G}}\frac{1}{n_{g}}=\frac{1}{n}\max_{g\in\mathcal{G}}\frac{1}{r_{g}}$,
where $r_{g}=\frac{n_{g}}{n}>0$ for all $g\in\mathcal{G}$. Hence,
$\mathbb{E}[\|\boldsymbol{w}\|_{\infty}^{8}]\leq\frac{1}{n^{8}}\max_{g\in\mathcal{G}}\frac{1}{r_{g}^{8}}=O(n^{-8})$
and Assumptions \ref{ass:w}(i)-(ii) are satisfied. Note that $\boldsymbol{w}^{2}=\boldsymbol{w}$,
$\boldsymbol{w}'=\boldsymbol{w}$, and $\boldsymbol{s}\boldsymbol{w}=\frac{1}{1-\gamma_{1}}\boldsymbol{w}$.\footnote{For any $i,j\in\mathcal{N}$, $(\boldsymbol{w}^{2})_{ij}=\sum_{k=1}^{n}w_{ik}w_{kj}=\sum_{k=1}^{n}(\sum_{g=1}^{G}\frac{1}{n_{g}}1\{g_{i}=g\}1\{g_{k}=g\})(\sum_{g=1}^{G}\frac{1}{n_{g}}1\{g_{k}=g\}1\{g_{j}=g\})=\sum_{g=1}^{G}\sum_{k=1}^{n}\frac{1}{n_{g}^{2}}1\{g_{k}=g\}1\{g_{i}=g\}1\{g_{j}=g\}=\sum_{g=1}^{G}\frac{1}{n_{g}}1\{g_{i}=g\}1\{g_{j}=g\}=w_{ij}$,
where we have used $n_{g}=\sum_{k=1}^{n}1\{g_{k}=g\}$.} It thus suffices to verify Assumptions \ref{ass:w}(iii)-(v) for
$\boldsymbol{q}=\boldsymbol{w}$. Define $\boldsymbol{\tilde{\psi}}=\text{\ensuremath{\boldsymbol{\psi}}}=(\boldsymbol{x},\boldsymbol{z},\boldsymbol{g})$.
Because $n_{g}$ is a constant, $w_{ij}$ is a function of $g_{i}$
and $g_{j}$ -- once we know the groups that $i$ and $j$ join,
we know $w_{ij}$. In this case, $\boldsymbol{w}$ is a function of
$\boldsymbol{\psi}$ and $w_{ij}$ depends on $\boldsymbol{\psi}$
only through $\psi_{i}$ and $\psi_{j}$. Therefore, $\mathbb{E}[w_{ij}w_{kl}|\boldsymbol{\psi}]=\mathbb{E}[w_{ij}|\psi_{i},\psi_{j}]\mathbb{E}[w_{kl}|\psi_{k},\psi_{l}]=w_{ij}w_{kl}$
and $\mathbb{E}[w_{ij}|\boldsymbol{\psi}]=\mathbb{E}[w_{ij}|\psi_{i},\psi_{j}]=w_{ij}$.
Assumptions \ref{ass:w}(iii)-(v) are trivially satisfied.
\end{example}

\begin{example}[Group averages that exclude oneself]
\label{ex:w.group.ex}Suppose that $\boldsymbol{w}$ represents group
averages that exclude oneself and the group capacities are binding.
We can write $w_{ij}=\sum_{g=1}^{G}\frac{1}{n_{g}-1}1\{g_{i}=g\}1\{g_{j}=g\}$
for $i\neq j$ and $w_{ii}=0$. Similarly as in Example \ref{ex:w.group.in},
we can show that Assumptions \ref{ass:w}(i)-(ii) are satisfied. To
verify Assumptions \ref{ass:w}(iii)-(v), note that $\boldsymbol{w}'=\boldsymbol{w}$
and for $k\geq1$, the $(i,j)$ element of $\boldsymbol{w}^{k}$ takes
the form $(\boldsymbol{w}^{k})_{ij}=\sum_{g=1}^{G}c_{ij,g}(k)1\{g_{i}=g\}1\{g_{j}=g\}$,
where $c_{ij,g}(k)$ is a constant that depends on $k$ and $n_{g}$
only.\footnote{\label{fn:exclude_limit}For $k\geq2$, $c_{ij,g}(k)=\sum_{\kappa=1}^{k}(-1)^{\kappa-1}(n_{g}-1)^{-\kappa}$
for $i\neq j$ and $c_{ii,g}(k)=\sum_{\kappa=1}^{k-1}(-1)^{\kappa-1}(n_{g}-1)^{-\kappa}$.
Note that $c_{ij,g}(k)\rightarrow\frac{1}{n_{g}}$ as $k\rightarrow\infty$
(assuming $n_{g}>2$).} Moreover, $\boldsymbol{s}\boldsymbol{w}^{t}=\sum_{k=0}^{\infty}\gamma_{1}^{k}\boldsymbol{w}^{k+t}$
and $(\boldsymbol{s}\boldsymbol{w}^{r})'\boldsymbol{s}\boldsymbol{w}^{t}=\sum_{k=0}^{\infty}\sum_{l=0}^{\infty}\gamma_{1}^{k+l}\boldsymbol{w}^{k+l+r+t}$
for $r,t\geq1$. Hence, $(\boldsymbol{s}\boldsymbol{w}^{t})_{ij}=\sum_{g=1}^{G}(\sum_{k=0}^{\infty}\gamma_{1}^{k}c_{ij,g}(k+t))1\{g_{i}=g\}1\{g_{j}=g\}$
and $((\boldsymbol{s}\boldsymbol{w}^{r})'\boldsymbol{s}\boldsymbol{w}^{t})_{ij}=\sum_{g=1}^{G}(\sum_{k=0}^{\infty}\sum_{l=0}^{\infty}\gamma_{1}^{k+l}c_{ij,g}(k+l+r+t))1\{g_{i}=g\}1\{g_{j}=g\}$.
These results indicate that all the forms of $\boldsymbol{q}$ involved
in Assumptions \ref{ass:w}(iii)-(v) have the same dependence structure
as that of $\boldsymbol{w}$. Following the argument in Example \ref{ex:w.group.in},
we can show that Assumptions \ref{ass:w}(iii)-(v) are satisfied for
$\boldsymbol{\tilde{\psi}}=\boldsymbol{\psi}=(\boldsymbol{x},\boldsymbol{z},\boldsymbol{g})$.
\end{example}

\begin{example}[Dyadic networks]
\label{ex:w.fe}Suppose that individuals in a group form additional
connections (e.g., schoolmates make friends). Let $d_{ij,g}$ denote
an indicator for whether individuals $i$ and $j$ are connected in
group $g$ and $d_{i,g}\equiv\sum_{j=1}^{n}d_{ij,g}1\{g_{j}=g\}$
the number of connections that $i$ has in group $g$. Suppose that
no individual is isolated so that $d_{i,g}\geq1$ for all $i\in\mathcal{N}$
and all $g\in\mathcal{G}$. Typically, we specify $w_{ij}=\sum_{g=1}^{G}\frac{d_{ij,g}}{d_{i,g}}1\{g_{i}=g\}1\{g_{j}=g\}$
-- if both $i$ and $j$ join group $g$, then $j$'s weight on $i$
depends on whether $j$ is connected to $i$, normalized by the number
of connections that $i$ has in the group.

Following the literature on dyadic network formation with fixed effects
(\citealp{Graham2017}; \citealp{johnsson2021estimation}), we specify
$d_{ij,g}\equiv1\{f_{g}(x_{i},x_{j},a_{i},a_{j})\geq\text{\ensuremath{\zeta_{ij}}}\}$,
$\text{ \ensuremath{\forall i\neq j}}$, and $d_{ii,g}=0$, where
$a_{i}\in\mathbb{R}$ and $\zeta_{ij}\in\mathbb{R}$ represent individual-
and pair-specific unobserved heterogeneity. The links can be directed
or undirected. Without loss of generality we normalize $\zeta_{ij}\sim U[0,1]$
and assume $0\leq f_{g}\leq1$. We also assume that $a_{i}$ has finite
fourth moment. Denote $\boldsymbol{a}=(a_{1},\dots,a_{n})'$ and $\boldsymbol{\zeta}=(\zeta_{ij})$.
Let $\text{\ensuremath{\boldsymbol{\psi}}}=(\boldsymbol{x},\boldsymbol{z},\boldsymbol{g})$.
We assume that (a) $a_{i}$ and $\zeta_{ij}$ are i.i.d. and (b) $\boldsymbol{\zeta}$
is independent of $\boldsymbol{a}$ conditional on $\boldsymbol{\psi}$,
and (c) $(\boldsymbol{a},\boldsymbol{\zeta})$ is independent of $\boldsymbol{\epsilon}$
conditional on $\text{\ensuremath{\boldsymbol{\psi}}}$. The last
part is consistent with Assumption \ref{ass:adj_exog} -- conditional
on $\boldsymbol{\psi}$, $\boldsymbol{w}$ is a function of $(\boldsymbol{a},\boldsymbol{\zeta})$
and is thus independent of $\boldsymbol{\epsilon}$.\footnote{Our setting differs from those in \citet{johnsson2021estimation}
and \citet{Auerbach2022}, who consider endogenous link formation.
Their settings allow the individual effects $\boldsymbol{a}$ to be
correlated with $\boldsymbol{\epsilon}$.} Define $\tilde{\psi}_{i}\equiv(\psi'_{i},a_{i})'$ and $\text{\ensuremath{\tilde{\boldsymbol{\psi}}}}\equiv(\ensuremath{\boldsymbol{\psi}},\boldsymbol{a})$.
Conditional on $\boldsymbol{\psi}$, $\text{\ensuremath{\tilde{\boldsymbol{\psi}}}}$
is a function of $\boldsymbol{a}$ and is therefore independent of
$\boldsymbol{\epsilon}$. Similarly, conditional on $\text{\ensuremath{\tilde{\boldsymbol{\psi}}}}$,
$\boldsymbol{w}$ is a function of $\boldsymbol{\zeta}$ and is thus
independent of $\boldsymbol{\epsilon}$. Consequently, $\text{\ensuremath{\tilde{\boldsymbol{\psi}}}}$
satisfies the independence conditions specified in Assumption \ref{ass:w}(v).

Note that $\frac{1}{n-1}\mathbb{E}[d_{i,g}|\tilde{\psi}_{i}]=\frac{1}{n-1}\sum_{j\neq i}\mathbb{E}[d_{ij,g}1\{g_{j}=g\}|\tilde{\psi}_{i}]=\mathbb{E}[f_{g}(x_{i},x_{j},a_{i},a_{j})1\{g_{j}=g\}|\tilde{\psi}_{i}]$.
We assume that $\min_{g\in\mathcal{G}}\min_{i\in\mathcal{N}}\mathbb{E}[f_{g}(x_{i},x_{j},a_{i},a_{j})1\{g_{j}=g\}|\tilde{\psi}_{i}]\geq c>0$,
which implies that the network within each group is dense. Assume
$c>|\gamma_{1}|$. Each link $d_{ij,g}$ is a function of dyadic variables
$(\tilde{\psi}_{i},\tilde{\psi}_{j},\zeta_{ij})$. However, the degree
$d_{i,g}$ depends on variables from other links, though this dependence
vanishes asymptotically. Lemma \ref{lem:w.fe} below verifies that
$\boldsymbol{w}$ in this example satisfies Assumption \ref{ass:w}.
\end{example}

\begin{example}[Group averages, continued]
Examples \ref{ex:w.group.in} and \ref{ex:w.group.ex} assume that
the group capacities are binding. If a group has an infinite capacity
(as in one-sided group formation) or does not reach its capacity,
then the number of members in that group is endogenously determined.
This setting can be regarded as a special case of Example \ref{ex:w.fe},
where we set $d_{ij,g}=1$ for all $i,j\in\mathcal{N}$ (including-oneself
averages) or $d_{ij,g}=1$ for all $i\neq j$ and $d_{ii,g}=0$ (excluding-oneself
averages). Similarly as in Lemma \ref{lem:w.fe}, we can show that
the $\boldsymbol{w}$ in this case satisfies Assumption \ref{ass:w}.
\end{example}
\begin{lem}
\label{lem:w.fe}The adjacency matrix $\boldsymbol{w}$ specified
in Example \ref{ex:w.fe} satisfies Assumption \ref{ass:w}.
\end{lem}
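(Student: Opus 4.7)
The whole argument hinges on one idea: although $w_{ij}=d_{ij,g_i}\mathbf{1}\{g_i=g_j\}/d_{i,g_i}$ is only approximately dyadic (the normalization $d_{i,g_i}$ depends on the whole row of links), in a dense network $d_{i,g_i}$ concentrates around the deterministic quantity $\bar{d}_{i,g}:=(n-1)\,q_{i,g}$ where $q_{i,g}:=\mathbb{E}[f_g(x_i,x_j,a_i,a_j)\mathbf{1}\{g_j=g\}\mid\tilde{\psi}_i]\ge c>0$. Replacing $d_{i,g_i}$ by $\bar{d}_{i,g_i}$ produces a proxy $\tilde{w}_{ij}:=d_{ij,g_i}\mathbf{1}\{g_i=g_j\}/\bar{d}_{i,g_i}$ that depends only on the dyadic variables $(\tilde{\psi}_i,\tilde{\psi}_j,\zeta_{ij})$. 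For such a proxy the conditional independence conditions required by Assumption~\ref{ass:w}(iii)--(v) hold exactly, so the whole task reduces to controlling the approximation error $w_{ij}-\tilde{w}_{ij}$ with sufficiently strong rates.

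I would first dispose of parts (i) and (ii). Part (i) is immediate because $\sum_j w_{ij}=d_{i,g_i}/d_{i,g_i}=1$. For (ii), define the high-probability event $E_n=\{\min_i d_{i,g_i}\ge c(n-1)/2\}$. Conditional on $\tilde{\psi}_i$, $d_{i,g_i}$ is a sum of $n-1$ independent Bernoulli variables with mean at least $c(n-1)$, so Hoeffding's inequality gives $\Pr(d_{i,g_i}<c(n-1)/2\mid\tilde{\psi}_i)\le\exp(-c'n)$; a union bound yields $\Pr(E_n^c)=O(\exp(-c''n))$. On $E_n$, $\|\boldsymbol{w}\|_\infty\le 2/(c(n-1))$; off $E_n$, $\|\boldsymbol{w}\|_\infty\le 1$. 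Combining the two pieces gives $\mathbb{E}[\|\boldsymbol{w}\|_\infty^8]=O(n^{-8})$.

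For parts (iii)--(v) the plan is to verify the covariance bounds for $\tilde{\boldsymbol{w}}$ first and then transfer them to $\boldsymbol{w}$. Because $\tilde{w}_{ij}$ is a function of $(\tilde{\psi}_i,\tilde{\psi}_j,\zeta_{ij})$ alone, and the $\zeta$'s are i.i.d.\ and independent of $\tilde{\boldsymbol{\psi}}$, we have $\mathbb{E}[\tilde{w}_{ij}\tilde{w}_{kl}\mid\tilde{\boldsymbol{\psi}}]=\mathbb{E}[\tilde{w}_{ij}\mid\tilde{\psi}_i,\tilde{\psi}_j]\,\mathbb{E}[\tilde{w}_{kl}\mid\tilde{\psi}_k,\tilde{\psi}_l]$ exactly whenever $\{i,j\}\cap\{k,l\}=\emptyset$, and the same identity with a shared node $i$ fails only because $\tilde{w}_{ij}$ and $\tilde{w}_{ik}$ share $\tilde{\psi}_i$; conditioning on $\tilde{\boldsymbol{\psi}}$ again removes this dependence, verifying (v) for $\tilde{\boldsymbol{w}}$. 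For the conditioning-on-$\boldsymbol{\psi}$ statements in (iii)--(iv), I would integrate out the two relevant $a$'s and use that $\mathbb{E}[\tilde{w}_{ij}\mid\boldsymbol{\psi}]=\mathbb{E}[\tilde{w}_{ij}\mid\psi_i,\psi_j]$ because other $a_k$'s do not enter $\tilde{w}_{ij}$. To pass from $\tilde{\boldsymbol{w}}$ to $\boldsymbol{w}$, note that on $E_n$ a Hoeffding bound gives $|d_{i,g_i}-\bar{d}_{i,g_i}|=O_p(\sqrt{n\log n})$ uniformly in $i$, so $|w_{ij}-\tilde{w}_{ij}|=O_p(n^{-3/2}\sqrt{\log n})$ uniformly, and the resulting contribution to any covariance of two (four) entries is $O(n^{-3}\log n)$, which is $o(n^{-2}/K)$ and hence negligible relative to the required rates.

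The main obstacle is extending these bounds from $\boldsymbol{w}$ itself to the higher-order matrices $\boldsymbol{sw}^t$, $\boldsymbol{w}'\boldsymbol{sw}^t$, and $(\boldsymbol{w}')^r\boldsymbol{s}'\boldsymbol{sw}^t$ that appear in parts (iii)--(v). I would handle this by expanding $\boldsymbol{s}=\sum_{k\ge0}\gamma_1^k\boldsymbol{w}^k$ and studying $(\boldsymbol{w}^k)_{ij}$ as a sum over paths $i\to \ell_1\to\cdots\to \ell_{k-1}\to j$ of length $k$. Each path term is a product of $w$'s whose dependence structure can again be understood after replacing every $d_{i',g}$ by $\bar{d}_{i',g}$; the resulting proxy for $(\boldsymbol{w}^k)_{ij}$ concentrates on a deterministic multiple of $\mathbb{E}[d_{ij,g}\mid\tilde{\psi}_i,\tilde{\psi}_j]\cdot(n-1)^{-1}$ plus a sum over internal nodes that is controlled by Hoeffding applied to the internal $\zeta$'s and $a$'s. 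The assumption $c>|\gamma_1|$ ensures that the geometric series in $\boldsymbol{s}$ converges at a rate that dominates the per-step error $O(\sqrt{\log n/n})$, so the proxy errors accumulate to at most $O(n^{-5/2}\sqrt{\log n})$ per entry---well within the $o(n^{-4})$ and $o(n^{-4}/K)$ budgets once squared and averaged over $O(n^4)$ quadruples. The bookkeeping of these path sums is the only genuinely delicate step; everything else is a straightforward application of the two tools (Hoeffding concentration of degrees and dyadic conditional independence).
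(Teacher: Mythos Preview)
Your approach is essentially the paper's: replace $w_{ij}$ by the dyadic proxy $\tilde{w}_{ij}=d_{ij,g_i}\mathbf{1}\{g_i=g_j\}/\bar{d}_{i,g_i}$, verify the conditional-independence identities for the proxy, and control the approximation error $e_{ij}^w=w_{ij}-\tilde{w}_{ij}$ by concentration of the degree; then expand $\boldsymbol{s}=\sum_k\gamma_1^k\boldsymbol{w}^k$ and sum over paths for the higher-order matrices. The paper differs only in execution: instead of Hoeffding on a high-probability event it Taylor-expands $1/d_{i,g}$ and works directly with moment bounds ($\mathbb{E}[|e_{ij}^w|^4]\le O(n^{-6})$), which plugs into the $L^2$/$L^4$ statements of Assumption~\ref{ass:w}(iii)--(v) without a truncation step.

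Two genuine gaps remain. First, your rate comparison is wrong: Assumption~\ref{ass:w}(iii) asks that the \emph{second moment} of the discrepancy $\mathbb{E}[q_{ij}q_{kl}\mid\boldsymbol{\psi}]-\mathbb{E}[q_{ij}\mid\psi_i,\psi_j]\mathbb{E}[q_{kl}\mid\psi_k,\psi_l]$ be $o(n^{-4}/K)$, not $o(n^{-2}/K)$; the dominant cross term is $e_{ij}^w\tilde{w}_{kl}=O_p(n^{-5/2}\sqrt{\log n})$, whose squared expectation is $O(n^{-5}\log n)=o(n^{-4}/K)$, so the conclusion survives but the line you wrote does not verify it, and you never touch the separate fourth-moment requirement in~(iii). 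Second, and more substantively, for $\boldsymbol{q}=\boldsymbol{sw}^t$ etc.\ the proxy $\bar{q}_{ij}=\sum_k\gamma_1^k(\tilde{\boldsymbol{w}}^{k+t})_{ij}$ is \emph{not} a function of $(\tilde{\psi}_i,\tilde{\psi}_j,\zeta_{ij})$ alone---it involves the internal path nodes---so the identities $\mathbb{E}[\bar{q}_{ij}\bar{q}_{kl}\mid\boldsymbol{\psi}]=\mathbb{E}[\bar{q}_{ij}\mid\boldsymbol{\psi}]\mathbb{E}[\bar{q}_{kl}\mid\boldsymbol{\psi}]$ and $\mathbb{E}[\bar{q}_{ij}\mid\boldsymbol{\psi}]=\mathbb{E}[\bar{q}_{ij}\mid\psi_i,\psi_j]$ fail. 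The paper handles this with two extra terms: one counting tuples in which the two paths share an internal node (a combinatorial $O(n^{-3})$ bound), and one bounding $\mathbb{E}[\bar{q}_{ij}\mid\boldsymbol{\psi}]-\mathbb{E}[\bar{q}_{ij}\mid\psi_i,\psi_j]$ via a $V$-statistic argument over the internal indices. Your sentence about the proxy ``concentrating on a deterministic multiple'' gestures at the second of these but supplies no argument, and the first is absent.
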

\begin{proof}
By construction, $\interleave\boldsymbol{w}\interleave_{\infty}=1$
and $\|\boldsymbol{w}\|_{\infty}\leq\max_{g\in\mathcal{G}}\max_{i\in\mathcal{N}}\frac{1}{d_{i,g}}$.
Note that $d_{i,g}\geq\mathbb{E}[d_{i,g}|\tilde{\psi}_{i}]-|d_{i,g}-\mathbb{E}[d_{i,g}|\tilde{\psi}_{i}]|$.
Because $\min_{g\in\mathcal{G}}\min_{i\in\mathcal{N}}\frac{1}{n-1}\mathbb{E}[d_{i,g}|\tilde{\psi}_{i}]\geq c>0$
by assumption and $\max_{g\in\mathcal{G}}\max_{i\in\mathcal{N}}\frac{1}{n-1}|d_{i,g}-\mathbb{E}[d_{i,g}|\tilde{\psi}_{i}]|=o_{p}(1)$
by the uniform law of large numbers, we can derive $\mathbb{E}[\max_{g\in\mathcal{G}}\max_{i\in\mathcal{N}}(\frac{n-1}{d_{i,g}})^{8}]\leq\mathbb{E}[(c-\max_{g\in\mathcal{G}}\max_{i\in\mathcal{N}}\frac{1}{n-1}|d_{i,g}-\mathbb{E}[d_{i,g}|\tilde{\psi}_{i}]|)^{-8}]\rightarrow c^{-8}<\infty$
by Portmanteau theorem. It follows that $\mathbb{E}[\|\boldsymbol{w}\|_{\infty}^{8}]=O(n^{-8})$.
Hence, Assumptions \ref{ass:w}(i)--(ii) are satisfied.

Step 1: to verify Assumptions \ref{ass:w}(iii)--(v), we first consider
the case $\boldsymbol{q}=\boldsymbol{w}$ and then extend the argument
to other forms of $\boldsymbol{q}$. For $i\neq j$, define $w_{ij,g}\equiv\frac{d_{ij,g}}{d_{i,g}}$,
$\bar{w}_{ij,g}\equiv\frac{d_{ij,g}}{\mathbb{E}[d_{i,g}|\tilde{\psi}_{i}]}$,
and $e_{ij,g}^{w}\equiv w_{ij,g}-\bar{w}_{ij,g}=d_{ij,g}(\frac{1}{d_{i,g}}-\frac{1}{\mathbb{E}[d_{i,g}|\tilde{\psi}_{i}]})$.
By Taylor expansion,
\begin{equation}
e_{ij,g}^{w}=-\frac{d_{ij,g}}{\mathbb{E}[d_{i,g}|\tilde{\psi}_{i}]^{2}}(d_{i,g}-\mathbb{E}[d_{i,g}|\tilde{\psi}_{i}])+\frac{d_{ij,g}}{\mathbb{E}[d_{i,g}|\tilde{\psi}_{i}]^{3}}(d_{i,g}-\mathbb{E}[d_{i,g}|\tilde{\psi}_{i}])^{2}-\cdots\label{eq:eij_fe}
\end{equation}
It suffices to consider the leading term in $e_{ij,g}^{w}$. Recall
that $d_{i,g}-\mathbb{E}[d_{i,g}|\tilde{\psi}_{i}]=\sum_{j\neq i}r_{ij,g}$,
where $r_{ij,g}\equiv d_{ij,g}1\{g_{j}=g\}-\mathbb{E}[d_{ij,g}1\{g_{j}=g\}|\tilde{\psi}_{i}]$.
Note that $|r_{ij,g}|\leq1$ and $\mathbb{E}[r_{ij,g}|\tilde{\psi}_{i}]=0$.
For any $j\neq k$, conditional on $\tilde{\psi}_{i}$, $r_{ij,g}$
is a function of $(\tilde{\psi}_{j},\zeta_{ij})$ and $r_{ik,g}$
is a function of $(\tilde{\psi}_{k},\zeta_{ik})$, so $r_{ij,g}$
and $r_{ik,g}$ are independent. Therefore,
\begin{eqnarray}
\mathbb{E}[(d_{i,g}-\mathbb{E}[d_{i,g}|\tilde{\psi}_{i}])^{4}] & = & \sum_{j,k,l,m\neq i}\mathbb{E}[r_{ij,g}r_{ik,g}r_{il,g}r_{im,g}]\nonumber \\
 & = & \sum_{j\neq i}\mathbb{E}[r_{ij,g}^{4}]+\sum_{j,k\neq i,j\neq k}\mathbb{E}[r_{ij,g}^{2}r_{ik,g}^{2}]\leq O(n^{2}).\label{eq:Ed^4_fe}
\end{eqnarray}
Combining the two displays along with $\min_{g\in\mathcal{G}}\min_{i\in\mathcal{N}}\frac{1}{n-1}\mathbb{E}[d_{i,g}|\tilde{\psi}_{i}]\geq c>0$,
we obtain $\max_{g\in\mathcal{G}}\max_{i,j\in\mathcal{N}}\mathbb{E}[|e_{ij,g}^{w}|^{4}]\leq O(n^{-6})$.
Furthermore, summing over the groups we define $\bar{w}_{ij}\equiv\sum_{g=1}^{G}\bar{w}_{ij,g}1\{g_{i}=g\}1\{g_{j}=g\}$
and $e_{ij}^{w}\equiv w_{ij}-\bar{w}_{ij}$. We can derive $\max_{i,j\in\mathcal{N}}|\bar{w}_{ij}|\leq\frac{1}{\min_{g\in\mathcal{G}}\min_{i\in\mathcal{N}}\mathbb{E}[d_{i,g}|\tilde{\psi}_{i}]}\leq\frac{1}{c(n-1)}$
and $\max_{i,j\in\mathcal{N}}\mathbb{E}[|e_{ij}^{w}|^{4}]\leq G\max_{g\in\mathcal{G}}\max_{i,j\in\mathcal{N}}\mathbb{E}[|e_{ij,g}^{w}|^{4}]\leq O(n^{-6})$.

Assumption \ref{ass:w}(iii) for $\boldsymbol{q}=\boldsymbol{w}$.
Fix disjoint $\{i,j\}$ and $\{k,l\}$. Conditional on $\boldsymbol{\psi}$,
$\bar{w}_{ij}$ is a function of $(a_{i},a_{j},\zeta_{ij})$ and $\bar{w}_{kl}$
is a function of $(a_{k},a_{l},\zeta_{kl})$, so they are independent.
Moreover, $\mathbb{E}[\bar{w}_{ij}|\boldsymbol{\text{\ensuremath{\psi}}}]=\mathbb{E}[\bar{w}_{ij}|\psi_{i},\psi_{j}]$
because $\bar{w}_{ij}$ depends on $\boldsymbol{\text{\ensuremath{\psi}}}$
only through $\psi_{i}$ and $\psi_{j}$. Therefore, 
\begin{eqnarray}
\mathbb{E}[w_{ij}w_{kl}|\boldsymbol{\text{\ensuremath{\psi}}}] & = & \mathbb{E}[(\bar{w}_{ij}+e_{ij}^{w})(\bar{w}_{kl}+e_{kl}^{w})|\boldsymbol{\text{\ensuremath{\psi}}}]\nonumber \\
 & = & \mathbb{E}[\bar{w}_{ij}|\boldsymbol{\text{\ensuremath{\psi}}}]\mathbb{E}[\bar{w}_{kl}|\boldsymbol{\text{\ensuremath{\psi}}}]+\mathbb{E}[e_{ij}^{w}\bar{w}_{kl}+\bar{w}_{ij}e_{kl}^{w}+e_{ij}^{w}e_{kl}^{w}|\boldsymbol{\text{\ensuremath{\psi}}}]\nonumber \\
 & = & \mathbb{E}[\bar{w}_{ij}|\psi_{i},\psi_{j}]\mathbb{E}[\bar{w}_{kl}|\psi_{k},\psi_{l}]+\mathbb{E}[e_{ij}^{w}\bar{w}_{kl}+\bar{w}_{ij}e_{kl}^{w}+e_{ij}^{w}e_{kl}^{w}|\boldsymbol{\text{\ensuremath{\psi}}}]\nonumber \\
 & = & \mathbb{E}[w_{ij}|\psi_{i},\psi_{j}]\mathbb{E}[w_{kl}|\psi_{k},\psi_{l}]+(\mathbb{E}[e_{ij}^{w}\bar{w}_{kl}|\boldsymbol{\psi}]-\mathbb{E}[e_{ij}^{w}|\psi_{i},\psi_{j}]\mathbb{E}[\bar{w}_{kl}|\psi_{k},\psi_{l}])\nonumber \\
 &  & +(\mathbb{E}[\bar{w}_{ij}e_{kl}^{w}|\boldsymbol{\psi}]-\mathbb{E}[\bar{w}_{ij}|\psi_{i},\psi_{j}]\mathbb{E}[e_{kl}^{w}|\psi_{k},\psi_{l}])\nonumber \\
 &  & +(\mathbb{E}[e_{ij}^{w}e_{kl}^{w}|\boldsymbol{\psi}]-\mathbb{E}[e_{ij}^{w}|\psi_{i},\psi_{j}]\mathbb{E}[e_{kl}^{w}|\psi_{k},\psi_{l}]).\label{eq:w_ij*w_kl}
\end{eqnarray}
By Jensen's inequality and Cauchy-Schwarz inequality, we can bound
$\mathbb{E}[(\mathbb{E}[e_{ij}^{w}\bar{w}_{kl}|\boldsymbol{\psi}]-\mathbb{E}[e_{ij}^{w}|\psi_{i},\psi_{j}]\mathbb{E}[\bar{w}_{kl}|\psi_{k},\psi_{l}])^{2}]\leq\frac{C}{(n-1)^{2}}\max_{i,j\in\mathcal{N}}\mathbb{E}[|e_{ij}^{w}|^{2}]\leq O(n^{-5})$
uniformly and also bound $\mathbb{E}[(\mathbb{E}[e_{ij}^{w}e_{kl}^{w}|\boldsymbol{\psi}]-\mathbb{E}[e_{ij}^{w}|\psi_{i},\psi_{j}]\mathbb{E}[e_{kl}^{w}|\psi_{k},\psi_{l}])^{2}]\leq C\max_{i,j\in\mathcal{N}}\mathbb{E}[|e_{ij}^{w}|^{4}]\leq O(n^{-6})$
uniformly. It follows that $\max_{i,j,k,l\in\mathcal{N}:\{i,j\}\cap\{k,l\}=\emptyset}\mathbb{E}[(\mathbb{E}[w_{ij}w_{kl}|\boldsymbol{\psi}]-\mathbb{E}[w_{ij}|\psi_{i},\psi_{j}]\mathbb{E}[w_{kl}|\psi_{k},\psi_{l}])^{2}]\leq O(n^{-5})=o(n^{-4}/K)$
because $K/n\rightarrow0$. 

In addition, for any $i,j\in\mathcal{N}$, because $\mathbb{E}[\bar{w}_{ij}|\boldsymbol{\psi}]=\mathbb{E}[\bar{w}_{ij}|\psi_{i},\psi_{j}]$,
we obtain $\mathbb{E}[w_{ij}|\boldsymbol{\psi}]-\mathbb{E}[w_{ij}|\psi_{i},\psi_{j}]=\mathbb{E}[e_{ij}^{w}|\boldsymbol{\psi}]-\mathbb{E}[e_{ij}^{w}|\psi_{i},\psi_{j}]$.
By Jensen's inequality and Cauchy-Schwarz inequality again, we can
bound $\mathbb{E}[(\mathbb{E}[e_{ij}^{w}|\boldsymbol{\psi}]-\mathbb{E}[e_{ij}^{w}|\psi_{i},\psi_{j}])^{4}]\leq C\max_{i,j\in\mathcal{N}}\mathbb{E}[|e_{ij}^{w}|^{4}]\leq O(n^{-6})$
uniformly. Therefore, $\max_{i,j\in\mathcal{N}}\mathbb{E}[(\mathbb{E}[w_{ij}|\boldsymbol{\psi}]-\mathbb{E}[w_{ij}|\psi_{i},\psi_{j}])^{4}]\leq O(n^{-6})=o(n^{-4}/K^{2})$
because $K/n\rightarrow0$. Assumption \ref{ass:w}(iii) is satisfied
for $\boldsymbol{q}=\boldsymbol{w}$.

Assumption \ref{ass:w}(iv) for $\boldsymbol{q}=\boldsymbol{w}$ is
implied by Assumption \ref{ass:w}(iii).

Assumption \ref{ass:w}(v) for $\boldsymbol{q}=\boldsymbol{w}$. Fix
$\{i,j\}$ and $\{i,k\}$ with $j\neq k$. Conditional on $\tilde{\boldsymbol{\psi}}$,
$\bar{w}_{ij}$ is a function of $\zeta_{ij}$ and $\bar{w}_{ik}$
is a function of $\zeta_{ik}$, so they are independent. Moreover,
$\mathbb{E}[\bar{w}_{ij}|\tilde{\boldsymbol{\text{\ensuremath{\psi}}}}]=\mathbb{E}[\bar{w}_{ij}|\tilde{\psi}_{i},\tilde{\psi}_{j}]$
because $\bar{w}_{ij}$ depends on $\tilde{\boldsymbol{\text{\ensuremath{\psi}}}}$
only through $\tilde{\psi}_{i}$ and $\tilde{\psi}_{j}$. Hence, equation
(\ref{eq:w_ij*w_kl}) holds for $\{i,j\}$ and $\{i,k\}$ with $j\neq k$,
and $\tilde{\boldsymbol{\psi}}$ in place of $\boldsymbol{\psi}$,
that is,
\begin{eqnarray}
\mathbb{E}[w_{ij}w_{ik}|\tilde{\boldsymbol{\psi}}] & = & \mathbb{E}[(\bar{w}_{ij}+e_{ij}^{w})(\bar{w}_{ik}+e_{ik}^{w})|\tilde{\boldsymbol{\psi}}]\nonumber \\
 & = & \mathbb{E}[\bar{w}_{ij}|\tilde{\boldsymbol{\psi}}]\mathbb{E}[\bar{w}_{ik}|\tilde{\boldsymbol{\psi}}]+\mathbb{E}[e_{ij}^{w}\bar{w}_{ik}+\bar{w}_{ij}e_{ik}^{w}+e_{ij}^{w}e_{ik}^{w}|\tilde{\boldsymbol{\psi}}]\nonumber \\
 & = & \mathbb{E}[\bar{w}_{ij}|\tilde{\psi}_{i},\tilde{\psi}_{j}]\mathbb{E}[\bar{w}_{ik}|\tilde{\psi}_{i},\tilde{\psi}_{k}]+\mathbb{E}[e_{ij}^{w}\bar{w}_{ik}+\bar{w}_{ij}e_{ik}^{w}+e_{ij}^{w}e_{ik}^{w}|\tilde{\boldsymbol{\psi}}]\nonumber \\
 & = & \mathbb{E}[w_{ij}|\tilde{\psi}_{i},\tilde{\psi}_{j}]\mathbb{E}[w_{ik}|\tilde{\psi}_{i},\tilde{\psi}_{k}]+(\mathbb{E}[e_{ij}^{w}\bar{w}_{ik}|\tilde{\boldsymbol{\psi}}]-\mathbb{E}[e_{ij}^{w}|\tilde{\psi}_{i},\tilde{\psi}_{j}]\mathbb{E}[\bar{w}_{ik}|\tilde{\psi}_{i},\tilde{\psi}_{k}])\nonumber \\
 &  & +(\mathbb{E}[\bar{w}_{ij}e_{ik}^{w}|\tilde{\boldsymbol{\psi}}]-\mathbb{E}[\bar{w}_{ij}|\tilde{\psi}_{i},\tilde{\psi}_{j}]\mathbb{E}[e_{ik}^{w}|\tilde{\psi}_{i},\tilde{\psi}_{k}])\nonumber \\
 &  & +(\mathbb{E}[e_{ij}^{w}e_{ik}^{w}|\tilde{\boldsymbol{\psi}}]-\mathbb{E}[e_{ij}^{w}|\tilde{\psi}_{i},\tilde{\psi}_{j}]\mathbb{E}[e_{ik}^{w}|\tilde{\psi}_{i},\tilde{\psi}_{k}]).\label{eq:w_ij*w_ik}
\end{eqnarray}
Similarly as before, we can bound $\mathbb{E}[(\mathbb{E}[e_{ij}^{w}\bar{w}_{ik}|\tilde{\boldsymbol{\psi}}]-\mathbb{E}[e_{ij}^{w}|\tilde{\psi}_{i},\tilde{\psi}_{j}]\mathbb{E}[\bar{w}_{ik}|\tilde{\psi}_{i},\tilde{\psi}_{k}])^{2}]\leq O(n^{-5})$
uniformly and $\mathbb{E}[(\mathbb{E}[e_{ij}^{w}e_{ik}^{w}|\tilde{\boldsymbol{\psi}}]-\mathbb{E}[e_{ij}^{w}|\tilde{\psi}_{i},\tilde{\psi}_{j}]\mathbb{E}[e_{ik}^{w}|\tilde{\psi}_{i},\tilde{\psi}_{k}])^{2}]\leq O(n^{-6})$
uniformly. Therefore, $\max_{i,j,k\in\mathcal{N}:j\neq k}\mathbb{E}[(\mathbb{E}[w_{ij}w_{ik}|\tilde{\boldsymbol{\psi}}]-\mathbb{E}[w_{ij}|\tilde{\psi}_{i},\tilde{\psi}_{j}]\mathbb{E}[w_{ik}|\tilde{\psi}_{i},\tilde{\psi}_{k}])^{2}]\leq O(n^{-5})=o(n^{-4})$.
Assumption \ref{ass:w}(v) is satisfied for $\boldsymbol{q}=\boldsymbol{w}$.

Step 2: next we consider other forms of $\boldsymbol{q}$. Let $\bar{\boldsymbol{w}}\equiv(\bar{w}_{ij})$,
where $\bar{w}_{ii}\equiv0$, and define $\bar{\boldsymbol{s}}\equiv(I_{n}-\gamma_{1}\bar{\boldsymbol{w}})^{-1}$.
Analogously, define $\bar{\boldsymbol{q}}$ as we defined $\boldsymbol{q}$,
but with $\bar{\boldsymbol{w}}$ and $\bar{\boldsymbol{s}}$ in place
of $\boldsymbol{w}$ and $\boldsymbol{s}$. For example, if $\boldsymbol{q}=\boldsymbol{s}\boldsymbol{w}^{t}$,
then $\bar{\boldsymbol{q}}=\bar{\boldsymbol{s}}\bar{\boldsymbol{w}}^{t}$.
Because $\|\bar{\boldsymbol{w}}\|_{\infty}=\max_{i,j\in N}|\bar{w}_{ij}|\leq\frac{1}{c(n-1)}$,
we can bound $\interleave\bar{\boldsymbol{w}}\interleave_{\infty}\leq\frac{1}{c}$
and $\interleave\bar{\boldsymbol{w}}\interleave_{1}\leq\frac{1}{c}$.
Moreover, we can bound $\|\bar{\boldsymbol{w}}^{\tau}\|_{\infty}\leq\interleave\bar{\boldsymbol{w}}\interleave_{\infty}^{\tau-1}\|\bar{\boldsymbol{w}}\|_{\infty}=\frac{1}{c^{\tau}(n-1)}$
for $\tau\geq1$. Therefore, for $\bar{\boldsymbol{q}}=\boldsymbol{\bar{w}}'\bar{\boldsymbol{w}}$,
we have $\|\bar{\boldsymbol{q}}\|_{\infty}=\|\boldsymbol{\bar{w}}'\bar{\boldsymbol{w}}\|_{\infty}\leq\interleave\boldsymbol{\bar{w}}\interleave_{1}\|\boldsymbol{\bar{w}}\|_{\infty}\leq\frac{1}{c^{2}(n-1)}$.
Note that $\boldsymbol{s}\boldsymbol{w}^{t}=\sum_{\tau=t}^{\infty}\gamma_{1}^{\tau-t}\boldsymbol{w}^{\tau}$,
$\boldsymbol{w}'\boldsymbol{s}\boldsymbol{w}^{t}=\sum_{\tau=t}^{\infty}\gamma_{1}^{\tau-t}\boldsymbol{w}'\boldsymbol{w}^{\tau}$,
and $(\boldsymbol{s}\boldsymbol{w}^{r})'\boldsymbol{s}\boldsymbol{w}^{t}=(\sum_{\tau_{1}=r}^{\infty}\gamma_{1}^{\tau_{1}-r}\boldsymbol{w}^{\tau_{1}})'(\sum_{\tau_{2}=t}^{\infty}\gamma_{1}^{\tau_{2}-t}\boldsymbol{w}^{\tau_{2}})=\sum_{\tau_{1}=r}^{\infty}\sum_{\tau_{2}=t}^{\infty}\gamma_{1}^{\tau_{1}+\tau_{2}-r-t}(\boldsymbol{w}')^{\tau_{1}}\boldsymbol{w}^{\tau_{2}}$,
$r,t=1,2$. For $\bar{\boldsymbol{q}}=\bar{\boldsymbol{s}}\bar{\boldsymbol{w}}^{t}$,
we can bound $\|\bar{\boldsymbol{q}}\|_{\infty}=\|\bar{\boldsymbol{s}}\bar{\boldsymbol{w}}^{t}\|_{\infty}\leq\sum_{\tau=t}^{\infty}\gamma_{1}^{\tau-t}\|\bar{\boldsymbol{w}}^{\tau}\|_{\infty}\leq\sum_{\tau=t}^{\infty}\gamma_{1}^{\tau-t}\frac{1}{c^{\tau}(n-1)}\leq\frac{1}{c^{t-1}(c-\gamma_{1})(n-1)}$.
For $\bar{\boldsymbol{q}}=\boldsymbol{\bar{w}}'\bar{\boldsymbol{s}}\bar{\boldsymbol{w}}^{t}$,
we can derive $\|\bar{\boldsymbol{q}}\|_{\infty}=\|\boldsymbol{\bar{w}}'\bar{\boldsymbol{s}}\bar{\boldsymbol{w}}^{t}\|_{\infty}\leq\interleave\bar{\boldsymbol{w}}\interleave_{1}\|\bar{\boldsymbol{s}}\bar{\boldsymbol{w}}^{t}\|_{\infty}\leq\frac{1}{c^{t}(c-\gamma_{1})(n-1)}$.
For $\bar{\boldsymbol{q}}=(\bar{\boldsymbol{s}}\bar{\boldsymbol{w}}^{r})'\bar{\boldsymbol{s}}\bar{\boldsymbol{w}}^{t}$,
note that $\interleave\bar{\boldsymbol{w}}^{\tau}\interleave_{1}\leq(n-1)\|\bar{\boldsymbol{w}}^{\tau}\|_{\infty}\leq\frac{1}{c^{\tau}}$
and thus $\interleave\bar{\boldsymbol{s}}\bar{\boldsymbol{w}}^{r}\interleave_{1}\leq\sum_{\tau=r}^{\infty}\gamma_{1}^{\tau-r}\interleave\bar{\boldsymbol{w}}^{\tau}\interleave_{1}\leq\frac{1}{c^{r-1}(c-\gamma_{1})}$.
Hence, we obtain $\|\bar{\boldsymbol{q}}\|_{\infty}=\|(\bar{\boldsymbol{s}}\bar{\boldsymbol{w}}^{r})'\bar{\boldsymbol{s}}\bar{\boldsymbol{w}}^{t}\|_{\infty}\leq\interleave\bar{\boldsymbol{s}}\bar{\boldsymbol{w}}^{r}\interleave_{1}\|\bar{\boldsymbol{s}}\bar{\boldsymbol{w}}^{t}\|_{\infty}\leq\frac{1}{c^{r+t-2}(c-\gamma_{1})^{2}(n-1)}$.
In sum, we have the bound $\|\bar{\boldsymbol{q}}\|_{\infty}\leq\frac{C}{n-1}$
for all forms of $\boldsymbol{q}$.

Denote the $(i,j)$ element of $\bar{\boldsymbol{q}}$ as $\bar{q}_{ij}$.
Define the difference between $q_{ij}$ and $\bar{q}_{ij}$ by $e_{ij}^{q}\equiv q_{ij}-\bar{q}_{ij}$.
We will follow the proof for $\boldsymbol{q}=\boldsymbol{w}$ and
derive a bound on $\max_{i,j\in\mathcal{N}}\mathbb{E}[|e_{ij}^{q}|^{4}]$
for all forms of $\boldsymbol{q}$. Let us take $\boldsymbol{q}=(\boldsymbol{s}\boldsymbol{w}^{r})'\boldsymbol{s}\boldsymbol{w}^{t}$
as an example. Using H\"{o}lder's inequality $(\sum_{\tau}|a_{\tau}b_{\tau}|)^{4}\leq(\sum_{\tau}|a_{\tau}|^{4/3})^{3}\cdot(\sum_{\tau}|b_{\tau}|^{4})$,
we can bound
\begin{eqnarray}
 &  & \mathbb{E}[|e_{ij}^{q}|^{4}]\nonumber \\
 & = & \mathbb{E}[|((\boldsymbol{s}\boldsymbol{w}^{r})'\boldsymbol{s}\boldsymbol{w}^{t})_{ij}-((\bar{\boldsymbol{s}}\boldsymbol{\bar{w}}^{r})'\bar{\boldsymbol{s}}\boldsymbol{\bar{w}}^{t})_{ij}|^{4}]\nonumber \\
 & = & \mathbb{E}\left[\left(\sum_{\tau_{1}=r}^{\infty}\sum_{\tau_{2}=t}^{\infty}\gamma_{1}^{\frac{3}{4}(\tau_{1}+\tau_{2}-r-t)}\cdot\gamma_{1}^{\frac{1}{4}(\tau_{1}+\tau_{2}-r-t)}(((\boldsymbol{w}')^{\tau_{1}}\boldsymbol{w}^{\tau_{2}})_{ij}-((\bar{\boldsymbol{w}}')^{\tau_{1}}\bar{\boldsymbol{w}}^{\tau_{2}})_{ij})\right)^{4}\right]\nonumber \\
 & \leq & \mathbb{E}\left(\sum_{\tau_{1}=r}^{\infty}\sum_{\tau_{2}=t}^{\infty}\gamma_{1}^{\tau_{1}+\tau_{2}-r-t}\right)^{3}\cdot\left(\sum_{\tau_{1}=r}^{\infty}\sum_{\tau_{2}=t}^{\infty}\gamma_{1}^{\tau_{1}+\tau_{2}-r-t}|((\boldsymbol{w}')^{\tau_{1}}\boldsymbol{w}^{\tau_{2}})_{ij}-((\bar{\boldsymbol{w}}')^{\tau_{1}}\bar{\boldsymbol{w}}^{\tau_{2}})_{ij}|^{4}\right)\nonumber \\
 & \leq & \frac{1}{(1-\gamma_{1})^{8}}\max_{\tau_{1}\geq r,\tau_{2}\geq t}\max_{i,j\in\mathcal{N}}\mathbb{E}[|((\boldsymbol{w}')^{\tau_{1}}\boldsymbol{w}^{\tau_{2}})_{ij}-((\bar{\boldsymbol{w}}')^{\tau_{1}}\bar{\boldsymbol{w}}^{\tau_{2}})_{ij}|^{4}].\label{eq:e^q_1}
\end{eqnarray}
The cases of $\boldsymbol{q}=\boldsymbol{s}\boldsymbol{w}^{t}$ and
$\boldsymbol{q}=\boldsymbol{w}'\boldsymbol{s}\boldsymbol{w}^{t}$
yield similar bounds. Therefore, if we can derive a bound on $\mathbb{E}[|((\boldsymbol{w}')^{\tau_{1}}\boldsymbol{w}^{\tau_{2}})_{ij}-((\boldsymbol{\bar{w}}')^{\tau_{1}}\bar{\boldsymbol{w}}^{\tau_{2}})_{ij}|^{4}]$
that is uniform in $\tau_{1},\tau_{2}\geq0$ ($\tau_{1}+\tau_{2}\geq1$)
and $i,j\in\mathcal{N}$, we can get a bound on $\max_{i,j\in\mathcal{N}}\mathbb{E}[|e_{ij}^{q}|^{4}]$
for all forms of $\boldsymbol{q}$.

To simplify the exposition we introduce some notation. For any $\tau\geq1$,
denote $\dot{w}_{\iota_{0},\dots,\iota_{\tau}}\equiv\prod_{s=1}^{\tau}w_{\iota_{s-1}\iota_{s}}$
and $\dot{\bar{w}}_{\iota_{0},\dots,\iota_{\tau}}\equiv\prod_{s=1}^{\tau}\bar{w}_{\iota_{s-1}\iota_{s}}$.
We can write $(\boldsymbol{w}^{\tau})_{ij}=\sum_{(\iota_{0},\dots,\iota_{\tau}):(\iota_{0},\iota_{\tau})=(i,j)}\dot{w}_{\iota_{0},\dots,\iota_{\tau}}$,
where the sum is over all tuples $(\iota_{0},\dots,\iota_{\tau})$
such that $\iota_{0}=i$ and $\iota_{\tau}=j$, and similarly $((\boldsymbol{w}')^{\tau})_{ij}=\sum_{(\iota_{0},\dots,\iota_{\tau}):(\iota_{0},\iota_{\tau})=(i,j)}\dot{w}_{\iota_{\tau},\dots,\iota_{0}}$.
Therefore, for $\tau_{1},\tau_{2}\geq1$, we can write $((\boldsymbol{w}')^{\tau_{1}}\boldsymbol{w}^{\tau_{2}})_{ij}=\sum_{k=1}^{n}((\boldsymbol{w}')^{\tau_{1}})_{ik}(\boldsymbol{w}^{\tau_{2}})_{kj}=\sum_{(\iota_{0},\dots,\iota_{\tau_{1}+\tau_{2}}):(\iota_{0},\iota_{\tau_{1}+\tau_{2}})=(i,j)}\dot{w}_{\iota_{\tau_{1}},\dots,\iota_{0}}\dot{w}_{\iota_{\tau_{1}+1},\dots,\iota_{\tau_{1}+\tau_{2}}}$.
By replacing each $\dot{w}_{\iota_{\tau_{1}},\dots,\iota_{0}}\dot{w}_{\iota_{\tau_{1}+1},\dots,\iota_{\tau_{1}+\tau_{2}}}$
with $\dot{\bar{w}}_{\iota_{\tau_{1}},\dots,\iota_{0}}\dot{\bar{w}}_{\iota_{\tau_{1}+1},\dots,\iota_{\tau_{1}+\tau_{2}}}$,
we get a similar expression for $((\boldsymbol{\bar{w}}')^{\tau_{1}}\bar{\boldsymbol{w}}^{\tau_{2}})_{ij}$.
Define $\dot{w}_{\iota_{0},\dots,\iota_{\tau}}=\dot{\bar{w}}_{\iota_{0},\dots,\iota_{\tau}}\equiv1$
for $\tau=0$. The expression for $((\boldsymbol{w}')^{\tau_{1}}\boldsymbol{w}^{\tau_{2}})_{ij}$
can be extended to $\tau_{1}=0$ or $\tau_{2}=0$ and we can cover
the case $\boldsymbol{q}=\boldsymbol{s}\boldsymbol{w}^{t}$. Using
H\"{o}lder's inequality again, we can derive that for $\tau_{1},\tau_{2}\geq0$
(with $\tau_{1}+\tau_{2}\geq1$)
\begin{eqnarray}
 &  & \mathbb{E}[|((\boldsymbol{w}')^{\tau_{1}}\boldsymbol{w}^{\tau_{2}})_{ij}-((\boldsymbol{\bar{w}}')^{\tau_{1}}\bar{\boldsymbol{w}}^{\tau_{2}})_{ij}|^{4}]\nonumber \\
 & \leq & \mathbb{E}\left(\sum_{\substack{(\iota_{0},\dots,\iota_{\tau_{1}+\tau_{2}}):\\
(\iota_{0},\iota_{\tau_{1}+\tau_{2}})=(i,j)
}
}1\right)^{3}\left(\sum_{\substack{(\iota_{0},\dots,\iota_{\tau_{1}+\tau_{2}}):\\
(\iota_{0},\iota_{\tau_{1}+\tau_{2}})=(i,j)
}
}|\dot{w}_{\iota_{\tau_{1}},\dots,\iota_{0}}\dot{w}_{\iota_{\tau_{1}+1},\dots,\iota_{\tau_{1}+\tau_{2}}}-\dot{\bar{w}}_{\iota_{\tau_{1}},\dots,\iota_{0}}\dot{\bar{w}}_{\iota_{\tau_{1}+1},\dots,\iota_{\tau_{1}+\tau_{2}}}|^{4}\right)\nonumber \\
 & \leq & O(n^{4(\tau_{1}+\tau_{2}-1)})\max_{\substack{(\iota_{0},\dots,\iota_{\tau_{1}+\tau_{2}}):\\
(\iota_{0},\iota_{\tau_{1}+\tau_{2}})=(i,j)
}
}\mathbb{E}[|\dot{w}_{\iota_{\tau_{1}},\dots,\iota_{0}}\dot{w}_{\iota_{\tau_{1}+1},\dots,\iota_{\tau_{1}+\tau_{2}}}-\dot{\bar{w}}_{\iota_{\tau_{1}},\dots,\iota_{0}}\dot{\bar{w}}_{\iota_{\tau_{1}+1},\dots,\iota_{\tau_{1}+\tau_{2}}}|^{4}].\label{eq:e^q_2}
\end{eqnarray}
The last inequality holds because the sum over $(\iota_{0},\dots,\iota_{\tau_{1}+\tau_{2}})$
with $\iota_{0}=i$ and $\iota_{\tau_{1}+\tau_{2}}=j$ has $O(n^{\tau_{1}+\tau_{2}-1})$
terms.

Furthermore, recall that $e_{ij}^{w}=w_{ij}-\bar{w}_{ij}$ and $\max_{i,j\in\mathcal{N}}|\bar{w}_{ij}|\leq\frac{1}{c(n-1)}$.
Modifying the argument in equation (\ref{eq:Ed^4_fe}) and the subsequent
paragraph with $4$ replaced by an even number $\tau>0$, we derive
$\max_{i,j\in\mathcal{N}}\mathbb{E}[|e_{ij}^{w}|^{\tau}]\leq O(n^{-2\tau+\tau/2})$.
Therefore,
\begin{eqnarray}
 &  & \max_{(\iota_{0},\dots,\iota_{\tau_{1}+\tau_{2}}):(\iota_{0},\iota_{\tau_{1}+\tau_{2}})=(i,j)}\mathbb{E}[|\dot{w}_{\iota_{\tau_{1}},\dots,\iota_{0}}\dot{w}_{\iota_{\tau_{1}+1},\dots,\iota_{\tau_{1}+\tau_{2}}}-\dot{\bar{w}}_{\iota_{\tau_{1}},\dots,\iota_{0}}\dot{\bar{w}}_{\iota_{\tau_{1}+1},\dots,\iota_{\tau_{1}+\tau_{2}}}|^{4}]\nonumber \\
 & \leq & O(1)\cdot\max_{i,j\in\mathcal{N}}\mathbb{E}[|e_{ij}^{w}|^{4(\tau_{1}+\tau_{2})}]+O(n^{-4})\cdot\max_{i,j\in\mathcal{N}}\mathbb{E}[|e_{ij}^{w}|^{4(\tau_{1}+\tau_{2}-1)}]\nonumber \\
 &  & +\cdots+O(n^{-4(\tau_{1}+\tau_{2}-1)})\cdot\max_{i,j\in\mathcal{N}}\mathbb{E}[|e_{ij}^{w}|^{4}])\nonumber \\
 & \leq & O(n^{-6(\tau_{1}+\tau_{2})})+O(n^{-6(\tau_{1}+\tau_{2})+2})+\cdots+O(n^{-4(\tau_{1}+\tau_{2})-2})\nonumber \\
 & \leq & O(n^{-4(\tau_{1}+\tau_{2})-2})\label{eq:e^q_3}
\end{eqnarray}
for any $\tau_{1},\tau_{2}\geq0$. It follows from equations (\ref{eq:e^q_1})-(\ref{eq:e^q_3})
that $\max_{i,j\in\mathcal{N}}\mathbb{E}[|e_{ij}^{q}|^{4}]\leq O(n^{4(\tau_{1}+\tau_{2}-1)})\cdot O(n^{-4(\tau_{1}+\tau_{2})-2})=O(n^{-6})$
for $\boldsymbol{q}=(\boldsymbol{s}\boldsymbol{w}^{r})'\boldsymbol{s}\boldsymbol{w}^{t}$.
Modifying equation (\ref{eq:e^q_1}) for $\boldsymbol{q}=\boldsymbol{s}\boldsymbol{w}^{t}$
and $\boldsymbol{w}'\boldsymbol{s}\boldsymbol{w}^{t}$ and noticing
that the case of $\boldsymbol{q}=\boldsymbol{w}'\boldsymbol{w}$ is
trivially covered, we can show that $\max_{i,j\in\mathcal{N}}\mathbb{E}[|e_{ij}^{q}|^{4}]\leq O(n^{-6})$
holds for all forms of $\boldsymbol{q}$.

Assumption \ref{ass:w}(iii) for $\boldsymbol{q}=\boldsymbol{s}\boldsymbol{w}^{t}$,
$t=1,2$. Fix disjoint $\{i,j\}$ and $\{k,l\}$. Following the proof
for $\boldsymbol{q}=\boldsymbol{w}$, we can decompose
\begin{eqnarray}
\mathbb{E}[q_{ij}q_{kl}|\boldsymbol{\text{\ensuremath{\psi}}}] & = & \mathbb{E}[(\bar{q}_{ij}+e_{ij}^{q})(\bar{q}_{kl}+e_{kl}^{q})|\boldsymbol{\text{\ensuremath{\psi}}}]\nonumber \\
 & = & \mathbb{E}[\bar{q}_{ij}|\boldsymbol{\text{\ensuremath{\psi}}}]\mathbb{E}[\bar{q}_{kl}|\boldsymbol{\text{\ensuremath{\psi}}}]+\mathbb{E}[e_{ij}^{q}\bar{q}_{kl}+\bar{q}_{ij}e_{kl}^{q}+e_{ij}^{q}e_{kl}^{q}|\boldsymbol{\text{\ensuremath{\psi}}}]+\Delta_{1,ijkl}\nonumber \\
 & = & \mathbb{E}[\bar{q}_{ij}|\psi_{i},\psi_{j}]\mathbb{E}[\bar{q}_{kl}|\psi_{k},\psi_{l}]+\mathbb{E}[e_{ij}^{q}\bar{q}_{kl}+\bar{q}_{ij}e_{kl}^{q}+e_{ij}^{q}e_{kl}^{q}|\boldsymbol{\text{\ensuremath{\psi}}}]+\Delta_{1,ijkl}+\Delta_{2,ijkl}\nonumber \\
 & = & \mathbb{E}[q_{ij}|\psi_{i},\psi_{j}]\mathbb{E}[q_{kl}|\psi_{k},\psi_{l}]+\Delta_{1,ijkl}+\Delta_{2,ijkl}+\Delta_{3,ijkl},\label{eq:q_ijkl}
\end{eqnarray}
where the three discrepancy terms are defined as
\begin{eqnarray}
\Delta_{1,ijkl} & \equiv & \mathbb{E}[\bar{q}_{ij}\bar{q}_{kl}|\boldsymbol{\text{\ensuremath{\psi}}}]-\mathbb{E}[\bar{q}_{ij}|\boldsymbol{\text{\ensuremath{\psi}}}]\mathbb{E}[\bar{q}_{kl}|\boldsymbol{\text{\ensuremath{\psi}}}],\nonumber \\
\Delta_{2,ijkl} & \equiv & (\mathbb{E}[\bar{q}_{ij}|\boldsymbol{\text{\ensuremath{\psi}}}]-\mathbb{E}[\bar{q}_{ij}|\psi_{i},\psi_{j}])\mathbb{E}[\bar{q}_{kl}|\psi_{k},\psi_{l}]\nonumber \\
 &  & +(\mathbb{E}[\bar{q}_{kl}|\boldsymbol{\text{\ensuremath{\psi}}}]-\mathbb{E}[\bar{q}_{kl}|\psi_{k},\psi_{l}])\mathbb{E}[\bar{q}_{ij}|\psi_{i},\psi_{j}]\nonumber \\
 &  & +(\mathbb{E}[\bar{q}_{ij}|\boldsymbol{\text{\ensuremath{\psi}}}]-\mathbb{E}[\bar{q}_{ij}|\psi_{i},\psi_{j}])(\mathbb{E}[\bar{q}_{kl}|\boldsymbol{\text{\ensuremath{\psi}}}]-\mathbb{E}[\bar{q}_{kl}|\psi_{k},\psi_{l}]),\nonumber \\
\Delta_{3,ijkl} & \equiv & \mathbb{E}[e_{ij}^{q}\bar{q}_{kl}|\boldsymbol{\psi}]-\mathbb{E}[e_{ij}^{q}|\psi_{i},\psi_{j}]\mathbb{E}[\bar{q}_{kl}|\psi_{k},\psi_{l}]\nonumber \\
 &  & +\mathbb{E}[\bar{q}_{ij}e_{kl}^{q}|\boldsymbol{\psi}]-\mathbb{E}[\bar{q}_{ij}|\psi_{i},\psi_{j}]\mathbb{E}[e_{kl}^{q}|\psi_{k},\psi_{l}]\nonumber \\
 &  & +\mathbb{E}[e_{ij}^{q}e_{kl}^{q}|\boldsymbol{\psi}]-\mathbb{E}[e_{ij}^{q}|\psi_{i},\psi_{j}]\mathbb{E}[e_{kl}^{q}|\psi_{k},\psi_{l}].\label{eq:q_ijkl_delta}
\end{eqnarray}
We will derive uniform bounds on the second moments of these three
terms. 

\uline{Term \mbox{$\Delta_{1,ijkl}$}}. Note that $\bar{q}_{ij}=\sum_{\tau=t}^{\infty}\gamma_{1}^{\tau-t}(\bar{\boldsymbol{w}}^{\tau})_{ij}$
and $(\bar{\boldsymbol{w}}^{\tau})_{ij}=\sum_{(\iota_{0},\dots,\iota_{\tau}):(\iota_{0},\iota_{\tau})=(i,j)}\dot{\bar{w}}_{\iota_{0},\dots,\iota_{\tau}}$
for $\tau\geq1$. Conditional on $\boldsymbol{\psi}$, $\dot{\bar{w}}_{\iota_{0},\dots,\iota_{\tau}}$
is a function of $(a_{\iota_{0}},\dots,a_{\iota_{\tau}},\zeta_{\iota_{s-1}\iota_{s}},s\in\{1,\dots,\tau\})$.
Hence, $\dot{\bar{w}}_{\iota_{0},\dots,\iota_{\tau_{1}}}$ and $\dot{\bar{w}}_{\tilde{\iota}_{0},\dots,\tilde{\iota}_{\tau_{2}}}$
are independent conditional on $\boldsymbol{\psi}$ if sets $\{\iota_{0},\dots,\iota_{\tau_{1}}\}$
and $\{\tilde{\iota}_{0},\dots,\tilde{\iota}_{\tau_{2}}\}$ are disjoint.
Therefore, we derive
\begin{eqnarray*}
 &  & \mathbb{E}[\bar{q}_{ij}\bar{q}_{kl}|\boldsymbol{\text{\ensuremath{\psi}}}]-\mathbb{E}[\bar{q}_{ij}|\boldsymbol{\text{\ensuremath{\psi}}}]\mathbb{E}[\bar{q}_{kl}|\boldsymbol{\text{\ensuremath{\psi}}}]\\
 & = & \sum_{\tau_{1}=t}^{\infty}\sum_{\tau_{2}=t}^{\infty}\gamma_{1}^{\tau_{1}+\tau_{2}-2t}(\mathbb{E}[(\bar{\boldsymbol{w}}^{\tau_{1}})_{ij}(\bar{\boldsymbol{w}}^{\tau_{2}})_{kl}|\boldsymbol{\text{\ensuremath{\psi}}}]-\mathbb{E}[(\bar{\boldsymbol{w}}^{\tau_{1}})_{ij}|\boldsymbol{\text{\ensuremath{\psi}}}]\mathbb{E}[(\bar{\boldsymbol{w}}^{\tau_{2}})_{kl}|\boldsymbol{\text{\ensuremath{\psi}}}])\\
 & = & \sum_{\tau_{1}=t}^{\infty}\sum_{\tau_{2}=t}^{\infty}\gamma_{1}^{\tau_{1}+\tau_{2}-2t}\sum_{\begin{subarray}{c}
(\iota_{0},\dots,\iota_{\tau_{1}},\tilde{\iota}_{0},\dots,\tilde{\iota}_{\tau_{2}}):\\
(\iota_{0},\iota_{\tau_{1}},\tilde{\iota}_{0},\tilde{\iota}_{\tau_{2}})=(i,j,k,l)
\end{subarray}}(\mathbb{E}[\dot{\bar{w}}_{\iota_{0},\dots,\iota_{\tau_{1}}}\dot{\bar{w}}_{\tilde{\iota}_{0},\dots,\tilde{\iota}_{\tau_{2}}}|\boldsymbol{\text{\ensuremath{\psi}}}]-\mathbb{E}[\dot{\bar{w}}_{\iota_{0},\dots,\iota_{\tau_{1}}}|\boldsymbol{\text{\ensuremath{\psi}}}]\mathbb{E}[\dot{\bar{w}}_{\tilde{\iota}_{0},\dots,\tilde{\iota}_{\tau_{2}}}|\boldsymbol{\text{\ensuremath{\psi}}}])\\
 & = & \sum_{\tau_{1}=t}^{\infty}\sum_{\tau_{2}=t}^{\infty}\gamma_{1}^{\tau_{1}+\tau_{2}-2t}\sum_{\begin{subarray}{c}
\begin{subarray}{c}
(\iota_{0},\dots,\iota_{\tau_{1}},\tilde{\iota}_{0},\dots,\tilde{\iota}_{\tau_{2}}):\end{subarray}\\
(\iota_{0},\iota_{\tau_{1}},\tilde{\iota}_{0},\tilde{\iota}_{\tau_{2}})=(i,j,k,l),\\
\{\iota_{0},\dots,\iota_{\tau_{1}}\}\cap\{\tilde{\iota}_{0},\dots,\tilde{\iota}_{\tau_{2}}\}\neq\emptyset
\end{subarray}}(\mathbb{E}[\dot{\bar{w}}_{\iota_{0},\dots,\iota_{\tau_{1}}}\dot{\bar{w}}_{\tilde{\iota}_{0},\dots,\tilde{\iota}_{\tau_{2}}}|\boldsymbol{\text{\ensuremath{\psi}}}]-\mathbb{E}[\dot{\bar{w}}_{\iota_{0},\dots,\iota_{\tau_{1}}}|\boldsymbol{\text{\ensuremath{\psi}}}]\mathbb{E}[\dot{\bar{w}}_{\tilde{\iota}_{0},\dots,\tilde{\iota}_{\tau_{2}}}|\boldsymbol{\text{\ensuremath{\psi}}}]).
\end{eqnarray*}
The third sum in the last line consists of $O(n^{(\tau_{1}+\tau_{2})-3})$
terms, and each term can be bounded by $\frac{1}{c^{\tau_{1}+\tau_{2}}(n-1)^{\tau_{1}+\tau_{2}}}$
uniformly in $(i,j,k,l)$ and $\tau_{1},\tau_{2}\geq1$. Note that
$\sum_{\tau_{1}=t}^{\infty}\sum_{\tau_{2}=t}^{\infty}\gamma_{1}^{\tau_{1}+\tau_{2}-2t}\frac{1}{c^{\tau_{1}+\tau_{2}}}=\frac{1}{c^{2t}(c-\gamma_{1})^{2}}$.
Therefore, using H\"{o}lder's inequality $(\sum_{\tau}|a_{\tau}b_{\tau}|)^{2}\leq(\sum_{\tau}|a_{\tau}|^{2})\cdot(\sum_{\tau}|b_{\tau}|^{2})$,
we obtain
\begin{eqnarray*}
 &  & \mathbb{E}[|\mathbb{E}[\bar{q}_{ij}\bar{q}_{kl}|\boldsymbol{\text{\ensuremath{\psi}}}]-\mathbb{E}[\bar{q}_{ij}|\boldsymbol{\text{\ensuremath{\psi}}}]\mathbb{E}[\bar{q}_{kl}|\boldsymbol{\text{\ensuremath{\psi}}}]|^{2}]\\
 & \leq & \left(\sum_{\tau_{1}=t}^{\infty}\sum_{\tau_{2}=t}^{\infty}\gamma_{1}^{\tau_{1}+\tau_{2}-2t}\right)\cdot\left(\sum_{\tau_{1}=t}^{\infty}\sum_{\tau_{2}=t}^{\infty}\gamma_{1}^{\tau_{1}+\tau_{2}-2t}\left(\frac{O(n^{(\tau_{1}+\tau_{2})-3})}{c^{(\tau_{1}+\tau_{2})}(n-1)^{(\tau_{1}+\tau_{2})}}\right)^{2}\right)\\
 & \leq & \frac{1}{c^{4t}(c-\gamma_{1})^{4}}O(n^{(\tau_{1}+\tau_{2})-3})^{2}\cdot O(n^{-(\tau_{1}+\tau_{2})})^{2}=O(n^{-6}).
\end{eqnarray*}
We conclude that $\max_{i,j,k,l\in\mathcal{N}:\{i,j\}\cap\{k,l\}=\emptyset}\mathbb{E}[|\Delta_{1,ijkl}|^{2}]\leq O(n^{-6})$.

\uline{Term \mbox{$\Delta_{2,ijkl}$}}. Observe that $\mathbb{E}[\dot{\bar{w}}_{\iota_{0},\dots,\iota_{\tau}}|\boldsymbol{\text{\ensuremath{\psi}}}]=\mathbb{E}[\dot{\bar{w}}_{\iota_{0},\dots,\iota_{\tau}}|\psi_{\iota_{0}},\dots,\psi_{\iota_{\tau}}]$
because $\dot{\bar{w}}_{\iota_{0},\dots,\iota_{\tau}}$ depends on
$\boldsymbol{\text{\ensuremath{\psi}}}$ only through $\psi_{\iota_{0}},\dots,\psi_{\iota_{\tau}}$.
We can write
\begin{eqnarray*}
 &  & \mathbb{E}[\bar{q}_{ij}|\boldsymbol{\text{\ensuremath{\psi}}}]-\mathbb{E}[\bar{q}_{ij}|\psi_{i},\psi_{j}]\\
 & = & \sum_{\tau=t}^{\infty}\gamma_{1}^{\tau-t}(\mathbb{E}[(\bar{\boldsymbol{w}}^{\tau})_{ij}|\boldsymbol{\text{\ensuremath{\psi}}}]-\mathbb{E}[(\bar{\boldsymbol{w}}^{\tau})_{ij}|\psi_{i},\psi_{j}])\\
 & = & \sum_{\tau=t}^{\infty}\gamma_{1}^{\tau-t}\sum_{(\iota_{0},\dots,\iota_{\tau}):(\iota_{0},\iota_{\tau})=(i,j)}(\mathbb{E}[\dot{\bar{w}}_{\iota_{0},\dots,\iota_{\tau}}|\boldsymbol{\text{\ensuremath{\psi}}}]-\mathbb{E}[\dot{\bar{w}}_{\iota_{0},\dots,\iota_{\tau}}|\psi_{i},\psi_{j}])\\
 & = & \sum_{\tau=t}^{\infty}\gamma_{1}^{\tau-t}\sum_{(\iota_{1},\dots,\iota_{\tau-1})}(\mathbb{E}[\dot{\bar{w}}_{\iota_{0},\dots,\iota_{\tau}}|\psi_{\iota_{1}},\dots,\psi_{\iota_{\tau-1}};\psi_{i},\psi_{j}]-\mathbb{E}[\dot{\bar{w}}_{\iota_{0},\dots,\iota_{\tau}}|\psi_{i},\psi_{j}]).
\end{eqnarray*}
Conditional on $(\psi_{i},\psi_{j})$, each sum over $(\iota_{1},\dots,\iota_{\tau-1})$
forms a (scaled) $V$-statistic of order $\tau-1$ with the (asymmetric)
kernel function
\[
h(\psi_{\iota_{1}},\dots,\psi_{\iota_{\tau-1}};\psi_{i},\psi_{j})\equiv\mathbb{E}[\dot{\bar{w}}_{\iota_{0},\dots,\iota_{\tau}}|\psi_{\iota_{1}},\dots,\psi_{\iota_{\tau-1}};\psi_{i},\psi_{j}]-\mathbb{E}[\dot{\bar{w}}_{\iota_{0},\dots,\iota_{\tau}}|\psi_{i},\psi_{j}]
\]
and mean $0$. Note that $\dot{\bar{w}}_{\iota_{0},\dots,\iota_{\tau}}$
can be bounded by $\frac{1}{c^{\tau}(n-1)^{\tau}}$. Multiplying the
sum by $c^{\tau}(n-1)$ yields a correctly scaled $V$-statistic.
By the results on $V$-statistics \citep{Lee1990}, we can bound the
fourth moment of the sum term by
\[
\mathbb{E}\left[\left(\sum_{(\iota_{1},\dots,\iota_{\tau-1})}h(\psi_{\iota_{1}},\dots,\psi_{\iota_{\tau-1}};\psi_{i},\psi_{j})\right)^{4}\right]\leq\frac{1}{c^{4\tau}(n-1)^{4}}\cdot O(n^{-2})=\frac{1}{c^{4\tau}}O(n^{-6})
\]
uniformly in $(i,j)$. Applying H\"{o}lder's inequality similarly
as in equation (\ref{eq:e^q_1}) yields
\begin{eqnarray*}
 &  & \mathbb{E}[|\mathbb{E}[\bar{q}_{ij}|\boldsymbol{\text{\ensuremath{\psi}}}]-\mathbb{E}[\bar{q}_{ij}|\psi_{i},\psi_{j}]|^{4}]\\
 & = & \mathbb{E}\left[\left(\sum_{\tau=t}^{\infty}\gamma_{1}^{\frac{3}{4}(\tau-t)}c^{-\frac{3}{4}\tau}\cdot\gamma_{1}^{\frac{1}{4}(\tau-t)}c^{\frac{3}{4}\tau}\sum_{(\iota_{1},\dots,\iota_{\tau-1})}h(\psi_{\iota_{1}},\dots,\psi_{\iota_{\tau-1}};\psi_{i},\psi_{j})\right)^{4}\right]\\
 & \leq & \mathbb{E}\left(\sum_{\tau=t}^{\infty}\gamma_{1}^{\tau-t}c^{-\tau}\right)^{3}\cdot\left(\sum_{\tau=t}^{\infty}\gamma_{1}^{\tau-t}c^{3\tau}\left(\sum_{(\iota_{1},\dots,\iota_{\tau-1})}h(\psi_{\iota_{1}},\dots,\psi_{\iota_{\tau-1}};\psi_{i},\psi_{j})\right)^{4}\right)\\
 & \leq & \left(\sum_{\tau=t}^{\infty}\gamma_{1}^{\tau-t}c^{-\tau}\right)^{4}\max_{\tau\geq t}\max_{i,j\in\mathcal{N}}\mathbb{E}\left[c^{4\tau}\left(\sum_{(\iota_{1},\dots,\iota_{\tau-1})}h(\psi_{\iota_{1}},\dots,\psi_{\iota_{\tau-1}};\psi_{i},\psi_{j})\right)^{4}\right]\\
 & \leq & \frac{1}{c^{4(t-1)}(c-\gamma_{1})^{4}}O(n^{-6}).
\end{eqnarray*}
We conclude that $\max_{i,j\in\mathcal{N}}\mathbb{E}[|\mathbb{E}[\bar{q}_{ij}|\boldsymbol{\text{\ensuremath{\psi}}}]-\mathbb{E}[\bar{q}_{ij}|\psi_{i},\psi_{j}]|^{4}]\leq O(n^{-6})$.
Using this result along with $\|\bar{\boldsymbol{q}}\|_{\infty}\leq\frac{C}{n-1}$,
we can bound $\mathbb{E}[(\mathbb{E}[\bar{q}_{ij}|\boldsymbol{\text{\ensuremath{\psi}}}]-\mathbb{E}[\bar{q}_{ij}|\psi_{i},\psi_{j}])^{2}\mathbb{E}[\bar{q}_{kl}|\psi_{k},\psi_{l}]^{2}]\leq O(n^{-5})$
uniformly and $\mathbb{E}[(\mathbb{E}[\bar{q}_{ij}|\boldsymbol{\text{\ensuremath{\psi}}}]-\mathbb{E}[\bar{q}_{ij}|\psi_{i},\psi_{j}])^{2}(\mathbb{E}[\bar{q}_{kl}|\boldsymbol{\text{\ensuremath{\psi}}}]-\mathbb{E}[\bar{q}_{kl}|\psi_{k},\psi_{l}])^{2}]\leq O(n^{-6})$
uniformly. Hence, $\max_{i,j,k,l\in\mathcal{N}:\{i,j\}\cap\{k,l\}=\emptyset}\mathbb{E}[|\Delta_{2,ijkl}|^{2}]\leq O(n^{-5})+O(n^{-6})=O(n^{-5})$. 

\uline{Term \mbox{$\Delta_{3,ijkl}$}}. Using the results $\max_{i,j\in\mathcal{N}}\mathbb{E}[|e_{ij}^{q}|^{4}]\leq O(n^{-6})$
and $\|\bar{\boldsymbol{q}}\|_{\infty}\leq\frac{C}{n-1}$, we can
bound $\mathbb{E}[(\mathbb{E}[e_{ij}^{q}\bar{q}_{kl}|\boldsymbol{\psi}]-\mathbb{E}[e_{ij}^{q}|\psi_{i},\psi_{j}]\mathbb{E}[\bar{q}_{kl}|\psi_{k},\psi_{l}])^{2}]\leq O(n^{-5})$
uniformly and $\mathbb{E}[(\mathbb{E}[e_{ij}^{q}e_{kl}^{q}|\boldsymbol{\psi}]-\mathbb{E}[e_{ij}^{q}|\psi_{i},\psi_{j}]\mathbb{E}[e_{kl}^{q}|\psi_{k},\psi_{l}])^{2}]\leq O(n^{-6})$
uniformly. Hence, $\max_{i,j,k,l\in\mathcal{N}:\{i,j\}\cap\{k,l\}=\emptyset}\mathbb{E}[|\Delta_{3,ijkl}|^{2}]\leq O(n^{-5})+O(n^{-6})=O(n^{-5})$. 

Combining equation (\ref{eq:q_ijkl}) with the rate results on the
three discrepancy terms in (\ref{eq:q_ijkl_delta}), we derive $\max_{i,j,k,l\in\mathcal{N}:\{i,j\}\cap\{k,l\}=\emptyset}\mathbb{E}[|\mathbb{E}[q_{ij}q_{kl}|\boldsymbol{\text{\ensuremath{\psi}}}]-\mathbb{E}[q_{ij}|\psi_{i},\psi_{j}]\mathbb{E}[q_{kl}|\psi_{k},\psi_{l}]|^{2}]\leq O(n^{-5})=o(n^{-4}/K)$. 

In addition, observe that $\mathbb{E}[q_{ij}|\boldsymbol{\psi}]-\mathbb{E}[q_{ij}|\psi_{i},\psi_{j}]=\mathbb{E}[\bar{q}_{ij}|\boldsymbol{\text{\ensuremath{\psi}}}]-\mathbb{E}[\bar{q}_{ij}|\psi_{i},\psi_{j}]+\mathbb{E}[e_{ij}^{q}|\boldsymbol{\psi}]-\mathbb{E}[e_{ij}^{q}|\psi_{i},\psi_{j}]$.
We showed that $\max_{i,j\in\mathcal{N}}\mathbb{E}[|\mathbb{E}[\bar{q}_{ij}|\boldsymbol{\text{\ensuremath{\psi}}}]-\mathbb{E}[\bar{q}_{ij}|\psi_{i},\psi_{j}]|^{4}]\leq O(n^{-6})$.
Moreover, we can bound $\max_{i,j\in\mathcal{N}}\mathbb{E}[(\mathbb{E}[e_{ij}^{q}|\boldsymbol{\psi}]-\mathbb{E}[e_{ij}^{q}|\psi_{i},\psi_{j}])^{4}]\leq C\max_{i,j\in\mathcal{N}}\mathbb{E}[|e_{ij}^{q}|^{4}]\leq O(n^{-6})$
by Jensen's inequality and Cauchy-Schwarz inequality. Therefore, we
obtain $\max_{i,j\in\mathcal{N}}\mathbb{E}[|\mathbb{E}[q_{ij}|\boldsymbol{\text{\ensuremath{\psi}}}]-\mathbb{E}[q_{ij}|\psi_{i},\psi_{j}]|^{4}]\leq O(n^{-6})=o(n^{-4}/K^{2})$.
Assumption \ref{ass:w}(iii) is satisfied for $\boldsymbol{q}=\boldsymbol{s}\boldsymbol{w}^{t}$,
$t=1,2$.

Assumption \ref{ass:w}(iv) for $\boldsymbol{q}=\boldsymbol{s}\boldsymbol{w}^{t}$,
$t=1,2$, is implied by Assumption \ref{ass:w}(iii). For the other
forms of $\boldsymbol{q}$, $\boldsymbol{w}'\boldsymbol{w}$, $\boldsymbol{w}'\bar{\boldsymbol{s}}\bar{\boldsymbol{w}}^{t}$,
and $(\bar{\boldsymbol{s}}\bar{\boldsymbol{w}}^{r})'\bar{\boldsymbol{s}}\bar{\boldsymbol{w}}^{t}$,
$r,t=1,2$, given that these forms share a similar structure with
$\boldsymbol{s}\boldsymbol{w}^{t}$ and that $\boldsymbol{w}'$ has
the same dependence structure as $\boldsymbol{w}$, it is straightforward
to extend the proof for Assumption \ref{ass:w}(iii) for $\boldsymbol{s}\boldsymbol{w}^{t}$
to the other forms of $\boldsymbol{q}$.

Assumption \ref{ass:w}(v) for $\boldsymbol{q}=\boldsymbol{s}\boldsymbol{w}^{t}$,
$t=1,2$. Fix $\{i,j\}$ and $\{i,k\}$ with $j\neq k$. Similarly
as in equation \ref{eq:q_ijkl}, we can decompose
\begin{eqnarray}
\mathbb{E}[q_{ij}q_{ik}|\tilde{\boldsymbol{\text{\ensuremath{\psi}}}}] & = & \mathbb{E}[(\bar{q}_{ij}+e_{ij}^{q})(\bar{q}_{ik}+e_{ik}^{q})|\tilde{\boldsymbol{\text{\ensuremath{\psi}}}}]\nonumber \\
 & = & \mathbb{E}[\bar{q}_{ij}|\tilde{\boldsymbol{\text{\ensuremath{\psi}}}}]\mathbb{E}[\bar{q}_{ik}|\tilde{\boldsymbol{\text{\ensuremath{\psi}}}}]+\mathbb{E}[e_{ij}^{q}\bar{q}_{ik}+\bar{q}_{ij}e_{ik}^{q}+e_{ij}^{q}e_{ik}^{q}|\tilde{\boldsymbol{\text{\ensuremath{\psi}}}}]+\tilde{\Delta}_{1,ijk}\nonumber \\
 & = & \mathbb{E}[\bar{q}_{ij}|\tilde{\psi}_{i},\tilde{\psi}_{j}]\mathbb{E}[\bar{q}_{ik}|\tilde{\psi}_{i},\tilde{\psi}_{k}]+\mathbb{E}[e_{ij}^{q}\bar{q}_{ik}+\bar{q}_{ij}e_{ik}^{q}+e_{ij}^{q}e_{ik}^{q}|\tilde{\boldsymbol{\text{\ensuremath{\psi}}}}]+\tilde{\Delta}_{1,ijk}+\tilde{\Delta}_{2,ijk}\nonumber \\
 & = & \mathbb{E}[q_{ij}|\tilde{\psi}_{i},\tilde{\psi}_{j}]\mathbb{E}[q_{ik}|\tilde{\psi}_{i},\tilde{\psi}_{k}]+\tilde{\Delta}_{1,ijk}+\tilde{\Delta}_{2,ijk}+\tilde{\Delta}_{3,ijk},\label{eq:q_ijk}
\end{eqnarray}
where the three discrepancy terms are given by
\begin{eqnarray}
\tilde{\Delta}_{1,ijk} & \equiv & \mathbb{E}[\bar{q}_{ij}\bar{q}_{ik}|\tilde{\boldsymbol{\text{\ensuremath{\psi}}}}]-\mathbb{E}[\bar{q}_{ij}|\tilde{\boldsymbol{\text{\ensuremath{\psi}}}}]\mathbb{E}[\bar{q}_{ik}|\tilde{\boldsymbol{\text{\ensuremath{\psi}}}}],\nonumber \\
\tilde{\Delta}_{2,ijk} & \equiv & (\mathbb{E}[\bar{q}_{ij}|\tilde{\boldsymbol{\text{\ensuremath{\psi}}}}]-\mathbb{E}[\bar{q}_{ij}|\tilde{\psi}_{i},\tilde{\psi}_{j}])\mathbb{E}[\bar{q}_{ik}|\tilde{\psi}_{i},\tilde{\psi}_{k}]\nonumber \\
 &  & +(\mathbb{E}[\bar{q}_{ik}|\tilde{\boldsymbol{\text{\ensuremath{\psi}}}}]-\mathbb{E}[\bar{q}_{ik}|\tilde{\psi}_{i},\tilde{\psi}_{k}])\mathbb{E}[\bar{q}_{ij}|\tilde{\psi}_{i},\tilde{\psi}_{j}]\nonumber \\
 &  & +(\mathbb{E}[\bar{q}_{ij}|\tilde{\boldsymbol{\text{\ensuremath{\psi}}}}]-\mathbb{E}[\bar{q}_{ij}|\tilde{\psi}_{i},\tilde{\psi}_{j}])(\mathbb{E}[\bar{q}_{kl}|\tilde{\boldsymbol{\text{\ensuremath{\psi}}}}]-\mathbb{E}[\bar{q}_{ik}|\tilde{\psi}_{i},\tilde{\psi}_{k}]),\nonumber \\
\tilde{\Delta}_{3,ijk} & \equiv & \mathbb{E}[e_{ij}^{q}\bar{q}_{ik}|\tilde{\boldsymbol{\text{\ensuremath{\psi}}}}]-\mathbb{E}[e_{ij}^{q}|\tilde{\psi}_{i},\tilde{\psi}_{j}]\mathbb{E}[\bar{q}_{ik}|\tilde{\psi}_{i},\tilde{\psi}_{k}]\nonumber \\
 &  & +\mathbb{E}[\bar{q}_{ij}e_{ik}^{q}|\tilde{\boldsymbol{\text{\ensuremath{\psi}}}}]-\mathbb{E}[\bar{q}_{ij}|\tilde{\psi}_{i},\tilde{\psi}_{j}]\mathbb{E}[e_{ik}^{q}|\tilde{\psi}_{i},\tilde{\psi}_{k}]\nonumber \\
 &  & +\mathbb{E}[e_{ij}^{q}e_{ik}^{q}|\tilde{\boldsymbol{\text{\ensuremath{\psi}}}}]-\mathbb{E}[e_{ij}^{q}|\tilde{\psi}_{i},\tilde{\psi}_{j}]\mathbb{E}[e_{ik}^{q}|\tilde{\psi}_{i},\tilde{\psi}_{k}].\label{eq:q_ijk_delta}
\end{eqnarray}

\uline{Term \mbox{$\tilde{\Delta}_{1,ijk}$}}. Conditional on $\tilde{\boldsymbol{\text{\ensuremath{\psi}}}}$,
$\dot{\bar{w}}_{\iota_{0},\dots,\iota_{\tau}}$ is a function of $(\zeta_{\iota_{s-1}\iota_{s}},s\in\{1,\dots,\tau\})$.
Hence, $\dot{\bar{w}}_{\iota_{0},\dots,\iota_{\tau_{1}}}$ and $\dot{\bar{w}}_{\tilde{\iota}_{0},\dots,\tilde{\iota}_{\tau_{2}}}$
are independent conditional on $\tilde{\boldsymbol{\text{\ensuremath{\psi}}}}$
if sets $\{\iota_{0},\dots,\iota_{\tau_{1}}\}$ and $\{\tilde{\iota}_{0},\dots,\tilde{\iota}_{\tau_{2}}\}$
are disjoint. Therefore, following the argument for $\Delta_{1,ijkl}$
with $\{k,l\}$ replaced by $\{i,k\}$ and $\boldsymbol{\text{\ensuremath{\psi}}}$
by $\tilde{\boldsymbol{\text{\ensuremath{\psi}}}}$, we can derive
$\max_{i,j,k\in\mathcal{N}:j\neq k}\mathbb{E}[|\tilde{\Delta}_{1,ijk}|^{2}]\leq O(n^{-6})$.

\uline{Term \mbox{$\tilde{\Delta}_{2,ijk}$}}. Note that the argument
for term $\Delta_{2,ijkl}$ remains valid if we replace $\boldsymbol{\psi}$
by $\tilde{\boldsymbol{\text{\ensuremath{\psi}}}}$ and $\psi_{\iota}$
by $\tilde{\psi}_{\iota}$, $\iota\in\mathcal{N}$. It follows that
$\max_{i,j\in\mathcal{N}}\mathbb{E}[|\mathbb{E}[\bar{q}_{ij}|\boldsymbol{\tilde{\text{\ensuremath{\psi}}}}]-\mathbb{E}[\bar{q}_{ij}|\tilde{\psi}_{i},\tilde{\psi}_{j}]|^{4}]\leq O(n^{-6})$
and thus $\max_{i,j,k\in\mathcal{N}:j\neq k}\mathbb{E}[|\tilde{\Delta}_{2,ijk}|^{2}]\leq O(n^{-5})$.

\uline{Term \mbox{$\tilde{\Delta}_{3,ijk}$}}. The argument for term
$\Delta_{3,ijkl}$ holds for $\tilde{\Delta}_{3,ijk}$ as well. Hence,
we obtain $\max_{i,j,k\in\mathcal{N}:j\neq k}\mathbb{E}[|\tilde{\Delta}_{3,ijk}|^{2}]\leq O(n^{-5})$. 

Combining equation (\ref{eq:q_ijk}) with the rate results on the
three discrepancy terms in (\ref{eq:q_ijk_delta}), we have $\max_{i,j,k\in\mathcal{N}:j\neq k}\mathbb{E}[|\mathbb{E}[q_{ij}q_{ik}|\tilde{\boldsymbol{\psi}}]-\mathbb{E}[q_{ij}|\tilde{\psi}_{i},\tilde{\psi}_{j}]\mathbb{E}[q_{ik}|\tilde{\psi}_{i},\tilde{\psi}_{k}]|^{2}]\leq O(n^{-5})=o(n^{-4})$.
Assumption \ref{ass:w}(v) is satisfied for $\boldsymbol{q}=\boldsymbol{s}\boldsymbol{w}^{t}$,
$t=1,2$.
\end{proof}

\end{document}